\newtheorem{theorem}{Theorem}
\newtheorem{proposition}{Proposition}
\newtheorem{corollary}{Corollary}
\newtheorem{lemma}{Lemma}
\theoremstyle{definition}
\newtheorem{definition}{Definition}
\newtheorem{example}{Example}
\newcommand{\psf}{\mathcal{P}}
\newcommand{\brak}[1]{\langle\![ #1   ]\!\rangle}
\newcommand{\nexttime}{\mathsf{X}}
\newcommand{\until}{\mathsf{U}}
\newcommand{\always}{\mathsf{G}}
\newcommand{\tset}[1]{[\![#1]\!]}
\newcommand{\typeone}{\ensuremath{\mathsf{Type}\until}\xspace}
\newcommand{\typetwo}{\ensuremath{\mathsf{Type}\always}\xspace}
\newcommand{\chfor}{\ensuremath{\mathsf{char}}\xspace}
\newcommand{\indf}[2]{\mathsf{ind}(#1, #2)}
\newcommand{\unf}[1]{\mathsf{unfold}(#1)}
\DeclareMathOperator{\Agt}{\mathsf{Agt}}
\DeclareMathOperator{\Prop}{\mathsf{AP}} 
\DeclareMathOperator{\Act}{\mathsf{Act}} 
\DeclareMathOperator{\act}{\mathsf{act}}  
\DeclareMathOperator{\pth}{\mathsf{path}} 
\DeclareMathOperator{\paths}{\mathsf{paths}}
\DeclareMathOperator{\gAct}{\mathsf{Act}^{\Gamma}} 
\DeclareMathOperator{\gact}{\mathsf{act}^{\Gamma}} 
\DeclareMathOperator{\gacta}{\mathsf{act}_{\aga}^{\Gamma}}
\newtheorem{innercustomclaim}{Claim}
\newenvironment{customclaim}[1]
  {\innercustomclaim}
  {\endinnercustomclaim}
\newcommand{\techrep}[1]{}
\newcommand{\gform}{\mathcal{G}}
\newcommand{\gmap}{\mathfrak{g}}
\newcommand{\agi}{\ensuremath{\mathsf{i}\xspace}}
\newcommand{\aga}{\ensuremath{\mathsf{a}\xspace}}
\newcommand{\agb}{\ensuremath{\mathsf{b}\xspace}}
\newcommand{\agc}{\ensuremath{\mathsf{c}\xspace}}
\newcommand{\coop}[2][]{[{#2}]_{_{\!\mathit{#1}}}}
\newcommand{\coal}[2][]{\langle\!\langle{#2}\rangle\!\rangle_{_{\!\mathit{#1}}}}
\newcommand{\atlx}{\mathord \mathsf{X}\, }
\newcommand{\atlf}{\mathord \mathsf{F}\, }
\newcommand{\atlg}{\mathord \mathsf{G}\, }
\newcommand{\atlu}{\, \mathsf{U} \, }
\newcommand{\X}{\atlx}
\newcommand{\F}{\atlf}
\newcommand{\ifff}{\leftrightarrow}
\newcommand{\cgm}{\ensuremath{\mathcal{M}}}
\newcommand{\Logicname}[1]{\ensuremath{\mathsf{#1}}}
\newcommand{\CL}{\Logicname{CL}\xspace}
\newcommand{\LTL}{\Logicname{LTL}\xspace}
\newcommand{\CTL}{\Logicname{CTL}\xspace}
\newcommand{\ATL}{\Logicname{ATL}\xspace}
\newcommand{\ATLs}{\Logicname{ATL^*}\xspace}
\newcommand{\ATLplus}{\Logicname{ATL^+}\xspace}
\newcommand{\SFCL}{\Logicname{SFCL}\xspace}
\newcommand{\SFCLs}{\Logicname{SFCL_{1}}\xspace}
\newcommand{\GPCL}{\Logicname{GPCL}\xspace}
\newcommand{\cga}{\Logicname{TLCGA}\xspace}
\newcommand{\xcga}{\Logicname{XCGA}\xspace}
\newcommand{\fcga}{\Logicname{FTLCGA}\xspace}
\newcommand{\out}{\ensuremath{\mathsf{out}\xspace}}
\newcommand{\states}{\ensuremath{\mathsf{S}\xspace}}
\newcommand{\actprof}{\ensuremath{\zeta\xspace}}
\newcommand{\strat}{\ensuremath{\sigma\xspace}}
\newcommand{\strprof}{\ensuremath{\Sigma\xspace}}
\newcommand{\Out}{\ensuremath{\mathsf{Out}\xspace}}
\newcommand{\outcomes}{\ensuremath{\mathsf{O}\xspace}}
\newcommand{\play}{\ensuremath{\mathsf{play}\xspace}}
\newcommand{\Plays}{\ensuremath{\mathsf{Plays}\xspace}}
\newcommand{\ActProf}{\ensuremath{\mathsf{ActProf}\xspace}}
\renewcommand{\path}{\ensuremath{\pi\xspace}}
\newcommand{\fac}{\ensuremath{\mathcal{F}\xspace}}
\newcommand{\Stratprof}{\ensuremath{\mathsf{StratProf}\xspace}}
\newcommand{\sfml}{\ensuremath{\mathsf{StateFor}\xspace}}
\newcommand{\pfml}{\ensuremath{\mathsf{PathFor}\xspace}}
\newcommand{\supp}{\ensuremath{\mathsf{Support}\xspace}}
\newcommand{\powerset}[1]{\ensuremath{\mathcal{P}({#1})\xspace}}
\newcommand{\defstyle}{\textbf}
\newcommand{\cgoal}[1]{\langle\![ #1]\!\rangle}
\newcommand{\cgoalpos}[1]{\langle\![ #1]\!\rangle_{\mathsf{0}}}
\newcommand{\gmod}{\ensuremath{\mathcal{M}}}
\newcommand{\sat}{\ensuremath{\Vdash}}
\newcommand{\nsat}{\ensuremath{\nVdash}}
\newcommand{\gass}{\triangleright} 
\renewcommand{\models}{\sat}
\newcommand{\langcga}{\ensuremath{\mathcal{L}^\cga\xspace}}
\newcommand{\xlangcga}{\ensuremath{\mathcal{L}^\xcga\xspace}}
\newcommand{\ecl}{\ensuremath{\mathsf{ecl}}}
\newcommand{\mrk}{\ensuremath{\mathbf{m}}}
\newcommand{\network}{\ensuremath{\mathcal{N}}}
\newcommand{\last}{\ensuremath{\mathit{l}}}
\newcommand{\hist}{\ensuremath{\mathsf{Hist}\xspace}}
\newcommand{\fun}{\mathsf{F}}
\newcommand{\xfor}{\mathsf{XFor}}
\newcommand{\ufor}{\mathsf{UFor}}
\newcommand{\gfor}{\mathsf{GFor}}
\newcommand{\lfor}{\mathsf{UGFor}}
\newcommand{\diffof}[1]{{\Delta #1}}
\newcommand{\xpart}{\vert_{\xfor}}
\newcommand{\ugpart}{\vert_{\lfor}}
\newcommand{\gammaof}[1]{\diffof{\gamma}\{#1\}}
\newcommand{\fgam}{\mathsf{Finish}(\gamma)}
\newcommand{\ugam}{\mathsf{UHolds}(\gamma)}
\newcommand{\agam}{\mathsf{GHolds}(\gamma)}
\newcommand{\gamefun}{\mathsf{G}}
\newcommand{\lset}[2]{[#1]_{#2}}
\newcommand{\trn}{\textbf{tr}}
\newcommand{\vcut}[1]{}
\newcommand{\append}[1]{#1}
\newcommand{\Wn}{\ensuremath{\mathsf{W}\xspace}}
\newcommand{\Ls}{\ensuremath{\mathsf{L}\xspace}}
\newcommand{\iWn}{\ensuremath{\mathsf{W_{0}}\xspace}}
\newcommand{\iLs}{\ensuremath{\mathsf{L_{0}}\xspace}}
\newcommand{\unfold}{\unf}
\title{The temporal logic of coalitional goal assignments in concurrent multi-player games} 
\author{Sebastian Enqvist and Valentin Goranko \\ 
Emails: \textrm{\{sebastian.enqvist, valentin.goranko\}@philosophy.su.se}}
\begin{document}
\maketitle

\begin{abstract} 
We introduce and study a natural 
extension of the Alternating time temporal logic \ATL, called \emph{Temporal Logic of Coalitional Goal Assignments} (TLCGA). It features one new and quite expressive coalitional strategic operator, called the \emph{coalitional goal assignment} operator $\brak{\gamma}$, where $\gamma$ is a mapping assigning to each set of players in the game its coalitional \emph{goal}, formalised by a path formula of the language of TLCGA, i.e. a formula prefixed with a temporal operator 
$\nexttime, \until$, or $\always$, representing a temporalised objective for the respective coalition, describing the property of the plays on which that objective is satisfied. 
Then, the formula $\brak{\gamma}$ intuitively says that there is a strategy profile 
$\strprof$ for the grand coalition $\Agt$ such that for each coalition $C$, the restriction $\strprof \vert_C$ of $\strprof$ to $C$ is a collective strategy of $C$ that enforces the satisfaction of its objective $\gamma(C)$ in all outcome plays enabled by $\strprof \vert_C$. 

We establish fixpoint characterizations of the temporal goal assignments in a 
$\mu$-calculus extension of TLCGA, discuss its expressiveness and illustrate it with some examples, {prove bisimulation invariance and Hennessy-Milner property for it with respect to a suitably defined notion of bisimulation}, construct a sound and complete axiomatic system for TLCGA, and obtain its decidability via finite model property. 
\end{abstract}

\textbf{Keywords:} 
temporal logic, 
concurrent multi-player games, 
coalitional goal assignments

\section{Introduction}
\label{sec:intro} 
%%%%%%%%%%%%%%%%%%%
%%%%%%%%%%%%%%%%%%%

Formalising strategic reasoning has become an increasingly rich and attractive direction of active research and applications of multi-agent modal logics over the past few decades. Early logical systems capturing agents' abilities were developed with philosophical motivations and applications in mind, including  Brown's modal logic of ability \cite{brown:88a} and Belnap and Perloff's STIT logics ~\cite{belnap:88a}. In the late 1990s -- early 2000s two seminal works in the area appeared independently: Pauly's Coalition logic \CL, introduced in \cite{Pauly01phd,Pauly02modal}, and Alur, Henzinger and Kupferman's Alternating time temporal logic \ATL  introduced (in its final version) in \cite{AHK-02}, cf also \cite{TLCSbook}.  

The logic \CL was  introduced with the explicit intention to formalise reasoning about one-step (local) strategic abilities of coalitions of  agents to guarantee the achievement of designated objectives in the immediate outcome of their collective action, regardless of the respective actions of the remaining agents. 
The logic \ATL, on the other hand, was introduced as a logical formalism for formal specification and verification of open (interacting with environment) computer systems, where the agents represent concurrently executed processes. However, it was gradually adopted in the research on logics for multi-agent systems as one of the most standard and popular logical systems for reasoning about long-term strategic abilities of agents and coalitions in concurrent multi-player games. The logic \ATL can be described as an extension of \CL with the long-term temporal operators $\always$ and $\until$, adopted in the branching-time temporal logic \CTL, which can be regarded as a single-agent fragment of \ATL. 
More precisely, both \CL and \ATL feature special modal operators\footnote{We use here the notation from \cite{AHK-02}, which was more widely adopted.}  $\coal{C}{}$, indexed with groups (coalitions) of agents $C$, such that for any formula $\phi$, regarded as expressing the coalitional objective of $C$, the formula $\coal{C}{\phi}$ intuitively says that the coalition $C$ has a collective strategy $\sigma_{C}$ that guarantees the satisfaction of $\phi$ in every outcome (state for \CL, respectively, play for \ATL) that can occur when the agents in $C$ execute their strategies in $\sigma_{C}$, regardless of the choices (strategic or not) of actions of the agents not in $C$. 

Thus, both \CL and \ATL capture reasoning about \emph{absolute powers} of agents and coalitions to act in pursuit of their goals and succeed unconditionally against \emph{any} possible behaviour of their opponents, which are thus regarded as adversaries (in the context of \CL) or as randomly behaving environment (in the context of \ATL).  This is a somewhat extreme perspective, as strategic interactions of rational agents in the real world usually involve a complex interplay of \emph{cooperation} and  \emph{competition}, both driven by the individual and collective objectives of all agents, be them proponents or opponents of the objective in focus. To capture these adequately, expressively richer formal logical languages are needed. In the recent precursor \cite{GorankoEnqvist18} of the present work we proposed two such extensions of \CL with additional coalitional operators, respectively implementing the following two ideas relating cooperation and competition in social context: 
\begin{description}
\item[]  \emph{Social friendliness:} agents can achieve private goals while leaving room for cooperation with the others and with the rest of the society. 
\item[]  \emph{Group protection:} agents can cooperate within the society while simultaneously protecting their private goals.
\end{description}
 
 The second extension mentioned above, called \emph{Group Protecting Coalition Logic} (GPCL) is the starting point of the present work, which introduces and studies its extension in \ATL-like style, called Temporal Logic of Coalitional Goal Assignments (\cga). The logic \cga features one, very expressive, coalitional strategic operator, viz. the \emph{coalitional goal assignment} operator of the type 
$\brak{\gamma}$, where $\gamma$ is a mapping assigning to each coalition (subset)  in the family of all agents $\Agt$ its coalitional \emph{goals}, which is formalised by a path formula of the language of \cga, i.e. a formula prefixed with a temporal operator $\atlx, \atlg$, or $\atlu$, representing the temporalised objective for the respective coalition. 
Then, the formula $\brak{\gamma}$ intuitively says that there is a strategy profile 
$\strprof$ for the grand coalition $\Agt$ such that for each coalition $C$, the restriction $\strprof \vert_C$ of $\strprof$ to $C$ is a collective strategy of $C$ that enforces the satisfaction of its objective $\gamma(C)$ in all outcome plays enabled by $\strprof \vert_C$.  The intuition is that each agent participates in the grand coalition with its individual strategy so that, while contributing to the achievement of the common goal, each coalition also guarantees the protection of its coalitional interest agains any possible deviation of all other agents. 
	The logic \cga naturally extends \ATL (in particular, \CL) and exhibits richer expressiveness, both purely technical, but also in terms of potential applications (cf. Section \ref{sec:applications}). Notably, it turns out (cf. Section \ref{subsec:CGAsemantics-variations}) that the semantics of \cga is more sophisticated than the semantics of \ATL and most of its extensions studied so far, in sense that it is essential that the class of (memory based) strategies underlying the semantics of \cga includes all \emph{play-based strategies} (taking into account the full history, consisting not only of the sequence of visited states, but also including the action profiles causing the transitions between them) rather than only the \emph{path-based strategies} (based on the state histories only), which has been the customary choice both for the logics in the \ATL/\ATLs family and for the family of Strategy logics (see further) studied so far. More precisely, the two semantics differ for \cga, both in terms of truth at a state in a model and in terms of validity (respectively, satisfiability), as shown in Section \ref{subsec:CGAsemantics-variations}, where we also argue that the semantics using play-based strategies is the more natural and faithful one for the intended meaning of the operator $\brak{\cdot}$. We also analyse there how the two semantics relate technically, and  show how model checking for the semantics using play-based strategies can be reduced to model checking for the semantics using path-based strategies.  
 
	Besides $\cga$, we also introduce and study, though in less detail, two extensions of this language: $\cga^+$, in which conjunctions  of path formulas are allowed, and a fixpoint language $\xlangcga_\mu$ in the style of the modal $\mu$-calculus. The first of these two extensions appears naturally in some applications with a connection to game theory, described shortly. The full fixpoint language $\xlangcga_\mu$ has a theoretical interest on its own, as a very expressive, yet quite well behaved logic for multi-player games. Furthermore, since the logics $\cga$ and $\cga^+$ embed as fragments of $\xlangcga_\mu$, the latter can be used as a tool to study these logics and obtain technical results about them. Here, it is used to prove decidability  and finite model property of both $\cga$ and $\cga^+$, as well as to establish upper complexity bounds for their satisfiability problems.
 
	As we demonstrate with examples in Section \ref{sec:applications},  the logic \cga enables the expression of various natural and important nuanced patterns of multi-player strategic interaction. 
 In particular, the logic \cga captures a concept that we call ``\emph{co-equilibrium}'', which we define and promote here as a new, alternative solution concept on the border of  non-cooperative and cooperative game theory \cite{OR2}. We argue (in Section \ref{subsec:co-equilibria}) that is more natural and applicable than the standard notion of Nash equilibrium in the context of  concurrent multi-player games with individual qualitative objectives. Existence of a co-equilibrium can be expressed quite simply in \cga using the operator $\brak{\gamma}$.
 We also show (in Section \ref{subsec:stable-outcomes}) how other naturally defined notions of stable individually and coalitionally stable outcomes can be formalised in $\cga^+$.
 
 Further motivation for the present work comes from \emph{cooperative game theory} 
\cite{OR2}, 
\cite{BranzeiDimitrovTijs}, 
\cite{DBLP:series/synthesis/2011Chalkiadakis}. One natural link is the apparent relationship between concurrent game models and coalitional goal assignments in them studied here, and some classes of cooperative games, such as the  
so called \emph{simple games} \cite{Ramamurthy90},  \cite{vanDeemen97}, where the characteristic function defining the game assigns a payoff 0 or 1 to each coalition. An important class of such games are the \emph{voting games} \cite{DBLP:series/synthesis/2011Chalkiadakis}. We only mention here these links with cooperative game theory, but they are left to be explored in a further work. 
In this paper we only illustrate briefly (in Section \ref{subsec:applications2GT}) the expressiveness of \cga and $\cga^+$ by showing how natural qualitative analogues of the key notions of \emph{stable strategy profiles} and \emph{core}, defined and studied for a type cooperative games based on concurrent game models in \cite{DBLP:conf/atal/GutierrezKW19}, can be expressed in $\cga^+$.

\paragraph{Main contributions.}
Besides the introduction of the logic  \cga and its extensions $\cga^+$ into $\xlangcga_\mu$,  the main technical contributions of this paper are: 
\begin{itemize}
\item Fixpoint characterizations of the main types of long-term goal assignments and translation of $\cga$ and $\cga^+$ into $\xlangcga_\mu$. 

\item  bisimulation invariance and Hennessy-Milner property for the logic \cga and for the full $\xlangcga_\mu$ with respect to the GPCL-bisimulation introduced in \cite{GorankoEnqvist18}.  

\item sound and complete axiomatic system for \cga.  
%and a proof of its decidability via finite model property. 

\item finite model property and decidability (with triple exponential bound) for the extended logic $\cga^+$ (hence also of \cga).

\item double exponential bound on the satisfiability problem for the fixpoint language $\xlangcga_\mu$.

\item complexity bounds for the model checking problems and the  satisfiability problems for \cga and $\cga^+$.
\end{itemize}

\paragraph{Related work and results.}
In addition to the links and references mentioned so far, the present work bears both conceptual and technical connections with several previously studied logics for strategic reasoning and multi-player games, including: the logic \ATL with irrevocable strategies \cite{AgotnesTARK2007,Jamroga08commitment-tr}, \ATL with strategy contexts \cite{BrihayeLLM09}, coalitional logics of cooperation and propositional control \cite{HoekW05, TroquardHW09}, cooperative concurrent games \cite{DBLP:conf/atal/GutierrezKW19}, and especially with the family of Strategy Logics, originally introduced in \cite{DBLP:journals/iandc/ChatterjeeHP10} and further extended and  studied in \cite{fsttcs/MogaveroMV10}, 
\cite{tocl/MogaveroMPV14}, 
\cite{corr/MogaveroMPV16},  
\cite{aamas/AminofMMR16},
\cite{DBLP:conf/aaai/AcarBM19},
\cite{DBLP:journals/mst/GardyBM20}, etc. 
Indeed, the operator $\brak{\gamma}$ in $\cga^+$  
with the `path-based semantics' mentioned earlier (cf. Section \ref{subsec:CGAsemantics-variations}) can be translated to the Conjunctive Goals fragment\footnote{We thank an anonymous reviewer for suggesting to consider the translation of \cga into that fragment.} SL[CG] of Strategy Logic  \cite{DBLP:conf/lics/MogaveroMS13},  \cite{DBLP:journals/mst/GardyBM20}, in a way similar to the standard translation of modal logics to first-order logic. 
 As shown in Section \ref{subsec:ST2SL}, that translation can be used for applying model checking algorithms for SL[CG] to $\cga^+$ (in particular, \cga) and for obtaining \textrm{2ExpTime} complexity upper bound for the model checking problem for $\cga^+$, for both versions of its semantics.
Likewise, that translation can be used for applying algorithms and obtaining \textrm{PSpace} complexity upper bound for the satisfiability problem in flat fragment of $\cga^+$ with the path-based semantics
by reduction to the satisfiability problem in flat fragment FSL[CG] of SL[CG], shown in  \cite{DBLP:conf/aaai/AcarBM19} to be decidable and \textrm{PSpace-complete}.    
On the other hand, a direct argument in Section \ref{sec:FMP}, using embedding of \cga and $\cga^+$ to the $\mu$-calculus extension $\xlangcga_\mu$ 
shows that the satisfiability problem in the whole $\cga^+$ (in particular, \cga), with its standard play-based semantics, is in \textrm{2ExpTime}.

	Notwithstanding these important technical benefits, we note that translating \cga to (a fragment of) Strategy Logic could generally result in unwanted surplus of expressiveness, or even in a technical overkill, for which we have both conceptual and computational reasons to avoid, whenever possible.  
On the conceptual side, translating \cga to Strategy Logic would lose of the elegant succinctness and clear focus of the operator $\brak{\gamma}$ as the main high-level logical construct of the language and would replace it with its low-level description in  Strategy Logic\footnote{Admittedly, this is a subjective argument and mostly a matter of taste, which we only state but do not try to impose here.}.   
On the technical side, such translation would map a syntactically simple propositional language to a generally quite more expressive and syntactically heavier, essentially second-order language, explicitly involving quantification over strategies (being functions from finite sequences of states to actions). Essentially these are the same arguments in favour of preferring modal logic over first-order logic, but amplified by the technical complexity of quantifying over functions rather than individuals. 
Thus, we eventually adopt and advocate the pragmatic approach of adhering as much as possible to the propositional logic framework of \cga for the purposes of expressing and reasoning about properties of strategic interactions of the type mentioned above, while resorting to using translation to  fragments of Strategy Logic only when necessary or practically expedient, e.g. for the purpose of using already developed tools for model or satisfiability checking in these fragments of Strategy Logic. 

Our work is also essentially connected with \emph{coalgebraic modal logic} \cite{moss1999coalgebraic,pattinson2003coalgebraic,cirstea2007modular}, which is an abstract framework for modal logics of state-based evolving systems. Together with the fixpoint characterization of \cga, this makes \cga in essence a fragment of a \emph{coalgebraic fixpoint logic}  
\cite{venema2006automata,fontaine2010automata,cirstea2011exptime}. 
This connection is used to establish  decidability and finite model property for our logic. Beyond that, however, our presentation is mostly self-contained, and will not require familarity with coalgebra. 

Still, we want to emphasize that the connection with coalgebraic logic, and coalgebraic fixpoint logics in particular, is implicitly present throughout the paper. In particular the notion of \emph{one-step completeness}, and the idea of lifting one-step completeness to completeness for the full language, is at the heart of our completeness proof. The notion of one-step completeness is inherently coalgebraic and has been studied in depth by a number of authors \cite{schroder2009pspace,schroder2008expressivity,pattinson2003coalgebraic}. 

Furthermore, the fact that our translation into fixpoint logic requires only a single recursion variable means that \cga is a fragment of a \emph{flat}  fixpoint logic \cite{SantocanaleV10,SchroderV10,enqvist2018flat}, and completeness of flat fixpoint logics can be obtained by simpler techniques than the full $\mu$-calculus. There are two main reasons why our completeness proof is not explicitly formulated in coalgebraic terms: first, we note that \cga is not a flat $\mu$-calculus per se, but rather embeds into such a logic via a fairly intricate translation. So the results in \cite{SchroderV10} do not apply directly here, as far as we can see. Second, and more importantly, we want the proof to be as self-contained and accessible without prior knowledge in coalgebraic $\mu$-calculus as possible. It is possible that one could ``transfer'' completeness of flat coalgebraic $\mu$-calculi to obtain completeness for \cga via our translation, but we believe a direct completeness proof is more transparent and provides better understanding of \cga and its semantics.

Lastly, we also note the relationship of the present work with the logic for local conditional strategic reasoning CSR introduced in \cite{LORIVII-GorankoJu}. 
Furthermore, we point out the direct applicability of the logic \cga for adequate alternative formalisation of the ideas of \emph{rational synthesis} \cite{tacas/FismanKL10} and \emph{rational verification} \cite{aaai/WooldridgeGHMPT16}. 
These connections and possible applications are left to future work.

\paragraph{Structure of the paper.} 
After some preliminaries in Section \ref{sec:prelim} on concurrent game models, plays and strategies in them, in Section \ref{sec:TLCGA} we introduce and study the formal syntax and semantics of the logic \cga, and illustrate its expressiveness with some examples.   
In Section  \ref{sec:fixpoints} we obtain fixpoint characterizations of the long-term goal assignments expressed in a suitable $\mu$-calculus extension of \cga. We then discuss the connection with coalgebraic modal logic.  
In Section \ref{subsec:Bisimulations} we introduce the relevant notion of bisimulation for \cga and prove bisimulation invariance and the Hennessy-Milner property for it. 
  In Section \ref{sec:axiomatization} we provide an axiomatic system for \cga for which we prove soundness and completeness. In Section \ref{sec:FMP} we show decidability of \cga  via finite model property. We then end with brief concluding remarks in Section \ref{sec:concluding}.

%%%%%%%%%%%%%%%%%%%
%%%%%%%%%%%%%%%%%%%
\section{Preliminaries and background}
\label{sec:prelim} 
%%%%%%%%%%%%%%%%%%%
%%%%%%%%%%%%%%%%%%%

%%%%%%%%%%%%%%%%%%%%%%%%%%%
\subsection{Concurrent game models, plays, strategies}
\label{subsec:CGM}
%%%%%%%%%%%%%%%%%%%%%%%%%%%

We fix a finite set of \defstyle{players/agents} $\Agt = \{\aga_1,...,\aga_n\}$ 
and a set of  \defstyle{atomic propositions} $\Prop$. 
Subsets of $\Agt$ will also be called \defstyle{coalitions}.  

Given a set $W$, we denote by $W^*$ the set of finite words over $W$, by $W^+$ the set of non-empty words from $W^*$, and by $W^\omega$ the set of infinite words over $W$.

\begin{definition}
Let $\outcomes$ be any non-empty set. A \defstyle{(strategic) game form over the set of outcomes $\outcomes$} is a tuple 
\[\gform = 
(\Act,\act,\outcomes,\out)\] 
where 
\begin{itemize}
\item $\Act$ is a non-empty set of  \defstyle{actions}, 

\item $\act: \Agt \to \psf^{+}(\Act)$ is a mapping assigning to each $\aga \in \Agt$ a non-empty set $\act_\aga$ of \defstyle{actions available to the player $\aga$},

\item $\out: \Pi_{\aga \in \Agt}\act_\aga \to \outcomes$ is a map assigning to every \defstyle{action profile} 
$\actprof \in \Pi_{\aga \in \Agt} \act_\aga$ a unique \defstyle{outcome} in $\outcomes$.  
\end{itemize}

\end{definition}

\begin{definition}
A \defstyle{concurrent game model} is a tuple  
\[\gmod = (\states,\Act,\gmap,V)\] 
where
\begin{itemize}
\item  $\states$ is a non-empty set of \defstyle{states},

\item $\Act$ is a 
non-empty set of  \defstyle{actions}, 

\item $\gmap$ is a \defstyle{game map}, assigning to each state 
$w \in \states$ a strategic game form 
$\gmap(w) = (\Act,\act_w,\states,\out_w)$ over the set of outcomes $\states$.

\item $V: \Prop \to \powerset{\states}$ is a \defstyle{valuation} of the atomic propositions in $\states$;  
\end{itemize}

For every concurrent game model $\gmod = (\states,\Act,\gmap,V)$ we define the following. 
\begin{itemize}
\item 
For each $\aga \in \Agt$ and $w \in \states$, the set  
$\act_w(\aga)$ consists of the \defstyle{locally available actions} for $\aga$ in $w$. 
It will also be denoted by $\act(\aga,w)$.   
We also define the set $\act_\aga :=  \bigcup_{w\in\states} \act_w(\aga)$ of \defstyle{globally available actions} for $\aga$.

\item 
An  \defstyle{action profile} is a tuple of actions 
 $\actprof \in \Pi_{\aga \in \Agt} \act_\aga$. 
A  \defstyle{locally available action profile at state $w$} is  any tuple of locally available actions 
 $\actprof \in \Pi_{\aga \in \Agt} \act_w(\aga)$. 
  The set of these action profiles will be denoted by $\ActProf_w$. 
  
\item $\out_\gmod$ 
is the \defstyle{global outcome function} assigning to every state $w$ and a local action profile $\actprof$ at $w$ 
a unique \defstyle{outcome} $\out_\gmod(w,\actprof) := \out_w(\actprof)$.  
When $\gmod$ is fixed by the context, it will be omitted from the subscript.

\item 
 Given a coalition $C \subseteq \Agt$, a \defstyle{joint action} for $C$ in $\gmod$ is a tuple of  individual actions $\actprof_{C} \in \prod_{\aga \in C} \act_\aga$. 
In particular, for any action profile 
 $\actprof \in \Pi_{\aga \in \Agt} \act_\aga$, $\actprof \vert_C$ is the joint action obtained by restricting $\actprof$ to $C$.  

\item 
 For any $w\in \states$, $C \subseteq \Agt$, and joint action $\actprof_{C}$ that is available at $w$, we define:
\[
\Out[w,\actprof_{C}] = \left\{u \in S \mid \exists \actprof \in 
\prod_{\aga \in \Agt} \act_{w}(\aga):  
\; \actprof \vert_C = \actprof_{C} \mbox{ and }  \out(w,\actprof) = u \right\}.  
\]
\end{itemize}

A \defstyle{partial play}, or a \defstyle{history}  
in $\gmod$ is either an element of $\states$ or a finite word of the form:
$$w_0 \actprof_0 w_1 ... w_{n - 1} \actprof_{n - 1} w_n$$
where $w_0,...,w_n \in \states$ and for each $i < n$, $\actprof_i$ is a locally available action profile in $\Pi_{a \in \Agt}\act(a,w_i)$. 
The last state in a history $h$ will be denoted by $\last(h)$. 
The set of histories in $\gmod$ is denoted by $\hist(\gmod)$.

A \defstyle{(memory-based) strategy for player $\aga$} is a map $\strat_\aga$ assigning to each history $h = w_0 \actprof_0... \actprof_{n-1}w_n$ in $\mathsf{Play}$ an action $\strat_\aga(h)$ from $\act(\aga,w_n)$. 
Note that strategies are defined here in terms of \emph{histories}, i. e. \emph{partial plays}, not just sequences of states, as it is customary for \ATLs and in particular \ATL \cite{AHK-02}, cf also 
\cite{TLCSbook} or \cite{BGJ15}. This distinction will turn out to be essential for the semantics of the logic introduced here.  
A strategy $\strat_\aga$ is \defstyle{memoryless}, or  \defstyle{positional}, if it assigns actions only based on the current (last) state, i.e. $\strat_\aga(h) = \strat_\aga(h')$ whenever $\last(h) = \last(h')$.

 Given a coalition $C \subseteq \Agt$, a \defstyle{joint strategy} for $C$ in the model 
 $\gmod$ is a tuple $\strprof_{C}$ of individual strategies, one for each player in $C$.   
A \defstyle{(global) strategy profile} $\strprof$ is a joint strategy for the grand coalition $\Agt$, i.e. an assignment of a strategy to each player. 
We denote the set of all strategy profiles in the model $\gmod$ by $\Stratprof_\gmod$, and the set of all joint strategies for a coalition  $C$ in $\gmod$ by  $\Stratprof_\gmod(C)$. Thus,  $\Stratprof_\gmod = \Stratprof_\gmod(\Agt)$.

Given a strategy profile $\strprof$, the \defstyle{play} induced by $\strprof$ at $w \in \states$ is the unique infinite word 
\[\play(w,\strprof) =  w_0 \actprof_0 w_1 \actprof_1 w_2 \actprof_2...\]
such that $w_0 = w$ and, for each $n < \omega$ we have $w_{n + 1} = \out(\actprof_n,w_n)$, and 
\[
\actprof_{n + 1} = \strprof(w_0 \actprof_0 ... \actprof_{n}w_{n +1})
\]
The infinite word $w_0w_1w_2...$ obtained by simply forgetting the moves of players in this infinite play is called the \defstyle{computation path} induced by $\strprof$ at $v$, and denoted $\pth(\strprof,v)$.

More generally, given a coalition $C \subseteq \Agt$, a state $w \in \states$ 
and a joint strategy $\strprof_{C}$  for $C$ we define the  \defstyle{set of outcome plays induced by the joint strategy $\strprof_{C}$ at $w$} to be the set of plays 
\[\Plays(w,\strprof_{C}) =  \big\{\play(w,\strprof) \mid \strprof \in \Stratprof_\gmod  
\mbox{ such that }  \strprof(\aga) = \strprof_{C}(\aga) \mbox{ for all } \aga \in C \big\}
\]
Given a strategy profile $\strprof$ we also denote 
$\Plays(w,\strprof,C) := \Plays(w,\strprof\vert_{C})$.  
We will likewise use the notation  
$\paths(w,\strprof,C)$ for the set of computation paths obtained from the plays in 
$\Plays(w,\strprof,C)$. 
Since these only depend on the strategies assigned to players in $C$, we shall freely use the notation $\Plays(w,\strprof,C)$  and $\paths(w,\strprof,C)$ even when $\strprof$ is defined for all members of $C$, but not for all other players. 
\end{definition}

The strategies in our semantics will be memory-based: moves of players in a strategy may depend on previous moves of other players, and players have perfect information and recall about previous moves. 
Indeed, as we will show in Section \ref{subsec:CGAsemantics}, 
just like for \ATLplus and \ATLs, but unlike \ATL, the restriction to positional strategies generates different semantics for the logic \cga which we introduce here.

%
%%%%%%%%%%%%%%%%%%%%%%%%%%%%%%%%%%%%%%
%%%%%%%%%%%%%%%%%%%%%%%%%%%%%%%%%%%%%%
\section{The temporal logic of coalitional goal assignments \cga}
\label{sec:TLCGA}
%%%%%%%%%%%%%%%%%%%%%%%%%%%%%%%%%%%%%%
%%%%%%%%%%%%%%%%%%%%%%%%%%%%%%%%%%%%%%

%%%%%%%%%%%%%%%%%%%
\subsection{Goal assignments, language and syntax of \cga}
\label{subsec:CGAsyntax}
%%%%%%%%%%%%%%%%%%%

Given a fixed finite set players
$\Agt$ and a set $G$ of objects, called `goals', a \defstyle{(coalitional) goal assignment for $\Agt$ in $G$} is a mapping $\gamma: \psf(\Agt) \to G$.   
 
We now define the set $\sfml$ of 
\defstyle{state formulae} and the set $\pfml$ of 
\defstyle{path formulae} of \cga by mutual induction, using the following  BNF:

\medskip
$\sfml: \ \ \ \ \ \varphi := p  \mid \top 
\mid \neg \varphi 
\mid (\varphi \wedge \varphi) 
 \mid (\varphi \lor \varphi) 
\mid  
\brak{\gamma}$

\smallskip
$\pfml: \ \ \ \ \ \theta := \nexttime \varphi \mid \varphi \until \varphi \mid \always \varphi$
\smallskip

where $p \in \Prop$ and $\gamma: \psf(\Agt) \to \pfml$ is a goal assignment for $\Agt$ in 
$\pfml$. The other propositional connectives $\bot$, $\to$ and $\ifff$, as well as the temporal operator $\F$, are defined as usual. We write:  $\xfor$ for the set of path formulas 
of the form $\nexttime \varphi$; $\ufor$ for the set of path formulas 
of the form $\varphi \until \psi$; 
$\gfor$ for the path formulas 
of the form $\always \varphi$; and $\lfor$ for $\ufor \cup \gfor$.  

We denote the language by $\langcga$, and its nexttime fragment (where 
$\pfml$ is restricted to $\xfor$) by $\xlangcga$. 
The latter is essentially (with some minor notational changes) the language of the logic \GPCL introduced in \cite{GorankoEnqvist18}. 

Intuitively, the path formulae can be regarded as temporal goals. The goal $\nexttime \top$ is called a \defstyle{trivial goal} and all other goals in $\pfml$ are  \defstyle{non-trivial goals}. The family of coalitions $\fac$ to which the goal assignment $\gamma$ assigns non-trivial goals is called the \defstyle{support of $\gamma$}, denoted $\supp(\gamma)$, and $\gamma$ is said to be 
\defstyle{supported by $\fac$}.  

Sometimes we will write a goal assignment $\gamma$ explicitly, like 
\[C_1\gass \theta_1,...,C_n\gass \theta_n,\]
meaning that 
$\supp(\gamma) = \{C_1,...,C_n\}$ and $\gamma(C_i) = \theta_i$, for $i=1,...,n$.

More notation:  
\begin{itemize}
\item $\gamma^\top$ is the \defstyle{trivial goal assignment}, mapping each coalition to $\nexttime \top$. 

\item 
The goal assignment $\gamma[C \gass \theta]$ is like $\gamma$, but mapping $C$ to $\theta$.

\item 
The goal assignment $\gamma \setminus C$ defined as  $\gamma[C \gass \nexttime \top]$  is like $\gamma$, but excluding $C$ from its support, by replacing its goal with 
$\nexttime \top$.

\item The goal assignment $\gamma\vert_C$ is defined by mapping each  $C' \subseteq C$ to $\gamma(C')$ and mapping all coalitions not contained in $C$ to $\nexttime\top$.

\end{itemize}

As a convention, if $\gamma$ is 
the unique goal assignment with empty support, we will identify the formula $\brak{\gamma}$ with $\top$.

We will also consider a slight extension of the logic $\cga$ which allows forming conjunctions of path formulas. This extension, which we call $\cga^+$, has the same definition of state formulae, whereas the path formulae of $\cga^+$ are defined as follows: 

\smallskip
$\pfml: \ \ \ \ \ \theta := \nexttime \varphi \mid \varphi \until \varphi \mid \always \varphi \mid \theta \wedge \theta$
\smallskip 

 In this paper we will focus mainly on \cga, but most of the technical results extend likewise to $\cga^+$, and we will note that explicitly for some of them.

%%%%%%%%%%%%%%%%%%%
\subsection{Semantics of \cga}
\label{subsec:CGAsemantics}
%%%%%%%%%%%%%%%%%%%

The semantics of \cga is defined in terms of truth of state formulae at a state, respectively truth of path formulae on (the path generated by) a play, 
in a concurrent game model 
$\gmod = (\states,\Act,\gmap,\out,V)$. 
The truth clauses are like in classical logic for the boolean connectives and like in LTL for the temporal operators. 
The only new clause, for $\cgoal{\gamma}$, is as follows, 
where $s \in \states$:

\smallskip
\begin{quotation}
$\gmod, s \models \cgoal{\gamma}$ \ iff \ there exists a strategy profile 
$\strprof \in \Stratprof_\gmod$ such that, \\ 
for each $C \subseteq \Agt$, it holds that 
$\gmod, \path \models \gamma(C)$ for every $\path \in \paths(s,\strprof,C)$.   
\end{quotation}

\smallskip
For any state formula $\varphi \in \sfml$ we define \defstyle{the extension of $\varphi$  in $\gmod$} to be the set of states in $\gmod$ where $\varphi$ is true: 
$\tset{\varphi}_{\gmod} = \{ s \in S \mid \gmod, s \models \varphi \}$.  
Likewise, we define the extension of any path formula $\theta \in \pfml$
to be the set of paths in $\gmod$ where $\theta$ is true: 
$\tset{\theta}^{p}_{\gmod} = \{ \path \in S \mid \gmod, \path  \models \theta \}$. 
The truth clause for $\cgoal{\gamma}$ can now be re-stated in terms of formula extensions  as follows: 
\[
\tset{\cgoal{\gamma}}_{\gmod} = 
\big\{ s \in S \mid \exists\, \strprof  \in \Stratprof_\gmod:  
\paths(s,\strprof,C) \subseteq \tset{\gamma(C)}^{p}_{\gmod} 
\ \mbox{for each} \ C \subseteq \Agt \big\}.
\]

A strategy profile $\strprof$ is said to \defstyle{witness the goal assignment $\gamma$}  at a state $s$ of a model $\gmod$, denoted by $\strprof,s \Vvdash \gamma$, if, for every coalition $C$ in the support of $\gamma$ and every path $\pi \in \paths(s,\strprof,C)$ in  $\gmod$ we have  $\gmod, \pi \sat \gamma(C)$. 
We then also say that \defstyle{$\strprof$ witnesses the formula $\brak{\gamma}$} at the state $s$ in $\gmod$. 
Thus, $\gmod, s \sat \brak{\gamma}$ iff ${\gamma}$ is witnessed by some strategy profile at $s$ in $\gmod$.

A \cga formula $\phi$ is \defstyle{valid}, denoted 
$\sat \phi$, if $\gmod, s \sat \phi$ for every concurrent game model $\gmod$ and a state $s$ in it; respectively $\phi$ is \defstyle{satisfiable} if $\gmod, s \sat \phi$ for some concurrent game model $\cgm$ and some state $s$ in it. Likewise, (local) logical consequence and  logical (semantic) equivalence in \cga are defined as expected.

\smallskip
We note that the formula $\cgoal{C \gass \X\phi, \, \Agt \gass \X \psi}$ is semantically equivalent to the strategic operator $\coop{C}(\phi; \psi)$ in the logic \SFCL defined in \cite{GorankoEnqvist18}. Therefore, the corresponding fragment \SFCLs of \SFCL  embeds into \cga. Note also that the strategic operator $\coop{C}$ from Coalition logic \CL  is definable as a special case: 
$\coop{C}\phi := \coop{C}(\phi; \top) \equiv \cgoal{C\gass\X\phi}$. 

Furthermore, the strategic operator $\coal{C}$ from \ATL  is also a special case: 
$\coal{C}{\theta}  \equiv \cgoal{C\gass \theta}$. 
Thus, the logic \ATL is embedded as a simple fragment of \cga.

Also, we note that a natural (downward) monotonicity condition on goal assignments can be imposed, viz. that $\models \gamma(C) \to \gamma(C')$ for every coalitions $C$, $C'$, such that $C' \subseteq C$. This condition can be imposed semantically, up to equivalence, by replacing each $\gamma(C)$ with $\bigwedge_{C' \subseteq C} \gamma(C')$, though the resulting goals cannot be expressed in \cga, but in $\cga^+$. A special case of monotone goal assignments, worth noting, are the `individualistic' goal assignments, where $\gamma(C) \equiv \bigwedge_{\aga \in C} \gamma(\aga)$ for every coalition $C$. 
%

%%%%%%%%%%%%%%%%%%%
\subsection{Variations of the semantics of \cga}
\label{subsec:CGAsemantics-variations}
%%%%%%%%%%%%%%%%%%%

%%%%%%%%%%%%%%%%%%%
\subsubsection{Memory-based and the memoryless semantics}

Let us introduce ad hoc the variation $\cgoalpos{\cdot}$ of $\cgoal{\gamma}$, with semantics restricted to positional strategies, i.e.:  
$\gmod, s \models \cgoalpos{\gamma}$ \ iff \ there exists a \emph{positional} strategy profile 
$\strprof \in \Stratprof_\gmod$ such that, for each $C \subseteq \Agt$, it holds that 
$\gmod, \path \models \gamma(C)$ for every $\path \in \paths(s,\strprof,C)$. 

\begin{proposition}[No positional determinacy of \cga] 
\label{prop:NoPositionalDeterminacy} 
Let $\Agt = \{\aga,\agb\}$.  
There exist concurrent game model $\gmod = (\states,\Act,\gmap,\out,V)$, state $s \in \states$ and a coalitional goal assignment $\gamma$, such that 
$\gmod, s \models \cgoal{\gamma}$, but 
$\gmod, s \not\models \cgoalpos{\gamma}$. 

Consequently, the memory-based and the memoryless semantics of $\cgoal{\cdot}$ 
are not equivalent. 
\end{proposition}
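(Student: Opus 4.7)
The plan is to exhibit an explicit small counterexample: a two-player concurrent game model $\gmod$ and a coalitional goal assignment $\gamma$ such that simultaneously witnessing two of the coalitional goals forces $\aga$ to react differently on the first visit to a state and on a later revisit, which a positional strategy cannot do.

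Concretely, I would take $\states = \{s_0, t, p_1, q_1\}$ with $\act_\aga = \act_\agb = \{0,1\}$ at every state, and transitions chosen so that the two players are ``local dictators'' in complementary states: from $s_0$ only $\aga$'s action matters, with $(0, b) \mapsto p_1$ and $(1, b) \mapsto t$; from $t$ only $\agb$'s action matters, with $(a, 0) \mapsto q_1$ and $(a, 1) \mapsto s_0$; and $p_1, q_1$ are self-looping sinks. Let $V(p) = \{p_1, q_1\}$ and $V(q) = \{q_1\}$. The goal assignment puts $\gamma(\{\aga\}) = \F p$, $\gamma(\{\aga, \agb\}) = \F q$, and $\gamma(C) = \X \top$ for the remaining coalitions.

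For the positive side I would exhibit a memory-based witness $\strprof$ by letting $\strat_\aga$ play $1$ at the one-element history $s_0$ and $0$ at every longer history ending at $s_0$ (with arbitrary action elsewhere), and letting $\strat_\agb$ play $0$ at $t$. The joint play from $s_0$ is $s_0 \to t \to q_1$, witnessing $\F q$. For any $\agb$-play contributing to $\paths(s_0, \strprof, \{\aga\})$, either $\agb$ plays $0$ at $t$ and the play reaches $q_1 \in V(p)$, or $\agb$ plays $1$ at $t$, returning to $s_0$, at which point $\aga$'s memory strategy now plays $0$ and the play reaches $p_1$; in either case $\F p$ holds, hence $\gmod, s_0 \sat \cgoal{\gamma}$.

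For the negative side I would observe that any positional $\strat_\aga$ is forced into a fatal dichotomy at $s_0$. If $\strat_\aga(s_0) = 0$, every play enters the $p_1$-sink immediately and never reaches $q_1$, so $\F q$ fails on the joint play with any $\strat_\agb$; if $\strat_\aga(s_0) = 1$, then against the positional opponent strategy that plays $1$ at $t$ the play is trapped in the cycle $s_0, t, s_0, t, \ldots$, never visiting $V(p) = \{p_1, q_1\}$, so the singleton goal $\F p$ fails. The main subtlety in the construction is shaping the loop $s_0 \to t \to s_0$ so that a memory-based $\aga$ obtains a second chance to reach $p_1$ after committing to the $q$-seeking move at $s_0$, whereas a positional $\aga$ forfeits this chance because it cannot distinguish the initial visit to $s_0$ from any revisit. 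This yields $\gmod, s_0 \nsat \cgoalpos{\gamma}$ and establishes the proposition.
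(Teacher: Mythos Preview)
Your proof is correct. The construction works: the memory-based profile you describe witnesses $\gamma$ at $s_0$, and the dichotomy on $\strat_\aga(s_0)$ for positional strategies is sound.

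Your route differs from the paper's, though both are short explicit constructions. The paper uses a three-state model in which $\agb$ is a dummy (a single action everywhere); the two goals $\gamma(\{\aga,\agb\}) = p \atlu q$ and $\gamma(\{\aga\}) = \top \atlu \neg(p\lor q)$ must then both hold on the \emph{unique} play induced by $\aga$'s strategy, and this forces $\aga$ to visit two distinct successors of $s$ on successive visits --- impossible positionally. So in the paper the need for memory is essentially an LTL phenomenon on a single deterministic path. Your example, by contrast, gives $\agb$ a genuine choice at $t$ (stay in the loop or exit), and the memory is needed so that $\aga$ can \emph{react} to $\agb$'s deviation: first commit to seeking $q_1$, then fall back to $p_1$ if $\agb$ refuses to cooperate. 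This makes your example arguably a better illustration of why the \emph{coalitional} semantics of $\cgoal{\cdot}$ requires memory, whereas the paper's example is more minimal (three states, trivial $\agb$) and isolates the phenomenon as one of satisfying two temporal formulae on one path.
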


\begin{proof}
Consider the model  $\gmod = (\states,\Act,\gmap,\out,V)$ on Figure \ref{exampleA} and a goal assignment $\gamma$, such that  
$\gamma(\{\aga,\agb \}) = p \atlu q$ and 
$\gamma(\{\aga \}) = \top  \atlu \neg(p \lor q)$.

Then, $\gmod, s \models \cgoal{\gamma}$, witnessed by any strategy profile $\strprof$ such that  
$\strprof_{\aga} (s) = a_1$ and $\strprof_{\aga} (s s_1 s) = a_2$. 

However, there is no positional strategy profile witnessing the truth of 
$\cgoalpos{\gamma}$ at $s$ because any positional strategy for $\aga$ would have to assign a unique action to any history ending at $s$, hence not enabling both the satisfaction of $p \atlu q$  and of $\top  \atlu \neg(p \lor q)$ there.

%%%%%%%%%%%%%%%%%%%%%%%%%%%%%%%%%
\begin{figure}
\begin{center}
\medskip
\begin{tikzpicture}[->,>=stealth', shorten >=1pt, node distance=30mm, thick]
\tikzstyle{every state}=[fill=white,draw=black,text=black]

\node[state] (0) {$s \atop  \{p \}$};
\node[state] (1) [below left of=0] {$s_1 \atop \{ q\}$}; 
\node[state] (2) [below right of=0] {$s_2 \atop \{ \}$};

\path

(0) edge [bend right] node {$(a_1, b)$\ \ \ \ \ \ \ \ \ \ \ \ \ ~} (1)
(0) edge [bend left] node {$~ \ \ \ \ \ \ \ \ \ \ \ \ \ (a_2, b)$} (2)

(1) edge [right] node {$(a, b)$} (0)
(2) edge [left] node {$(a, b)$} (0);
\end{tikzpicture}
\end{center}
\caption{Example showing that memory is needed}
\label{exampleA} 
\end{figure}
%%%%%%%%%%%%%%%%%%%%%%%%%%
\end{proof}

Hereafter, we will only work with the memory-based semantics. 

\subsubsection{Semantics with path-based strategies vs play-based strategies}
Furthermore, note that (memory-based) strategies are defined here in terms of \emph{plays}, not just paths, as it is customary for \ATLs and in particular \ATL \cite{AHK-02} (cf also \cite{BGJ15} or \cite{TLCSbook}), as well as for Strategy Logic \cite{DBLP:journals/iandc/ChatterjeeHP10}, \cite{fsttcs/MogaveroMV10}, \cite{tocl/MogaveroMPV14}. Indeed, the two versions of strategy types affect essentially the semantics, as  shown by the following example.

\begin{example}[Path-based strategies vs play-based strategies]
\label{exampleB}

Consider the model $\gmod$ below, with 3 players: $\{1,2,3\}$, where the triples of actions correspond to the order $(1,2,3)$ and $*$ denotes any (or, a single) action.

\begin{center}
\medskip
\begin{tikzpicture}[->,>=stealth', shorten >=1pt, node distance=30mm, thick]
\tikzstyle{every state}=[fill=white,draw=black,text=black]

\node[state] (0) {$s \atop  \{p, q\}$};

\node[state] (01) [below left of=0] {$s_1 \atop \{p, q\}$}; 
\node[state] (02) [below right of=0] {$s_2 \atop \{p,q \}$};

\node[state] (1) [below left of=02] {$s_{31} \atop \{ p\}$}; 
\node[state] (2) [below right of=02] {$s_{32} \atop \{ q\}$};

\path
(0)  edge [left] node {$(a_1,a_2,a_3)$} (01)

(0)  edge [right] node {$(a_1, b_2, a_3), (a_1, a_2,b_3), (a_1, b_2, b_3)$} (02)

(01) edge [loop left] node {$(*,*,*)$} (01)

(1) edge [loop right] node {$(*,*,*)$} (1)
(2) edge [loop right] node {$(*,*,*)$} (2)
 
(02) edge [left] node {$(a_p,*,*)$} (1)
(02) edge [right] node {$(a_q, *,*)$} (2); 

\end{tikzpicture}
\end{center}
\label{fig:exampleB} 
%%%%%%%%%%%%%%%%%%%%%%%%%%
%%%%%%%%%%%%%%%%%%%%%%%%%%

\medskip
Consider the goal assignment $\gamma$, such that $\gamma(\{1,2 \}) = \atlg p$ and 
$\gamma(\{1,3 \}) = \atlg q$. 
The following hold: 

\begin{enumerate}  

\item  $\gmod, s \models \cgoal{\gamma}$ in terms of the semantics with plays-based strategies adopted here.   

Indeed, the strategy profile $\strprof$ prescribing the following action profiles: 
$(a_1,a_2,a_3)$ on the play $s$;  
$(a_p,*,*)$ on the play $s (a_1,a_2,b_3) s_2$;  
$(a_q,*,*)$ on the plays $s (a_1,b_2,a_3) s_2$ and $s (a_1,b_2,b_3) s_2$;  
and $(*,*,*)$ on any play ending at $s_1$, $s_{31}$, and $s_{32}$,
 would ensure the truth of  $\cgoal{\gamma}$ at $s$.

\item  $\gmod, s \not\models \cgoal{\gamma}$  in terms of the semantics with path-based strategies.

This is because player $1$ does not have such strategy for which both:  

\begin{enumerate}
\item the coalition $\{1,2\}$ ensures satisfaction of the goal $\atlg p$ by transition from $s_2$ to $s_{31}$, if $3$ acts $b_3$ at $s$ and the game goes to $s_2$, and 

\item the coalition $\{1,3\}$ ensures satisfaction of the goal $\atlg q$ by transition from $s_2$ to $s_{32}$, if $2$ acts $b_2$ at $s$ and the game goes to $s_2$. 

\end{enumerate}

\end{enumerate}
\end{example}

Thus, two different semantics for \cga emerge. Let us call them respectively \defstyle{play-based semantics}, hereafter indicated by $ \models_{\mathsf{play}}$, and 
\defstyle{path-based semantics}, hereafter indicated by $ \models_{\mathsf{path}}$.    
Now, a natural question arises: which is the better / more correct one? 
We argue that, as the example above indicates, this is the semantics based on play-based strategies, adopted here, because only play-based strategies can detect agents' deviations from the adopted strategy profile of the grand coalition, so as to ensure that the execution of these strategies by the non-deviating agents will still guarantee the fulfilment of their individual and collective goals.

\subsubsection{Relating the path-based and the play-based semantics}

Still, the two semantics can be reconciled in the example above by splitting off the node $s_2$ into 3 copies, each being the successor node of  $s_1$ for exactly one action profile, as on Figure \ref{fig:exampleB2}. (In fact, for this example, 2 copies suffice, just to split the outcomes from $(a_1, b_2, a_3)$ and 
$(a_1, a_2,b_3)$.) That way, different action profiles applied at any given state lead to different successor nodes, so different plays correspond to different paths, hence the two semantics coincide in the resulting model. 

\begin{center}
\medskip
\begin{tikzpicture}[->,>=stealth', shorten >=1pt, node distance=26mm, thick]
\tikzstyle{every state}=[fill=white,draw=black,text=black]

\node[state] (0) {$s \atop  \{p, q\}$};

\node[state] (021) [below left of=0] {$s_{21} \atop \{p, q\}$}; 
\node[state] (01) [ left of=021] {$s_1 \atop \{p, q\}$}; 
\node[state] (022) [right of=021] {$s_{22} \atop \{p,q \}$};
\node[state] (023) [ right of=022] {$s_{23} \atop \{p,q \}$};

\node[state] (1) [below left of=021] {$s_{31} \atop \{ p\}$}; 
\node[state] (2) [below right of=023] {$s_{32} \atop \{ q\}$};

\path
(0)  edge [left] node {$_{(a_1,a_2,a_3)}$} (01)

(0)  edge [right] node {$_{(a_1, b_2, a_3)}$} (021)
(0)  edge [right] node {$_{(a_1, a_2,b_3)}$} (022)
(0)  edge [right] node {$_{(a_1, b_2, b_3)}$} (023)

(01) edge [loop left] node {$_{(*,*,*)}$} (01)

(1) edge [loop left] node {$_{(*,*,*)}$} (1)
(2) edge [loop right] node {$_{(*,*,*)}$} (2)

(021) edge [left] node {$_{(a_p,*,*)}$} (1)
(021) edge [right] node {$_{(a_q, *,*)}$} (2)
 
(022) edge [left] node {$_{(a_p,*,*)}$} (1)
(022) edge [right] node {$_{(a_q, *,*)}$} (2)

(023) edge [left] node {$_{(a_p,*,*)}$} (1)
(023) edge [right] node {$_{(a_q, *,*)}$} (2); 

\end{tikzpicture}
\end{center}
\label{fig:exampleB2} 
%%%%%%%%%%%%%%%%%%%%%%%%%%
%%%%%%%%%%%%%%%%%%%%%%%%%%

This construction generalises to any concurrent game model to produce concurrent game models $\cgm$ having the following property:  for every state $s \in \cgm$ and action profiles $\actprof_1,\actprof_2$ available at $s$, if $\actprof_1 \neq \actprof_2$ then  $\out_\gmod(s,\actprof_1) \neq \out_\gmod(s,\actprof_2)$. We call  such models \defstyle{injective}. 
Every non-injective concurrent game model $\cgm$ can be transformed into an injective one $\iota(\cgm)$ by generalising the construction in the example above. That can be done in various ways, but a most economical one is to multiply every state $w$ into as many copies as the maximal number of incoming to $w$ transitional arrows leading from any other given state $u$ labelled with different action profiles applied at $u$. The labels of these copies, as well as all available action profiles at each of them and their outcomes,  are copied from those at $w$. Thereafter, all multiply-labelled transitions from any state $s$ to $w$ are respectively re-directed to the different copies of $w$. We leave out the routine technical details of this construction.  
We will call this construction \defstyle{state-copying and outcome-splitting}, abbreviated \defstyle{SCOS}.

 Note that plays and paths in any injective model are in a trivial 1-1 correspondence, hence play-based and path-based strategies in injective models coincide.  
It is then straightforward to show the following, by induction on the formulae in $\cga^+$.

\begin{proposition}
\label{prop:injective} 
In every injective concurrent game  model $\cgm$, $\models_{\mathsf{play}}$ \ and \ $ \models_{\mathsf{path}}$ coincide, i.e. for every state $s \in \cgm$ and a state formula 
$\varphi \in \cga^+$: \  $\cgm, s \models_{\mathsf{path}} \varphi$ iff  
$\cgm, s \models_{\mathsf{play}} \varphi$.   
\end{proposition}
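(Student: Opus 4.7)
The proof proceeds by structural induction on the state formula $\varphi$, using as its main ingredient a canonical bijection between play-based and path-based strategies that exists in any injective model.

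The plan is as follows. First I would establish the following pivotal observation: in an injective model $\cgm$, every finite path $w_0 w_1 \ldots w_n$ in $\cgm$ (i.e.\ sequence of states in which each $w_{i+1}$ is a successor of $w_i$ under some available action profile) can be extended to at most one history $h = w_0 \actprof_0 w_1 \actprof_1 \ldots w_n \in \hist(\cgm)$, because the action profile $\actprof_i$ is uniquely determined by the pair $(w_i, w_{i+1})$ thanks to injectivity. The analogous statement holds for infinite paths and plays. Thus for each state $s$ the map sending a history (resp.\ play) to its underlying state sequence is a bijection between histories (resp.\ plays) starting at $s$ and paths in $\cgm$ starting at $s$.

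Next I would lift this to strategies. Given any play-based strategy $\strat_\aga$ for player $\aga$, define a path-based strategy $\Phi(\strat_\aga)$ by setting $\Phi(\strat_\aga)(w_0 \ldots w_n) := \strat_\aga(h)$, where $h$ is the unique history whose state sequence is $w_0 \ldots w_n$ (and arbitrary on paths that are not state projections of any history); conversely every path-based strategy arises this way. The map $\Phi$ extends pointwise to joint strategies and strategy profiles. A routine check shows that $\Phi$ preserves outcomes: for every coalition $C$, state $s$ and strategy profile $\strprof$, the sets $\paths(s,\strprof,C)$ computed under the play-based and the path-based semantics coincide (because the underlying state sequence of each outcome play is unchanged when one replaces $\strprof$ by $\Phi(\strprof)$).

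With this in hand the induction is straightforward. The boolean cases are immediate, and for a path formula $\theta \in \pfml$ (including conjunctions in $\cga^+$) the truth at a play depends only on the state projection of the play and on the truth values of the state subformulas occurring in $\theta$; by the induction hypothesis applied to those subformulas, truth of $\theta$ along any path agrees under both semantics. For the critical case $\varphi = \brak{\gamma}$, assume $\cgm, s \models_{\mathsf{play}} \brak{\gamma}$, witnessed by a play-based profile $\strprof$. Then $\Phi(\strprof)$ is a path-based profile and, by the outcome-preservation established above together with the induction hypothesis applied to each $\gamma(C)$, it witnesses $\cgm, s \models_{\mathsf{path}} \brak{\gamma}$. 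The converse direction is symmetric using $\Phi^{-1}$.

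The main potential pitfall, which is really the only non-bookkeeping point, lies in step two: one must verify that the bijection $\Phi$ on strategies genuinely preserves the outcome sets of joint strategies, not merely of full strategy profiles. This requires observing that when we hold $\strprof\vert_C$ fixed and let the remaining players vary, the set of plays so generated and the set of paths so generated still correspond bijectively under taking state projections, which again is an immediate consequence of injectivity of $\cgm$. Once this is in place, the induction goes through with no further subtleties.
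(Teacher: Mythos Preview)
Your proposal is correct and follows essentially the same approach as the paper. The paper's own proof is the one-line remark that plays and paths in an injective model are in trivial 1--1 correspondence, hence play-based and path-based strategies coincide, after which the claim follows ``by induction on the formulae in $\cga^+$''; you have simply spelled out in more detail exactly this bijection and why it carries outcome sets to outcome sets, which is precisely the content the paper leaves implicit.
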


 We further note that the resulting model from applying the SCOS construction is \cga-bisimilar to the original one, in terms of the notion of \cga-bisimulation defined in Section \ref{subsec:Bisimulations}. 
Consequently, due to the bisimulation invariance theorem \ref{thm:bisimulation invariance+} 
established there, the SCOS construction preserves, inter alia, truth of all $\cga^+$ formulae with respect to the play-based semantics, as follows. 

\begin{proposition}
\label{prop:SCOS} 
Given any concurrent game model $\cgm$, state $s \in \cgm$, and a copy $s^i$ of $s$ in the injective model $\iota(\cgm)$ obtained by applying SCOS to  $\cgm$, for every state formula $\varphi \in \cga^+$:  \ $\cgm, s \models_{\mathsf{play}} \varphi$ \ iff \   
$\iota(\cgm), s^i \models_{\mathsf{play}} \varphi$.   
\end{proposition}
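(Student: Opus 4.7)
The plan is to reduce the claim to a direct application of the bisimulation invariance theorem (Theorem \ref{thm:bisimulation invariance+}) by exhibiting a \cga-bisimulation linking $s$ with each of its copies $s^i$.

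First, I would define the relation $Z \subseteq \states_\cgm \times \states_{\iota(\cgm)}$ by $(w, w^i) \in Z$ iff $w^i$ is one of the copies of $w$ introduced by the SCOS construction. Two things should be immediate from the construction: (i) each $w^i$ inherits the valuation of $w$, so atomic harmony holds across $Z$; (ii) at $w$ and at $w^i$ the locally available actions for each player, and thus the set of locally available action profiles $\ActProf_w = \ActProf_{w^i}$, are identical — SCOS only redirects transitions to fresh copies of target states without altering action sets at any state.

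Next, I would check the forth-and-back conditions of \cga-bisimulation (as defined in Section \ref{subsec:Bisimulations}). The key point is that SCOS preserves the whole outcome map up to $Z$: for any $\actprof \in \ActProf_w$, the outcome $\out_\cgm(w,\actprof) = u$ becomes a concrete copy $u^j = \out_{\iota(\cgm)}(w^i, \actprof)$, and by construction $(u, u^j) \in Z$. Conversely, every outcome leaving $w^i$ under some action profile $\actprof$ is a copy of $\out_\cgm(w,\actprof)$. Hence whichever formulation of the forth/back clauses the \cga-bisimulation uses (matching joint actions, matching outcome sets, or matching local game forms up to $Z$), the requirements are satisfied essentially by the identity map on action profiles — the bisimulation ``moves'' are the very same action profiles, with $Z$-related outcomes produced on both sides.

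With $Z$ shown to be a \cga-bisimulation, Theorem \ref{thm:bisimulation invariance+} gives bisimulation invariance of all $\cga^+$ state formulae, yielding $\cgm, w \models_{\mathsf{play}} \varphi \iff \iota(\cgm), w^i \models_{\mathsf{play}} \varphi$ for every $w$, every copy $w^i$ of $w$, and every $\varphi \in \cga^+$; in particular at $w = s$. The main obstacle I expect, without access to the precise definition of \cga-bisimulation here, is formulating the forth/back clauses correctly for goal assignments interpreted via play-based strategies; but since the bisimulation invariance theorem is stated precisely for this semantics, the witness-strategy translations between $\cgm$ and $\iota(\cgm)$ are already encapsulated in that theorem, and no additional strategy-lifting argument is needed beyond exhibiting $Z$.
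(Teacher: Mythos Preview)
Your proposal is correct and follows exactly the approach the paper takes: the paper simply remarks that the model produced by SCOS is \cga-bisimilar to the original one and then invokes Theorem~\ref{thm:bisimulation invariance+}. You have spelled out explicitly what the bisimulation relation is and why the forth/back clauses hold (via the identity on action profiles, since SCOS leaves action sets unchanged and only redirects outcomes to copies), which the paper leaves implicit; the only technicality you might add is that, since Definition~\ref{def:GPCLbisimulation} is stated for a single model, one formally works in the disjoint union of $\cgm$ and $\iota(\cgm)$.
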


Combining the two observations above, we obtain the following.  

\begin{corollary}
\label{cor:MCreduction} 
The model checking problem (MC) for formulae of $\cga^+$ with the play-based semantics is reducible to 
the model checking problem for $\cga^+$ with the path-based semantics, at the cost of at most quadratic blow-up of the size of the model. 
\end{corollary}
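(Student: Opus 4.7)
The plan is to obtain the reduction by composing Propositions \ref{prop:injective} and \ref{prop:SCOS}, and then to verify that the SCOS construction produces a model of at most quadratic size relative to the input. Concretely, given an instance of the play-based model checking problem, namely a model $\cgm$, a state $s \in \cgm$, and a formula $\varphi \in \cga^+$, the reduction outputs the instance $(\iota(\cgm), s^i, \varphi)$ of the path-based model checking problem, where $s^i$ is any copy of $s$ produced by SCOS. By Proposition \ref{prop:SCOS} we have $\cgm, s \models_{\mathsf{play}} \varphi$ iff $\iota(\cgm), s^i \models_{\mathsf{play}} \varphi$, and then by Proposition \ref{prop:injective}, applied to the injective model $\iota(\cgm)$, we get $\iota(\cgm), s^i \models_{\mathsf{play}} \varphi$ iff $\iota(\cgm), s^i \models_{\mathsf{path}} \varphi$. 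Chaining these equivalences yields the required reduction, and since the formula $\varphi$ is unchanged, the reduction is uniform in $\varphi$.

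The main remaining task is the size analysis of $\iota(\cgm)$. The plan is to measure the size of a concurrent game model in the natural way as the total number of state/action-profile pairs, i.e.\ $|\cgm| = \sum_{w \in \states} |\ActProf_w|$, which also equals the number of labelled transitions in $\cgm$. Under SCOS, each state $w$ is replaced by one copy for every incoming labelled transition $(u, \actprof)$ with $\out_u(\actprof) = w$. Hence the number of states of $\iota(\cgm)$ is bounded by the number of labelled transitions of $\cgm$, which is $|\cgm|$. Moreover, the available action profiles at each copy of $w$ are exactly those available at $w$, so the total number of labelled transitions in $\iota(\cgm)$ is bounded by $|\cgm| \cdot \max_w |\ActProf_w| \le |\cgm|^2$. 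This gives the at most quadratic blow-up claimed in the statement.

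The only step that requires a bit of care is the identification of $s^i$: the reduction must pick a specific copy of $s$ in $\iota(\cgm)$, but since Proposition \ref{prop:SCOS} holds for \emph{any} copy, any canonical choice (e.g.\ a distinguished ``initial'' copy, added by convention to $\iota(\cgm)$ for states treated as roots) works. I do not anticipate a substantive obstacle: once SCOS and Propositions \ref{prop:injective} and \ref{prop:SCOS} are in place, the corollary reduces to a straightforward bookkeeping argument about the size of $\iota(\cgm)$ and the transparent composition of the two truth-preserving steps.
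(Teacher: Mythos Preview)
Your proposal is correct and follows essentially the same approach as the paper: the corollary is obtained simply by chaining Propositions~\ref{prop:SCOS} and~\ref{prop:injective}, exactly as you do. The paper itself offers no further detail beyond ``combining the two observations above,'' so your explicit size analysis of $\iota(\cgm)$ actually supplies more than the paper does; your transition-count measure and the resulting quadratic bound are a reasonable way to cash out the informal claim, though note that your copy count (one copy per incoming labelled transition) is slightly more generous than the paper's ``most economical'' SCOS variant, which takes only the maximum over predecessors --- both yield the same quadratic bound.
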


Observe also that  in any concurrent game model $\cgm$, 
every path-based strategy is also a play-based strategy (prescribing the same action on every two plays generating the same path). Therefore, for every formula $\cgoal{\gamma}\in \cga^+$, in  any concurrent game model $\cgm$ and a state $s \in \cgm$: \ 
if $\cgm, s \models_{\mathsf{path}}  \cgoal{\gamma}$ then 
$\cgm, s \models_{\mathsf{play}}  \cgoal{\gamma}$.

\medskip
We now show that the two semantics also differ with respect to the respective validities (hence, also 
with respect to the satisfiable formulae) of \cga. For that we will use the idea of Example \ref{exampleB}. Consider the goal assignment $\gamma$ defined there, as well as the goal assignment $\gamma'$ with support $\{\{1,2\}, \{1,3\}, \{1,2,3\}\}$, where: \\  
$\gamma'(\{1,2 \}) = \atlx  \cgoal{\{1\} \gass \atlg p}$, 
$\gamma'(\{1,3 \}) =  \atlx  \cgoal{\{1\} \gass \atlg q}$,  and 
$\gamma'(\{1,2,3 \}) =  \atlg (p \land q)$.
 
Then the following hold: 

\begin{enumerate}

\item $\models_{\mathsf{play}} \cgoal{\gamma'} \to \cgoal{\gamma}$. 

Indeed, in any given model, a strategy profile satisfying $\cgoal{\gamma}$ can be produced from a strategy profile satisfying $\cgoal{\gamma'}$ by amending the strategy for player $1$ in the former strategy profile with those strategies for $1$ claimed to exist in the respective successor outcomes from the joint strategies for $\{1,2 \}$ for $\{1,3 \}$. 
 These amendments are done as follows. 
 Consider any successor state $w$ of the current state $s$, obtained as the outcome from applying at $s$ an action profile $\actprof$ obtained, for instance, by extending the joint action of $\{1,2 \}$ prescribed at $s$ by the strategy profile $\strprof'$ witnessing the truth of $\cgoal{\gamma'}$ at $s$ with any action of 3 different from the one prescribed by its strategy in $\strprof'$. Then, the strategy for $1$ is re-defined on any play of the type $\pi s \actprof w \pi'$ to prescribe the action which a strategy for $1$ that is claimed by $\gamma'(\{1,2 \})$ to exist at $w$ prescribes at the play $w \pi'$ in order to ensure the truth of $\atlg p$ on any resulting play. The strategy for $1$ on all plays passing through $\actprof' w'$ when $\actprof'$ comes from a joint action of $\{1,3 \}$ at $s$ is re-defined likewise. Lastly, when all 3 players are following the strategy profile $\strprof'$ witnessing the truth of $\cgoal{\gamma'}$ at $s$, in the resulting play both $\atlg p$ and $\atlg q$ hold, so no strategy amendment is needed. 
  Since the strategies that we consider are play-based, the two cases of strategy amendments are independent from each other, as they apply to disjoint sets of plays, and are therefore unproblematic to combine (which is not the case for path-based strategies, as Example \ref{exampleB} shows).  
  It is now straightforward to show that the strategy profile $\strprof$, resulting from replacement of the strategy for 1 in $\strprof'$ with the amended strategy described above, witnesses the truth of $\cgoal{\gamma}$. 

\item $\not\models_{\mathsf{path}} \cgoal{\gamma'} \to \cgoal{\gamma}$. 

Indeed, that formula fails in the model displayed in Example \ref{exampleB}, because it is straightforward to show that $\gmod, s \models_{\mathsf{path}}  \cgoal{\gamma'}$. 
\end{enumerate}

\subsection{Standard translation of the path-based semantics of $\cga^+$ into
fragments of Strategy Logic and the complexity of model checking \cga.
}  
\label{subsec:ST2SL}
 
 We refer here to a standard version SL of Strategy logic as defined e.g. in \cite{fsttcs/MogaveroMV10}, 
\cite{tocl/MogaveroMPV14},  involving variables ranging over strategies that can be associated with any agents within the formulae by means of  strategy assignments, and quantification over such variables. 
Here we mostly follow the notation for strategy assignments from \cite{DBLP:conf/lics/MogaveroMS13}, and  \cite{DBLP:conf/aaai/AcarBM19}, which slightly differ from the one in \cite{DBLP:journals/mst/GardyBM20}.

First, we show that the logic $\cga^+$ \emph{with the path-based semantics} can be translated to SL 
in a way very similar to the standard translation of modal logic to first-order logic. The key clause is the translation of the operator $\brak{\gamma}$, defined as follows\footnote{This form of the translation was suggested by an anonymous reviewer.} (using the notation from
 \cite{DBLP:conf/aaai/AcarBM19}), 
for 
$\Agt = \{\aga_1,...,\aga_n\}$: 
\[\trn(\brak{\gamma}) =  
 \exists x_1 ... \exists x_n \forall y_1 ... \forall y_n
\bigwedge_{C \subseteq \Agt}
\flat_{C} \gamma(C)
\]
where the binding prefix $\flat_{C}$ assigns to each agent $\aga_i$ in $\Agt$ the value of the strategy variable
$x_i$ if $\aga_i \in C$, and the value of the strategy variable $y_i$, otherwise. 
To shorten the formula up to equivalence, the big conjunction above can be restricted to range only over those coalitions $C \subseteq \Agt$ for which $\gamma(C) \neq \X \top$. 
 
 Since the translation above places only conjunctions of temporal goals in the scope of all quantifiers over strategies, $\cga^+$ translates in the Conjunctive-Goal fragment SL[CG]. Using this, we obtain the following: 

\begin{proposition}
\label{prop:MC complexity} 
For each of the path-based semantics and the play-based semantics of $\cga^+$, the respective model checking problem is \textrm{PTIME-complete} in the size of the model and in \textrm{2ExpTime} in the size of the formula. 
\end{proposition}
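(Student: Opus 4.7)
The plan is to reduce both claims (the upper bounds and the lower bound) to known results about the Conjunctive-Goal fragment SL[CG] of Strategy Logic, leveraging the standard translation $\trn$ just defined and the reduction between the two semantics established in Corollary~\ref{cor:MCreduction}.

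First, I would treat the \emph{path-based semantics}. The standard translation $\trn$ maps every state formula of $\cga^+$ to an SL[CG] formula in a compositional way. I would check that $|\trn(\varphi)|$ is polynomial in $|\varphi|$: each goal-assignment operator $\brak{\gamma}$ adds a prefix of $2n$ strategy quantifiers (where $n = |\Agt|$ is fixed) followed by a conjunction of bindings $\flat_C \gamma(C)$ ranging over (at most) $2^n$ coalitions, each applied to a path-formula already produced by the translation; since $\Agt$ is fixed, this is a linear-size expansion per operator. From the correctness of the standard translation, $\gmod, s \models_{\mathsf{path}} \varphi$ iff $\gmod, s \models \trn(\varphi)$ in the SL sense. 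Since the model-checking problem for SL[CG] is known to be PTIME-complete in the size of the model and in 2ExpTime in the size of the formula (cf.\ the cited results from \cite{DBLP:conf/lics/MogaveroMS13,DBLP:journals/mst/GardyBM20}), invoking those algorithms on $\trn(\varphi)$ and $\gmod$ yields the desired upper bounds for the path-based semantics.

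Next, for the \emph{play-based semantics}, I would appeal to Corollary~\ref{cor:MCreduction}: to decide whether $\gmod, s \models_{\mathsf{play}} \varphi$, replace $\gmod$ by its SCOS-unfolding $\iota(\gmod)$ and a chosen copy $s^i$ of $s$, so that by Propositions~\ref{prop:injective} and~\ref{prop:SCOS} the problem reduces to checking $\iota(\gmod), s^i \models_{\mathsf{path}} \varphi$. Since the SCOS construction causes at most quadratic blow-up, this preserves PTIME in the model and leaves the formula (hence its 2ExpTime dependence) unchanged.

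Finally, for the PTIME \emph{lower bound} I would observe that, as already noted in Section~\ref{subsec:CGAsemantics}, \ATL embeds as a fragment of \cga via $\coal{C}\theta \equiv \cgoal{C \gass \theta}$, and in particular \CTL embeds as the single-agent fragment. Since model checking of \CTL is PTIME-hard (even in the size of the model, with a fixed formula), the same hardness transfers to $\cga^+$ under both semantics, because on single-agent models the two semantics coincide and the embedding is logspace. The main (mildly delicate) step is verifying that the translation $\trn$ lands in SL[CG] with only a polynomial blow-up and that the complexity statements cited for SL[CG] apply in the form we need; everything else is routine given the earlier results of the paper.
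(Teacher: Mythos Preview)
Your proposal is correct and follows essentially the same route as the paper: upper bounds via the translation $\trn$ into SL[CG] together with the cited complexity results, transfer to the play-based semantics via Corollary~\ref{cor:MCreduction}, and the PTIME lower bound via the embedding of \ATL (the paper cites \ATL directly rather than going through \CTL, but either suffices). Your additional remarks on the polynomial size of $\trn(\varphi)$ and on the two semantics coinciding for the lower-bound instances are useful sanity checks that the paper leaves implicit.
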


\begin{proof} 
For MC in the path-based semantics, the claim follows from the translation $\trn$ above and the respective results in \cite[Theorem IV.2]{DBLP:conf/lics/MogaveroMS13} and in \cite{DBLP:journals/mst/GardyBM20}, for the upper bounds. The lower bound in terms of the size of the model follows from the \textrm{PTIME-complete} complexity of MC for ATL \cite{AHK-02}, which is embedded into \cga. The precise complexities of MC  for \cga and $\cga^+$ in the size of the formula are currently still open. 

For the MC in the play-based semantics the complexity bounds are obtained from those above, by using the reduction to MC in the path-based semantics provided by Corollary \ref{cor:MCreduction}. 
\end{proof}

Note further that, when $\brak{\gamma}$ is in the \defstyle{flat fragment $\fcga^+$ of $\cga^+$}, i.e. all goals $\gamma(C)$ are conjunctions of purely temporal goals (not containing nested $\gamma$ operators), the translation  formula above is in the flat sub-fragment FSL[CG] of SL[CG]  \cite{DBLP:conf/aaai/AcarBM19}. Consequently, that translation can also be used for applying algorithms and obtaining \textrm{PSpace} complexity bound for the satisfiability problem in $\fcga^+$ with the 
path-based semantics 
 by reduction to the satisfiability problem in FSL[CG], shown in  \cite{DBLP:conf/aaai/AcarBM19} to be decidable and \textrm{PSpace-complete} (notably, lower than the \textrm{2ExpTime} complexity of model checking for FSL[CG], as shown in  \cite{DBLP:conf/aaai/AcarBM19}).  
On the other hand, as we show in Section \ref{sec:FMP}, using embedding of $\cga^+$ to a $\mu$-calculus extension $\xlangcga_\mu$ of the nexttime fragment $\xlangcga$ of \cga, defined in Section \ref{subsec:mu-TLCGA}, 
the satisfiability problem in $\xlangcga_\mu$ is in \textrm{2ExpTime}, hence the satisfiability problem in the whole $\cga^+$, with its standard 
play-based semantics, is in \textrm{3ExpTime}.

%%%%%%%%%%%%%%%%%%%%%%%%%%%%%%%%
\section{Some applications of \cga and $\cga^+$}
\label{sec:applications}
%%%%%%%%%%%%%%%%%%%%%%%%%%%%%%%%

%%%%%%%%%%%%%%%%%%%%%%%%%%%%%%%%
\subsection{Examples of expressing and reasoning about group objectives with \cga}
\label{subsec:expressing}
%%%%%%%%%%%%%%%%%%%%%%%%%%%%%%%%

%%%%%%%%%%%%%%%%%%%%%%%%%%%%%%%%
\subsubsection{Example 1: Password protected data sharing}
\label{example1}
%%%%%%%%%%%%%%%%%%%%%%%%%%%%%%%%

This example is adapted from \cite{GorankoEnqvist18}, where it was adapted from \cite{parikh1985logic}. Consider the  following scenario involving two players, Alice  (denoted $A$) and Bob (denoted $B$). Each of them owns a server storing some data, the access  to which is protected by a password. Alice and Bob want to exchange passwords, but neither of them is sure whether to trust the other.  So the common goal of the two players is to cooperate and exchange passwords, but each player also has the private goal not to give away their password in case the other player turns out to be untrustworthy and not provide his/her password. When and how can the two players cooperate to exchange passwords? The answer depends on the kind of actions that Alice and Bob can perform while  attempting to achieve their common objective. However, we are more interested now in formalising the problem in \cga. 

Let us first try to express the common objective by a \cga formula. For that, we write $H_A$ for ``Alice has access to the data on Bob's server'' and $H_B$ for ``Bob has access to the data on Alice's server''. Then an obvious candidate for a formula expressing the common goal is the goal assignment formula 
\[\cgoal{\{A,B\}  \gass \atlf (H_A \wedge H_B)} \]
stating that Alice and Bob have a joint strategy to eventually reach their common objective. 
However, it is easy to see that this is not good enough. Indeed, while common desired eventual outcome is $H_A \wedge H_B$, but for $\mathsf{Alice}$ the worst possible outcome is $\neg H_A \wedge H_B$, whereas the worst possible outcome for Bob is $H_A \wedge \neg H_B$, and each of them would like to avoid their worst possible outcome to happen while trying to achieve the common goal. Thus, the common goal can be formulated better as ``\textit{eventually reach a state where both players can access each other's data and until then no player should be able to unilaterally access the other's data}", expressed by the following goal assignment formula: 
\[  \cgoal{ \{A,B\}\gass (H_A \ifff H_B) \until (H_A \wedge H_B ) } \]
The formula above is ok if both players follow a strategy profile that would realise that goal, but it does not express yet the stronger requirement that even if one of them deviates from that strategy profile the other should still be able to protect her/his interests while still following her/his strategy. For that, we need to enrich the goal assignment above with individual goals: 
\[ \cgoal{ \{A,B\}\gass (H_A \ifff H_B) \atlu (H_A \wedge H_B ); \ 
 A \gass \atlg(H_B \rightarrow H_A ); \ B \gass \atlg(H_A \rightarrow H_B) } \]

Note that the common goal can now be simplified to the original one, to produce an equivalent to the above formula: 
\[  \cgoal{ \{A,B\} \gass \atlf (H_A \wedge H_B ); \ 
 A \gass \atlg(H_B \rightarrow H_A ); \ B \gass \atlg(H_A \rightarrow H_B) } \]

%%%%%%%%%%%%%%%%%%%%%%%%%%%%%%%%
\subsubsection{Example 2: Sheep and wolves: a fragile alliance}
\label{example2}
%%%%%%%%%%%%%%%%%%%%%%%%%%%%%%%%

This example is a remake with a twist of a well known children's puzzle. 
A group of 3 wolves and 3 sheep is on the one side of a river and they want to cross the river by boat. There is only one boat that can take 2 animals at a time, but there is no boatman, so one animal has to take the boat back every time, until they all cross the river. The main problem, of course, is that if the wolves ever outnumber the sheep on either side of the river, or on the boat, then the sheep in minority will be promptly eaten up by the wolves. The question is whether, -- and if so, how -- all animals can cross the river without any sheep being eaten. 
(Spoiler alert: the answer will be gradually revealed further, so the reader may wish to pause here and think on the puzzle before reading further.)  

Let us formalise the problem in \cga. First, some notation. Let $\mathsf{Sheep}$ denote the set of all sheep, $\mathsf{Wolves}$ denote the set of all wolves, 
$\mathbf{c}$ denote the proposition ``\textit{all animals have crossed the river}'' and  
$\mathbf{e}$ denote the proposition ``\textit{a sheep gets eaten}''. Then the problem seems to be expressed succinctly as the question whether the following formula is true: 
 \[
 \brak{\mathsf{Sheep} \cup \mathsf{Wolves}\gass (\neg \mathbf{e}) \atlu \mathbf{c}}\ 
 \]
As in the previous example, this formula is too weak to express the important subtlety that, even if such strategy exists, nothing guarantees that the wolves will not decide to deviate from it and have a gourmet feast with a sheep before (or after)  crossing the river. Thus, we need to add an extra goal for all sheep, protecting their interest to stay alive: 
 \[
 \brak{\mathsf{Sheep} \cup \mathsf{Wolves}\gass (\neg \mathbf{e}) \atlu \mathbf{c}; 
 \  \mathsf{Sheep}\gass \always \neg \mathbf{e}}
 \]
 Now, the common goal can clearly be simplified, while preserving the formula up to equivalence: 
 \[
 \brak{\mathsf{Sheep} \cup \mathsf{Wolves}\gass \atlf \mathbf{c}; \ \mathsf{Sheep}\gass\always \neg \mathbf{e}}
 \]
  
   Let us now try to model it as a concurrent game model.  We can assume that the river crossing happens instantaneously, so each state of the game is described uniquely (up to re-shuffling of the sheep and of the wolves, which can be considered identical) by the numbers of sheep and wolves on each side of the river, plus the position of the boat (on one or the other side of the river). At each river crossing round, each of the animals has two possible actions: `stay' or `go on the boat and cross the river'. The respective transitions are then readily defined, by ensuring that only legitimate transitions can occur, so e.g., if more than two animals decide to jump on the boat at the same time, the state does not change (the transition is a loop). The states satisfying $\mathbf{e}$ are precisely those where there are more wolves than sheep on any one side of the river, whereas only one state satisfies $\mathbf{c}$, where all animals have crossed the river.  

And, now, the question: is there a strategy profile satisfying the goal assignment above? The answer, perhaps surprisingly, depends on the specific design of the `river crossing game'. If it presumes that all animals act simultaneously, then it is easy to see that any joint strategy realising the common goal can be abused by the wolves deviating from it and eating some of the strategy-abiding sheep. For example consider the joint strategy resulting in the play shown below:
\[
\begin{array}{l  | c  c c | r}
S\; S\;S \; W\;W\;W & B & & &  \\
S\; S \; W\;W &  &  &  B & S\; W \\
S\; S \; S\; W\;W & B    & & &  W \\
S\; S \; S &  &  &  B & W\; W \; W \\
S\; S\; S\; W &  B  & &   & W\;W \\
S\;W &  & &   B & S \; S \; W\; W \\
S\; S \; W \; W &  B  & &   & S\; W \\
 W\;W &  & &   B & S\; S \; S\; W \\
W\; W\;W &  B  & &   & S\; S \; S \\
 W &  & &   B & S\; S \; S \; W\; W \\
 W\;W &  B  & &  & S\; S \;S\; W \\
 &  & &    B & S\; S\;S \; W\;W\;W
\end{array}
\] 
At the very first round of this play, a sheep and a wolf cross the river together. If the wolf deviates from this action and stays instead, then two sheep are left to fend against three wolves on one side of the river. We leave it to the reader to convince themselves that any joint strategy that achieves the common goal must encounter a similar situation. 

So, the answer to our question in this case is `No'. However, to level the playing field, the game can be modified so that at every state \emph{first all wolves choose how to act and then all sheep choose how to act}, i.e. formally, every round gets split into two sub-rounds with intermediate states (thus, making it a partly turn-based game). The effect of this change is that now a strategy profile satisfying the goal assignment above could be designed in such a way that the joint strategy of the sheep could involve a suitable joint counter-action to any possible deviation of the wolves that would jeopardise a sheep. Indeed, the joint strategy shown previously can now easily be modified to make it sheep-friendly.

%%%%%%%%%%%%%%%%%%%%%%%%%%%%%%%%
\subsection{Some applications of \cga and $\cga^+$
to non-cooperative and cooperative game theory}
\label{subsec:applications2GT}
%%%%%%%%%%%%%%%%%%%%%%%%%%%%%%%%

%%%%%%%%%%%%%%%%%%%
\subsubsection{Expressing Nash equilibria}
\label{subsec:equilibria}  
%%%%%%%%%%%%%%%%%%%
The fundamental game-theoretic concept of Nash equilibrium can be applied in the concurrent games that we consider, where, given a goal assignment $\gamma$, the payoff from each play for every player is binary:  1, if that player's goal defined by $\gamma$ is satisfied on that play (i.e., the player is a `winner' in the play), and 0 otherwise  (i.e., the player is a `loser' in the play). 
However, this notion makes little sense in such qualitative setting, because every strategy profile where no `loser' can deviate unilaterally to satisfy her objective is a weak Nash equilibrium. That gives no individually rational reasons for the losers to adhere to that strategy profile, because a deviation cannot be penalised any further by making their payoff even worse than it already is. Thus, we are rather sceptical about the use of  Nash equilibria in such games with qualitative objectives as those considered here, and we argue further for an alternative solution concept in games with such objectives. 
Accordingly, even though the language of \cga is not designed with the explicit purpose of expressing equilibria by means of formulae, that can be done in terms of characterising the equilibria profiles as those witnessing suitable goal assignments defined by means of the original goal assignment $\gamma$. In this way, \cga also enables expressing \emph{existence of equilibria} by means of formulae.  

First, we add some terminology and notation. Let us fix a concurrent game model $\cgm$ with initial state $w$. 
Then, any pair $(\strprof,\gamma)$ of a strategy profile $\strprof$ and a goal assignment $\gamma$, the pair $(\strprof,\gamma)$ determines a partition of the family of possible coalitions of agents $\powerset{\Agt}$ in two disjoint subsets: the set $\Wn(\strprof,\gamma)$ of  \defstyle{winning coalitions}, whose goals assigned by $\gamma$ are satisfied in the play $\play(w,\strprof)$ in $\cgm$ starting from $w$ and induced by $\strprof$, and  the set $\Ls(\strprof,\gamma)$ of \defstyle{losing coalitions}, 
 whose goals assigned by $\gamma$ are not satisfied in $\play(w,\strprof)$. 
 In particular, the pair $(\strprof,\gamma)$ determines a partition of $\Agt$ in two disjoint subsets, respectively the set $\iWn(\strprof,\gamma)$ of (individual) winners and the set $\iLs(\strprof,\gamma)$ of (individual) losers from $\strprof$ with respect to $\gamma$. 

We now illustrate the idea of expressing equilibria in \cga in the case of a nexttime goal assignment $\alpha$ with an example for 2 players, $A$ and $B$, with respective individual nexttime goals  $\alpha_A$ and $\alpha_B$. First, we can express an equilibrium satisfying any fixed combination of individual goals.  In the case when both goals are satisfied by the equilibrium profile, no deviations can possibly improve any player's payoff, so 
$\strprof$ is such equilibrium profile iff $\strprof$ witnesses the goal assignment 
$\{A,B\} \gass \atlx (\alpha_A \wedge \alpha_B)$, hence existence of such equilibrium profile is simply expressed by the formula
\[\brak{\{A,B\} \gass \atlx (\alpha_A \wedge \alpha_B)} \]
 A more interesting case is when the equilibrium profile $\strprof$ satisfies only one goal, say $\alpha_A$.
 This is the case precisely when $\strprof$ witnesses the goal assignment 
 $\{A,B\} \gass \atlx (\alpha_A \wedge \neg \alpha_B); \ A \gass \atlx \neg \alpha_B$. The goal $\atlx \neg \alpha_B$ of $A$ encodes the claim that if $A$ follows the equilibrium strategy then $B$ cannot deviate to satisfy its goal, thus $B$ has no unilateral beneficial deviation from  the equilibrium strategy profile. Thus, the following formula expresses existence of such equilibrium: 
\[\brak{\{A,B\} \gass \atlx (\alpha_A \wedge \neg \alpha_B); \ A \gass \atlx \neg \alpha_B} \]
Likewise, the following formula expresses existence of an equilibrium not satisfying anyone's goal: 
\[\brak{\{A,B\} \gass \atlx (\neg \alpha_A \wedge \neg \alpha_B); \ {A} \gass \atlx \neg \alpha_B; \ {B} \gass \atlx \neg \alpha_A} \]
Thus, the language of \cga allows for expressing more refined descriptions of equilibria and 
the disjunction of all such formulae expresses the existence of any equilibrium with respect to $\alpha$.  

In the more general case of any set of agents $\Agt$ and pair $(\strprof,\alpha)$ of a strategy profile $\strprof$ and a nexttime goal assignment $\alpha$, the strategy profile $\strprof$ is a Nash equilibrium with respect to $\alpha$ iff $\strprof$ witnesses the goal assignment  $\alpha^{0}_{\strprof}$ defined as follows: 
  $\alpha^{0}_{\strprof}(\Agt) = \atlx \bigwedge_{\aga \in \iWn(\strprof,\alpha)} \alpha(\aga)$, \  
$\alpha^{0}_{\strprof}(\Agt \setminus \{\aga\}) = \lnot \alpha(\aga)$ for each 
$\aga \in \iLs(\strprof,\alpha)$, \ and $\alpha^{0}_{\strprof}(C) = \top$ in all other cases.   

Then, the disjunction of all such formulae over all subsets $\iWn(\strprof,\alpha)$ of 
$\Agt$ expresses the existence of any equilibrium, though the size of that formula grows exponentially in the number of agents. 

Lastly, for the most general case of any goal assignment $\gamma$, the idea is the same, but using $\cga^+$ 
(which allows conjunctions of path formulae of $\cga$ as goals) in order to define the goal 
  $\gamma^{0}_{\strprof}(\Agt)$. Note that in the case when all individual goals are of the type $\atlg \psi$, the language of \cga again suffices, because $\atlg$ distributes over conjunctions.   
  
Thus, computing the sets $\iLs(\strprof,\gamma)$ and answering the question of whether the game has any 
Nash equilibria can be solved by reducing to model checking in $\cga^+$; in the special case when all goals are nexttime formulae, or all are $\atlg$-type formulae, model checking in \cga suffices.

%%%%%%%%%%%%%%%%%%%
\subsubsection{Co-equilibria} 
\label{subsec:co-equilibria}  
%%%%%%%%%%%%%%%%%%%
Here we define and promote a new, alternative solution concept, that naturally arises in our framework, vis that of a `\emph{co-equilibrium}', which is also one of the main motivations for the introduction of the operator $\brak{\cdot}$. Recall that an equilibrium strategy profile means that no player can deviate individually to improve their performance, and that concept makes very good sense when players pursue \emph{quantitative} individual objectives which are usually achieved to some degree, leaving room both for possible optimisation and for punishment by the other players when deviating, hence can serve as an effective deterrent from deviation. As we argued above, it does not make very good sense when the individual objectives are \emph{qualitative}, i.e. \emph{win} or \emph{lose}, as losing is the worst possible outcome for the player, hence there  can be no deterrent from deviation from a strategy profile where that player is losing anyway. 
Furthermore, players usually participate simultaneously in several coalitions with mutually consistent, yet different objectives.  Assuming that they are first of all individually rational and only then collectively rational, players try to adjust their strategic behaviour so as to serve the collective objectives as much as possible while first protecting and ensuring their individual interests. In particular, all players usually have one common, societal objective -- say to keep the entire system live and safe -- so they enter into a global `social contract' over that common objective, but only on condition that pursuing it would not compromise the achievement of their individual objectives.  
These aspects of strategic interaction of individually rational agents serve as our motivation to define the 
somewhat dual to equilibrium notion of \emph{co-equilibrium} in the context of collective and individual qualitative objectives, as a strategy profile that not only ensures satisfaction of the collective objective (the `social contract') if all players follow it, but moreover also guarantees to every player who adheres to it that even if all other players deviate, that would not affect the satisfaction of his/her individual objective\footnote{The notion of co-equilibrium, when applied to possibly quantitative objectives, is essentially equivalent to the special case of  `$t$-immune strategy profile',  introduced in \cite{DBLP:conf/podc/AbrahamDGH06}, when $t = n-1$, where $n$ is the number of players.}.
Thus, a co-equilibrium is a strongly stable solution concept that, we argue, makes better sense than a Nash equilibrium in games with  \emph{qualitative} individual objectives and existence of a co-equilibrium is an important criterion for stability of a society of strategically interacting individually rational agents.  
Formally, a strategy profile $\strprof$ is a \defstyle{ co-equilibrium} with respect to a goal assignment $\gamma$ iff $\strprof$ witnesses the goal assignment  $\gamma^{*}$ which is the restriction of $\gamma$ with support consisting of the grand coalition $\Agt$ and all singleton sets of agents. Respectively, existence of a co-equilibrium with respect to $\gamma$ can be expressed in \cga simply as $\brak{\gamma^{*}}$.

%%%%%%%%%%%%%%%%%%%
\subsubsection{Expressing other stable outcomes}
\label{subsec:stable-outcomes}  
%%%%%%%%%%%%%%%%%%%

The notion of co-equilibrium is one of a family of stable strategy profiles that can be defined by varying 
the notion of stability with respect to non-existence of various \emph{beneficial deviations} of players or coalitions. 
In the case of co-equilibrium, no player or coalition is interested to deviate simply because they are all satisfied by the co-equilibrium strategy profile, so there are no beneficial deviations. This, of course, is the ideal case, which is often not possible in reality, so we will look at some natural relaxations of the notion of stable outcome. 

 Let us fix a concurrent game model $\cgm$ with initial state $w$, a coalitional goal assignment $\gamma$, and a strategy profile $\strprof$. In Section \ref{subsec:equilibria}, we defined winning players and coalitions in the outcome play $\play(w,\strprof)$ with respect to $\gamma$, the sets 
 $\Wn(\strprof,\gamma), \Ls(\strprof,\gamma)$ and respectively 
 $\iWn(\strprof,\gamma), \iLs(\strprof,\gamma)$. In addition, we say that a coalition $C$ is 
 \defstyle{individually winning in $\play(w,\strprof)$ with respect to $\gamma$} if 
$C \subseteq  \iWn(\strprof,\gamma)$; 
respectively, 
$C$ is \defstyle{individually losing in $\play(w,\strprof)$ with respect to $\gamma$} if 
$C \subseteq  \iLs(\strprof,\gamma)$.

Now, with reference to the fixed  coalitional goal assignment $\gamma$, we say that a strategy profile 
$\strprof$ is: 

\begin{enumerate}
\item \defstyle{individually stable at $w$}, if no losing player in $\play(w,\strprof)$ can deviate from  
$\strprof$ to become a winning player in the resulting strategy profile $\strprof'$. 

This is precisely equivalent to the standard notion of Nash equilibrium.

\item \defstyle{strongly individually stable at $w$}, if no group of losing players in $\play(w,\strprof)$ can collectively deviate from $\strprof$ to become a group of winning players in the resulting strategy profile $\strprof'$. 

This corresponds to Aumann's notion of \emph{strong equilibrium} \cite{aumann59} and  is similarly expressible in \cga (for nexttime goal assignments) or $\cga^+$ (for any goal assignments):    

 $\strprof$ is strongly individually stable with respect to $\gamma$ iff $\strprof$ witnesses the goal assignment  $\gamma^{s}_{\strprof}$ defined as follows: 
 
  $\gamma^{s}_{\strprof}(\Agt) = \bigwedge_{\aga \in \iWn(\strprof,\gamma)} \gamma(\aga)$, \\ 
  $\gamma^{s}_{\strprof}(C) = \lnot \bigwedge_{\aga \in \overline{C}} \gamma(\aga)$
 for each $C$ such that 
$\overline{C} \subseteq \iLs(\strprof,\gamma)$,  \\ 
and $\gamma^{s}_{\strprof}(C) = \top$ in all other cases.

\item \defstyle{coalitionally stable at $w$}, if no losing coalition in $\play(w,\strprof)$ can collectively deviate from $\strprof$ to become a winning coalition in the resulting strategy profile $\strprof'$. 

This is similarly expressible in \cga (for nexttime goal assignments) or $\cga^+$ (for any goal assignments):  

 $\strprof$ is coalitionally stable  with respect to $\gamma$ iff $\strprof$ witnesses the goal assignment  $\gamma_{\strprof}$ defined as follows: 
 
  $\gamma_{\strprof}(\Agt) = \bigwedge_{C \in \Wn(\strprof,\gamma)} \gamma(C)$, \\ 
  $\gamma_{\strprof}(C) = 
 \lnot \gamma(\overline{C})$ for each $C$ such that 
$\overline{C} \in \Ls(\strprof,\gamma)$,  \\ 
and $\gamma_{\strprof}(C) = \top$ in all other cases.   

\end{enumerate}

\subsubsection{Cooperative games and \cga}
\label{subsec:CoopGT}
The technical ideas outlined above can also be applied to express in \cga (or $\cga^+$) some key notions of the theory of \emph{cooperative games (with transferable utility)} \cite{Ramamurthy90},  
\cite{BranzeiDimitrovTijs}, \cite{DBLP:series/synthesis/2011Chalkiadakis}, including  \emph{beneficial deviations}, \emph{stable outcomes}, and \emph{core} of a cooperative game. We leave the exploration of  these to a further work, and here we only outline one example showing how that can be done for the kind of \emph{cooperative concurrent games} studied in \cite{DBLP:conf/atal/GutierrezKW19}. 
These games are played in concurrent game structures, where each player has a goal expressed by a set ot plays  starting from some fixed initial state, regarded as the "winning plays" for that player. In particular, cooperative concurrent games with goals expressible in the linear time logic \LTL (cf e.g. \cite{TLCSbook}) are studied in \cite{DBLP:conf/atal/GutierrezKW19}, for which it is shown that a suitably defined notion of a core of such a game can be logically characterised using the logic \ATLs 
and the computational complexities of certain decision problems associated with that  core have been established. 

 Let us fix a state $w$ in a concurrent game model $\cgm$ and consider the cooperative concurrent game $G = G(\cgm,w)$ generated at $w$ in $\cgm$ and a strategy profile $\strprof$ in $G$. 
We note that the terminology from \cite{DBLP:conf/atal/GutierrezKW19}, differs somewhat from ours, as they call a winning (resp. losing) coalition\footnote{Only individual, but no coalitional goals are considered in \cite{DBLP:conf/atal/GutierrezKW19}.} what we call in Section \ref{subsec:stable-outcomes} \emph{individually winning} (resp. \emph{individually losing}) coalition.  For consistency, we will adhere to our terminology. 

To make the parameter $w$ referring to the current initial state explicit, we will denote the set of all losing players in $\play(w,\strprof)$ with respect to $\gamma$ by $\iLs(w,\strprof,\gamma)$.  
Thus,  
$C \subseteq \iLs(w,\strprof,\gamma)$, i.e. the coalition $C$ is individually losing in $\play(w,\strprof)$,  
iff $\strprof$ witnesses the  $\cga^+$ goal assignment  $\Agt  \gass \bigwedge_{\agi \in C} \lnot \gamma(\agi)$. In the case when all $\gamma(\agi)$ are nexttime formulae $\atlx \psi(\agi)$, that goal assignment  can be replaced by the $\cga$ goal assignment $\Agt  \gass \atlx \bigwedge_{\agi \in C} \lnot \psi(\agi)$. 

Now, if $C \subseteq \iLs(w,\strprof,\gamma)$, 
then a \emph{individually beneficial deviation}\footnote{Note that this notion, defined in  \cite{DBLP:conf/atal/GutierrezKW19} as ``beneficial deviation'', does not actually depend on $\strprof$ in any other way but that $C$ is losing in $\play(w,\strprof)$.} for $C$ is any joint strategy $\strprof _C$ of $C$ that guarantees for $C$ to be individually winning on any outcome play from $w$ induced by 
$\strprof _C$.  
The \defstyle{core of the game $G$}, denoted $\mathit{core}(G)$, is defined as the set of \defstyle{stable strategy profiles} 
in $G$, viz. those  that admit no beneficial deviations (by any losing coalition). 
Thus, $C$ has a beneficial deviation at the state $w$ iff $\brak{\gamma_{C}}$ is true at $w$, where $\gamma_{C}$ is the goal assignment $C \gass \bigwedge_{\agi \in C} \gamma(\agi)$.
Therefore, existence of a coalition that has a beneficial deviation with respect to $\gamma$ (and $\strprof$) can be expressed as the disjunction of all formulae $\brak{\gamma_{C}}$, over all individually losing coalitions $C \subseteq  \iLs(w,\strprof,\gamma)$. 
Then, $\strprof$ is in $\mathit{core}(G)$ iff it does not satisfy that formula at $w$:  
\[ \strprof \in \mathit{core}(G) \qquad \mbox{ iff }  \qquad 
\cgm, w \models \bigwedge_{C \subseteq  \iLs(w,\strprof,\gamma)} \lnot \brak{\gamma_{C}}\]
 
Thus, like in the case of Nash equilibria and the other stable outcomes defined in Section \ref{subsec:stable-outcomes}, computing the sets $\iLs(w,\strprof,\gamma)$ and answering the question of whether the core of such a game is non-empty can be solved by reducing to model checking in $\cga^+$; in the special case when all goals are nexttime formulae, or all are $\atlg$-type formulae, model checking in \cga suffices.

%%%%%%%%%%%%%%%%%%%%%%%%%%%%%%%%%%%%%%
\section{Fixpoint characterizations of temporal formulae in \cga}
\label{sec:fixpoints}
%%%%%%%%%%%%%%%%%%%%%%%%%%%%%%%%%%%%%%

In this section we will show how to embed the logic $\cga$ into a suitable fixpoint logic. In fact, the embedding can be extended likewise to the logic $\cga^+$, and we sketch along the way how this is done. The results in this section will be stated and proved for \cga, but their extensions to $\cga^+$ are quite routine.

%%%%%%%%%%%%%%%%%%%
\subsection{Types of goal assignments} 
\label{subsec:TypesAssignments}
%%%%%%%%%%%%%%%%%%%

\begin{definition}
A goal assignment $\gamma$ supported by a family of coalitions $\fac$ 
will be called \defstyle{long-term temporal} 
  if $\gamma$ maps every coalition in $\fac$ either to a $\until$-formula or a $\always$-formula, that is, if $\gamma[\fac]  \subseteq \lfor$, where 
  $\gamma[\fac] = \{ \gamma(C) \mid C \in \fac \}$. 
 
   A goal assignment is called  \defstyle{local}, or \defstyle{nexttime}, if  $\gamma$ maps every coalition in $\fac$ to a $\nexttime$-formula, i.e., $\gamma[\fac] \subseteq \xfor$.  
  
  A formula $\phi$ is said to be in \defstyle{normal form} if, for every subformula of the form $\brak{\gamma}$, the goal assignment $\gamma$ is either a nexttime or a long-term temporal goal assignment. 
\end{definition}

To extend this definition to $\cga^+$, we say that a goal assignment $\gamma$ in the extended language is long-term temporal if for each coalition $C$, each conjunct of $\gamma(C)$ is either a $\until$-formula or a $\always$-formula. Clearly, this reduces to the previous definition for the special case of goal assignments in $\cga$.

\begin{definition}
Let $\gamma$ be a long-term temporal goal assignment supported by the family $\fac$. We say that $\gamma$ is:
\begin{itemize}
\item of \defstyle{type $\until$} if $\gamma$ maps at least one element of $\fac$ to an 
$\until$-formula,

\item of \defstyle{type $\always$ } 
 if $\gamma$ maps every element of $\fac$ to an $\always$-formula.
\end{itemize} 

We denote the sets of goal assignments of type $\until$ and type $\always$  respectively by $\typeone$ and $\typetwo$.  
\end{definition}

Again, this can be extended to $\cga^+$: we say that a long-term temporal goal assignment $\gamma$ is of type $\until$ if there is some coalition $C$ in the support of $\gamma$, such that at least one conjunct of $\gamma(C)$ is an $\until$-formula. Otherwise, $\gamma$ is of type $\always$.

\subsection{The fixpoint property of goal assignments}

\begin{definition}
Given a family of coalitions $\fac$ and a goal assignment 
$\gamma$ supported by $\fac$, we write $\gamma\ugpart$ for the restriction of $\gamma$ to the family $\fac\ugpart = \{C \in \fac \mid \gamma(C) \in \lfor\}$. Similarly we write $\gamma\xpart$ for the restriction of $\gamma$ to the family $\fac\xpart \subseteq \fac$ defined as $\{C \in \fac \mid \gamma(C) \in \xfor\}$.
\end{definition}

To extend this notion to $\cga^+$, we define $\gamma\ugpart$ ($\gamma\xpart$) by setting, for each coalition $C$, $\gamma\ugpart(C)$ to be the conjunction of all formulas $\alpha \until \beta$ or $\always \chi$ that appear as conjuncts of $\gamma(C)$, provided that $\gamma(C)$ has at least one such conjunct, and $\gamma\ugpart(C) = \X \top$ otherwise. The goal assignment $\gamma\xpart$ is defined similarly. 

\begin{definition} 
Given a family of coalitions $\fac$ and a goal assignment 
$\gamma$ supported by $\fac$, the 
 \defstyle{nexttime-extension of $\gamma$} is the goal assignment 
$\diffof{\gamma}$
 defined as follows. First, we define  
 $\sup{\diffof{\gamma}} := 
 \big\{\bigcup \fac' \mid \emptyset \neq \fac' \subseteq \fac\big\}$, 
Then, for each $C \in \sup{\diffof{\gamma}}$ we define 
\[
\diffof{\gamma}(C) := \X \Big(\bigwedge \big\{\varphi \mid \mbox{there exists } C' \in \fac, C' \subseteq C  \mbox{ such that }  \gamma(C') = \X\varphi \big\} 
\wedge \brak{(\gamma\vert_C)\ugpart} \Big),
\]
where as a convention we remove from this formula any conjuncts that reduce to $\top$, which can appear as the result of a conjunction of the empty set (the left conjunct reduces to $\top$) or as $\brak{\gamma}$ where $\gamma$ is the empty goal assignment (the right conjunct reduces to $\top$). For all coalitions that are not in $\sup{\diffof{\gamma}}$, $\diffof{\gamma}$ assigns the trivial goal. Given any formula $\phi$, we will sometimes abbreviate the formula $\diffof{\gamma}[\bigcup \fac  \gass  \nexttime \phi]$ by $\gammaof{\phi}$. 
\end{definition}

The definition above may look a bit opaque, but what it does is quite simple. Intuitively, the goal assignment $\diffof{\gamma}$ describes the conditions that each coalition must ensure to hold for the \emph{next state}, with respect to a given strategy profile $\strprof$, in order for that strategy profile to fulfill the goal assignment $\gamma$ at the \emph{current state}. To make things more concrete, let us consider two examples.
\begin{example}
If $\brak{\gamma}$ is $\brak{C \gass \varphi \until \psi}$, then $\brak{\diffof{\gamma}} = \brak{C \gass \X\brak{C \gass \varphi \until \psi}} = \brak{C \gass \X\brak{\gamma}}$. 
So in this special case the nexttime-extension simply pushes the eventuality $\varphi \until \psi$ one step into the future, so to speak. Similarly, if $\brak{\gamma}$ is $\brak{C \gass \always \varphi}$, then $\brak{\diffof{\gamma}} = \brak{C \gass \X\brak{C \gass \always \varphi}} = \brak{C \gass \X\brak{\gamma}}$. 
\end{example}
\begin{example} Consider the example of a goal assignment $\gamma$ supported by $\fac = \{\{a,b\},\{c\},\{b,c\}\}$ and defined by the assignment:
$$\{a,b\}\gass p\until q, \;\; \{c\} \gass \always r, \;\; \{b,c\}\gass\X s$$
The support of $\diffof{\gamma}$ will be $\fac \cup \{\{a,b,c\}\}$. The action of $\diffof{\gamma}$ is shown below:
\begin{itemize}
\item  $\{a,b\} \gass \nexttime \brak{\{a,b\} \gass p\until q}$
\item $\{c\}  \gass  \nexttime \brak{\{c\} \gass  \always r}$
\item $\{b,c\}  \gass  \nexttime (s \wedge \brak{\{c\} \gass \always r})$
\item $\{a,b,c\}  \gass  \nexttime (s \wedge \brak{\{a,b\} \gass p\until q, \{c\} \gass  \always r})$
\end{itemize} 
\end{example}
The procedure for computing $\diffof{\gamma}$ goes, informally, as follows: for each $\subseteq$-downset\footnote{I.e. set $\mathcal{D}$ such that for all $Z \in \mathcal{D}$ and all $Z' \in \fac$, if $Z' \subseteq Z$ then $Z' \in \mathcal{D}$.} $\mathcal{D}$ of coalitions from $\fac$, we collect all the formulas $\varphi$ for which some coalition in $\mathcal{D}$ is mapped to the goal $\X\varphi$ into a conjunction, add a conjunct collecting all the longterm goals for coalitions in $\mathcal{D}$ into a single goal assigment, and finally put the resulting conjunction in the scope of an $\X$-operator and assign this goal to the union of $\mathcal{D}$.

We are now ready to define one of the key concepts of the paper.
\begin{definition}
Let $\gamma$ be a goal assignment, supported by $\fac $. Then we define the following formula:  
\[\unf{\gamma} := 
\bigvee \mathsf{Finish}(\gamma) \vee \bigg(\bigwedge \ugam \wedge \bigwedge \agam \wedge \brak{\diffof{\gamma}}\bigg),
\]
where:
\begin{itemize}
\item $\mathsf{Finish}(\gamma) : = \big\{\beta \wedge \brak{\gamma \setminus C} \mid 
\gamma(C) = 
\alpha \until \beta\big\}$

\item $\ugam :=  \big\{\alpha \mid  
\gamma(C) = \alpha \until \beta, 
\mbox{ for some } C,\beta \big\}$

\item $\agam : = \big\{\chi \mid  
\gamma(C) = 
\always \chi,  
\mbox{ for some } C \big\}$
\end{itemize}
As before, by convention we remove  from this formula all conjuncts that reduce to $\top$ and all disjuncts that reduce to $\bot$. 
So, for example, if the set $\mathsf{Finish}(\gamma) = \emptyset$, and hence also $\ugam = \emptyset$, then the formula $\unf{\gamma}$ reduces to:
$$\bigwedge \agam \wedge \brak{\diffof{\gamma}}.$$  
We call $\unf{\gamma}$ \defstyle{the unfolding formula} of $\gamma$. 
\end{definition}
The formula $\unf{\gamma}$ may require some additional explanation. We shall see that $\unf{\gamma}$ is in fact equivalent to $\brak{\gamma}$, and the formula can be seen as an analysis of the different possibilities for \emph{how} a given strategy profile may fulfil the goal assignment $\gamma$. The first disjunction $\mathsf{Finish}(\gamma)$ describes those cases  where one of the coalitions $C$ has an eventuality $\alpha \until \beta$ as its goal formula; if the formula $\beta$ happens to be true at the current state then the coalition $C$ trivially attains its goal regardless of its actions. So, for a strategy profile to fulfil the goal assignment $\gamma$, it suffices that it fulfils the goals of all coalitions besides $C$, i.e. the goal assignment $\gamma \setminus C$. In the remaining case, the conditions that a strategy profile must satisfy in order to fulfil the goal assignment $\gamma$ are divided into two parts: ``local'' conditions that must be true at the \emph{current state}, which are outside the agents' control, and those conditions that each coalition must ensure for the \emph{next} state. The latter conditions are described by the formula $\brak{\diffof{\gamma}}$, as explained earlier. The local conditions are derived from the interpretation of temporal (path) formulas: if the goal of coalition $C$ is an eventuality $\alpha \until \beta$, and this goal is not trivially fulfilled because $\beta$ happens to be true, then the goal can only be attained if $\alpha$ is true at the current state. Similarly, a goal $\always \chi$ can only be fulfilled if $\chi$ is currently true. These conditions are captured by the conjuncts $\ugam$ and $\agam$, respectively.

\begin{definition}
Let $\gamma$ be a long-term temporal goal assignment.  Then we define the  \defstyle{induction formula for $\gamma$ on $\phi$} as follows 

\[
\indf{\gamma}{\phi} :=
\bigvee \mathsf{Finish}(\gamma) \vee \Big(\bigwedge \ugam \wedge \bigwedge \agam \wedge 
\brak{\gammaof{\phi}}\Big), 
\]
after removing redundant conjuncts and disjuncts, as before. So, this formula is like 
$\unf{\gamma}$, except that the largest coalition in the support of $\diffof{\gamma}$ will be mapped to $\nexttime\phi$. 
\end{definition}

\begin{proposition}
\label{prop:unfold} 
For every long-term temporal goal assignment $\gamma$ we have: 
$$\unf{\gamma} = \indf{\gamma}{\brak{\gamma}}.$$ 
\end{proposition}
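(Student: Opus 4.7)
The plan is to show that the two formulas are literally the same after unwinding the definitions, exploiting the hypothesis that $\gamma$ is long-term temporal (so its support consists only of $\until$- and $\always$-goals, and contains no nexttime goals).

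Comparing the definitions of $\unf{\gamma}$ and $\indf{\gamma}{\brak{\gamma}}$, we see that the disjunct $\bigvee \mathsf{Finish}(\gamma)$ and the conjuncts $\bigwedge \ugam$ and $\bigwedge \agam$ are identical on both sides. The only potentially differing part is the strategic conjunct: $\brak{\diffof{\gamma}}$ on the left versus $\brak{\gammaof{\brak{\gamma}}} = \brak{\diffof{\gamma}[\bigcup\fac \gass \X\brak{\gamma}]}$ on the right. So it suffices to establish that these two strategic formulas coincide, which in turn reduces to showing that the goal assignment $\diffof{\gamma}$ already maps the coalition $\bigcup\fac$ to the formula $\X\brak{\gamma}$.

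I would verify this last claim by unwinding the definition of $\diffof{\gamma}$ on the largest coalition $C=\bigcup\fac$. First, the conjunction $\bigwedge\{\varphi \mid \exists C'\in\fac,\ C'\subseteq C,\ \gamma(C')=\X\varphi\}$ is empty (reducing to $\top$, hence dropped) because $\gamma$ is long-term temporal and therefore assigns no nexttime formulas at all. Second, for $C = \bigcup\fac$ the restriction $\gamma\vert_C$ agrees with $\gamma$ on every $C'\in\fac$ (since every such $C'$ is a subset of $\bigcup\fac$), and long-term temporality ensures $\gamma\vert_C(C') \in \lfor$ for all $C'\in\fac$. Consequently $(\gamma\vert_C)\ugpart$ is literally $\gamma$ itself, and the definition of $\diffof{\gamma}$ yields
\[
\diffof{\gamma}\bigl(\textstyle\bigcup\fac\bigr) = \X\brak{\gamma}.
\]

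Given this, the substitution $[\bigcup\fac \gass \X\brak{\gamma}]$ leaves $\diffof{\gamma}$ unchanged on the coalition $\bigcup\fac$, and it already agrees with $\diffof{\gamma}$ on all other coalitions in $\sup{\diffof{\gamma}}$. Hence $\gammaof{\brak{\gamma}} = \diffof{\gamma}$ syntactically, which yields $\brak{\gammaof{\brak{\gamma}}} = \brak{\diffof{\gamma}}$ and therefore $\indf{\gamma}{\brak{\gamma}} = \unf{\gamma}$, as required.

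There is really no hard step here — the proof is just a matter of carefully tracking the case $C=\bigcup\fac$ in the nested definition of $\diffof{\gamma}$ and observing that long-term temporality kills the nexttime-conjuncts and makes $(\gamma\vert_{\bigcup\fac})\ugpart$ collapse to $\gamma$. The only mild subtlety is dealing correctly with the conventions about dropping trivial $\top$-conjuncts and $\nexttime\top$-goals so that the resulting goal assignments really are identical on the nose rather than merely semantically equivalent.
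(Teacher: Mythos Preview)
Your proposal is correct and follows essentially the same approach as the paper: both arguments reduce to checking that $\diffof{\gamma}(\bigcup\fac) = \X\brak{\gamma}$, which holds because long-term temporality makes the nexttime-conjunct vanish and forces $(\gamma\vert_{\bigcup\fac})\ugpart = \gamma$. Your write-up is just a slightly more detailed unpacking of the same observation.
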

\begin{proof}
If $\gamma$ is long-term temporal and supported by $\fac$ then we get: 
\[(\gamma\ugpart)\vert_{\bigcup \fac} = \gamma.\]
Thus, since there are no nexttime formulas to consider, $\diffof{\gamma}$ will map $\bigcup \fac$ to $\nexttime\brak{\gamma}$, hence 
$\diffof{\gamma}[\bigcup \fac  \gass  \nexttime\brak{\gamma}] = \diffof{\gamma}$.  
\end{proof}

It is not hard to see that, if $\gamma$ is a nexttime goal assignment, then $\unf{\gamma} \equiv \brak{\gamma}$.  For example, suppose $\gamma$ is supported by $\{\{a\},\{b\}\}$ and maps $\{a\}$ to $\nexttime p$ and $\{b\}$ to 
$\nexttime q$. Then $\unf{\gamma}$ is equalt to $\brak{\diffof{\gamma}}$, which is the following formula:
$$\brak{\{a\} \gass \nexttime p, \{b\}  \gass  \nexttime q, \{a,b\} \gass  \nexttime (p \wedge q)}$$
which is clearly equivalent to 
$\brak{\gamma} = \brak{\{a\} \gass \nexttime p, \{b\}  \gass  \nexttime q\}}$. 
In fact, the equivalence always holds; this is by design, and will play a key role for our axiomatization.

\begin{theorem}[Fixpoint property]
\label{p:fixpoint-property}
For any goal assignment $\gamma$:
$$ \brak{\gamma}\equiv \unf{\gamma},$$
and hence for any long-term temporal goal assignment $\gamma$:
$$\brak{\gamma} \equiv \indf{\gamma}{\brak{\gamma}}.$$
\end{theorem}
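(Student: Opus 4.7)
The plan is to reduce the second claim to the first via Proposition \ref{prop:unfold}, and then establish the first by a direct case analysis of the definition of $\unf{\gamma}$. That is, I would fix a concurrent game model $\gmod$ and a state $s$, and show the two implications $\gmod,s \models \brak{\gamma} \Leftrightarrow \gmod,s\models \unf{\gamma}$ separately.

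For the forward direction, let $\strprof$ be a strategy profile witnessing $\brak{\gamma}$ at $s$. I would split on whether there exists some $C$ with $\gamma(C) = \alpha \until \beta$ such that $\gmod,s \models \beta$. If yes, then the same $\strprof$ witnesses $\gamma \setminus C$ (the goal assignment obtained by dropping $C$), because $\gamma(C)$ becomes trivially satisfied on every $C$-path starting at $s$; this yields the disjunct $\beta \wedge \brak{\gamma \setminus C}$ of $\mathsf{Finish}(\gamma)$. Otherwise, I would verify that the right disjunct holds: $\bigwedge \ugam$ follows because any $\alpha\until\beta$ goal not satisfied trivially at $s$ must have $\alpha$ true at $s$; $\bigwedge \agam$ is immediate; and $\brak{\diffof{\gamma}}$ is witnessed by $\strprof$ itself — for each $D = \bigcup \fac'$ in the support of $\diffof{\gamma}$ and each path $\pi \in \paths(s,\strprof,D)$, the nexttime conjuncts $\Phi_D$ hold at the second state because $\pi$ also lies in $\paths(s,\strprof,C')$ for every $C' \subseteq D$ in $\fac$, and the formula $\brak{(\gamma|_D)\ugpart}$ is witnessed at the second state by the "tail" $\strprof_{s \actprof s_1}$ of $\strprof$ shifted along the action profile $\actprof$ realising that step.

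The backward direction is the main obstacle. If the disjunct $\beta \wedge \brak{\gamma \setminus C}$ holds, any witness of $\gamma \setminus C$ also witnesses $\gamma$, since $\gamma(C) = \alpha \until \beta$ is satisfied on every $C$-path beginning at $s$. Otherwise, assume $\bigwedge \ugam$, $\bigwedge \agam$, and $\brak{\diffof{\gamma}}$ all hold; let $\strprof'$ witness $\diffof{\gamma}$. I would build $\strprof$ as follows: set $\strprof(s) = \strprof'(s) =: \actprof^*$, and for any longer history beginning $s \actprof s_1$, let $D_\actprof := \bigcup\{C \in \fac \mid \actprof|_C = \strprof'|_C(s)\}$ denote the union of all coalitions in $\fac$ that did not deviate at $s$. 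When $D_\actprof \neq \emptyset$, use the fact that the two-state path $s s_1$ belongs to $\paths(s,\strprof',D_\actprof)$, so $s_1 \models \brak{(\gamma|_{D_\actprof})\ugpart}$, and fix a witnessing profile for this formula to define $\strprof$ on suffixes of $s \actprof s_1$; when $D_\actprof = \emptyset$, define $\strprof$ arbitrarily there, since no $C \in \fac$ has a $C$-path through $s\actprof s_1$.

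The verification then proceeds by a case split on $\gamma(C)$ for each $C \in \fac$ and each $\pi = s s_1 s_2 \ldots \in \paths(s,\strprof,C)$. For $\gamma(C) = \X\varphi$, the conjunct $\varphi$ appears in $\Phi_C$ within $\diffof{\gamma}(C)$, so $s_1 \models \varphi$. For $\gamma(C) = \alpha \until \beta$ or $\gamma(C) = \always \chi$, one uses that $C \subseteq D_{\actprof_0}$ where $\actprof_0$ is the first action profile on $\pi$ (since $C$ did not deviate), so the chosen witness at $s_1$ guarantees $s_1 s_2 \ldots \models \gamma(C)$, and combining with $s \models \alpha$ or $s \models \chi$ from the local conjuncts gives $\pi \models \gamma(C)$. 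The critical subtlety here — and the reason this construction works — is that strategies are play-based: different action profiles at $s$ yielding the same successor $s_1$ are distinguished in the history, so the choices of witnessing sub-strategies at $s_1$ indexed by $\actprof$ can coexist without interfering with each other.
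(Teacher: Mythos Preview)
Your proposal is correct and follows essentially the same approach as the paper's proof: both directions use the same case splits, and in the backward direction your set $D_\actprof$ is exactly the paper's set $\mathsf{fol}(\actprof)$ of ``followers'', with the witnessing strategy $\strprof$ built by gluing the chosen sub-profiles indexed by the first action profile $\actprof$ along play-based histories. You also correctly identify the key reason the construction works, namely that play-based strategies allow the sub-profiles at $s_1$ to be chosen independently for different $\actprof$ even when they lead to the same successor state.
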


\append{
\begin{proof}
We prove each implication separately.

\paragraph{Left to right:} suppose that $\gmod,s \sat \brak{\gamma}$, where $\gmod = (\states,\Act,\gmap,\out,V)$, and let $\strprof$ be some profile witnessing $\gamma$ at $s$. Assuming that $\gmod, s \nsat \bigvee \fgam$, we show that:
 $$\gmod, s \sat \bigwedge \ugam \wedge \bigwedge \agam \wedge \brak{\diffof{\gamma}}.$$  
We treat these conjuncts separately. First, note that if $\ugam = \emptyset$ or $\agam = \emptyset$ then these conjunctions reduce to $\top$ and hence are trivially satisfied. 

Suppose $\alpha \in \ugam$. Then there is some coalition $C$ and some $\beta$ for which 
$\gamma(C) = \alpha \until \beta$. The set $\paths(s,\strprof,{C})$ is always non-empty, so consider an arbitrary member $\path$ and recall that its first element is $s$.   Since $\strprof, s \Vvdash \gamma$ it follows that $\path \models \alpha \until \beta$. Since we assumed that $\gmod, s \nsat \bigvee \fgam$, we cannot have $\gmod, s \sat \beta$ since this would give   $\gmod, s \sat \beta \wedge \brak{\gamma}$ which entails $\gmod, s \sat \beta \wedge \brak{\gamma\vert_{C}}$, and $\beta \wedge \brak{\gamma\vert_{C}}$ is a member of $\fgam$. Hence we have $\gmod, s \sat \alpha$, as required. The proof that each conjunct from $\agam$ is satisfied is similar (but simpler). 

We now show that $\strprof, s \Vvdash \diffof{\gamma}$. Pick an arbitrary coalition $C$ in the support of $\diffof{\gamma}$ and an arbitrary path $\path \in \paths(s,\strprof,C)$. We need to show that $\path \models \diffof{\gamma}(C)$. Suppose that $\path$ is the path generated by a play in $\Plays(w,\strprof,C)$ of the form:
$$w_0 \actprof_0 w_1 \actprof_{1} w_2 ...$$
where $w_0 = s$. We need to show that:
\begin{enumerate}
\item For each $C' \subseteq C$ in the support of $\gamma$, if $\gamma(C') = \nexttime \psi$ then $\gmod,w_1 \sat \psi$, 
\item $\gmod, w_1 \sat \brak{(\gamma\ugpart)\vert_{C}}$.
\end{enumerate}
The first item is straightforward, so we focus on item (2). We need to come up with a strategy profile $\strprof'$ that witnesses $(\gamma\ugpart)\vert_{C}$ at $w_1$. We define $\strprof'$ as follows. 
Given a history of the form:
$$v_0 \actprof'_0 v_1... v_{n -1 }\actprof'_{n - 1} v_n$$ 
where $v_0 = w_1$, and a player $\aga \in C$, we set:
$$\strprof'(\aga, v_0 \actprof'_0 v_1... v_{n -1 }\actprof'_{n - 1} v_n) := \strprof(\aga, w_0 \actprof_0 v_0 \actprof'_0 v_1... v_{n -1 }\actprof'_{n - 1} v_n)$$
Now let $C' \subseteq C$ be a coalition for which $\gamma(C') = \alpha \until \beta$, and let $\path' \in \paths(w_1,\strprof',C')$. Then $s\path' \in \paths(s,\strprof,C')$ by construction of $\strprof'$, hence $s\path' \models \alpha \until \beta$. 
Since $\gmod,s\nsat \beta$,  
we get $\path' \models \alpha \until \beta$ as well. Similarly we can show that if  $C' \subseteq C$ is a coalition for which $\gamma(C') = \always \alpha$, and $\path' \in \paths(w_1,\strprof',C')$, then $\path' \models \always \alpha$. This shows that $\strprof',w \Vvdash (\gamma\ugpart)\vert_{C}$, as required. 

\paragraph{Right to left:}
 Suppose $\gmod, s \sat \brak{\unf{\gamma}}$. We show that $\gmod, s \sat \brak{\gamma}$.

There are two cases to consider: either some formula in $\fgam$ holds at $s$, or:
$$
\gmod, s \sat \bigwedge \ugam \wedge \bigwedge \agam \wedge \brak{\diffof{\gamma}}.
$$
In the first case there is some $C$ in the support $\fac$ of $\gamma$ for which $\gamma(C) = \alpha \until \beta$ and:
$$ \gmod, s \sat \beta \wedge \brak{\gamma \setminus C}.$$
But then $\alpha \until \beta$ holds at any path beginning from $s$, and it follows that $\gmod, s \sat \brak{(\gamma \setminus C)[C \gass \alpha \until \beta]}$. This formula is the same as $\brak{\gamma}$, so we are done. 

Suppose the second, more challenging case, and fix a strategy profile $\strprof$ for $\Agt$ witnessing $\brak{\diffof{\gamma}}$ at $s$. Given any locally available action profile $\actprof \in \ActProf_s$, 
we define the set 
$\mathsf{fol}(\actprof)$ of \defstyle{followers of $\strprof$ relative to $\actprof$}: 
\[
\mathsf{fol}(\actprof) := \bigcup \big\{C \in \fac \mid \actprof(\aga) = \strprof(\aga,s) 
\mbox{ for all } \aga \in C
\big\}.
\]
Note that $\mathsf{fol}(\actprof)$ belongs to the support of $\diffof{\gamma}$. 
By unfolding the definition of $\brak{\diffof{\gamma}}$, we see that the following conditions hold for each locally available action profile $\actprof\in \ActProf_s$
and any $C \in \fac$ such that $C \subseteq \mathsf{fol}(\actprof)$:
\begin{enumerate}
\item  $\out(\actprof,s) \in \tset{\brak{(\gamma\ugpart)\vert_{C}}}_{\gmod}$.
\item If $\gamma(C) = \nexttime \varphi$ then 
$\out(\actprof,s) \in \tset{\varphi}_{\gmod}$.
\end{enumerate} 
This motivates the following definition: for each locally available action profile 
$\actprof\in  \ActProf_s$  
we pick a strategy profile  $\underline{\actprof}$ defined 
for all players in $\bigcup \fac$, such that for all $C \subseteq \mathsf{fol}(\actprof)$: 
\[
\quad \underline{\actprof},\out(\actprof,s) \Vvdash (\gamma\ugpart)\vert_{C}.
\] 
Now, we will build a strategy profile $\Omega$ using the strategy profile $\strprof$ and all strategy profiles $\underline{\actprof}$ for each possible locally available action profile $\actprof$ at $s$:  
  
Given a player $\aga \in \bigcup \fac$ and $w \in \states$, let $\Omega(\aga,w) = \strprof(\aga,w)$ if $w = s$, and some arbitrary available move otherwise. For a history of the form $w_0\actprof_0...\actprof_n w_{n+1}$, if $w_0 \neq s$ then we can again define the move of player $\aga$ arbitrarily. 
Otherwise, we set 
\[
\Omega(\aga,w_0\actprof_0 w_1....\actprof_n w_{n+1}) := 
\underline{\actprof_0}(\aga, w_1 \actprof_1 ... \actprof_n w_{n + 1}).
\]
We will show that the strategy profile $\Omega$ witnesses $\brak{\gamma}$ at $s$. Let $C \in \fac$, and let $\path \in \paths(s,\Omega,C)$. We need to show that $\path \models \gamma(C)$.  
The case where $\gamma(C) = \nexttime \varphi$ is immediate, by definition of $\diffof{\gamma}$ and of $\Omega$ at $s$. 
We focus on the case where $\gamma(C)$ is of the form $\alpha \until \beta$. We assume that $\path$ is the path generated by a play in $\Plays(s,\Omega,C)$ of the form:
$$w_0 \actprof_0 w_1 \actprof_1 w_2 ...$$
so that $\path$ equals $w_0w_1w_2...$, $w_0 = s$ and $w_1 = \out(s,\actprof_0)$. But then $C \subseteq \mathsf{fol}(\actprof_0)$, and by construction of the strategy profile $\Omega$, the play $w_1 \actprof_1 w_2 \actprof_2 w_3...$ belongs to $\Plays(w_1,\underline{\actprof}_0,C)$. So the path $w_1w_2w_3...$ satisfies $\alpha \until \beta$ since, by definition, $\underline{\actprof}_0, w_1 \Vvdash  (\gamma\ugpart)\vert_{C}$. Since $\alpha \in \ugam$, we have $\gmod,s \sat \alpha$. Since $s = w_0$ it follows that $\path \models \alpha \until \beta$ as required.     
  Lastly, the case where $\gamma(C)$ is of the form $ \always \chi$ is analogous.      
\end{proof}
}
%%%%end of \append

Finally, we show how to extend these definitions to the logic $\cga^+$. In this setting, goal assignments are of a more general kind as they map coalitions to conjunctions of $\X$-formulas, $\until$-formulas and/or $\always$-formulas, so we have to account for this slight complication. Given a goal assignment $\gamma$, and a coalition $C$ in its support, we can think of $\gamma(C)$ as the set of its conjuncts of the form $\X \alpha$, $\alpha \until \beta$ or $\always \chi$. So we abuse notation slightly and write $\theta \in \gamma(C)$ to say that $\theta$ has one of these three forms, and is a conjunct of $\gamma(C)$. With this notation in place, we extend the definition of the
nexttime-extension $\diffof{\gamma}$ of a goal assignment $\gamma$ as follows. The support of $\diffof{\gamma}$ is defined as before, and for each $C \in \sup{\diffof{\gamma}}$ we define 
\[
\diffof{\gamma}(C) := \X \Big(\bigwedge \big\{\varphi \mid \mbox{there exists } C' \in \fac, C' \subseteq C  \mbox{ such that }  \X\varphi \in \gamma(C') \big\} 
\wedge \brak{(\gamma\vert_C)\ugpart} \Big),
\]
As before we abbreviate the formula $\diffof{\gamma}[\bigcup \fac  \gass  \nexttime \phi]$ by $\gammaof{\phi}$. 

Next we define the unfolding of a goal assignment:
Let $\gamma$ be a goal assignment, supported by $\fac$. Given a coalition $C$ and a conjunct $\theta$ of $\gamma(C)$, we write $\gamma\setminus(C\gass\theta)$ for the goal assignment that is like $\gamma$, except:
$$(\gamma\setminus(C\gass \theta))(C) = \bigwedge \{\theta' \mid \theta' \in \gamma(C) \wedge \theta' \neq \theta\}$$
We now define:
\[\unf{\gamma} := 
\bigvee \mathsf{Finish}(\gamma) \vee \bigg(\bigwedge \ugam \wedge \bigwedge \agam \wedge \brak{\diffof{\gamma}}\bigg),
\]
where:
\begin{itemize}
\item $\mathsf{Finish}(\gamma) : = \big\{\beta \wedge \brak{\gamma \setminus (C \gass \alpha \until \beta)} \mid 
\gamma(C) = 
\alpha \until \beta\big\}$

\item $\ugam :=  \big\{\alpha \mid  
 \alpha \until \beta \in \gamma(C), 
\mbox{ for some } C,\beta \big\}$

\item $\agam : = \big\{\chi \mid   
\always \chi \in \gamma(C),  
\mbox{ for some } C \big\}$
\end{itemize}

The induction formula for $\indf{\gamma}{\phi}$ is defined as before, by 

\[
\indf{\gamma}{\phi} :=
\bigvee \mathsf{Finish}(\gamma) \vee \Big(\bigwedge \ugam \wedge \bigwedge \agam \wedge 
\brak{\gammaof{\phi}}\Big), 
\]
where $\gammaof{\phi}$ is $\diffof{\gamma}[\bigcup \fac  \gass  \nexttime \phi]$. With these extended definitions, the proofs of Proposition \ref{prop:unfold} and Theorem \ref{p:fixpoint-property} go through as before. 
%}

%%%%%%%%%%%%%%%%%%%%%%%%
\subsection{A $\mu$-calculus of goal assignments}
\label{subsec:mu-TLCGA}

The  $\mu$-calculus extension of the language $\langcga$ of \cga will be denoted by 
$\langcga_\mu$, and the $\mu$-calculus extension of the nexttime fragment 
$\xlangcga$ -- by  $\xlangcga_{\mu}$.

Formally the language $\langcga_\mu$ is given by the following grammar:  
$$\sfml: \ \ \ \ \ \varphi := p  \mid \top 
\mid \neg \varphi 
\mid (\varphi \wedge \varphi) 
 \mid (\varphi \lor \varphi) 
\mid  
\brak{\gamma} \mid \mu x.\varphi
$$
$$\pfml: \ \ \ \ \ \theta := \nexttime \varphi \mid \varphi \until \varphi \mid \always \varphi$$
Here, in $\mu x. \varphi$ the formula $\varphi$ is subject to the usual constraint that every occurrence of the variable $x$ in $\varphi$ is positive, in the sense that it is under the scope of an even number (possibly zero) of negations. We usually denote bound variables $x,y,z...$ rather than $p,q,r...$, but formally we do not introduce a separate supply of fixpoint variables. In the formula $\mu x.\varphi$ the variable $x$ is simply a propositional variable. 
We define the greatest fixpoint operator as usual: 
\[\nu x.\varphi := \lnot \mu x. \lnot \varphi [\lnot x / x],\] 
where $\varphi [\lnot x / x]$ is the result of uniform substitution of $\lnot x$ for $x$ in 
$\varphi$.

A model for $\langcga_\mu$ is just like a model for $\langcga$, viz a tuple,  
$\gmod = (\states,\Act,\gmap,\out,V)$, 
but now the valuation $V$ assigns values to the variable(s) $z$ used in formulae  
$\mu z.\psi$. Now, for each $Z \subseteq \states$, we define the amended valuation $V^Z := V[z \mapsto Z]$, which is like $V$ except that it maps $z$ to $Z$. 
We will denote 
 $\gmod^Z := (\states,\Act,\gmap,\out,V^Z)$ and for any formula $\psi (z)$ in which the variable $z$ may occur free, we will write 
 $\gmod, s \sat \psi (Z)$ to state that  $\gmod^Z, s \sat \psi (z)$.

The semantics of $\langcga_\mu$ extends that of $\langcga$ with the additional clause that the extension $\tset{\mu z. \varphi(z)}$ of a least fixpoint-formula in a model $\gmod = (\states,\Act,\gmap,\out,V)$ is given by: 
\[
\tset{\mu z. \varphi(z)} : = \bigcap \big\{Z \subseteq \states \mid \tset{\varphi(Z)} \subseteq Z \big\},
\]
where, as expected:
$$\tset{\varphi(Z)} := \big\{w \in \states \mid \gmod^Z,w \sat \varphi(z) \big\}.$$

Given a model 
$\gmod = (\states,\Act,\gmap,\out,V)$, 
 let  $f_\gamma$ be the monotone map on $\psf(\states)$ induced by 
 $\indf{\gamma}{z}$ in the usual way, i.e., for each $Z \subseteq \states$, $f_\gamma(Z)$ is the set of states satisfying $\indf{\gamma}{z}$ with respect to the amended valuation $V^Z := V[z \mapsto Z]$.

The proofs of the following two propositions are in the appendix. 

\begin{proposition}[Fixpoint characterization of $\typeone$ temporal goal assignments]
\label{prop:typeone}
Suppose that $\gamma$ is a long-term temporal  goal assignment in $\typeone$, and let $z$ be a fresh variable not occurring in $\brak{\gamma}$. 
Then $\brak{\gamma} \equiv \mu z. \indf{\gamma}{z}$.
\end{proposition}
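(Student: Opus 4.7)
The plan is to prove the two implications of $\brak{\gamma} \equiv \mu z.\,\indf{\gamma}{z}$ separately, both starting from the Fixpoint Property (Theorem \ref{p:fixpoint-property}). The direction $\mu z.\,\indf{\gamma}{z} \to \brak{\gamma}$ is immediate: Theorem \ref{p:fixpoint-property} gives $\brak{\gamma} \equiv \indf{\gamma}{\brak{\gamma}}$, so $\tset{\brak{\gamma}}$ is a fixed point (hence a prefixed point) of the monotone operator $f_\gamma$ on $\psf(\states)$ induced by $\indf{\gamma}{z}$, and Knaster--Tarski then yields $\tset{\mu z.\,\indf{\gamma}{z}} \subseteq \tset{\brak{\gamma}}$.

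For the converse $\brak{\gamma} \to \mu z.\,\indf{\gamma}{z}$, the type-$\until$ assumption is essential. Fix a coalition $C^*$ with $\gamma(C^*) = \alpha \until \beta$, a model $\gmod$, and let $Z^* := \tset{\mu z.\,\indf{\gamma}{z}}$. To every pair $(\strprof, w)$ where $\strprof$ witnesses $\brak{\gamma}$ at $w$ I associate the tree $T(\strprof, w)$ whose nodes are the finite plays from $w$ consistent with $\strprof|_{C^*}$ along which $\beta$ never holds. Because $\strprof, w \Vvdash \gamma$, every path in $\paths(w,\strprof,C^*)$ must eventually satisfy $\beta$, so $T(\strprof, w)$ has no infinite branches. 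Although its branching need not be finite (action sets may be infinite), well-foundedness still yields an ordinal rank function; let $\rho(\strprof, w)$ be the rank of the root. I then aim to prove $w \in Z^*$ for all such $(\strprof, w)$ by transfinite induction on $\rho(\strprof, w)$.

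If some $\mathsf{Finish}$ disjunct already holds at $w$ --- for instance whenever $w \sat \beta$, since $\strprof$ also witnesses $\gamma \setminus C^*$ --- then $w$ satisfies $\indf{\gamma}{z}$ regardless of how $z$ is interpreted, so $w \in f_\gamma(Z^*) = Z^*$. Otherwise, Theorem \ref{p:fixpoint-property} gives $w \sat \bigwedge \ugam \wedge \bigwedge \agam \wedge \brak{\gammaof{\brak{\gamma}}}$, and inspection of its left-to-right proof shows that the same $\strprof$ witnesses $\brak{\diffof{\gamma}}$ at $w$. Since $\gammaof{z}$ differs from $\diffof{\gamma}$ only on the grand coalition $\bigcup \fac$, whose goal changes from $\X\brak{\gamma}$ to $\X z$, it suffices to verify that every successor $w'$ of $w$ reached along a $\strprof|_{\bigcup\fac}$-consistent one-step play lies in $Z^*$. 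For such a $w'$, produced by action profile $\actprof_0$, consider the profile $\strprof'$ obtained by shifting $\strprof$ past the prefix $w\actprof_0 w'$. Since no $\until$-goal has its right-hand side satisfied at $w$ in this case, every $\until$- and $\always$-goal satisfied on a path from $w$ remains satisfied on its suffix from $w'$, so $\strprof'$ witnesses $\brak{\gamma}$ at $w'$. If $w' \sat \beta$, the base case applied to $(\strprof', w')$ gives $w' \in Z^*$; otherwise $T(\strprof', w')$ is, up to stripping the prefix $w\actprof_0$, the subtree of $T(\strprof, w)$ rooted at $w\actprof_0 w'$, so $\rho(\strprof', w') < \rho(\strprof, w)$ and the induction hypothesis yields $w' \in Z^*$. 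This shows that $\strprof$ witnesses $\gammaof{z}$ under $z := Z^*$, whence $w \in f_\gamma(Z^*) = Z^*$.

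The main obstacle I anticipate is the coordinated bookkeeping in the inductive step: a single profile $\strprof$, together with its shifts, must simultaneously witness all the goals in $\gammaof{z}$ under $z := Z^*$, combining the induction hypothesis for the grand coalition's modified goal $\X z$ with the unchanged subcoalition goals of $\diffof{\gamma}$ inherited from the fixpoint-property proof. This is especially delicate when $\bigcup\fac \subsetneq \Agt$, because then the grand coalition's ``next state'' is in fact a family of outcomes varying with the deviations of agents outside $\bigcup\fac$; each such outcome must be placed in $Z^*$ via the induction hypothesis applied to a shifted strategy of strictly smaller rank, and verifying the rank comparison depends on the identification of $T(\strprof', w')$ with a subtree of $T(\strprof, w)$.
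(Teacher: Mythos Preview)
Your proposal is correct and follows essentially the same route as the paper: one direction is Knaster--Tarski applied to Theorem~\ref{p:fixpoint-property}, and the other is a well-founded induction over a tree of histories whose finiteness of branches is guaranteed by an $\until$-goal. The paper builds a single tree of $\strprof|_{\bigcup\fac}$-consistent histories from the initial state, terminating branches whenever \emph{any} $\beta_i$ becomes true, and argues by bar induction on that fixed tree; you instead fix one coalition $C^*$ with an $\until$-goal, use the (larger) tree of $\strprof|_{C^*}$-consistent histories terminated by the single $\beta$, and run a transfinite induction on ordinal rank over all witness pairs $(\strprof,w)$, explicitly passing to shifted profiles $\strprof'$ at each step. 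These are minor organisational choices rather than different ideas; your explicit use of ordinal rank to handle possibly infinite action sets, and your explicit shifting of the witnessing profile to successors, are in fact slightly more careful than the paper's treatment of the same points.
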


\begin{proposition}[Fixpoint characterization of $\typetwo$ temporal goal assignments]
\label{prop:typetwo}
Suppose that $\gamma$ is a long-term temporal goal assignment in $\typetwo$. 
Then $\brak{\gamma} \equiv \nu z. \indf{\gamma}{z}$.
\end{proposition}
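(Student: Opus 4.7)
My plan is to prove $\tset{\brak{\gamma}}_\gmod = \tset{\nu z.\indf{\gamma}{z}}_\gmod$ in any concurrent game model $\gmod$ by two inclusions. The left-to-right inclusion is immediate from Theorem~\ref{p:fixpoint-property}: since $\gamma$ is long-term temporal, $\brak{\gamma} \equiv \indf{\gamma}{\brak{\gamma}}$, so $\tset{\brak{\gamma}}_\gmod$ is a fixpoint (in particular a post-fixpoint) of $f_\gamma$, and is therefore contained in the greatest post-fixpoint $\tset{\nu z.\indf{\gamma}{z}}_\gmod$ by Knaster--Tarski. For the converse, let $Z := \tset{\nu z.\indf{\gamma}{z}}_\gmod$; then $Z \subseteq f_\gamma(Z)$. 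Because $\gamma \in \typetwo$, both $\mathsf{Finish}(\gamma)$ and $\ugam$ are empty, so $\indf{\gamma}{z}$ reduces to $\bigwedge \agam \wedge \brak{\gammaof{z}}$, and every $w \in Z$ both satisfies $\chi$ for each $\chi \in \agam$ and admits a one-step strategy profile $\sigma^w$ witnessing $\brak{\gammaof{Z}}$ at $w$. Since $\gammaof{Z}$ assigns $\X z$ to $\bigcup \fac$ and $\X \brak{\gamma\vert_D}$ to every other $D \in \supp(\diffof{\gamma})$, every outcome of $\sigma^w\vert_{\bigcup\fac}$ lands in a state in $Z$, while every outcome of $\sigma^w\vert_D$ lands in a state satisfying $\brak{\gamma\vert_D}$.

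Fix $s \in Z$; I would then build a play-based strategy profile $\Omega$ witnessing $\brak{\gamma}$ at $s$ by recursion on histories, in a style mirroring the right-to-left construction in the proof of Theorem~\ref{p:fixpoint-property}. At the one-state history $s$, set $\Omega(\cdot, s) := \sigma^s(\cdot, s)$; for each locally available $\actprof \in \ActProf_s$ with outcome $w := \out(s,\actprof)$, define the followers set $D(\actprof) := \bigcup\{C \in \fac \mid \actprof\vert_C = \sigma^s(s)\vert_C\}$, which lies in $\supp(\diffof{\gamma})$. If $D(\actprof) = \bigcup \fac$, then $w \in Z$ and we recurse from $w$ with $\sigma^w$; otherwise $w \sat \brak{\gamma\vert_{D(\actprof)}}$, and we continue $\Omega$ on all histories through $s\actprof w$ using a fixed strategy profile witnessing $\gamma\vert_{D(\actprof)}$ at $w$ (extended arbitrarily on agents outside $\bigcup \fac$). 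Iterating this recipe along every branch of the unfolding play tree defines $\Omega$ on all histories.

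The correctness invariant I would verify, by induction along each outcome play, is: for every $C \in \fac$ with $\gamma(C) = \always \chi_C$, and every outcome play $v_0 v_1 v_2 \ldots$ via $\Omega\vert_C$ starting at $v_0 = s$, each $v_n$ satisfies $\chi_C$. The base case $n=0$ holds because $s \in Z \subseteq \tset{\bigwedge \agam}_\gmod$. For the step, at a ``recursive'' node $v_n \in Z$ the condition $C \subseteq D(\actprof_n)$ is forced by the fact that agents in $C$ follow $\Omega\vert_C = \sigma^{v_n}\vert_C$ at $v_n$, and hence $v_{n+1}$ either lies in $Z$ or satisfies $\brak{\gamma\vert_{D(\actprof_n)}}$, giving $v_{n+1} \sat \chi_C$ in both cases; once the construction has entered a ``fallback'' branch with a witness $\tau$ for $\gamma\vert_D$ at some state $w$ with $C \subseteq D$, the tail play from $w$ already satisfies $\always \chi_C$ by the defining property of $\tau$, so the invariant persists along the remainder of the play.

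The hard part will be the asymmetry inside $\gammaof{z}$ between the goal $\X z$ assigned to $\bigcup \fac$ and the restricted goals $\X \brak{\gamma\vert_D}$ assigned to smaller $D$: successor states reachable via $\sigma^w\vert_D$ for $D \subsetneq \bigcup \fac$ need not lie in $Z$, so one cannot naively keep re-applying $\sigma^{v_{n+1}}$ at every successor. The fallback branches repair this, but they are only legitimate because the semantics of \cga uses play-based strategies: $\Omega$ must condition on the full history in order to remember which coalition is still ``on track'' and to splice in the right fallback witness without contradicting the choices already made for other coalitions. Verifying that these fallback strategies can be grafted onto $\Omega$ consistently with all simultaneously running coalitional goals, and that the resulting $\Omega$ is well-defined and satisfies the invariant on every outcome play, is the crux of the argument.
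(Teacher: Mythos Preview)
Your proposal is correct and follows essentially the same approach as the paper's proof. Both directions match: the left-to-right inclusion via Theorem~\ref{p:fixpoint-property} and Knaster--Tarski is identical, and for right-to-left you build the witnessing strategy exactly as the paper does---using a one-step witness $\sigma^w$ for $\brak{\gammaof{z}}$ at each state of the greatest fixpoint, recursing when all of $\bigcup\fac$ follows (so the successor stays in $Z$), and splicing in a fallback witness for $\brak{\gamma\vert_{D(\actprof)}}$ when only a proper sub-union $D(\actprof)$ follows; your followers set $D(\actprof)$ is the paper's $C_\actprof$, and your observation that the play-based semantics is essential for grafting the fallback branches coherently is precisely the point the paper makes as well.
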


Recall that (memory-based) strategies are defined here in terms of \emph{plays}, not just paths  
as for \ATLs in \cite{AHK-02} (cf also \cite{BGJ15} or \cite{TLCSbook}), and we showed in Example \ref{exampleB} that the two versions affect essentially the semantics. In fact, the model in that example also shows that  $\nu z. \indf{\gamma}{z} \to \brak{\gamma}$ is not valid in the semantics with path-based strategies.

Using the fixpoint characterizations of long-term temporal goal assignments we have established here, we can define an explicit translation $t : \langcga \to \xlangcga_\mu$, preserving the semantics of each $\langcga$-formula. To make this precise, the translation has to be defined by induction on a certain wellfounded order $\prec$ over $\langcga$-formulas, defined as the smallest transitive relation closed under the following rules:
\begin{itemize}
\item If $\varphi$ is a proper subformula of $\psi$ then $\varphi \prec \psi$. 
\item If $\gamma$ is a goal assignment which is neither nexttime nor long-term temporal, i.e. the support of $\gamma$ contains both some coalitions mapped to nexttime formulas and some coalitions mapped to long-term temporal goals, then $\unfold{\gamma} \prec \brak{\gamma}$.
\item If $\gamma$  is a $\typeone$ or $\typetwo$-goal assignment, and $z$ is a propositional variable not occurring in $\brak{\gamma}$, then $\indf{\gamma}{z} \prec \brak{\gamma}$.  
\end{itemize}
To see that this order is wellfounded, consider for example the clause where $\gamma$ is in $\typeone$. Then the formula $\indf{\gamma}{z}$ can  be viewed as being built up using boolean connectives from proper subformulas of $\brak{\gamma}$ and formulas of the form $\brak{\gamma'}$ where $\gamma'$ is either $\gammaof{z}$ or $\gamma\setminus C$ for some coalition $C$ in the support of $\gamma$. Note that the goal assignment $\gamma \setminus C$ has a smaller support than $\gamma$. Furthermore, the goal assignment $\gammaof{z}$ is a nexttime goal assignment, the coalition $\bigcup \mathcal{F}$ is mapped to $\X z$ (where $\mathcal{F}$ is the support of $\gamma$) and each coalition $C \neq \bigcup \mathcal{F}$ in the support of $\gammaof{z}$ is mapped to $\X \brak{\gamma\vert_C}$. But each goal assignment $\gamma \vert_C$ for $C \neq \bigcup \mathcal{F}$ has smaller support than $\gamma$, since if $C$ a proper subset of $\bigcup \mathcal{F}$ then there is some $C' \in \mathcal{F}$ which is not contained in $C$. So in $\gamma\vert_C$, the coalition $C'$ will be assigned the trivial goal $\X \top$. 

With the wellfounded order $\prec$ in place, the translation can be defined by $t(\brak{\gamma}) = \mu z. t(\indf{\gamma}{z})$ if $\gamma$ is a type $\until$ goal assignment, $t(\brak{\gamma}) = \nu z. t(\indf{\gamma}{z})$ if $\gamma$ is of type $\always$ and $t(\brak{\gamma}) = t(\unfold{\gamma})$ if $\gamma$ contains both $\until$-formulas and $\always$-formulas as goals. The induction steps for booleans and nexttime goal assignments are handled in the standard manner, simply letting the translation commute with these connectives.

As noted at the beginning of the section, it is a relatively routine task to extend our translation to cover the richer language $\mathcal{L}^{\cga^+}$ underlying the logic $\cga^+$. Again, the translation is by induction on a wellfounded ordering over formulas, defined as above but with the extended definitions of $\unfold{\gamma}$ and $\indf{\gamma}{z}$. The proofs of Proposition \ref{prop:typeone} and \ref{prop:typetwo} go through essentially as before.

\subsection{A note on coalgebras}
One reason why the translation of $\langcga$ into $\xlangcga$  we have presented is useful is because it establishes a connection with \emph{coalgebraic modal logics}. We  note that the nexttime fragment  $\xlangcga$ of $\langcga$ is, in fact, an instance of the general framework of coalgebraic modal  logics, making the language $\xlangcga_\mu$ a \emph{coalgebraic fixpoint logic}, as studied in \cite{venema2006automata,cirstea2011exptime,fontaine2010automata}. This connection helps to clarify  the place of the logic \cga in the landscape of modal fixpoint logics for various kinds of state-based evolving systems. The theory of \emph{universal coalgebra} appears in computer science as a generic framework for modelling a wide range of state-based evolving systems in a uniform manner \cite{rutten2000universal}. It is formulated using the language of basic category theory (functors and natural transformations), see  \cite{maclane1971categories} for a standard reference. The key idea of universal coalgebra is to pack all the information about the type of transitions that a system can make (deterministic, non-deterministic, probabilistic etc.) into a functor on the category of sets,  which then can be considered as a variable parameter featuring in abstract definitions and results. A coalgebra for a functor $\fun$ is then just a set $X$ together with a map $f : X \to \fun X$, which intuitively represents the evolution of the state-based system. Concurrent game models are in fact coalgebras for a certain functor (together with a valuation of propositional variables), as has been observed several times, e.g. \cite{schroder2009pspace}. Furthermore, it can be checked that the semantics of modalities $\brak{\gamma}$ in which each goal formula is of the form $\X \theta$ for some $\theta$ can be phrased in terms of \emph{predicate liftings} for this functor, which is the currently most usual way of interpreting modalities in coalgebras.  We omit the details here. 

The coalgebraic representation of the logic $\xlangcga_{\mu}$, together with the translation from \cga into $\xlangcga_{\mu}$, gives us access to a wealth of known general results on coalgebraic fixpoint logics. For example we shall use it to derive decidability and finite model property for the logic $\cga$, as well as a complexity bound on the satisfiability problem for the language $\xlangcga_\mu$. However, we emphasize that some caution is required here. Universal coalgebra and coalgebraic logic are valuable frameworks, unifying a large class of systems and associated logics, just like universal algebra provides a common language that puts many different algebraic structures under one roof. But, of course, universal algebra does not tell us everything we want to know about specific classes of algebras, like groups or Heyting algebras. Generic results are helpful, but a detailed study of concrete special cases is usually required. The same is true here. In particular, our main technical result in this paper on completeness for an axiomatization of \cga makes heavy use of ideas from the literature on coalgebraic logic, in particular the notion of ``one-step completeness'' is essential \cite{schroder2009pspace,cirstea2011exptime}. But the proof of one-step completeness is not a trivial consequence of generic results from coalgebra, it requires a careful study of the semantics of nexttime goal assignments.

Once we have one-step completeness in place, the next step will be to handle least and greatest fixpoints.  We have shown that \cga can be translated into  
$\xlangcga_{\mu}$ using a single recursion variable, in effect embedding \cga as a fragment of a ``flat coalgebraic fixpoint logic'' in the sense of Schr\"{o}der and Venema \cite{schroder2010flat}, who prove a generic completeness result for such logics. It is possible that we could obtain completeness for our axiom system by `transferring' Schr\"{o}der's and Venema's  completeness result via our translation, but we have chosen to present a direct proof instead since we believe this will be more instructive. That said, we do consider our approach as being very much coalgebraic in spirit, and the connection with coalgebraic fixpoint logic is conceptually important. In particular, the idea that one-step completeness of a (multi-)modal logic can be ``lifted'' to give a complete axiomatization of a fixpoint extension \cite{GorDrim06,schroder2010flat,cirstea2011exptime,enqvist2019completeness} is at the heart of our proof.

%%%%%%%%%%%%%%%%%%%
\section{Bisimulations and bisimulation invariance for \cga}
\label{subsec:Bisimulations}
%%%%%%%%%%%%%%%%%%%

As noted earlier, the logic \GPCL introduced in \cite{GorankoEnqvist18} is essentially the nexttime fragment $\xlangcga$ of \cga. Therefore, the notion of  \GPCL-bisimulation (ibid.) also applies to \cga. For the reader's convenience, we introduce it again here, now called 
\defstyle{\cga-bisimulation} and extend the bisimulation invariance result from \cite{GorankoEnqvist18} to the full logic \cga. 
This notion of bisimulation corresponds to the play-based semantics which is of importance for us. A similar notion can be defined for the path-based semantics and bisimulation invariance of \cga formulae with respect to it can be proved likewise, which we leave out.

We only define \cga-bisimulation within a single concurrent game model, and generalise to bisimulations between game models via disjoint unions.  

\begin{definition}[\cga-bisimulation]
\label{def:GPCLbisimulation} 
Let 
\[\gmod = (\states,\Act,\gmap,\out,V)\] 
be a game model for the set of agents $\Agt$. A binary relation $\beta  \subseteq \states^{2}$ is a \defstyle{\cga-bisimulation in \gmod} 
if it satisfies the following conditions for every pair of states $(s_1, s_2)$  
such that $s_1 \beta s_2$: 

\begin{description}
\itemsep = 2pt 
\item[Atom equivalence:] For every $p \in \Prop$:  $s_1 \in V(p)$ iff 
$s_2 \in V(p)$.

\item[Forth:]  For  any action profile $\actprof^{1}$ of $\Agt$ at $s_1$ there is 
an action profile $\actprof^2$ of $\Agt$ at $s_2$ such that:

\begin{description}
\itemsep = 1pt 
\item[LocalBack:] For every coalition $C$ and every $u_{2} \in \Out[s_{2},\actprof^{2}\vert_C]$, there is some $u_1 \in \Out[s_1,\actprof^1\vert_C]$ such that $u_1 \beta u_2$.
\end{description}

\item[Back:]  For  any joint action $\actprof^{2}$ of $\Agt$ at $s_2$ there is 
a joint action $\actprof^1$ of $\Agt$ at $s_1$ such that:

\begin{description}
\itemsep = 1pt 
\item[LocalForth:] For every coalition $C$ and every $u_{1} \in \Out[s_{1},\actprof^{1}\vert_C]$, there is some $u_2 \in \Out[s_2,\actprof^2\vert_C]$ such that $u_1 \beta u_2$.
\end{description}
\end{description}

States $s_1, s_2 \in \gmod$  are \defstyle{\cga-bisimulation equivalent}, or just \defstyle{\cga-bisimilar}, 
if there is a bisimulation $\beta$ in $\gmod$ such that $s_1 \beta s_2$. 
\end{definition}

The proof of the following theorem is relatively routine, and can be found in the appendix. 

\begin{theorem}[\cga-bisimulation invariance]
\label{thm:bisimulation invariance}
Let  $\beta$ be a \cga-bisimulation in a game model $\gmod$. Then for every \cga-formula $\varphi$ and every pair $s_1, s_2 \in \gmod$  such that $s_1 \beta s_2$: 
\[
\gmod, s_{1} \models \varphi \ \mbox{iff} \ \gmod, s_{2} \models \varphi 
\]
\end{theorem}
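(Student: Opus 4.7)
My plan is to prove the theorem by induction on the structure of the \cga-formula $\varphi$. The atomic case is handled directly by the atom-equivalence clause in the definition of \cga-bisimulation, and the boolean cases are routine. The interesting case is $\varphi = \brak{\gamma}$, where by symmetry of the bisimulation it suffices to prove one direction: assuming $\gmod, s_1 \sat \brak{\gamma}$ with witnessing strategy profile $\strprof^1$, I will construct a strategy profile $\strprof^2$ witnessing $\brak{\gamma}$ at $s_2$.

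The core of the argument is to build $\strprof^2$ together with a \emph{companion map} $f$ sending each history $h^2$ in $\gmod$ starting at $s_2$ to a history $f(h^2) = h^1$ starting at $s_1$ of the same length, such that the states of $h^1$ and $h^2$ at corresponding positions are $\beta$-related. The definition proceeds by recursion on the length of the history. For the base case, $f(s_2) := s_1$. For the inductive step, given $h^2$ with $f(h^2) = h^1$ ending in bisimilar states $w^1 \beta w^2$, let $\actprof^1 := \strprof^1(h^1)$ and apply \textbf{Forth} at $(w^1, w^2)$ to select an action profile $\actprof^2$ at $w^2$ satisfying \textbf{LocalBack}; then set $\strprof^2(h^2) := \actprof^2$. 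For a one-step extension $h^2 \tau^2 u^2$, let $C^* := \{\aga \in \Agt \mid \tau^2(\aga) = \actprof^2(\aga)\}$ be the set of agents complying with $\actprof^2$. By \textbf{LocalBack} applied with coalition $C^*$, there exists an action profile $\tau^1$ at $w^1$ with $\tau^1|_{C^*} = \actprof^1|_{C^*}$ such that $u^1 := \out(w^1, \tau^1) \beta u^2$; set $f(h^2 \tau^2 u^2) := h^1 \tau^1 u^1$.

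The key claim is then: for every coalition $C \subseteq \Agt$ and every path $\pi^2 \in \paths(s_2, \strprof^2, C)$, there is a path $\pi^1 \in \paths(s_1, \strprof^1, C)$ whose $i$-th state is $\beta$-related to the $i$-th state of $\pi^2$. Indeed, $\pi^2$ is generated by some play where $C$ follows $\strprof^2|_C$, so at each step the actual action profile $\tau^2_i$ agrees with $\strprof^2(h^2_i) = \actprof^2_i$ on $C$, giving $C \subseteq C^*_i$. By the construction, the companion action profile $\tau^1_i$ then satisfies $\tau^1_i|_C = \actprof^1_i|_C = \strprof^1|_C(h^1_i)$, so the companion play lies in $\Plays(s_1, \strprof^1, C)$, and its induced path $\pi^1$ has the required pointwise bisimilar correspondence with $\pi^2$. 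Applying the induction hypothesis to the state-subformulas appearing under the temporal operators in $\gamma(C)$, the truth of $\gamma(C)$ is preserved between $\pi^1$ and $\pi^2$, so $\pi^2 \sat \gamma(C)$. Thus $\strprof^2$ witnesses $\brak{\gamma}$ at $s_2$.

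The main obstacle is the simultaneous construction of $\strprof^2$ and $f$: it has to be carried out uniformly for \emph{all} histories, not per coalition, because $\strprof^2$ is a single strategy profile that must work for every coalition $C$ at once. The use of $C^*$ (rather than a fixed $C$) in applying \textbf{LocalBack} is precisely what makes this uniform construction work: any coalition that has remained compliant throughout a history will automatically be included in every $C^*_i$ along the way, so its witnessed play on the $s_1$-side is a genuine $\strprof^1|_C$-consistent deviation. This is also the place where play-based (rather than path-based) strategies are essential, since the companion history $h^1$ encoding the sequence of action profiles is needed to determine $f$ and hence $\strprof^2$.
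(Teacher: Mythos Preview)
Your proof is correct, but it follows a genuinely different route from the paper's own argument.

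The paper first reduces to formulae in normal form and then treats $\brak{\gamma}$ by cases on the type of $\gamma$. The nexttime case is handled by a one-step back-and-forth argument using \textbf{Forth}/\textbf{LocalBack}, while the long-term temporal cases ($\typeone$ and $\typetwo$) are reduced to the nexttime case via the fixpoint characterisations $\brak{\gamma} \equiv \mu z.\,\indf{\gamma}{z}$ and $\brak{\gamma} \equiv \nu z.\,\indf{\gamma}{z}$ (Propositions~\ref{prop:typeone} and~\ref{prop:typetwo}); one then proves, by transfinite induction on the ordinal approximants $Z^\alpha$, that each approximant set is closed under $\beta$. This approach has the advantage that it immediately yields bisimulation invariance for the full language $\xlangcga_\mu$ (Theorem~\ref{thm:bisimulation invariance+}), since the argument is really about arbitrary fixpoint formulae built from bisimulation-invariant ingredients.

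Your approach, by contrast, is entirely self-contained at the level of $\langcga$: you construct a single witnessing profile $\strprof^2$ directly, by simultaneously building a companion map on play-histories and using the maximal compliant coalition $C^*$ at each step so that the same $\strprof^2$ works for every coalition $C$ at once. This avoids the normal-form reduction, the fixpoint machinery, and the case split on the type of $\gamma$, and it makes explicit exactly where play-based (rather than path-based) strategies are needed. The trade-off is that, as written, it does not automatically extend to the $\mu$-calculus extension $\xlangcga_\mu$, whereas the paper's approximant argument does.
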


In fact, the proof of Theorem \ref{thm:bisimulation invariance} essentially amounts to a proof of bisimulation invariance for the whole language $\xlangcga_\mu$, and therefore also for the fragment $\cga^+$.

\begin{theorem}
\label{thm:bisimulation invariance+}
Every formula of $\xlangcga_\mu$ is bisimulation invariant. 
\end{theorem}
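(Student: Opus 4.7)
The plan is to proceed by induction on the complexity of formulas in $\xlangcga_\mu$, but with a strengthened induction hypothesis that allows for free propositional variables (the ones that get bound by $\mu$). Call a subset $X \subseteq \states$ $\beta$-\emph{closed} if $w \in X$ and $w\beta w'$ imply $w' \in X$, and symmetrically (equivalently, $X$ is saturated under the equivalence closure of $\beta$). I would prove by induction on $\varphi \in \xlangcga_\mu$: for every bisimulation $\beta$ in $\gmod$, and every choice of $\beta$-closed subsets $U_1,\ldots,U_n \subseteq \states$ assigned to the free variables $z_1,\ldots,z_n$ of $\varphi$, the extension $\tset{\varphi}$ computed with the amended valuation $V[z_i \mapsto U_i]$ is $\beta$-closed. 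The theorem is the special case of sentences $\varphi$.

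The atomic and boolean cases are routine: propositional atoms use the Atom equivalence clause of the bisimulation; free variables $z_i$ use the assumption that $U_i$ is $\beta$-closed; conjunctions, disjunctions and negations preserve $\beta$-closedness. For a nexttime goal assignment $\brak{\gamma}$ (the only ``modal'' construct of $\xlangcga_\mu$), the argument is precisely the one used for Theorem \ref{thm:bisimulation invariance}: given $w\beta w'$ and a witnessing strategy profile at $w$, one transfers it through Forth/LocalBack to a witnessing strategy profile at $w'$, applying the induction hypothesis to each state subformula $\varphi_C$ appearing in $\gamma(C) = \X\varphi_C$. The memory-based nature of strategies is used, just as in the proof of Theorem \ref{thm:bisimulation invariance}, to propagate the construction along plays.

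The essentially new case is the fixpoint $\mu z.\varphi(z)$. Here I would use the standard ordinal-approximation characterization: define $\mu^0 = \emptyset$, $\mu^{\alpha+1} = \tset{\varphi(\mu^\alpha)}$, and $\mu^\lambda = \bigcup_{\alpha < \lambda}\mu^\alpha$ for limit ordinals $\lambda$; then $\tset{\mu z.\varphi(z)} = \mu^\kappa$ for some sufficiently large ordinal $\kappa$. By transfinite induction on $\alpha$, every $\mu^\alpha$ is $\beta$-closed: the empty set is trivially $\beta$-closed; in the successor step, $\mu^\alpha$ is $\beta$-closed by hypothesis, and so the strengthened induction hypothesis applied to $\varphi(z)$ (with $z$ interpreted as $\mu^\alpha$) yields that $\mu^{\alpha+1} = \tset{\varphi(\mu^\alpha)}$ is $\beta$-closed; limit ordinals are handled by the observation that unions of $\beta$-closed sets are $\beta$-closed. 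Hence $\tset{\mu z.\varphi(z)}$ is $\beta$-closed. The greatest fixpoint case is dual, either argued directly starting from $\states$ and iterating downwards, or obtained from the $\mu$-case by the definition $\nu z.\varphi \equiv \neg \mu z. \neg \varphi[\neg z/z]$ together with the negation case.

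The main obstacle is purely bookkeeping: setting up the strengthened induction hypothesis carefully so that the interpretation of free variables under the valuation remains $\beta$-closed through all syntactic operations, and in particular through the ordinal approximants of a fixpoint. Once that formulation is fixed, all cases except the fixpoint reduce to the work already done for Theorem \ref{thm:bisimulation invariance}, and the fixpoint case is a standard transfinite-induction argument of the same flavor used in the modal $\mu$-calculus.
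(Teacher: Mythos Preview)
Your proposal is correct and follows essentially the same approach as the paper: the paper does not give a separate proof of Theorem \ref{thm:bisimulation invariance+} but remarks that the proof of Theorem \ref{thm:bisimulation invariance} already amounts to one, and that proof (in the appendix) handles the long-term temporal cases precisely by transfinite induction on the ordinal approximants of the fixpoint characterizations, just as you do. One small remark: for the nexttime goal assignment case in $\xlangcga_\mu$ the argument is purely one-step (a single action profile suffices), so your comment about memory-based strategies being needed ``to propagate the construction along plays'' is unnecessary there, though harmless.
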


Furthermore, we also have the following result from \cite{GorankoEnqvist18}, showing that \cga already suffices to capture bisimulation invariance in finite models.

\begin{proposition}[Hennessy-Milner property, cf 
\cite{GorankoEnqvist18}, Proposition 4.6] 
Let  $\beta$ be a \cga-bisimulation in a finite game model $\gmod = (\states,\Act,\gmap,\out,V)$. 
Then for any pair $s_1, s_2 \in \states$, $s_1 \beta s_2$ holds iff $s_1$ and $s_2$  satisfy the same $\cga$-formulae.  
\end{proposition}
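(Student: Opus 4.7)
The ``only if'' direction follows at once from Theorem~\ref{thm:bisimulation invariance}, so the work is in the converse. My plan is to show that the relation $\beta^*$ of $\cga$-equivalence on $\states$ --- defined by $s_1\, \beta^*\, s_2$ iff $s_1$ and $s_2$ satisfy exactly the same $\cga$-formulae --- is itself a $\cga$-bisimulation in $\gmod$. Atom equivalence is immediate since every $p \in \Prop$ is a $\cga$-formula, and the Back clause will be symmetric to Forth, so the real content lies in verifying Forth.

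The preparatory step exploits finiteness of $\states$: for each pair $w, v \in \states$ with $v$ not $\cga$-equivalent to $w$, fix a distinguishing $\cga$-formula $\phi_{w,v}$ with $w \models \phi_{w,v}$ and $v \not\models \phi_{w,v}$, and set $\chi_w := \bigwedge \{\phi_{w,v} \mid v \in \states,\ v \text{ is not } \cga\text{-equivalent to } w\}$. Since $\states$ is finite this is a finite conjunction, so $\chi_w \in \langcga$; by construction $w \models \chi_w$ and for every $v \in \states$, $v \models \chi_w$ iff $v\, \beta^*\, w$. These $\chi_w$ play the role of characteristic formulas for $\beta^*$-classes.

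Now suppose $s_1\, \beta^*\, s_2$ and let $\actprof^1$ be a locally available action profile at $s_1$. Define the goal assignment $\gamma$ by $\gamma(C) := \X \bigvee \{\chi_u \mid u \in \Out[s_1, \actprof^1\vert_C]\}$ for each $C \subseteq \Agt$. Then $s_1 \models \brak{\gamma}$, as witnessed by any strategy profile $\strprof$ whose action profile at the one-state history $s_1$ equals $\actprof^1$: for every $C$ and every $\path \in \paths(s_1,\strprof,C)$, the second state of $\path$ lies in $\Out[s_1,\actprof^1\vert_C]$ and so satisfies the matching disjunct $\chi_u$. Since $s_1\, \beta^*\, s_2$, we also have $s_2 \models \brak{\gamma}$; let $\strprof'$ witness this at $s_2$, and let $\actprof^2$ be the action profile played by $\strprof'$ at $s_2$. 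For any $C$ and any $u_2 \in \Out[s_2, \actprof^2\vert_C]$, some path in $\paths(s_2,\strprof',C)$ has $u_2$ as its second state; since $\strprof'$ witnesses $\gamma(C) = \X \bigvee_u \chi_u$, there is some $u_1 \in \Out[s_1,\actprof^1\vert_C]$ with $u_2 \models \chi_{u_1}$, and the characteristic property of $\chi_{u_1}$ then gives $u_1\, \beta^*\, u_2$, exactly as required by LocalBack.

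The main obstacle --- and the reason to exploit $\brak{\gamma}$ rather than a family of separate coalitional operators --- is that the witness profile $\strprof'$ has to yield a \emph{single} action profile $\actprof^2$ at $s_2$ which simultaneously matches $\actprof^1$ locally for every coalition $C$ at once. This simultaneity is precisely what the grand-coalition quantification in the semantics of $\brak{\gamma}$ supplies, and it is why a single goal assignment indexed over all subsets of $\Agt$ suffices to enforce the conjunctive-looking Forth condition. The Back clause is obtained by swapping the roles of $s_1$ and $s_2$ throughout, completing the verification that $\beta^*$ is a $\cga$-bisimulation.
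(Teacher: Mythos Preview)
Your proof is correct and follows essentially the same approach as the paper's: both show that the relation of $\cga$-equivalence is itself a $\cga$-bisimulation by constructing, for each state, a characteristic formula in the nexttime fragment, and then packaging the Forth condition into a single goal assignment $\gamma$ with $\gamma(C) = \X\bigvee\{\chi_u \mid u \in \Out[s_1,\actprof^1\vert_C]\}$ ranging over all coalitions $C$. Your write-up is simply more explicit than the paper's sketch, which records the same key formula and leaves the remaining verification to the reader.
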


\begin{proof}  
For the non-trivial direction we will use only formulae from the fragment
$\xlangcga$. 
Since $\gmod$ is finite, we can define, by a standard construction, a `characteristic formula' $\chfor(s)$ for each state $s$ in $\gmod$, such that $s_1,s_2$ are $\xlangcga$-equivalent if and only if $\chfor(s_1)  = \chfor(s_2)$, and that $\chfor(s_1) \land \chfor(s_2) \equiv \bot$ whenever $s_1,s_2$ are not $\xlangcga$-equivalent. 
 For a set  of states $Z$, let $\chfor[Z] = \bigvee \{\chfor(v) \mid v  \in Z\}$. Our goal is to show that the relation of \cga-equivalence is itself a \cga-bisimulation, and the key observation is that each state $s $ satisfies the $\xlangcga$-formula:
\begin{eqnarray*} \bigwedge_{\actprof \in \ActProf_s} \cgoal{C_1 \gass \X \chfor[\Out[s,\actprof\vert_{C_1}],...,C_k \gass \X \chfor[\Out[s,\actprof\vert_{C_k}] }
\end{eqnarray*}
where we list the set $\mathcal{P} (\Agt)$ of all possible coalitions as $C_1,...,C_k$.
\end{proof}

%%%%%%%%%%%%%%%%%%%%%%%%
%%%%%%%%%%%%%%%%%%%%%%%%
\section{Axiomatization and one-step completeness of \cga}
\label{sec:axiomatization}
%%%%%%%%%%%%%%%%%%%%%%%%
%%%%%%%%%%%%%%%%%%%%%%%%

In this section we focus exclusively on \cga and leave the extension of the axiomatic system presented here to $\cga^+$, as well as the respective axiomatizations for the path-based semantics (which may turn out to be more problematic) for future work.  

%%%%%%%%%%%%%%%%%%%%%%%%
\subsection{Axiomatic system for \cga}
\label{subsec:Axiomatization}
%%%%%%%%%%%%%%%%%%%%%%%%

\begin{definition}
Let $\fac$ be a set of coalitions. 
A \defstyle{voting profile} for $\fac$ is a mapping $f$ 
assigning to each $\aga_i \in \Agt$ a goal assignment $f(\aga_i)$. 
If $f(\aga_i)(C)$ is a nexttime formula for each $i$ and $C \in \fac$, we say that $f$ is a \defstyle{one-step voting profile} for $\fac$. 
\end{definition}

The notion of merging a voting profile, defined next, will be used in some proofs later  and we will need some derivable formulae that use it, listed further. 

\begin{definition}
Let $f$ be a voting profile. We define the goal assignment $\mathsf{merge}(f)$ as follows: 

\begin{itemize}
\itemsep = 2pt
\item 
$\mathsf{merge}(f) (C) := \theta$,  
if 
{$C \neq \emptyset$ and $f(\aga_i)(C)= \theta$} for each $a_i\in C$, \ 
\item $\mathsf{merge}(f)(C) := \nexttime \top$, 
{if $C = \emptyset$ or} the above holds for no $ \theta$.
\end{itemize}

\end{definition}

Our axioms are as follows (recall notation on goal assignments from Section \ref{subsec:CGAsyntax}).

\subsubsection{I. General axiom schemes for goal assignments}

\begin{description}     
\itemsep = 2pt

\item[(Triv)] $\brak{\gamma^\top}$ \ \  (Recall that $\gamma^\top$ is the trivial goal assignment, mapping each coalition to $\nexttime \top$)

\item[(Safe)] $\neg \brak{\Agt \gass  \nexttime \bot}$

\item[(Merge)] $\brak{C_1 \gass \theta_1} \wedge ... \wedge \brak{C_n \gass \theta_n} \to \brak{C_1  \gass  \theta_1,...,C_n  \gass \theta_n}$, \  where $C_1,...,C_n$ are pairwise disjoint. 

This axiom generalises the Superadditivity axiom of Coalition Logic. The idea is simple: if the coalitions $C_1,...,C_n$ are pairwise disjoint, then they can join their collective strategies for their respective coalitional goals into one strategy profile that ensures achievement of all these collective goals.

\item[(GrandCoalition)]  
$\brak{\gamma} \rightarrow (\brak{\gamma[\Agt  \gass  \nexttime(\varphi \wedge \psi)]} \vee \brak{\gamma[\Agt  \gass  \nexttime (\varphi \wedge \neg\psi)]} )$, \ 
where $\gamma(\Agt) = \nexttime \varphi$. 

Any strategy profile generates a unique successor state, on which any state formula $\psi$ is either true or its negation is true, so either $\psi$ or $\lnot \psi$ can be added to the nexttime goal of the grand coalition $\Agt$.

\item[(Case)] $\brak{\gamma} \rightarrow (\brak{\gamma[C  \gass  \nexttime (\varphi \wedge \psi)]} \vee \brak{\gamma\vert_C[(\Agt  \gass \nexttime \neg\psi]})$, where $\gamma(C) = \nexttime \varphi$. 

For any coalition $C$, state formula $\psi$, and a strategy profile $\strprof$, either its projection 
$\strprof_C$ to $C$ ensures the truth of $\psi$ in all successor states enabled by $\strprof_C$ -- in which case $\psi$ can be added to the nexttime goal of $C$ enforced  by $\strprof$ -- or else $\lnot \psi$ is true in some of these successor states, in which case it can be added to $\gamma\vert_C$ as the nexttime goal of the grand coalition $\Agt$ enforced  by $\strprof$.  

\item[(Con)] $\cgoal{\gamma} \to \cgoal{\gamma[C \gass \nexttime (\varphi \wedge \psi)]}$ where $\gamma(C) = \nexttime \varphi$ and  $\gamma(C') = \nexttime \psi$ for some $C' \subseteq C$. 

Given any coalition $C$ and sub-coalition $C'$, the nexttime goal of $C'$ can be added to the nexttime goal of  $C$, in sense that if there is any strategy profile $\strprof$ that ensures that $C$ and  $C'$ can force their respective nexttime goals $\nexttime \varphi$ and $\nexttime \psi$, then $\strprof$ also ensures that $C$ can force the conjunction of these goals.
         
\end{description}

  %%%%%%%%%%%%%%%%%%%%%%%%

 \subsubsection{II. General rules of inference:} 
 ~
 
\textbf{Modus Ponens} 
and 
\textbf{Goal Monotonicity (G-Mon)}:  
\[\frac{\phi \to \psi}{\cgoal{\gamma[C\! \gass  \X \phi]} \to \cgoal{\gamma[C\! \gass  \X \psi]}}\] 

The meaning of the rule is clear: if $\phi$ implies $\psi$, then any coalition $C$ that can ensure the nexttime goal $\phi$ within the context of some strategic goal assignment, can also ensure $\psi$ with the same context.

\subsubsection{III. Axioms and rules for the long-term goal assignments} 
\label{sec:TemporalAxioms}
~

The axioms and rules for the goal assignments of Types 1 and 2, involving long-term temporal operators are given on Figure \ref{fig:rules}. They are adapted from the respective axioms and rules for least and greatest fixed points in the modal mu-calculus.  
In the axiom \textsf{Fix}, $\gamma$ is any goal assignment. In the rule \textsf{R-Ind} it is a long-term temporal assignment of type $\until$, and in \textsf{R-CoInd} it is a long-term temporal assignment of type $\always$.

\begin{figure}[htb]
\fbox{
\begin{minipage}[t]{.9\textwidth}
\begin{prooftree}
\AxiomC{\textsf{Fix:}  \; $\unf{\gamma} \leftrightarrow \brak{\gamma}$ }
\end{prooftree}

\begin{prooftree}
\AxiomC{$\indf{\gamma}{\phi} \to \phi$ }
\LeftLabel{\textsf{R-Ind:}}
\RightLabel{\;\;($\gamma \in \typeone$)}
\UnaryInfC{$\brak{\gamma} \rightarrow \phi $}
\end{prooftree}
\begin{prooftree}
\AxiomC{$\phi \to \indf{\gamma}{\phi} $ }
\LeftLabel{\textsf{R-CoInd:}}
\RightLabel{\;\;($\gamma \in \typetwo$)}
\UnaryInfC{$\phi \to \brak{\gamma} $}
\end{prooftree}
\end{minipage}
}
\caption{Fixpoint axiom and induction rules}
\label{fig:rules}
\end{figure}

We denote the axiomatic system above by $\textsf{Ax}_{\cga}$ and will denote derivability in it by $\textsf{Ax}_{\cga} \vdash$, but will often write just  $\vdash$. 
Here are some important validities that are derivable in $\textsf{Ax}_{\cga}$, some of which will be used in the proofs further:   

\begin{description}
\itemsep = 2pt

\item[\textsf{Ind}] $ \brak{\gamma} \ifff \indf{\gamma}{\brak{\gamma}}$ \ for every long-term temporal goal assignment $\gamma$. 

(Immediately from (Fix), due to Proposition \ref{prop:unfold}).

\item[(Weakening)]  \ 
\label{GPCL-G0}
$\cgoal{\gamma} \to 
\cgoal{C\gass \gamma(C)}$, \ 
for any $C \subseteq \Agt$.  \ \ 
 (Using (Triv) and (G-Mon).)

\item[$\Agt$-Maximality]  \ 
\label{GPCL-G1}  
            $\cgoal{\emptyset \gass\X\phi}        
            \lor
            \cgoal{\Agt \gass\X\neg\phi}$. \ (Using (Triv) and (Case).)

\item[(Superadditivity)]  \  
\label{GPCL-G2}
$\cgoal{C_1\gass\X\phi_1} \land \cgoal{C_2\gass\X\phi_2}
\to \cgoal{C_1  \cup C_2\, \gass \X(\phi_1 \land \phi_2); \ 
C_1 \gass \X\phi_1; \ C_2 \gass \X\phi_2 }$,  \     
 if $C_1 \cap C_2 = \emptyset$. 
 
This subsumes the Superadditivity axiom for Coalition Logic CL. It is derivable from (Merge)) using twice (Con) to add $\X(\phi_1 \land \phi_2)$ as a goal assignment to $C_1  \cup C_2$.

\item[(Merge')] $\bigwedge_{a_i \in \Agt} \cgoal{f(\aga_i)} \rightarrow \cgoal{\mathsf{merge}(f)}$, where $f$ is any 
voting profile. 

This is an essentially equivalent formulation of (Merge). Indeed, (Merge) is a particular case of (Merge'), whereas (Merge') is derivable from (Merge) by first using (Weakening) to detach each $\brak{C_j \gass \theta_j}$ from every $f(\aga_i)$, for $\aga_i \in C_j$, 
{if $C_j \neq \emptyset$ and} $f(\aga_i)(C_j) = \theta_j$ for all such $\aga_i$.

\item[\textsf{Fix}($\always$)]    
\label{PostFP(G)} 
\ 
$ \brak{C \gass \always \chi} \to 
\chi \land \brak{C \gass \X \brak{C \gass \always \chi}}$. \\
This is a special case of $\mathsf{Fix}$.

\item[\textsf{CoInd}($\always$)]    
\label{CoInd(G)} 
If 
$\vdash \phi \to \chi \land \brak{C \gass \X \phi}$
then $\vdash \phi \to \brak{C \gass \always \chi}$. \\ 
This is a special case of the rule $\mathsf{CoInd}$. 

In particular, 
by using \textsf{Fix}($\always$) and applying \textsf{G-Mon} we obtain 
 
$\vdash (\chi \land \brak{C \gass \X \brak{C \gass \always \chi}}) \to \chi \land \brak{C \gass \X (\chi \land \brak{C \gass \X \brak{C \gass \always \chi}})}$
 
Now, by applying \textsf{CoInd}($\always$) 
 for $\phi =  \chi \land \brak{C \gass \X \brak{C \gass \always \chi}}$, we derive 
 \[
\vdash \chi \land \brak{C \gass \X \brak{C \gass \always \chi}} \to 
 \brak{C \gass \always \chi}.
 \] 
 Thus, we have derived the fixpoint equivalence for $\always$: 
\item[\textsf{FP}($\always$)]    
\label{FP(G)}  
$ \brak{C \gass \always \chi} \ifff 
\chi \land \brak{C \gass \X \brak{C \gass \always \chi}}$.

\item[\textsf{PreFP($\until$)}]    
\label{PreFP(U)} 
$ \beta \lor (\alpha \land \brak{C \gass \X \brak{C \gass \alpha \until \beta}}) \to 
\brak{C \gass \alpha \until \beta}$. \\ 
This is a special case of $\mathsf{Fix}$. 

\item[\textsf{Ind}($\until$)]    
\label{Ind(U)} 
If 
$\vdash \beta \lor (\alpha \land \brak{C \gass \X \phi}) \to \phi$
then $\vdash \brak{C \gass \alpha \until \beta} \to \phi$. \\ 
This is a special case of the rule $\mathsf{Ind}$.

In particular, 
by applying the rule \textsf{G-Mon} to \textsf{PreFP($\until$)} 
we derive 
\[\brak{C \gass \X (\beta \lor (\alpha \land \brak{C \gass \X \brak{C \gass \alpha \until \beta}}))} \to \brak{C \gass \X \brak{C \gass \alpha \until \beta}}.\]
Then, by simple propositional inference we derive
 
$(\beta \lor (\alpha \land \brak{C \gass \X (\beta \lor (\alpha \land \brak{C \gass \X \brak{C \gass \alpha \until \beta}}))})) \to 
(\beta \lor (\alpha \land \brak{C \gass \X \brak{C \gass \alpha \until \beta}}))$. 
 
Now, by applying \textsf{Ind}($\until$) for 
 $\phi = \beta \lor (\alpha \land \brak{C \gass \X \brak{C \gass \alpha \until \beta}})$, 
 we derive 
\[\vdash \brak{C \gass \alpha \until \beta} \to 
 \beta \lor (\alpha \land \brak{C \gass \X \brak{C \gass \alpha \until \beta}})
 \]
 Thus, we have derived the fixpoint equivalence for $\until$: 
\item[FP($\until$)]    
\label{FP(U)} 
$\brak{C \gass \alpha \until \beta} \ifff 
 \beta \lor (\alpha \land \brak{C \gass \X \brak{C \gass \alpha \until \beta}})$
\end{description}

\begin{proposition}[Soundness of $\textsf{Ax}_{\cga}$]
\label{prop:soundness} The axiomatic system $\textsf{Ax}_{\cga}$ is sound: every derivable formula in $\textsf{Ax}_{\cga}$ is valid. 
\end{proposition}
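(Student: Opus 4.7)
The plan is to verify each axiom scheme separately and check that each rule of inference preserves validity, leaning on the fixpoint results of Section~\ref{sec:fixpoints} for the temporal machinery. The boolean base cases are immediate: (Triv) holds because every strategy profile witnesses the trivial goal assignment, and (Safe) holds because no path satisfies $\X\bot$. Modus Ponens is standard, and (G-Mon) is immediate since every $\path\models\X\phi$ entails $\path\models\X\psi$ whenever $\phi\to\psi$ is valid, so a witness for $\gamma[C\gass\X\phi]$ is also one for $\gamma[C\gass\X\psi]$.

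For the four nontrivial nexttime axioms I would argue constructively at the level of strategy profiles. For (Merge), given pairwise disjoint $C_1,\dots,C_n$ and witnessing profiles $\strprof^{1},\dots,\strprof^{n}$, define a single profile $\strprof$ by setting $\strprof(\aga):=\strprof^{i}(\aga)$ for $\aga\in C_i$ (with arbitrary strategies for agents outside all $C_i$); disjointness makes this well defined, and since $\paths(s,\strprof,C_i)$ depends only on $\strprof\vert_{C_i}=\strprof^{i}\vert_{C_i}$, each goal $\theta_i$ is enforced. For (Con), note that $C'\subseteq C$ gives $\paths(s,\strprof,C)\subseteq\paths(s,\strprof,C')$, so if $\gamma(C)=\X\varphi$ and $\gamma(C')=\X\psi$ a common witness enforces $\X(\varphi\wedge\psi)$ along all $C$-paths. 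For (GrandCoalition), when $\gamma(\Agt)=\X\varphi$ a witness $\strprof$ determines the unique successor $w=\out(s,\strprof(s))$ which satisfies $\varphi$ together with exactly one of $\psi,\neg\psi$, so one disjunct is witnessed. The slightly subtle case is (Case): given a witness $\strprof$ with $\gamma(C)=\X\varphi$, split on whether every $w\in\Out[s,\strprof\vert_C(s)]$ satisfies $\psi$. If yes, $\strprof$ itself witnesses the strengthened goal in the first disjunct. If no, pick a locally available action profile $\actprof$ at $s$ extending $\strprof\vert_C(s)$ whose outcome falsifies $\psi$, and modify $\strprof$ only at the history $s$ and only for agents outside $C$ to realise $\actprof$; the new profile preserves $\strprof\vert_{C'}$ for every $C'\subseteq C$ and hence preserves all goals of $\gamma\vert_C$, while forcing the grand coalition's goal $\X\neg\psi$. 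The restriction to $\gamma\vert_C$ in the second disjunct is essential because coalitions not contained in $C$ may have their goals destroyed by the modification.

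The validity of (Fix) is exactly Theorem~\ref{p:fixpoint-property}. The main obstacle, and the place where the fixpoint characterizations do real work, is verifying that (R-Ind) and (R-CoInd) preserve validity. For (R-Ind), fix $\gamma\in\typeone$ and suppose $\models\indf{\gamma}{\phi}\to\phi$. For an arbitrary model $\gmod$ the formula $\indf{\gamma}{z}$ induces a monotone operator $f_\gamma$ on $\psf(\states)$, and by Proposition~\ref{prop:typeone} the set $\tset{\brak{\gamma}}_\gmod$ is the least fixpoint of $f_\gamma$. The hypothesis gives $f_\gamma(\tset{\phi}_\gmod)\subseteq\tset{\phi}_\gmod$, i.e.\ $\tset{\phi}_\gmod$ is a prefixed point, so by Knaster--Tarski $\tset{\brak{\gamma}}_\gmod\subseteq\tset{\phi}_\gmod$, establishing $\brak{\gamma}\to\phi$. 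The (R-CoInd) case is dual, using Proposition~\ref{prop:typetwo} and greatest fixpoints. The only delicate point is ensuring that the semantic substitution of $\tset{\phi}_\gmod$ for the auxiliary variable $z$ in $\indf{\gamma}{z}$ indeed yields $\tset{\indf{\gamma}{\phi}}_\gmod$, which is legitimate because $z$ is chosen fresh and $\indf{\gamma}{\phi}$ is defined as the formula obtained by precisely that substitution.
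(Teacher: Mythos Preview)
Your proposal is correct and follows essentially the same approach as the paper: the validity of \textsf{Fix} is reduced to Theorem~\ref{p:fixpoint-property}, and the soundness of \textsf{R-Ind} and \textsf{R-CoInd} is derived from Propositions~\ref{prop:typeone} and~\ref{prop:typetwo} via the Knaster--Tarski characterization of least/greatest (pre-/post-)fixpoints, exactly as in the paper. The only difference is that you spell out the verifications of the nexttime axioms (Merge), (Con), (GrandCoalition), and (Case) explicitly, whereas the paper dismisses these as routine and refers to the soundness proof for \GPCL in~\cite{GorankoEnqvist18}; your added detail is correct and in particular your treatment of (Case)---modifying the witnessing profile only at the initial history and only for agents outside $C$, so that $\strprof'\vert_{C'}=\strprof\vert_{C'}$ for every $C'\subseteq C$---is exactly the right argument.
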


\begin{proof}
We show that every axiom is valid and all rules of inference preserve validity. 

Checking validity of the general axiom schemes is fairly routine. Most of these, as well as the preservation of validity by the general rules II, follow from the soundness of the logic \GPCL in  \cite{GorankoEnqvist18}. 

The validity of the axiom scheme
$\mathsf{Fix}$ 
follows from 
Theorem 
\ref{p:fixpoint-property}. 

The preservation of validity by the special rule $\mathsf{R-Ind}$ can be shown as follows. Suppose $\indf{\gamma}{\phi} \to \phi$ is valid. 
Take any concurrent game model $\gmod$.
Then $\gmod \sat \indf{\gamma}{\phi} \to \phi$, hence 
$\tset{\phi}_{\gmod}$ is a pre-fixed point of the set operator induced by the formula 
$\indf{\gamma}{z}$ in $\gmod$. By  Proposition \ref{prop:typeone}, 
$\brak{\gamma}$ is semantically equivalent to the least fixed point $\mu z. \indf{\gamma}{z}$, which is also the least pre-fixed point of $\indf{\gamma}{z}$. 
Therefore, $\gmod \sat \brak{\gamma} \rightarrow \phi$. 
Thus, $\brak{\gamma} \rightarrow \phi$ is valid.

The preservation of validity by the special rule $\mathsf{R-CoInd}$ is proved analogously, using Proposition \ref{prop:typetwo} and the fact that the 
greatest fixed point $\nu z. \indf{\gamma}{z}$, is also its greatest post-fixed point. 
\end{proof}

Recall (cf Section \ref{subsec:TypesAssignments}) that a formula $\phi \in \langcga$ is in \defstyle{normal form} if, for every subformula of the form $\brak{\gamma}$, the goal assignment $\gamma$ is either a nexttime or a long-term temporal goal assignment. 

\begin{proposition}
\label{prop:NF}
For every formula $\varphi$ there is a formula $\psi$ which is in normal form, and such that $\textsf{Ax}_{\cga} \vdash \varphi \leftrightarrow \psi$.
\end{proposition}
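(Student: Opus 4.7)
The plan is to proceed by structural induction on $\varphi$; the atomic and boolean cases are routine, reducing to the inductive hypothesis together with basic propositional reasoning. The only interesting case is $\varphi = \brak{\gamma}$, which I split into two tasks: first, rewrite all state subformulas occurring inside the goals of $\gamma$ so that they are in normal form; second, if the resulting goal assignment is still mixed---i.e., contains both nexttime and long-term goals---reduce it to an equivalent normal-form formula using the fixpoint axiom.

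For the first task, the inductive hypothesis supplies, for each state subformula $\phi$, $\alpha$, $\beta$ or $\chi$ occurring inside a goal of the form $\nexttime\phi$, $\alpha\until\beta$ or $\always\chi$, a provably equivalent normal-form replacement. To propagate these equivalences up through the $\brak{\cdot}$ operator I need a derived congruence lemma: if $\vdash\rho\leftrightarrow\rho'$ then $\vdash\brak{\gamma[\rho]}\leftrightarrow\brak{\gamma[\rho']}$, where $\rho$ is a state subformula inside a goal. For nexttime goals this follows directly from the rule \textsf{G-Mon} used in both directions; for $\until$- and $\always$-goals it is derived by instantiating the induction target with $\brak{\gamma[\rho'/\rho]}$ in the rule \textsf{R-Ind} (respectively \textsf{R-CoInd}) and discharging the hypothesis using \textsf{Fix} on $\brak{\gamma[\rho'/\rho]}$ together with propositional reasoning, in the same spirit as the derivations of $\mathsf{FP}(\until)$ and $\mathsf{FP}(\always)$ recorded in Section~\ref{sec:TemporalAxioms}. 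Let $\gamma'$ denote the resulting goal assignment, for which $\vdash\brak{\gamma}\leftrightarrow\brak{\gamma'}$ and all state subformulas inside the goals of $\gamma'$ are in normal form.

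For the second task, if $\gamma'$ is either purely nexttime or purely long-term temporal then $\brak{\gamma'}$ is already in normal form and we are done. Otherwise, I apply the axiom \textsf{Fix} to obtain $\vdash\brak{\gamma'}\leftrightarrow\unf{\gamma'}$, and then run a sub-induction on the number $u(\gamma')$ of coalitions that $\gamma'$ maps to a $\until$-formula. In $\unf{\gamma'}$, the only $\brak{\cdot}$ subformulas at risk of being non-normal are the terms $\brak{\gamma'\setminus C}$ appearing inside $\mathsf{Finish}(\gamma')$, one for each coalition $C$ with $\gamma'(C)$ a $\until$-formula; the remaining $\brak{\cdot}$ subformulas introduced by the unfolding are $\brak{\diffof{\gamma'}}$, whose top-level goal assignment is purely nexttime, and the nested occurrences $\brak{(\gamma'\vert_C)\ugpart}$, whose top-level goal assignments are purely long-term temporal, and whose internal state subformulas are already normal by the first task. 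If $u(\gamma')=0$ then $\mathsf{Finish}(\gamma')=\emptyset$ and $\unf{\gamma'}$ is itself in normal form; otherwise each $\gamma'\setminus C$ has $u(\gamma'\setminus C)=u(\gamma')-1$, so the sub-inductive hypothesis produces a normal-form equivalent for each $\brak{\gamma'\setminus C}$, which can then be substituted back using the congruence for boolean contexts.

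The main obstacle is the congruence derivation in the first task, since the axiomatic system only explicitly supplies monotonicity for nexttime goals via \textsf{G-Mon}; the extension to long-term goals must be teased out through the induction and coinduction rules, and care is needed to set up the target formula correctly so that the hypothesis of \textsf{R-Ind} (respectively \textsf{R-CoInd}) reduces to a purely propositional consequence of $\vdash\rho\leftrightarrow\rho'$, \textsf{Fix}, and \textsf{G-Mon} on the nexttime conjunct $\brak{\diffof{\gamma}}$. A secondary care point is termination of the whole procedure, handled by combining the outer structural induction on $\varphi$ with the inner induction on $u(\gamma')$: every application of \textsf{Fix} either yields a fully normal formula immediately (when $u(\gamma')=0$) or strictly decreases $u$ on the new $\brak{\gamma'\setminus C}$ subformulas.
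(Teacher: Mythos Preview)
Your overall two-task decomposition and the sub-induction on $u(\gamma')$ in task 2 are both correct and track the paper's reasoning closely. The gap is in task 1. The rules \textsf{R-Ind} and \textsf{R-CoInd} carry the side conditions $\gamma\in\typeone$ and $\gamma\in\typetwo$ respectively (see Figure~\ref{fig:rules}), i.e.\ they apply only to \emph{purely} long-term temporal goal assignments. Since your task 1 runs \emph{before} task 2, the goal assignment $\gamma$ on which you need the congruence lemma is still mixed, so you cannot invoke either rule to replace a state subformula sitting inside an $\until$- or $\always$-goal. The appeal to ``the same spirit as the derivations of $\mathsf{FP}(\until)$ and $\mathsf{FP}(\always)$'' does not help: those derivations are for single-coalition $\gamma$, and even in the pure long-term, multi-coalition case the hypothesis of \textsf{R-Ind} unfolds to a formula containing $\brak{\gamma\setminus C_i}$ and $\brak{(\gamma\vert_D)\ugpart}$, so the congruence you are trying to prove recurses on goal assignments of strictly smaller support---a further induction you do not mention.

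The paper avoids this entirely by reversing the order of your two tasks: it applies \textsf{Fix} to the mixed $\brak{\gamma}$ \emph{first}. After one unfolding, the only top-level goal assignment is $\diffof{\gamma}$, which is nexttime, so all replacements of state subformulas inside its goals are handled by \textsf{G-Mon} alone; the nested $\brak{(\gamma\vert_C)\ugpart}$ are already pure long-term; and the possibly still-mixed $\brak{\gamma\setminus C}$ together with the raw state subformulas $\alpha,\beta,\chi$ are all strictly smaller in the wellfounded order $\prec$ of Section~\ref{subsec:mu-TLCGA} (which includes $\unf{\gamma}\prec\brak{\gamma}$ for mixed $\gamma$), so the outer inductive hypothesis disposes of them directly. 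No separate congruence lemma for long-term goals is ever needed. You can repair your argument either by swapping the two tasks, or by deriving the mixed-$\gamma$ congruence via \textsf{Fix} + \textsf{G-Mon} + an explicit induction on support size; the first repair is simpler and is what the paper does.
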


\begin{proof}
By induction on the structure of formulas, using the axiom \textsf{Fix} for the crucial steps. By design, the unfolding $\unf{\gamma}$ of any goal assignment $\gamma$ is a nexttime goal assignment, and all new goal assignments appearing in the scope of nexttime operators in the codomain of $\unf{\gamma}$ will be long-term temporal. So, all mixing of nexttime and long-term temporal path formulas in $\brak{\unf{\gamma}}$ will appear in proper subformulas of $\brak{\gamma}$, where the inductive hypothesis is applied.  
\end{proof}

By the soundness, the proposition above implies the following corollary. 

\begin{corollary}
\label{cor:NF}
For every formula $\varphi$ there is a semantically equivalent formula $\psi$ which is in normal form. 
\end{corollary}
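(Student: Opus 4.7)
The plan is to derive this corollary as a direct consequence of the two results immediately preceding it, namely Proposition \ref{prop:NF} (providing a syntactic normal form obtainable via the axiomatic system $\textsf{Ax}_{\cga}$) and Proposition \ref{prop:soundness} (the soundness of $\textsf{Ax}_{\cga}$). There is no genuine obstacle here, as all the conceptual work has already been done in Proposition \ref{prop:NF}; the corollary is essentially a repackaging of that proposition in semantic rather than proof-theoretic language.

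Concretely, given an arbitrary formula $\varphi \in \langcga$, I would first apply Proposition \ref{prop:NF} to obtain a formula $\psi$ in normal form such that $\textsf{Ax}_{\cga} \vdash \varphi \leftrightarrow \psi$. By Proposition \ref{prop:soundness}, every theorem of $\textsf{Ax}_{\cga}$ is valid in the class of all concurrent game models, so $\models \varphi \leftrightarrow \psi$. This validity is precisely the statement that $\varphi$ and $\psi$ are semantically equivalent (in particular, for every concurrent game model $\gmod$ and every state $s \in \gmod$, we have $\gmod, s \models \varphi$ iff $\gmod, s \models \psi$), which is what the corollary asserts.

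The only aspect worth noting is that the result relies only on the parts of $\textsf{Ax}_{\cga}$ actually used in the inductive argument of Proposition \ref{prop:NF}, where the key move is an application of the axiom \textsf{Fix} to replace $\brak{\gamma}$ (for a $\gamma$ mixing nexttime and long-term goals in its image) with its unfolding $\unf{\gamma}$, together with the inductive hypothesis applied to the proper subformulas of $\brak{\gamma}$ appearing in $\unf{\gamma}$. Since \textsf{Fix} is sound (by Theorem \ref{p:fixpoint-property}) and the propositional and modal congruence principles used to replace equivalents within larger contexts are also sound, the semantic equivalence is preserved at each step of the inductive construction. Thus the corollary follows without further work.
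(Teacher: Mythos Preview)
Your proposal is correct and follows exactly the same approach as the paper: apply Proposition \ref{prop:NF} to obtain a provably equivalent normal form, then invoke soundness (Proposition \ref{prop:soundness}) to conclude semantic equivalence. The paper's own justification is the single sentence ``By the soundness, the proposition above implies the following corollary.''
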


%%%%%%%%%%%%%%%%%%%%%%%%%%%%%%%%%%%%%%%%%%%%
\subsection{Formula types, components, and extended FL-closure of \cga formulae} 
\label{subsec:components-and-closure}
%%%%%%%%%%%%%%%%%%%%%%%%%%%%%%%%%%%%%%%%%%%%

We use some generic notions and terminology from the literature on tableaux-based satisfiability decision methods (see e.g. \cite[Ch.13]{TLCSbook}).
Formulae of \cga in normal form can be classified as:
\defstyle{literals}: $\top$, $\neg \top$, $p, \neg p$, where $p \in \Prop$, 
\defstyle{conjunctive formulae}, of the type $(\phi \land \psi)$  and $\lnot (\phi \lor \psi)$; 
\defstyle{disjunctive formulae}, of the type $(\phi \lor \psi)$ and $\lnot (\phi \land \psi)$; 
\defstyle{successor formulae}: $\brak{\gamma}$ and $\lnot \brak{\gamma}$, where 
$\gamma$ is a local (nexttime) goal assignment;  
and 
\defstyle{long term temporal formulae}, of the type $\brak{\gamma}$ and $\lnot \brak{\gamma}$, where $\gamma$ is a long term goal assignment. 
The formulae in the last four classes have respective \defstyle{components} 
that are given by Figure \ref{alphaTable}. 
Clearly, every conjunctive (respectively, disjunctive)  formula  in the table is equivalent to the conjunction (respectively, disjunction) of its components.

\begin{figure}[h]
\caption{Types of formulae and their components} 
\label{alphaTable}
\centering
\small
{\label{fig:edge-a}
$
\begin{array}{cc}
\begin{array}{l|l}
\mbox{Conjunctive formula} &  \mbox{Components} \\
\hline
\neg \neg \varphi &  \varphi \\
\varphi \land \psi &  \varphi, \;  \psi \\ 
\neg(\varphi \lor \psi) &  \neg \varphi, \; \neg \psi \\
\end{array}
& 
\hspace{5mm}
\begin{array}{l|l}
\mbox{Disjunctive formula} &  \mbox{Components} \\
\hline
\varphi \lor \psi &  \varphi, \;  \psi \\
\neg(\varphi \land \psi) &  \neg \varphi, \; \neg \psi \\
~ & ~ \\ 
\end{array}
\end{array}
$
}
\\ 
\vspace{3mm} 
{\label{fig:edge-c}
$
\begin{array}{l|l}
\mbox{Local goal formulae} &  \mbox{Components} \\ 
\hline
\brak{\gamma} \;\; \mbox{(positive)} \;\;  & 
\{\psi \mid \gamma(C) = \X \psi, \ C \subseteq \Agt \} 
\\
\neg \brak{\gamma} \;\; \mbox{(negative)} \;\;  & 
\{\neg \psi \mid \gamma(C) = \X \psi, \ C \subseteq \Agt \} 
\\
\end{array}
$
}
\\ 
\vspace{5mm}
{
$
\begin{array}{l|l}
\mbox{Temporal goal formulae} &  \mbox{Components}  \\ 
\hline
\brak{\gamma} \;\; \mbox{(positive)} \;\;  & \indf{\gamma}{\brak{\gamma}}  \\
\neg \brak{\gamma} \;\; \mbox{(negative)} \;\;  & \neg \indf{\gamma}{\brak{\gamma}} \\
\end{array}
$
}
\end{figure}

Given a formula $\varphi$, we define $\overline{\varphi} := \psi$ if $\varphi$ is of the form $\neg \psi$, and $\overline{\varphi} := \neg \varphi$ otherwise.

\begin{definition}
\label{def: extended (Fischer-Ladner) closure}
The \defstyle{extended (Fischer-Ladner) closure} of a \cga formula in normal form 
$\varphi$ is the least set of formulae $\ecl(\varphi)$ such that: 

\begin{enumerate}
\item 
$\varphi \in \ecl(\varphi)$,

\item 
$\ecl(\varphi)$ is closed under taking all components of formulae in $\ecl(\varphi)$.  

\item 
$\overline{\psi} \in \ecl(\varphi)$ whenever  $\psi \in \ecl(\varphi)$. \    
 \end{enumerate}
 
\smallskip

For any set of formulae  $\Phi$ we define
$\ecl(\Phi) :=  \bigcup \{\ecl(\varphi) \; \mid \; \varphi \in \Phi\} $.

A set $\Phi$ of \cga formulae in normal form is said to be \defstyle{(Fischer-Ladner) closed} iff $\ecl(\Phi) = \Phi$.  
\end{definition}

The proof of the following proposition can be found in the appendix:

\begin{proposition}
\label{prop:closure}
The extended closure of any finite set $\Phi$ of \cga formulae in normal form is finite.  
\end{proposition}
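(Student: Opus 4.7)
The plan is to prove finiteness of $\ecl(\Phi)$ by structural induction, reducing to the single-formula case: since $\ecl(\Phi) = \bigcup_{\varphi \in \Phi} \ecl(\{\varphi\})$, it suffices to show that $\ecl(\{\varphi\})$ is finite for every normal-form formula $\varphi$. The base case (literals) is immediate, and the boolean cases (conjunctive and disjunctive) reduce trivially to the induction hypothesis since all components are strict subformulae. For a nexttime successor formula $\brak{\gamma}$, the components are the formulae $\psi$ with $\gamma(C) = \X\psi$, each a strict subformula handled by the induction hypothesis; closure under $\overline{\cdot}$ at most doubles the set.

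The substantive case is the long-term temporal goal formula $\brak{\gamma}$ (and its negation), because the single component $\indf{\gamma}{\brak{\gamma}}$ syntactically contains $\brak{\gamma}$ itself, so a naive induction on formula size fails. The key observation is that every new goal assignment produced during the closure process is drawn from one of two explicitly finite pools. Let $\fac$ denote the support of $\gamma$. Write, for each $C \in \fac$, $\gamma(C)$ as $\alpha_C \until \beta_C$ or $\always \chi_C$. Then:
\begin{itemize}
\item \textbf{Temporal goal assignments produced:} unfolding $\indf{\gamma}{\brak{\gamma}}$ introduces goal assignments of the form $\gamma \setminus C$ and $\gamma\vert_C$, and iterating these restrictions yields only assignments of the form $\gamma\vert_{\fac'}$ for $\fac' \subseteq \fac$. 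Since $\fac$ is finite (because $\Agt$ is), there are only $2^{|\fac|}$ such restrictions.
\item \textbf{Nexttime goal assignments produced:} the formula $\indf{\gamma}{\brak{\gamma}}$ also contains $\brak{\gammaof{\brak{\gamma}}}$, and when this is later decomposed we obtain successor formulae $\brak{\gamma\vert_{\fac'}}$ (already in the previous pool). Iterating through restrictions gives only finitely many nexttime assignments of the form $\gammaof{\brak{\gamma\vert_{\fac'}}}$ (or restrictions thereof).
\end{itemize}
The remaining formulae generated during closure are built by boolean combinations from the literal subformulae $\alpha_C, \beta_C, \chi_C$ of $\brak{\gamma}$, each of which is a strict subformula to which the induction hypothesis applies, yielding a finite extended closure.

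Concretely, I would strengthen the induction hypothesis by constructing, for each $\varphi$, an explicit finite superset $\Sigma(\varphi) \supseteq \ecl(\{\varphi\})$ that is closed under components and under $\overline{\cdot}$. For $\brak{\gamma}$ as above, set
\[
\Sigma(\brak{\gamma}) \;=\; \{\pm\brak{\gamma\vert_{\fac'}} : \fac' \subseteq \fac\} \;\cup\; \{\pm\brak{\delta} : \delta \text{ is a restriction of some } \gammaof{\brak{\gamma\vert_{\fac'}}}\} \;\cup\; \mathcal{B},
\]
where $\mathcal{B}$ is the closure under $\wedge,\vee,\neg$ of the finite set $\bigcup_{C \in \fac}\bigl(\Sigma(\alpha_C) \cup \Sigma(\beta_C) \cup \Sigma(\chi_C)\bigr)$, restricted to those boolean combinations that actually arise as subformulae of $\indf{\gamma'}{\brak{\gamma'}}$ for some restriction $\gamma'$. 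Each of the three parts is finite — the first two because $\fac$ is finite, the third by the induction hypothesis together with the fact that the boolean structure of $\indf{\gamma'}{\brak{\gamma'}}$ is bounded by $|\fac|$.

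The main obstacle is purely bookkeeping: verifying that $\Sigma(\brak{\gamma})$ as defined is genuinely closed under the component operation requires a careful case analysis on the shape of $\indf{\gamma'}{\brak{\gamma'}}$, checking that each disjunct in $\mathsf{Finish}(\gamma')$, each conjunct in $\ugam[\gamma']$ and $\agam[\gamma']$, and the successor formula $\brak{\gammaof{\brak{\gamma'}}}$ all lie in $\Sigma(\brak{\gamma})$. These checks are routine given the explicit forms defined earlier, and no new ingredients escape the two finite pools, which is the crux of the argument.
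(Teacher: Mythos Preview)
Your proposal is correct and rests on the same key observation as the paper's proof: every long-term goal assignment that arises when repeatedly taking components of $\brak{\gamma}$ is obtained from $\gamma$ by restricting its support to some subset of $\fac$, so only finitely many such assignments occur, while the remaining components are proper subformulae handled by the outer structural induction.

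The organisation differs slightly. The paper packages this as a third nested induction on the size of the support of $\gamma$: each disjunct $\beta_i \wedge \brak{\gamma\setminus C_i}$ in $\mathsf{Finish}(\gamma)$ invokes the inner hypothesis (smaller support), each $\alpha_i,\chi_j$ invokes the structural hypothesis, and the components of $\brak{\diffof{\gamma}}$ are the formulae $\brak{\gamma\vert_C}$, which either equal $\brak{\gamma}$ (already present) or have strictly smaller support. You instead build an explicit finite superset $\Sigma(\brak{\gamma})$ indexed by subsets $\fac'\subseteq\fac$ and verify closure directly. The paper's inductive formulation is terser and sidesteps the bookkeeping you flag at the end; your explicit-pool formulation makes the bound $2^{|\fac|}$ visible. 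One small point: your notation $\gamma\vert_{\fac'}$ should be read as ``$\gamma$ with support restricted to $\fac'$'' rather than the paper's $\gamma\vert_C$ (restriction to sub-coalitions of $C$), since $\gamma\setminus C_i$ is not in general of the latter form; under that reading your pool is correct.
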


%%%%%%%%%%%%%%%%%%%%%%%%
\subsection{One-step completeness}
\label{subsec:One-stepCompleteness}
%%%%%%%%%%%%%%%%%%%%%%%%

Hereafter derivability/provability and consistency refer to  the axiomatic system 
$\textsf{Ax}_{\cga}$. Given a set of formulae $\Phi$, the maximal consistent subsets of $\Phi$ are defined as usual.  

\begin{definition}
Given a 
closed set of formulae $\Phi$, a \defstyle{$\Phi$-atom} is a maximal consistent subset of $\Phi$. We denote by $\mathsf{At}(\Phi)$ the set of all $\Phi$-atoms.
\end{definition}

\begin{definition}
Let $\Phi$ be a finite set of formulae. A \defstyle{nexttime assignment} over $\Phi$ is a formula of the shape $$\brak{C_1  \gass  \nexttime\varphi_1,..., C_k  \gass  \nexttime \varphi_k}$$
where each formula  $\varphi_i$ belongs to $\Phi$.  A
 \defstyle{modal one-step theory} over $\Phi$ is a finite set of formulae $\Gamma$, such that every formula in $\Gamma$ is either a nexttime assignment over $\Phi$ or the negation of such a formula. 
\end{definition}

\begin{definition}
Let $\Phi$ be a finite set of formulae. A \defstyle{consistent game form for $\Phi$} is a game form 
$(\Act,\act,\psf (\Phi),\out)$ 
over the set of outcomes $\psf (\Phi)$ such that, for each action profile $\actprof$, $\out(\actprof)$ is a consistent set of formulae. A \defstyle{maximal consistent game form for $\Phi$} is a game form 
$(\Act,\act,\psf (\Phi),\out)$ 
over outcomes $\psf (\Phi)$ such that, for each action profile $\actprof$, $\out(\actprof)$ is a maximal consistent subset of $\Phi$.
\end{definition}

Note that, if $\Phi$ is a 
closed set of formulae, then a consistent game form for $\Phi$  is maximal if and only if for every action profile $\actprof$ the set $\out(\actprof)$ is a $\Phi$-atom.

 Given a strategic game form $G = (\Act,\act,\outcomes,\out)$, a coalition $C$ and action profiles $\actprof', \actprof$, we write $\actprof' \sim_C \actprof$ to state that $\actprof' \vert_C = \actprof \vert_C$. 

\begin{theorem}[One-step completeness]
\label{t:one-step-comp}
Let $\Gamma$ be a consistent modal one-step theory over a finite set of formulas $\Phi$ and assume that $\Phi$ contains all components of $\Gamma$, also contains $\overline{\psi}$ whenever $\psi \in \Phi$, and is closed under conjunctions (up to provable equivalence). 
Then there exists a maximal consistent game form 
$\gmod(\Gamma) = (\Act,\act,\psf (\Phi),\out)$ 
for $\Phi$
such that, for every goal assignment $\gamma$:
\begin{enumerate}
\item If $\brak{\gamma} \in \Gamma$, then there is a profile $\actprof \in \Pi_{a \in \Agt}\act_a$ such that for all $C$ in the support of $\gamma$ with $\gamma(C) = \nexttime \phi$ and all $\actprof' \sim_C \actprof$, we have $\phi \in \out(\actprof')$. 
\item If $\neg \brak{\gamma} \in \Gamma$, then for every profile $\actprof \in \Pi_{a \in \Agt}\act_a$ there is some $C$ in the support of $\gamma$, and some $\actprof' \sim_C \actprof$, for which  we have 
$\overline{\phi} \in \out(\actprof')$ where $\gamma(C) = \nexttime \phi$.   
\end{enumerate}
Furthermore, the size of $\Act$ is at most exponential in $\vert \Phi \vert$.
\end{theorem}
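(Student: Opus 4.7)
The plan is to follow the standard template for one-step completeness in coalitional and coalgebraic modal logics, adapted to the goal-assignment modality $\brak{\gamma}$. First I would enumerate the positive nexttime assignments in $\Gamma$ as $\brak{\gamma_1}, \ldots, \brak{\gamma_k}$ and the negative ones as $\neg\brak{\delta_1}, \ldots, \neg\brak{\delta_m}$; by \textsf{Triv} we may as well assume $\brak{\gamma^\top} \in \Gamma$. The game form will be built so that each $\brak{\gamma_i}$ is witnessed by a canonical ``commitment'' profile $\actprof^i$, and each $\neg\brak{\delta_j}$ is refuted by an appropriate completion, on some coalition in its support, of every action profile.

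For the action set of each agent $a$ I would take $\Act_a = \{0, 1, \ldots, k\} \times B_a$, where the first coordinate records which $\gamma_i$ (if any) the agent is committing to, and $B_a$ is a finite ballot set whose coordinates encode refutation choices for the negative modalities. With $|B_a|$ polynomial in $\sum_j |\supp(\delta_j)|$, we obtain $|\Act_a| = 2^{O(|\Phi|)}$. For an action profile $\actprof$, let $F_i(\actprof) = \{a : \text{first coordinate of } \actprof_a = i\}$, so that $F_1(\actprof), \ldots, F_k(\actprof)$ are pairwise disjoint, and define
$$
\Psi(\actprof) = \{ \phi \mid \exists i,\; \exists C \in \supp(\gamma_i) :\; C \subseteq F_i(\actprof),\; \gamma_i(C) = \X \phi \} \,\cup\, \Delta(\actprof),
$$
where $\Delta(\actprof)$ gathers the formulas $\overline{\phi}$ designated by the ballots of $\actprof$ as refutations of the negative modalities. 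Then $\out(\actprof)$ is defined to be any $\Phi$-atom extending $\Psi(\actprof)$.

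The heart of the proof is showing that every $\Psi(\actprof)$ is consistent. For the positive fragment $\Psi_+(\actprof)$ the argument is fully axiomatic, starting from the consistent conjunction $\bigwedge_i \brak{\gamma_i}$. Within each $\brak{\gamma_i}$, iterated applications of \textsf{Con} consolidate all subordinate goals $\gamma_i(C) = \X\phi_{i,C}$ for $C \in \supp(\gamma_i)$ with $C \subseteq F_i(\actprof)$ into a single strengthened goal $\X \Psi_i$ for $F_i(\actprof)$, where $\Psi_i = \bigwedge \phi_{i,C}$; \textsf{Weakening} extracts $\brak{F_i(\actprof) \gass \X \Psi_i}$; \textsf{Merge} combines these across the disjoint $F_i$'s; and a final sequence of \textsf{Con}-steps lifts the collection onto $\Agt$ to yield $\brak{\Agt \gass \X \bigwedge_i \Psi_i}$. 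Inconsistency of $\bigwedge_i \Psi_i$ would then, via \textsf{G-Mon}, contradict \textsf{Safe} and hence the consistency of $\Gamma$. To bring $\Delta(\actprof)$ into the argument I would extend this by introducing the negative modalities one at a time, using \textsf{Case} and the derived $\Agt$-\textsf{Maximality} to case-split the derivation and ensure that each ballot-selected refuting coalition contributes a $\overline{\phi}$ compatible with the positive constraints already in force.

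Verification of the two conclusion clauses is then routine. For (1): take $\actprof^i$ to be the profile in which every agent's first coordinate is $i$; for any $C \in \supp(\gamma_i)$ and any $\actprof' \sim_C \actprof^i$ we have $F_i(\actprof') \supseteq C$, so $\gamma_i(C) = \X\phi$ forces $\phi$ into $\Psi(\actprof') \subseteq \out(\actprof')$. For (2): the ballot structure is set up precisely so that for each $\neg\brak{\delta_j}$ and each $\actprof$ some $C \in \supp(\delta_j)$ and some $\actprof' \sim_C \actprof$ satisfy $\overline{\phi} \in \Delta(\actprof') \subseteq \out(\actprof')$. The main obstacle I anticipate is coordinating the refutation witnesses for different negative modalities without introducing joint inconsistencies: independently chosen ballot coordinates could place several $\overline{\phi}_j$'s into the same $\Psi(\actprof')$ that are jointly inconsistent with $\Psi_+(\actprof')$. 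The remedy is a careful inductive construction, adding the refutations one $\neg\brak{\delta_j}$ at a time and invoking \textsf{Case} to preserve consistency at each step, which is precisely where the axiomatic strength of the system is most essentially exercised.
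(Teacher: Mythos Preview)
Your treatment of Item~(1) and of the consistency of the positive fragment $\Psi_+(\actprof)$ via \textsf{Merge}, \textsf{Con}, \textsf{Weakening} and \textsf{Safe} is sound. The genuine gap is in Item~(2), and it is more basic than the coordination problem you flag at the end. You never specify how the ballot component $B_a$ actually determines $\Delta(\actprof)$, and no natural aggregation rule will do the job. For a fixed $\neg\brak{\delta_j}\in\Gamma$ and a fixed $\actprof$, you must produce some $C\in\supp(\delta_j)$ and some $\actprof'\sim_C\actprof$ with $\overline{\phi}\in\out(\actprof')$; but $\out$ is a fixed function that does not know which $\delta_j$ or $C$ you are targeting, and the actions (hence ballots) of all agents in $C$ are frozen in $\actprof'$. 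If ballots are aggregated symmetrically (union, majority, unanimity), the agents in $C$ can in general block the insertion of the required $\overline{\phi}$; if a single designated agent's ballot decides, you have to say how that agent is selected---and that is exactly the mechanism your proposal is missing. The phrase ``the ballot structure is set up precisely so that\ldots'' simply restates the desired conclusion.

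The paper's solution is to give a single deviating agent full control over the outcome, subject only to the constraints already forced by the committed coalitions. Concretely (after first extending $\Gamma$ to a maximal one-step theory via Lindenbaum), each action is a triple $(\gamma,f,k)$ where $\gamma$ is a goal assignment with $\brak{\gamma}\in\Gamma$, $f$ maps every goal assignment to one of its \emph{deterministic strengthenings} (one whose $\Agt$-goal is the conjunction of a maximal consistent subset of $\Phi$), and $k\in\{0,\ldots,|\Agt|-1\}$ is a bet. The outcome is $\mathsf{next}$ of the choice function of the ``winner'' (indexed by $\sum_\aga k_\aga \bmod |\Agt|$) applied to $\mathsf{merge}$ of the votes. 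Any agent $\agc\notin C$ can fix their bet so as to be the winner and then, via their choice function, force the outcome to be any deterministic strengthening of $\mathsf{merge}(\mathsf{vote}(\actprof'))$---in particular one containing $\overline{\phi}$, provided such a strengthening is in $\Gamma$. That it must be is precisely where \textsf{Case} is used (the paper's Claim~1): if for every $C$ in the support no such strengthening existed, iterated applications of \textsf{Case} and \textsf{GrandCoalition} would produce a strengthening witnessing $\brak{\delta_j}\in\Gamma$, contradicting consistency. Your proposed ``inductive construction invoking \textsf{Case}'' corresponds to this last step, but it cannot be executed without first having the single-winner mechanism that lets one deviator realise the disjunct selected by \textsf{Case}.
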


\begin{proof}
We may assume without loss of generality that, for every nexttime goal assignment $\gamma$ over $\Phi$, the set $\Gamma$ contains either $\brak{\gamma}$ or $\neg \brak{\gamma}$, since otherwise we can extend $\Gamma$,  using Lindenbaum's lemma, to a consistent (and still finite) set satisfying this assumption. 

We consider nexttime goal assignments over the set of  conjunctions of subsets of $\Phi$.
We say that such a goal assignment  $\gamma$ is \defstyle{deterministic} if $\gamma(\Agt)$ is (provably equivalent to) the conjunction of a maximal consistent subset of $\Phi$. 
We then say that a 
goal assignment $\gamma'$ is a \defstyle{strengthening} of $\gamma$ if, for all $C$ with $\gamma(C) = \X\phi$ and $\gamma'(C) = \X\phi'$, the formula $\phi' \to \phi$ is provable.   Note that every formula $\brak{\gamma}$ provably implies the disjunction of all formulas $\brak{\gamma'}$ where $\gamma'$ is a deterministic strengthening of $\gamma$ over $\Phi$. 
This follows by repeated applications of the axiom (GrandCoalition). 

If $\gamma$ is deterministic the we let \defstyle{$\mathsf{next}(\gamma)$} denote the maximal consistent set $\Psi$ for which $\gamma(\Agt) = \nexttime ( \bigwedge \Psi )$. Note that for any deterministic strengthening $\gamma'$ of a goal assignment  $\gamma$ over $\Psi$ and any $C$ for which $\gamma(C) = \X \varphi$, $\mathsf{next}(\gamma')$ must contain $\varphi$ as a conjunct. This is a consequence of axioms (Con), (Safe) and the assumption that $\mathsf{next}(\gamma')$ is maximal consistent. 

For technical convenience, in this proof we fix the enumeration  of $\Agt$ to be 
$\aga_0,...,\aga_{K-1}$,   
where $K = \vert \Agt \vert$. 
Given an agent $\aga \in \Agt$, we define $\gacta$ 
to be the set of all triples  
$(\gamma,f, k)$ such that $\gamma$  is a goal assignment with  $\brak{\gamma} \in \Gamma$, $0 \leq k < K$, and $f$ is a   function mapping each goal assignment 
$\gamma' : \psf(\Agt) \to \Phi$  to one of its deterministic strengthenings. To count the number of actions, this is the number of goal assignments $\gamma$ with $\brak{\gamma} \in \Gamma$, times the number of functions $f$ mapping goal assignments over $\Phi$ to deterministic strengthenings, times $K$. The number of goal assignments over $\Phi$ is $\vert \Phi \vert^{2^K}$, and more generally the number of goal assignments over conjunctions of subsets of $\Phi$ is $(2^{\vert \Phi \vert})^{2^K} = 2^{2^K \cdot \vert \Phi \vert}$. So the number of functions $f$ that can appear in an action is:
$$(2^{2^K \cdot \vert \Phi \vert})^{\vert \Phi \vert^{2^K}} = 2^{2^K \cdot \vert \Phi \vert^{2^K + 1}}$$
Since the number of agents is fixed, $2^K$ is a constant and the term $2^K \cdot \vert \Phi \vert^{2^K + 1}$ is a polynomial in $\vert \Phi \vert$. Hence the number of actions is exponential in $\vert \Phi \vert$.

{Note that 
$\gacta \neq \emptyset$ for all $\aga \in \Agt$ since there is at least one  goal assignment with $\brak{\gamma} \in \Gamma$ by the axiom (Triv), and $\brak{\gamma}$ is equivalent to the disjunction of its deterministic strengthenings, so one of these must also be in $\Gamma$.  Note also that the goal assigned to $\Agt$ by any $\brak{\gamma} \in \Gamma$ must be consistent by the axiom (Safe).} 
We set $\gAct = \bigcup_{\aga \in \Agt} \gacta$. 
Given an action profile 
$\actprof$ and $\aga \in \Agt$, if $\actprof_\aga = (\gamma,k,f)$ we write 
$\mathsf{vote}(\aga,\actprof) = \gamma$, $\mathsf{bet}(\aga,\actprof) = k$ and 
$\mathsf{choice}(\aga,\actprof,\gamma') = f(\gamma')$ for every goal assignment 
$\gamma'$ with $\brak{\gamma'} \in \Gamma$. We write 
$\mathsf{vote}(\actprof)$ for the voting profile mapping each 
$\aga \in \Agt$ to $\mathsf{vote}(\aga,\actprof)$. We define the \emph{voting winner} $\mathsf{win}(\actprof)$ to be player $\aga_i$ where $i$ is determined as follows:
\[
i := \left( \sum_{\aga \in \Agt} \mathsf{bet}(\aga,\actprof) \right) \; \text{mod} \; K
\]
Finally, we define the outcome of a given action profile $\actprof$ as follows:
\[
\out(\actprof) := \mathsf{next}\left( \mathsf{choice}(\mathsf{win}(\actprof), \actprof, \mathsf{merge}(\mathsf{vote}(\actprof)) \right)
\]
We will show that the game form $\gmod(\Gamma) = (\gAct,\gact,\psf (\Phi),\out)$ we have constructed satisfies the criteria listed in the statement of the theorem. First, we shall prove a rather technical auxiliary claim. Before going through its proof, the reader may want to skip ahead to see how the claim is used in the main argument. 
\begin{customclaim}{1}
Let $\varphi$ be any formula in $\Phi$, $C$ any coalition, and  let $\actprof$ be an action profile in $\gmod(\Gamma)$ such that for every action profile $\actprof' \sim_C \actprof$, we have $\varphi \in \out(\actprof')$.  Let $\gamma$ be any deterministic strengthening of $\mathsf{merge}(\mathsf{vote}(\actprof))$ such that $\brak{\gamma} \in \Gamma$ and $\gamma(\Agt) = \nexttime (\bigwedge \out(\actprof))$, and let $\gamma(C) = \nexttime \psi$. Then there exists a deterministic strengthening $\gamma'$ of $\gamma$ such that $\brak{\gamma'} \in \Gamma$, and:
$$\gamma'(C) = \nexttime(\psi \wedge \varphi).$$
\end{customclaim}

%%%%%%
\append{
\paragraph{Proof of Claim 1:}
We need to distinguish two cases, for $C = \Agt$ and $C \neq \Agt$. We begin with the easier case where $C = \Agt$. In this case there is only one action profile $\actprof' \sim_{C} \actprof$, namely $\actprof$ itself. Our assumption thus gives $\varphi \in \out(\actprof)$. Further, we have $\gamma(C) = \gamma(\Agt) = \nexttime (\bigwedge \out(\actprof))$ by assumption. But, since $\psi \in \out(\actprof)$ the formula $\bigwedge \out(\actprof)$ is equal (up to provable equivalence) to $\psi \wedge \bigwedge \out(\actprof)$, which is provably equivalent to $\psi \wedge \varphi$ (because $\gamma'$ is a deterministic strengthening). Hence  we can set $\gamma' = \gamma$.

The case where $C \neq \Agt$ is more involved.  We assumed that $\brak{\gamma} \in \Gamma$.  By (Case), we have:
\[
\brak{\gamma[C \gass \X (\psi \wedge \varphi)]} \vee \brak{\gamma \vert_{C}[\Agt \gass \X\neg \varphi]} \in \Gamma
\]
We first show that  $\brak{\gamma \vert_{C}[\Agt \gass \X\neg \varphi]} \notin \Gamma$. Suppose the contrary, that $\brak{\gamma \vert_{C}[\Agt \gass \X\neg \varphi]} \in \Gamma$. Let $\gamma^*$ be an arbitrary deterministic strengthening of $\gamma \vert_{C}[\Agt \gass \X\neg \varphi]$ such that $\brak{\gamma^*} \in \Gamma$. Such strengthening must exist, since we recall that every formula of the form $\brak{\delta}$ provably implies the disjunction of all its deterministic strengthenings over $\Phi$. 
Now we define a new action profile $\actprof'$ as follows. First, pick an {arbitrary}
$\agc \notin C$, which exists since $C \neq \Agt$. For each $\aga \in C$ set $\actprof'_\aga = \actprof_\aga$. 
For each $\aga \notin C$ and $\aga \neq \agc$, set $\actprof'_\aga = (\gamma^\top,f,0)$  where $f$ is arbitrary (recall that $\gamma^\top$ is the trivial goal assignment with empty support). For $\agc$, set $\actprof'_\agc = (\gamma^\top,f,h)$ where the choice function $f$ chooses $\gamma^*$ whenever possible, and the bet $h$ is chosen so that the index of the player $\agc$ is equal to $\sum_{\aga \in C} \mathsf{bet}(\aga,\actprof) + h \ (\text{mod } K)$. This guarantees that $\agc$ will be the voting winner in $\actprof'$. Clearly $\actprof' \sim_{C} \actprof$. Furthermore, since $\gamma^*$ is a strengthening of $\gamma \vert_{C}[\Agt \gass \X\neg \varphi]$, and $\gamma$ is a strengthening of $\mathsf{merge}(\mathsf{vote}(\actprof))$ by assumption, it follows that $\gamma^*$ is a  deterministic strengthening of $\mathsf{merge}(\mathsf{vote}(\actprof'))$. This is because the only coalitions not mapped to $\X\top$ by  $\mathsf{merge}(\mathsf{vote}(\actprof'))$ are the ones contained in $C$, and for any such coalition $D$ we have $\mathsf{merge}(\mathsf{vote}(\actprof'))(D) = \mathsf{merge}(\mathsf{vote}(\actprof))(D)$. So, we get:
\[
\begin{aligned}
 \out(\actprof') & =  \mathsf{next}\left( \mathsf{choice}(\mathsf{win}(\actprof'), \actprof', \mathsf{merge}(\mathsf{vote}(\actprof')) \right) \\
 & = \mathsf{next}\left( \mathsf{choice}(\agc, \actprof', \mathsf{merge}(\mathsf{vote}(\actprof')) \right) \\
& = \mathsf{next}(\gamma^*).
\end{aligned}
\]
But $\neg \varphi \in \mathsf{next}(\gamma^*)$ since $\gamma^*$ is a strengthening of $\gamma \vert_{C}[\Agt \gass \X\neg \varphi]$. By consistency of $\mathsf{next}(\gamma^*)$, we have thus found an action profile $\actprof'$ such that $\actprof \sim_{C} \actprof'$ and $\varphi \notin \out(\actprof')$. This is a contradiction with our assumption on the action profile $\actprof$. 
Thus, we have proved $\brak{\gamma \vert_{C}[\Agt \gass \X\varphi]} \notin \Gamma$, as desired.
It follows that $\brak{\gamma[C \gass \X (\psi \wedge \varphi)]} \in \Gamma$. We then define: 
\[\gamma' := \gamma[C \gass \X (\psi \wedge \varphi)].\]
Thus, we have showed that $\brak{\gamma'} \in \Gamma$, $\gamma'$ is clearly a strengthening of $\gamma$, and it is deterministic since $\gamma'(\Agt) = \gamma(\Agt)$. This concludes the proof of the claim. 
\qed. 
}
%%%%%%end of \append

\medskip
We now prove that the properties (1) and (2) listed in the theorem hold for the game form 
$\gmod(\Gamma) = (\gAct,\gact,\psf (\Phi),\out)$. 

\paragraph{Item (1):} Suppose $\brak{\gamma} \in \Gamma$. Let $\actprof$ be defined by letting all players vote for  $(\gamma, f, 0)$ where $f$ is an 
{arbitrary, fixed choice function}. 
Let $C$ be in the support of $\gamma$, where $\gamma(C) = \nexttime \varphi$ and let $\actprof' \sim_C \actprof$. 
 Since all players in $C$ vote for $\gamma$ in $\actprof'$, and the outcome 
 $\out(\actprof')$ is $\mathsf{next}(\gamma')$ for a deterministic strengthening of  
 $\mathsf{merge}(\mathsf{vote}(\actprof'))$, it follows that $\varphi \in \out(\actprof')$.

\paragraph{Item (2):} Suppose $\neg \brak{\gamma} \in \Gamma$, and let $\actprof$ be an arbitrary action profile. We want to show that there is some coalition $C$ and some action profile $\actprof' \sim_C \actprof$ such that 
$\overline{\phi} 
\in \out(\actprof')$, where $\gamma(C) = \varphi$. 

We will prove this by reductio ad absurdum. Suppose that for every coalition $C$ with $\gamma(C) = \nexttime \varphi$ and every action profile $\actprof' \sim_C \actprof$, we have 
$\overline{\phi}\notin \out(\actprof')$. This means that $\varphi \in \out(\actprof')$ since both $\overline{\phi}$
and $\varphi$ are in the closure of $\brak{\gamma}$  
and $\out(\actprof')$ is maximal consistent. 
Let us list all coalitions in the support of $\gamma$ as $C_1,...,C_m$. Let $\gamma_0$ denote the goal assignment $\mathsf{choice}(\mathsf{win}(\actprof), \actprof, \mathsf{merge}(\mathsf{vote}(\actprof))$. Then $\gamma_0(\Agt) = \nexttime (\bigwedge \out(\actprof))$. Furthermore $\brak{\gamma_0} \in \Gamma$ by definition of $\mathsf{choice}$, and $\gamma_0$ is a deterministic strengthening of $\mathsf{merge}(\mathsf{vote}(\actprof))$.  For each $i \in \{1,...,m\}$ let $\psi_i$ be the formula such that $\gamma(C_i) = \X \psi_i$, and let $\psi^0_i$ be the formula such that $\gamma_0(C_i) = \X \psi^0_i$. We define, for each $i \in \{0,...,m\}$, a deterministic goal assignment $\gamma_i$ such that:
\begin{itemize}
\item $\brak{\gamma_i} \in \Gamma$,
\item $\gamma_i$ is a deterministic strengthening of $\gamma_j$ for all $j < i$,
\item $\gamma_i(C_j) = \nexttime (\psi^0_j \wedge \psi_j)$ for all $j$ with $1 \leq j  \leq i$, and $\gamma_i(\Agt) = \nexttime (\bigwedge \out(\actprof))$.  
\end{itemize}  

The goal assignment $\gamma_0$ has already been defined, and we can extend the definition inductively to all $i$ by repeatedly applying Claim 1. Note that the induction hypothesis has been tailored so that Claim 1 applies at each inductive step.  

Now, consider the goal assignment  $\gamma_m$. We have $\brak{\gamma_m} \in \Gamma$. But by definition $\brak{\gamma_m}$ is a strengthening of $\brak{\gamma}$, hence $\brak{\gamma} \in \Gamma$, by Goal Monotonicity. 
Since we assumed that $\neg \brak{\gamma} \in \Gamma$, 
we have reached a contradiction with the consistency of $\Gamma$. 
This concludes the proof of item (2) and thus the proof of the theorem.
\end{proof}

%%%%%%%%%%%%%%%%%%%%%%%%
%%%%%%%%%%%%%%%%%%%%%%%%
\section{Completeness of \cga}
\label{sec:completeness} 
%%%%%%%%%%%%%%%%%%%%%%%%
%%%%%%%%%%%%%%%%%%%%%%%%

%%%%%%%%%%%%%%%%%%%%%%%%
\subsection{Networks} 
\label{sec:Networks}

Throughout the rest of this section, we fix a finite, 
closed set $\Phi$ of \cga-formulae in normal form. 
To prove completeness, we will show how to construct a model for each (consistent) atom in $\mathsf{At}(\Phi)$, using the technique of \emph{networks}. The idea behind this technique is to construct a series of approximations to the satisfying model. At each finite stage of the construction, the current approximating network will generally have a number of \emph{defects}, each of which represents is some particular reason that the network cannot yet be regarded as a satisfying model. 
For example, consider a formula of the form $\brak{A \gass \alpha \until \beta, B\gass \always \chi}$. If this formula belongs to the label associated with some node $u$ in a network, then there must exist some strategy profile $\strprof$ such that, when restricted to a joint strategy for the coalition $A$, it ensures that every play generated by it eventually leads to a state where $\beta$ is in the label, and all states that are visited meanwhile have $\alpha$ in their labels; likewise, when $\strprof$ is restricted to a joint strategy for the coalition $B$, it ensures that on every play generated by it all states that are visited have $\chi$ in their labels. 
For the respective conditions for $\alpha$ and $\chi$ it suffices that they are satisfied \emph{locally}, at every step of the construction of the network. 
Thus, there are two types of conditions to be satisfied by the network: \emph{local conditions}, that can be ensured on-the-fly, i.e. the defects arising from their violation can be fixed step-by-step in the process of the construction and updates of the network; and \emph{eventualities}, which need to be taken special care of. In the example above, if it is not already the  case that every play generated by the joint strategy for $B$ obtained from $\strprof$ eventually leads to a state where $\beta$ is in the label, that creates a defect which we needs to be eventually fixed. To do so we first show that we can ``push'' the defect towards leaves in the network, in the sense that it suffices to fix the defect at each leaf in order to make sure each occurrence of the defect in the current network disappears. Next, we fix each defect associated with a leaf. To do that, we prove that for every atom that contains the formula $\brak{A \gass \alpha \until \beta, B\gass \always \chi}$, we can find a network whose root is labelled by that atom, and in which the occurrence of the formula at the root is not a defect (though it may appear as a defect elsewhere in the network). This network can then by ``plugged in'' at appropriate leaves in another network in order to fix a defect there. Once a specific occurrence of a defect is fixed it never reappears, but of course each round of the construction may introduce new defects, like new heads of a hydra appearing where one was previously cut off.  These new defects are then taken care of in the \emph{next} round, and so on. By taking a limit of the approximating series we obtain a \emph{perfect} network, i.e. a network with no defects. A ``truth lemma'' for perfect networks assures that we can view the network obtained in the limit as a model in which each formula attached to the root is true.

\begin{definition}
A \defstyle{$\Phi$-network} is a triple $\network = (T,L,\gform)$ such that:
\begin{itemize}
\item $T$ is a rooted, finitely branching directed tree, 

\item $L : T \to \mathsf{At}(\Phi)$ is a map that assigns to each node of $T$ an atom from $\mathsf{At}(\Phi)$. 

\item $\gform$ is a map that assigns to each non-leaf node $u$ of $T$ a game form 
$\gform(u) = (\Act^u,\act^u,T,\out^u)$, where  $\out^u$ is subject to the constraint that its codomain is the set of children nodes of $u$ in $T$. 
\end{itemize}
\end{definition}

\begin{definition}
A network $\network = (T,L,\gform)$ is said to be a \defstyle{sub-network} of $\network' = (T',L',\gform')$, written $\network \sqsubseteq \network'$, if:
\begin{itemize}
\item $T$ is a subgraph of $T'$ and the root of $T$ is also the root of $T'$,
\item If $u$ is any non-leaf node in $T$ then it has the same children in $T$ as in $T'$ and, furthermore, $\gform(u) = \gform'(u)$,
\item $L = L'\vert_T$.
\end{itemize}
\end{definition}

\begin{definition}
Given a $\Phi$-network $\network = (T,L,\gform)$, a \defstyle{marking} of $\network$ is a map $\mrk$ from $T$ to the powerset of $\Phi$ such that $\mrk(v) \subseteq L(v)$ for all $v \in T$. (In particular, note that $L$ itself is a marking of $\network$.) Given a marking $\mrk$ of $\network$, a nexttime goal assignment $\gamma$ such that  $\brak{\gamma} \in \Phi$, and a non-leaf node $u \in T$ with $\gform(u) = (\Act,\act,T,\out)$, we say that the marking $\mrk$ \defstyle{verifies the goal assignment $\gamma$ at $u$} if there is a strategy profile $\strprof \in \Pi_{a \in \Agt}$ such that, for every $C$ in the support of $\gamma$ such that $\gamma(C) = \nexttime \psi$ and for every strategy profile $\strprof'$ with $\strprof' \sim_C \strprof$, we have $\psi \in \mrk(\out(\strprof',u))$. We say that $\mrk$ \defstyle{refutes the goal assignment $\gamma$  at $u$} if for every strategy profile $\strprof \in \Pi_{a \in \Agt}$ there is some $C$ in the support of $\gamma$ with $\gamma(C) = \nexttime \psi$ and some strategy profile $\strprof'$ with $\strprof' \sim_C \strprof$ such that $\overline{\psi} \in \mrk(\out(\strprof',u))$. 
\end{definition}

\begin{definition}
A  $\Phi$-network $\network = (T,L,\gform)$ is said to be \defstyle{one-step coherent} if, for every non-leaf node $u \in T$ such that $\gform(u) = (\Act,\act,T,\out)$, the marking $L$ verifies every nexttime goal assignment $\gamma$ such that $\brak{\gamma} \in L(u)$ and refutes every nexttime goal assignment $\gamma$ such that $\neg \brak{\gamma} \in L(u)$. 
\end{definition}

%%%%%%%%%%%%%%%%%%%%%%%%
\subsection{Eventualities and defects}
\label{sec:EventualitiesAndDefects}
%%%%%%%%%%%%%%%%%%%%%%%%

\begin{definition}
A \defstyle{$\typeone$-eventuality} is a formula $\brak{\gamma}$ where $\gamma \in \typeone$. 
A \defstyle{$\typetwo$-eventuality} is a formula of the form $\neg \brak{\gamma}$ where $\gamma \in \typetwo$.  
\end{definition}

\begin{definition}
Let $\brak{\gamma}$ be a $\typeone$-eventuality, where $\gamma$ is a goal assignment for the family $\fac = \{C_1,...,C_n,D_1,...,D_m\}$  or $\fac = \{C_1,...,C_n\}$ defined by: 
\[
\gamma(C_1) = \alpha_1 \until \beta_1, ...,\ \gamma(C_n) = \alpha_n \until \beta_n
\]
and  (if $m > 0$) 
\[
\gamma(D_1) = \always \chi_1, ...,\ \gamma(D_m) = \always \chi_m. 
\]
Let $\network = (T,L,\gform)$ be a one-step coherent network. 
Given a node $u \in T$, we say that  \defstyle{$\brak{\gamma}$ is partially fulfilled in $0$ steps  at $u$ in $\network$} if there is some $i \in \{1,...,n\}$ such that $\beta_i \wedge \brak{\gamma \setminus C_i} \in L(u)$. For any natural number $k \geq 0$, we say that  \defstyle{$\brak{\gamma}$ is  partially  fulfilled in $k + 1$ steps at $u$} if it is either partially fulfilled in $0$ steps, or $u$ is a non-leaf node and the following  conditions hold:
\begin{itemize}
\item $\alpha_i \in L(u)$ for all $i \in \{1,...,n\}$,
\item $\chi_j \in L(u)$ for all $j \in \{1,...,m\}$,
\item there is a marking $\mrk$ that verifies $\brak{\diffof{\gamma}}$ at $u$  and is such that for all $v \in T$ such that $v$ is a child of $u$,  $\brak{\gamma}$ is partially fulfilled  in $k$ steps at $v$ whenever $\brak{\gamma} \in \mrk(v)$. 
\end{itemize}
Lastly, we say that  \defstyle{$\brak{\gamma}$ is partially fulfilled at $u$} if it is partially fulfilled in $k$ steps at $u$ for some $k\geq 0$. 

\end{definition}

\begin{definition}
Let $\neg\brak{\gamma}$ be an eventuality in $\typetwo$, where $\gamma$ is the goal assignment for the family $\fac = \{D_1,...,D_m\}$ defined by: 
\[
\gamma(D_1) = \always \chi_1,..., \gamma(D_m) = \always \chi_m. 
\]
Let $\network = (T,L,\gform)$ be a one-step coherent network. Given a node $u \in T$, we say that  \defstyle{$\neg \brak{\gamma}$ is partially fulfilled  in $0$ steps at $u$ in $\network$} if there is some $i \in \{1,...,n\}$ such that $\overline{\chi_i} \in L(u)$. 

For any natural number $k \geq 0$, we say that  \defstyle{$\neg\brak{\gamma}$ is partially fulfilled  in $k + 1$ steps at $u$} if it is either partially fulfilled  in $0$ steps, or $u$ is a non-leaf node and there exists a marking $\mrk$ that refutes $\brak{\diffof{\gamma}}$ at $u$,  and such that for all $v \in T$ 
{such that $v$ is a child of $u$},  $\neg \brak{\gamma}$ is partially fulfilled  in $k$ steps at $v$ whenever $\neg \brak{\gamma} \in \mrk(v)$. 
Lastly, we say that  \defstyle{$\neg\brak{\gamma}$ is partially fulfilled at $u$} if it is partially fulfilled in $k$ steps at $u$ for some $k\geq 0$. 
\end{definition}

\begin{definition}
A \defstyle{defect} of a network $\network = (T,L,\gform)$ is a pair $(u,\phi)$ such that $u \in T$, $\phi \in L(u)$ is an eventuality which is not partially fulfilled at $u$. 
\end{definition}

\begin{proposition}
\label{p:stayfinished}
Let $\network \sqsubseteq \network'$ and let $(u,\varphi)$ be a defect of $\network'$. If $u$ belongs to $\network$, then $(u,\varphi)$ is a defect of $\network$, as well. 
\end{proposition}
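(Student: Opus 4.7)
The approach is to establish the contrapositive: whenever $\varphi$ is an eventuality partially fulfilled at $u$ in the smaller network $\network$, it is also partially fulfilled at $u$ in the larger network $\network'$. Since $u \in T$ implies $L(u) = L'(u) \ni \varphi$, this directly yields the claim.

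The main step is to prove by induction on $k \geq 0$ the following auxiliary statement: for every $u \in T$ and every eventuality $\varphi \in L(u)$, if $\varphi$ is partially fulfilled in $k$ steps at $u$ in $\network$, then $\varphi$ is partially fulfilled in $k$ steps at $u$ in $\network'$. The base case $k = 0$ is immediate since partial fulfilment in zero steps refers only to membership of certain formulas in $L(u) = L'(u)$, which does not depend on the rest of the network.

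For the inductive step, suppose $\varphi$ is partially fulfilled in $k+1$ steps at $u$ in $\network$. If it is already partially fulfilled in $0$ steps we are done, so assume otherwise. Then $u$ is a non-leaf node of $\network$, the relevant local conditions on $L(u)$ hold (these are preserved as $L(u) = L'(u)$), and there is a marking $\mrk$ of $\network$ which verifies (in the $\typeone$ case) or refutes (in the $\typetwo$ case) the successor goal assignment $\brak{\diffof{\gamma}}$ at $u$, and such that for each child $v$ of $u$ in $\network$ with $\varphi \in \mrk(v)$, $\varphi$ is partially fulfilled in $k$ steps at $v$ in $\network$. Since $\network \sqsubseteq \network'$, the node $u$ is also a non-leaf of $\network'$, has exactly the same children in $\network'$ as in $\network$, and carries the same game form $\gform(u) = \gform'(u)$. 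Define a marking $\mrk'$ of $\network'$ by $\mrk'(v) = \mrk(v)$ for $v \in T$ and $\mrk'(v) = \emptyset$ for $v \in T' \setminus T$. Because every child of $u$ lies in $T$ and $\mrk'$ agrees with $\mrk$ there, the same witnessing (or refuting) strategy profile shows that $\mrk'$ verifies (resp. refutes) $\brak{\diffof{\gamma}}$ at $u$ in $\network'$. Moreover, for every child $v$ of $u$ with $\varphi \in \mrk'(v)$, necessarily $v \in T$ and $\varphi \in \mrk(v)$, so the induction hypothesis yields that $\varphi$ is partially fulfilled in $k$ steps at $v$ in $\network'$, completing the inductive step.

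The argument is essentially uniform in the two types of eventualities; the only mild obstacle is bookkeeping the small difference between verification and refutation of $\brak{\diffof{\gamma}}$, both of which are local to $u$ and its children and thus unaffected by pruning deeper parts of the tree. Taking the disjunction over $k$, partial fulfilment at $u$ in $\network$ implies partial fulfilment at $u$ in $\network'$, which is precisely the contrapositive we wanted.
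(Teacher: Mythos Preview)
Your proposal is correct and follows exactly the approach sketched in the paper: the paper simply states that ``a trivial induction on $k$ shows that, if an eventuality of any of the two types is partially fulfilled in $k$ steps at $u$ in $\network$, then it is partially fulfilled in $k$ steps at the same node in $\network'$ as well,'' and you have carried out this induction in detail, including the bookkeeping of extending the marking to $\network'$.
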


\begin{proof}
A trivial induction on $k$ shows that, if an eventuality of any of the two types is partially fulfilled  in $k$ steps at $u$ in $\network$, then it is partially fulfilled  in $k$ steps at the same node in $\network'$ as well. 
\end{proof}

\begin{definition}
A network is said to be \defstyle{perfect} if it is one-step coherent, has no leaves, and no defects. 
\end{definition}

\begin{definition}
Given a perfect network  $\network = (T,L,\gform)$ we define a game model $\gmod(\network) = (\states,\Act,\gmap,V)$ 
as follows. We take $\states$ to be the set of all nodes in $T$, and $\gmap = \gform$.  
Finally, we set $V(p) = \{v \in T \mid p \in L(v)\}$. We call $\gmod(\network)$ the  \defstyle{induced model of the network} $\network$. 
\end{definition}
{The following proposition will relate truth sets $\tset{\varphi}_{\gmod(\network)}$ of formulas in the induced model of a network to the set of  nodes $v$ with $\varphi \in L(v)$. To clearly distinguish the latter from the former, we introduce the following notation:
$$\lset{\varphi}{\network} := \{v \in T \mid \varphi \in L(v)\}$$}
{For the proof of the following proposition, see the appendix.}

\begin{proposition}
\label{prop:network-sat} 
Every $\Phi$-atom that is the label of some node in a perfect $\Phi$-network 
$\network = (T,L,\gform)$ is true at the respective state of the model 
$\gmod(\network)$ induced by that network.
\end{proposition}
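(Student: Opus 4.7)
The plan is to prove the stronger claim that for every $\varphi \in \Phi$ and every $v \in T$, $\varphi \in L(v)$ if and only if $\gmod(\network), v \models \varphi$; since each $L(v)$ is a $\Phi$-atom, this is equivalent to the proposition. The proof proceeds by induction on the wellfounded order $\prec$ on $\langcga$-formulas introduced earlier for the translation into $\xlangcga_\mu$. The cases for literals and Boolean connectives are routine from the maximality and consistency of $L(v)$ inside the closed set $\Phi$. For $\brak{\gamma}$ with $\gamma$ a nexttime goal assignment, one-step coherence of $\network$ combined with the inductive hypothesis on the components $\psi$ (where $\gamma(C)=\X\psi$) directly yields the equivalence: verification in $L$ lifts to a semantic strategy profile, and refutation in $L$ lifts to semantic refutation. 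For $\brak{\gamma}$ whose goal assignment mixes nexttime and long-term goals, the Fix axiom (via Proposition \ref{prop:unfold}) replaces $\brak{\gamma}$ by $\unf{\gamma}$, a Boolean combination of $\prec$-smaller formulas, to which the IH applies.

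The main case is when $\gamma$ is long-term temporal; here perfection of $\network$ and the fixpoint characterizations from Propositions \ref{prop:typeone} and \ref{prop:typetwo} are both used. Suppose first $\gamma \in \typeone$. For the direction $\brak{\gamma} \in L(v) \Rightarrow \gmod(\network), v \models \brak{\gamma}$, absence of the defect $(v,\brak{\gamma})$ yields partial fulfillment of $\brak{\gamma}$ at $v$ in some $k \geq 0$ steps, and an auxiliary induction on $k$ converts partial fulfillment into semantic truth. The base $k=0$ uses $\beta_i \wedge \brak{\gamma \setminus C_i} \in L(v)$, where $\brak{\gamma \setminus C_i}$ has strictly smaller support and so lies $\prec$-below $\brak{\gamma}$, allowing the outer IH. In the inductive step the one-step witness supplied by the marking $\mrk$ verifying $\brak{\diffof{\gamma}}$ at $v$ (handled by one-step coherence and the outer IH via the $\prec$-smaller instance $\brak{(\gamma\vert_C)\ugpart}$ occurring inside $\diffof{\gamma}$) is combined with the inner IH at the children of $v$ to assemble a global memory-based strategy profile witnessing $\brak{\gamma}$; Proposition \ref{prop:typeone} then guarantees that such a finite-depth witness realises the least fixpoint. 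For the reverse direction, $\lset{\brak{\gamma}}{\network}$ is shown to be a pre-fixpoint of the monotone operator $f_\gamma$ associated with $\indf{\gamma}{z}$ in $\gmod(\network)$: if $v$ satisfies $\indf{\gamma}{z}$ under the valuation $V^{\lset{\brak{\gamma}}{\network}}$, then by the outer IH on the ($\prec$-smaller) building blocks of $\indf{\gamma}{z}$ and by one-step coherence, $\indf{\gamma}{\brak{\gamma}} \in L(v)$; the derivable equivalence $\brak{\gamma} \leftrightarrow \indf{\gamma}{\brak{\gamma}}$ (Ind, from Fix) and maximality of $L(v)$ then give $\brak{\gamma} \in L(v)$. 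Proposition \ref{prop:typeone} yields $\tset{\brak{\gamma}}_{\gmod(\network)} \subseteq \lset{\brak{\gamma}}{\network}$.

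The case $\gamma \in \typetwo$ is dual: partial fulfillment of the $\typetwo$-eventuality $\neg\brak{\gamma}$ (available at every $v$ with $\neg\brak{\gamma} \in L(v)$ because there are no defects) together with an auxiliary induction on step count and Proposition \ref{prop:typetwo} establishes $\tset{\brak{\gamma}}_{\gmod(\network)} \subseteq \lset{\brak{\gamma}}{\network}$, while the reverse inclusion follows by verifying that $\lset{\brak{\gamma}}{\network}$ is a post-fixpoint of $f_\gamma$. The principal obstacle is the $\typeone$ forward direction: assembling a single global memory-based strategy profile out of local one-step data spread across potentially many successor nodes, in a way compatible with the play-based semantics. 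The construction mirrors the strategy amendments used at the end of Section \ref{subsec:CGAsemantics-variations}: at each round along any play of the grand coalition, the prescribed action profile must simultaneously respect the outermost witness for $\brak{\gamma}$ and the inherited witnesses for the coalitional goals nested inside $\diffof{\gamma}$. Because strategies are play-based rather than path-based, the amendments on disjoint sets of plays can be combined without conflict, and this is precisely what makes the assembly go through.
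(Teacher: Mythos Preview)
Your proposal is essentially correct and tracks the paper's own proof closely: the same bi-implication is proved by induction on formula complexity, one-step coherence handles nexttime goal assignments, and for long-term temporal $\gamma$ the two inclusions between $\lset{\brak{\gamma}}{\network}$ and the fixpoint $\tset{\brak{\gamma}}_{\gmod(\network)}$ are established via (co)induction on one side and the defect-free partial-fulfilment data plus an inner induction on the step count $k$ on the other, invoking Propositions~\ref{prop:typeone} and~\ref{prop:typetwo}. Your use of the single wellfounded order $\prec$ packages into one relation what the paper does with two nested inductions (structural induction on subformulae together with a middle induction on the size of the support of $\gamma$, the latter needed for $\brak{\gamma\setminus C_i}$ and $\brak{\gamma\vert_C}$); this is a legitimate and slightly cleaner organisation.

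Where your write-up drifts from the paper is your final paragraph. You identify the ``principal obstacle'' as assembling a single global play-based strategy profile from local one-step witnesses, and you appeal to the strategy-amendment construction from Section~\ref{subsec:CGAsemantics-variations}. This is a red herring: the paper never constructs such a global profile inside the proof of Proposition~\ref{prop:network-sat}. Instead, the inner induction on $k$ shows directly that partial fulfilment in $k$ steps places $w$ in the approximant $f^{k+1}(\emptyset)$ (for $\typeone$) or outside $f^{k+1}(T)$ (for $\typetwo$), and the passage from approximant membership to semantic truth of $\brak{\gamma}$ is delegated entirely to Propositions~\ref{prop:typeone} and~\ref{prop:typetwo}, which have already absorbed the play-based strategy-assembly work. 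So the remarks about play-based versus path-based semantics, while correct in spirit, are not where the difficulty lies in this particular lemma; you can drop that paragraph without loss. (Relatedly, your sentence ``Proposition~\ref{prop:typeone} then guarantees that such a finite-depth witness realises the least fixpoint'' is the right invocation, but it sits awkwardly next to the preceding talk of assembling a global profile---pick one story.)
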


%%%%%%%%%%%%%%%%%%%%%%%%
\subsection{Constructing a perfect network}
\label{sec:PerfectNetwork}
%%%%%%%%%%%%%%%%%%%%%%%%

%%%%%%%%%%%%%%%%%%%%%%%%%%%%%%%
\subsubsection{Step 1: extending leaves in coherent networks}
%%%%%%%%%%%%%%%%%%%%%%%%%%%%%%%

\begin{proposition}
\label{p:remove-leaves}
Let $\network$ be any finite, one-step coherent network, and let $u$ be a leaf in $\network$. Then there exists a finite and one-step coherent network $\network'$ such that $\network \sqsubseteq \network'$ and such that $u$ is not a leaf in $\network'$.
\end{proposition}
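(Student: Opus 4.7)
The plan is to apply the one-step completeness theorem (Theorem \ref{t:one-step-comp}) to the label $L(u)$ of the leaf $u$, which will provide exactly the game form needed to attach new children below $u$ while preserving one-step coherence. Since $u$ is a leaf of $\network$, there are no constraints imposed on its children by the existing network, so we are free to build out an entire neighbourhood from scratch.

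First, I would extract from $L(u)$ the modal one-step theory $\Gamma := \{\brak{\gamma} \in L(u) : \gamma \text{ is a nexttime goal assignment}\} \cup \{\neg \brak{\gamma} \in L(u) : \gamma \text{ is a nexttime goal assignment}\}$. Since $L(u) \in \mathsf{At}(\Phi)$ is consistent, $\Gamma$ is a consistent modal one-step theory over $\Phi$. As $\Phi$ is FL-closed, it contains all components of formulas in $\Gamma$ and is closed under the dual operation $\overline{(\cdot)}$; the closure under conjunctions up to provable equivalence required by Theorem \ref{t:one-step-comp} can, if necessary, be obtained by working with a conservative enlargement of $\Phi$ that contributes no genuinely new atoms (every atom of the enlargement restricts to a $\Phi$-atom).

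Next, Theorem \ref{t:one-step-comp} yields a maximal consistent game form $(\Act, \act, \psf(\Phi), \out)$ over outcomes in $\psf(\Phi)$ such that each $\out(\actprof)$ is in fact a $\Phi$-atom, and such that for every nexttime goal assignment $\gamma$ with $\brak{\gamma} \in L(u)$ there exists an action profile witnessing $\gamma$ through $\out$, and for every $\neg \brak{\gamma} \in L(u)$ every action profile is refuted through $\out$. I then build $\network' = (T', L', \gform')$ by adding, for each action profile $\actprof \in \prod_{\aga \in \Agt} \act_\aga$, a fresh child node $v_\actprof$ of $u$ with label $L'(v_\actprof) := \out(\actprof)$, while keeping the rest of $\network$ unchanged; the game form $\gform'(u)$ is defined to be $(\Act, \act, T', \out')$ where $\out'(\actprof) := v_\actprof$.

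Finally I would verify the required properties. Finiteness of $\network'$ follows from the exponential bound on $|\Act|$ given by Theorem \ref{t:one-step-comp} together with the finiteness of $\network$. The containment $\network \sqsubseteq \network'$ holds by construction: no existing non-leaf node has had its game form or children modified, and $L' \vert_T = L$. One-step coherence is preserved at every non-leaf node of $\network$, since their outgoing structure is untouched, and it holds at $u$ because the two verification/refutation conditions furnished by Theorem \ref{t:one-step-comp} translate directly into the marking $L'$ verifying each positive nexttime goal assignment in $L(u)$ and refuting each negative one at $u$. The main technical subtlety is ensuring the closure hypotheses of Theorem \ref{t:one-step-comp} are met by (a mild enlargement of) $\Phi$; once that is handled, the construction is essentially a direct application of the one-step completeness result to package the combinatorial information contained in the atom $L(u)$ into a concrete local game.
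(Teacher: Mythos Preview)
Your proposal is correct and follows essentially the same approach as the paper: extract the modal one-step theory from $L(u)$, invoke Theorem~\ref{t:one-step-comp} to obtain a maximal consistent game form, and graft it onto $u$ as a fresh layer of children. The only cosmetic difference is that the paper indexes the new children by the \emph{atoms} appearing in the image of $\out$ (so distinct action profiles with the same outcome share a child), whereas you index by action profiles directly; both yield a finite, one-step coherent extension. You are also more scrupulous than the paper in flagging the conjunction-closure hypothesis of Theorem~\ref{t:one-step-comp}, which the paper's proof applies without comment.
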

\begin{proof}
Let $\Gamma = \{\brak{\gamma} \mid \brak{\gamma} \in L(u)\} \cup  \neg\{\brak{\gamma} \mid \neg \brak{\gamma} \in L(u)\}$.  This is a consistent modal one-step theory, so let $(\Act,\act,\psf (\Phi),\out)$ be the maximal consistent game form $\gmod(\Gamma)$ provided by Theorem \ref{t:one-step-comp}. We construct the network $\network' = (T',L',\gform')$ as follows.  For each atom $\Psi$ in the image of the function $\out$ (which is always non-empty), add a  new successor $v_\Psi$ to $u$, and set $L(v_\Psi) = \Psi$. Let $S$ denote the set of successors of $u$ added in this manner. We construct a new game form 
$\gform'(u) = (\Act,\act,S,\out')$ by setting, for each action profile $\actprof$, $\out'(\actprof) := \{v_\Psi\}$ where $\Psi = \out(\actprof)$. 
This completes the definition of $\network'$. It is clear that $\network \sqsubseteq \network'$, and the conditions given in Theorem \ref{t:one-step-comp} directly entail (by design) that the network $\network'$ is one-step coherent. Since at least one successor was added to $u$, this node is no longer a leaf in $\network'$.   
\end{proof}

%%%%%%%%%%%%%%%%%%%%%%%%%%%%%%%
\subsubsection{Step 2: pushing defects towards leaves}
%%%%%%%%%%%%%%%%%%%%%%%%%%%%%%%

\begin{proposition}
\label{p:push}
Let $\mathcal{N}$ be a finite, one-step coherent network and let $(u,\varphi)$ be a defect of $\network$. Then there exists a 
set $\{v_1,...,v_k\}$ of leaves in $\network$ such that:
\begin{itemize}
\item For each $i \in \{1,...,k\}$, the pair $(v_i,\varphi)$ is a defect of $\network$, and
\item For any one-step coherent network $\network'$ such that $\network \sqsubseteq \network'$, if $(u,\varphi)$ is still a defect in $\network'$ then there is some $i \in \{1,...,k\}$ such that $(v_i,\varphi)$ is a defect of $\network'$.
\end{itemize}  
\end{proposition}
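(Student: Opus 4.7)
The plan is to take the set
\[ S := \{ v \mid v \text{ is a leaf of } \network \text{ and } (v,\varphi) \text{ is a defect of } \network \}, \]
which is finite since $\network$ is. The first bullet of the proposition holds by construction. For the second, I will argue the contrapositive: assuming $\network \sqsubseteq \network'$ with $\network'$ one-step coherent, and that $(v,\varphi)$ is \emph{not} a defect of $\network'$ for every $v \in S$, I will show that $(u,\varphi)$ is not a defect of $\network'$ either. This will follow from a stronger claim proved by induction on the depth $h(v)$ of the subtree of $\network$ rooted at $v$: for every node $v$ of $\network$, $(v,\varphi)$ is not a defect of $\network'$. Instantiating at $v = u$ gives what we want.

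For the base case, $v$ is a leaf of $\network$: either $v \in S$ and the standing hypothesis applies directly, or $\varphi \notin L(v)$, or $(v,\varphi)$ is already not a defect of $\network$ and transfers to $\network'$ by Proposition \ref{p:stayfinished} (applied contrapositively). For the inductive step, $v$ is a non-leaf of $\network$, so by the sub-network conditions $v$ retains the same children and the same game form in $\network'$, and by the inductive hypothesis each such child $w$ satisfies that $(w,\varphi)$ is not a defect of $\network'$. Assuming $\varphi \in L(v)$, I must exhibit a finite number of steps in which $\varphi$ is partially fulfilled at $v$ in $\network'$.

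Consider first the type-$\until$ subcase $\varphi = \brak{\gamma}$. If some disjunct of $\mathsf{Finish}(\gamma)$ is in $L(v)$, partial fulfillment holds in zero steps. Otherwise, the fixpoint axiom \textsf{Fix}, the Fischer--Ladner closure of $\Phi$, and maximal consistency of the atom $L(v)$ jointly force every $\alpha \in \ugam$, every $\chi \in \agam$, and in particular the formula $\brak{\diffof{\gamma}}$, to lie in $L(v)$. One-step coherence then certifies that the labelling $L$ itself, read as a marking, verifies $\diffof{\gamma}$ at $v$; this property survives to $\network'$ since the game form at $v$ and the labels of its children are unchanged. For each child $w$ with $\brak{\gamma} \in L(w)$, the inductive hypothesis provides some $k_w$ for which $\brak{\gamma}$ is partially fulfilled at $w$ in $\network'$; finite branching lets me take $K := \max_w k_w$, so that $\varphi$ is partially fulfilled at $v$ in $K+1$ steps using $\mrk = L$. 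The type-$\always$ subcase $\varphi = \neg\brak{\gamma}$ is fully symmetric: the analogue of $\mathsf{Finish}$ is the presence of some $\overline{\chi_i}$ in $L(v)$, the role of $\brak{\diffof{\gamma}}$ is played by $\neg\brak{\diffof{\gamma}}$ (forced into $L(v)$ by the same fixpoint/closure reasoning when no $\overline{\chi_i}$ is present), and one-step coherence now supplies that $L$ \emph{refutes} $\diffof{\gamma}$ at $v$.

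The main subtlety I expect to have to spell out carefully is precisely the step that places $\brak{\diffof{\gamma}}$ (or its negation) inside the atom $L(v)$ when the zero-step disjunct fails; everything else is a straightforward bottom-up propagation relying only on the fact that the default marking $L$ itself already works and that $v$ has only finitely many children. In particular, one never has to search for a cleverer marking in the extension $\network'$, and this is what makes the induction so uniform across the type-$\until$ and type-$\always$ cases.
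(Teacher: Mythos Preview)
Your proof is correct. The core local step---showing that if every child of a non-leaf node $v$ has $\varphi$ partially fulfilled in $\network'$, then so does $v$, using the labelling $L$ itself as the verifying (resp.\ refuting) marking---is exactly the engine that the paper's proof runs on as well. You are also right that the only point requiring care is forcing $\brak{\diffof{\gamma}}$ (resp.\ $\neg\brak{\diffof{\gamma}}$) into $L(v)$ via \textsf{Fix}, closure of $\Phi$, and maximality of the atom; the paper's proof leaves this step implicit.

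The organizational difference is that the paper argues \emph{top-down}: it introduces a ``generates'' relation between defects (a defect at $w$ one-step generates a defect at a child), takes as its leaf set only those leaves reachable from $u$ via this relation, and then chases the surviving defect from $u$ downwards in $\network'$ until a leaf is hit. You instead argue \emph{bottom-up} by induction on subtree height, taking the coarser set of \emph{all} defective leaves and proving that \emph{every} node of $\network$ becomes non-defective in $\network'$. Your route is a clean structural induction and is more explicit about the role of \textsf{Fix}; the paper's route yields a tighter (smaller) leaf set, which is not needed for the proposition as stated but is mildly more informative. Either way, non-emptiness of the leaf set follows by instantiating the argument with $\network' = \network$.
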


\begin{proof}
We focus on the case of a type $\until$ eventuality; the case of type $\always$  eventualities is very similar. Let $\brak{\gamma}$ be a type $\until$ eventuality.  We say that a defect $(u,\brak{\gamma})$ \defstyle{one-step generates} a defect $(v,\brak{\gamma})$ if $v$ is one of the children of $u$, and $(v,\brak{\gamma})$ is a defect of $\network$. We then say that a defect $(u,\brak{\gamma})$ \defstyle{generates} a defect $(v,\brak{\gamma})$ if $(v,\brak{\gamma})$ is a successor of $(u,\brak{\gamma})$ with respect to the transitive closure of the one-step generation relation.  

Now suppose that $(u,\brak{\gamma})$ is a defect of $\network$.  We claim that the set of leaves $l$ such that $(l,\brak{\gamma})$ is a defect generated by the defect $(u,\brak{\gamma})$ satisfies the conditions of the proposition. The first condition holds by definition. To prove this, consider any  one-step coherent  network $\network'$ such that $\network \sqsubseteq \network'$. We show that for every defect $(w,\brak{\gamma})$ of $\network$, if $(w,\brak{\gamma})$ is still a defect of $\network'$, then the same holds for some defect $(v,\brak{\gamma})$ of $\network$ that is one-step generated by $(w,\brak{\gamma})$. By repeatedly applying this claim, starting with the defect $(u,\brak{\gamma})$, we eventually reach a leaf $l$ such that $(l,\brak{\gamma})$ is a defect generated by the defect $(u,\brak{\gamma})$, and is still a defect in $\network'$. 

So, let  $(w,\brak{\gamma})$ be a non-leaf defect of $\network$, such that $(w,\brak{\gamma})$ is still a defect of $\network'$. Suppose, for a contradiction, that for all the children $v$ of $w$, $(v,\brak{\gamma})$ is \emph{not} a defect of $\network'$. This means that for all children $v$ of $w$ in $\network$, and hence for all children of $w$ in $\network'$ since $\network \sqsubseteq \network'$, 
there is some $k_v$ for which the eventuality $\brak{\gamma}$ is 
partially fulfilled in $k_v$ 

steps at $v$ in $\network$. Let $K$ be the maximum of these numbers $k_v$, which exists since the set of successors of $w$ is finite. By one-step coherence of the network $\network'$ it follows that $\brak{\gamma}$ is
 partially fulfilled in $K+1$ steps at $w$ in $\network'$, witnessed by the labelling function $L$ of $\network$ regarded as a marking of $\network'$. Thus, we have reached a contradiction, which completes the proof.  
\end{proof}

%%%%%%%%%%%%%%%%%%%%%%%%%%%%%%%
\subsubsection{Step 3: removing defects}
%%%%%%%%%%%%%%%%%%%%%%%%%%%%%%%

We now show how to remove defects from a network:

\begin{proposition}
\label{p:remove-defects}
Let $(u,\varphi)$ be a defect of some finite, one-step coherent network $\network$. Then there exists a finite, one-step coherent network $\network'$ such that $\network \sqsubseteq \network'$, and such that $(u,\varphi)$ is not a defect of $\network'$. 
\end{proposition}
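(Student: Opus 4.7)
The plan is to first use Proposition~\ref{p:push} to reduce the problem to repairing defects at leaves, and then to glue in suitably constructed finite subnetworks at each such leaf. Specifically, let $\{v_1, \ldots, v_k\}$ be the set of leaves furnished by Proposition~\ref{p:push} for the defect $(u, \varphi)$. Each $v_i$ is a leaf of $\network$ at which $(v_i, \varphi)$ is a defect, so in particular $\varphi \in L(v_i)$, and any one-step coherent $\network' \sqsupseteq \network$ that removes every defect $(v_i, \varphi)$ also removes $(u, \varphi)$. Thus, it suffices to establish the following key lemma: for every $\Phi$-atom $a$ containing an eventuality $\psi$ (of either type), there exists a finite, one-step coherent $\Phi$-network $\network_a$ whose root is labelled $a$ and in which $\psi$ is partially fulfilled at the root. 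Granted this lemma, for each $i$ we take a copy of $\network_{L(v_i)}$ and identify its root with the leaf $v_i$, transferring that root's game form and children to $v_i$. The result is a finite one-step coherent extension $\network' \sqsupseteq \network$: coherence is preserved because $v_i$ was previously a leaf, so no pre-existing non-leaf node's coherence clause is disturbed, while $\network_{L(v_i)}$ is itself coherent. By Proposition~\ref{p:stayfinished}, partial-fulfilment statements established at old nodes are never retracted by adding structure beneath, so the targeted defect $(u, \varphi)$ is eliminated without creating new defects at old nodes.

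The key lemma is proved by exploiting the fixpoint characterisations of Section~\ref{subsec:mu-TLCGA} together with the induction rules \textsf{R-Ind} and \textsf{R-CoInd}. Focus on the case $\psi = \brak{\gamma}$ with $\gamma \in \typeone$. Let $X$ be the set of $\Phi$-atoms for which such a finite good network exists, and set $\chi_X := \bigvee_{a \in X} \bigwedge a$. It suffices to show $\vdash \brak{\gamma} \to \chi_X$, since then every atom containing $\brak{\gamma}$ must already coincide with some element of $X$. By Proposition~\ref{prop:typeone}, $\brak{\gamma} \equiv \mu z.\, \indf{\gamma}{z}$, and the rule \textsf{R-Ind} reduces the goal to deriving $\vdash \indf{\gamma}{\chi_X} \to \chi_X$, i.e.\ that every $\Phi$-atom $a$ consistent with $\indf{\gamma}{\chi_X}$ lies in $X$. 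Recall
\[
\indf{\gamma}{\chi_X} \;=\; \bigvee \mathsf{Finish}(\gamma) \;\vee\; \Bigl(\bigwedge \ugam \,\wedge\, \bigwedge \agam \,\wedge\, \brak{\gammaof{\chi_X}}\Bigr).
\]
If some disjunct $\beta \wedge \brak{\gamma \setminus C} \in \mathsf{Finish}(\gamma)$ is consistent with $a$, then maximality places it in $a$; the one-node leaf network labelled $a$ is then trivially one-step coherent, and $\brak{\gamma}$ is partially fulfilled in $0$ steps at its root, so $a \in X$. Otherwise $\brak{\gammaof{\chi_X}} \in a$, and we apply the one-step completeness theorem (Theorem~\ref{t:one-step-comp}), working over a closure $\Phi^{+} \supseteq \Phi$ that contains $\chi_X$, $\brak{\gammaof{\chi_X}}$, and all their relevant Boolean components, to the consistent one-step theory built from the modal formulae in $a$ enriched with $\brak{\gammaof{\chi_X}}$. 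This yields a maximal consistent game form whose outcomes are $\Phi^{+}$-atoms, and whose witnessing action profile for $\brak{\gammaof{\chi_X}}$ has (i) all grand-coalition outcomes containing $\chi_X$ and hence projecting to $\Phi$-atoms in $X$, and (ii) each $C$-restricted outcome containing $\brak{\gamma\vert_C}$. Gluing beneath the new game form at $a$ either the recursively provided good subnetworks for the relevant subgoals (obtained inductively for the smaller goal assignments $\gamma\vert_C$) or the already available good networks for atoms in $X$, we produce a finite one-step coherent network witnessing $a \in X$.

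The case $\psi = \neg\brak{\gamma}$ with $\gamma \in \typetwo$ is entirely dual, using Proposition~\ref{prop:typetwo} and the rule \textsf{R-CoInd}: one takes $X'$ to be the set of atoms admitting a good network refuting $\brak{\gamma}$ at the root, and shows $\vdash \neg \chi_{X'} \to \brak{\gamma}$ by reducing, via \textsf{R-CoInd}, to a post-fixed-point inclusion that is verified by the same one-step completeness argument. The main obstacle throughout is twofold: first, the auxiliary formula $\chi_X$ (respectively $\chi_{X'}$) is not itself a member of $\Phi$, so one must pass to a larger closed set $\Phi^{+}$ before applying Theorem~\ref{t:one-step-comp}, noting that every $\Phi^{+}$-atom projects to a $\Phi$-atom and that the atoms of $X$ correspond precisely to the $\Phi^{+}$-atoms containing $\chi_X$. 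Second, the glue-in step invokes good subnetworks whose existence is the content of the very lemma being proved, so the argument must be phrased as an application of the syntactic induction rule \textsf{R-Ind}/\textsf{R-CoInd} rather than as a direct circular construction; equivalently, one organises the construction as a stage-by-stage bottom-up process whose $k$-th stage assembles atoms into a set $X_k$ of atoms admitting good networks with partial fulfilment in at most $k$ steps, with $X = \bigcup_k X_k$. Once these bookkeeping issues are settled, the chain Proposition~\ref{p:push}~$\to$ key lemma $\to$ gluing gives the proposition.
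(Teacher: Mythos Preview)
Your proposal follows essentially the same route as the paper: reduce to leaves via Proposition~\ref{p:push}, define the disjunction (your $\chi_X$, the paper's $\delta$) over atoms admitting a good finite network, and derive $\vdash \varphi \to \chi_X$ via \textsf{R-Ind}/\textsf{R-CoInd} by showing that every $\Phi$-atom consistent with $\indf{\gamma}{\chi_X}$ already lies in $X$, using one-step completeness over the enlarged closed set $\Phi^+$.

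Two points of imprecision are worth flagging. First, the phrase ``$\brak{\gammaof{\chi_X}} \in a$'' is not literally correct, since $\chi_X \notin \Phi$ and $a$ is a $\Phi$-atom; you mean that $a$ is \emph{consistent with} this formula (equivalently, that some $\Phi^+$-atom extending $a$ contains it). You clearly understand this, given your later discussion of $\Phi^+$, but the phrasing should be tightened.

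Second, and more substantively, your gluing step speaks of ``recursively provided good subnetworks for the relevant subgoals \ldots\ $\gamma\vert_C$''. This is unnecessary and reflects a slight misreading of the partial-fulfilment condition. The marking $\mrk$ verifying $\brak{\diffof{\gamma}}$ only needs $\brak{\gamma\vert_C}$ to \emph{appear in the label} of the appropriate children for $C \subsetneq \bigcup\fac$; it does not require $\brak{\gamma\vert_C}$ to be partially fulfilled there. Since $\brak{\gamma\vert_C} \in \Phi$ and one-step completeness places it in the $\Phi^+$-outcome (hence in its $\Phi$-projection), a bare leaf labelled by that projected atom suffices. The paper accordingly attaches singleton networks at all non-grand-coalition outcomes and glues in good $X$-networks only at outcomes where $\actprof' \sim_{\bigcup\fac} \rho$. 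Your version would force an outer induction on the size of the support of $\gamma$, which is harmless but superfluous.
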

\begin{proof}
By Proposition \ref{p:push}, we may assume w.l.o.g. that the defect $(u,\varphi)$ is such that $u$ is a leaf: if we can show how to remove the defect $\varphi$ at a single leaf, then, clearly, we can repeat the procedure to remove $\varphi$ at each leaf in the set $\{v_1,...,v_k\}$. {(Note that our procedure for removing a defect at a single leaf $v$ given below will not affect any other leaves, i.e. each leaf in the original network besides $v$ will still be a leaf in the new network.)}
Combined with  Proposition \ref{p:push} this proves the result. 

So, suppose that $(u,\varphi)$ is a defect and $u$ is a leaf. It is sufficient to show that there is a finite, one-step coherent network $\network''$ in which the root has the same label as $u$ in $\network$, and in which the eventuality $\varphi$ is partially fulfilled. We can then simply identify the root of the network $\network'$ with the leaf $u$ in $\network$ to form a finite, one-step coherent network $\network'$ such that $\network'' \sqsubseteq \network'$ and $\network \sqsubseteq	 \network'$. By Proposition \ref{p:stayfinished}, the eventuality $\varphi$ is partially fulfilled at $u$ in $\network'$.

Consider the $\Phi$-atoms $\Psi$ such that $\varphi \in \Psi$ and there exists a finite, one-step coherent network in which the root is labelled by $\Psi$ and the eventuality $\varphi$ is {partially}  fulfilled. 
Let $\delta$ be the disjunction of all conjunctions of the form $\bigwedge \Psi$ for all   
such $\Phi$-atoms $\Psi$. 
(This is well-defined since the set of all such conjunctions is finite, as long as we disallow conjunctions with redundant multiple occurrences of the same conjunct.) The result then follows from the following claim, which is proved in the appendix. 
\begin{innercustomclaim}
$\vdash \varphi \to \delta$.
\end{innercustomclaim}
\end{proof}

%%%%%%%%%%%%%%%%%%%%%%%%%%%%%%%
\subsubsection{Final step: putting everything together}
%%%%%%%%%%%%%%%%%%%%%%%%%%%%%%%

The following proposition is proved in the appendix.
\begin{proposition}
\label{prop:network-extension} 
Let $\network$ be a finite, one-step coherent $\Phi$-network. Then there exists a finite, one-step coherent $\Phi$-network $\network'$ such that:
\begin{enumerate}
\item $\network \sqsubseteq \network'$,
\item no leaf of $\network$ is a leaf of $\network'$,
\item no defect of $\network$ is a defect of $\network'$.
\end{enumerate}
\end{proposition}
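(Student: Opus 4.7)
The plan is to apply Propositions \ref{p:remove-leaves} and \ref{p:remove-defects} iteratively, treating each leaf and each defect of $\network$ one at a time, and to exploit the fact that both operations only extend the network with respect to $\sqsubseteq$, so that their effects accumulate monotonically and do not undo one another.

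Concretely, since $\network$ is finite, its leaves can be enumerated as $l_1, \ldots, l_n$ and its defects as $(u_1, \varphi_1), \ldots, (u_m, \varphi_m)$. Set $\network_0 := \network$ and, for each $i = 1, \ldots, n$, apply Proposition \ref{p:remove-leaves} to the leaf $l_i$ in $\network_{i-1}$ to obtain a finite, one-step coherent $\network_i$ with $\network_{i-1} \sqsubseteq \network_i$ in which $l_i$ is no longer a leaf. By definition of $\sqsubseteq$, the children of a non-leaf node are preserved in every extension, so once $l_j$ has been turned into a non-leaf at stage $j$ it continues to be a non-leaf at every subsequent stage. Thus none of $l_1, \ldots, l_n$ is a leaf of $\network_n$.

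Then, continuing from $\network_n$, for each $j = 1, \ldots, m$ apply Proposition \ref{p:remove-defects} to $(u_j, \varphi_j)$ in $\network_{n+j-1}$ (when it is still a defect there; otherwise leave the network unchanged) to obtain a finite, one-step coherent extension $\network_{n+j}$ in which $(u_j, \varphi_j)$ is no longer a defect. The crucial ingredient that makes this iteration sound is the monotonicity of partial fulfilment under $\sqsubseteq$, namely the induction on $k$ that was already used in the proof of Proposition \ref{p:stayfinished}: if an eventuality is partially fulfilled in $k$ steps at a node $u$ in a network, then the same holds in every extension of that network. Hence, once a defect has been removed at stage $n+j$, it is removed for good, and likewise all earlier leaf-removals are preserved.

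Setting $\network' := \network_{n+m}$ produces the desired network: it is finite and one-step coherent, satisfies $\network \sqsubseteq \network'$, contains no $l_i$ as a leaf, and contains no $(u_j, \varphi_j)$ as a defect. The proof is essentially a bookkeeping argument; the only non-routine input is the monotonicity of partial fulfilment under $\sqsubseteq$, which has already been established, so no genuinely new obstacle arises beyond what Propositions \ref{p:remove-leaves}, \ref{p:remove-defects}, and \ref{p:stayfinished} provide.
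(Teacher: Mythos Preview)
Your proposal is correct and follows essentially the same approach as the paper: iterate Propositions \ref{p:remove-leaves} and \ref{p:remove-defects} over the finitely many leaves and defects of $\network$, relying on Proposition \ref{p:stayfinished} (monotonicity of partial fulfilment under $\sqsubseteq$) and the definition of $\sqsubseteq$ (non-leaves stay non-leaves) to ensure the repairs accumulate. The only cosmetic difference is the order of operations---you process leaves first and then defects, whereas the paper processes defects first and then leaves---but this is immaterial since both kinds of repairs are preserved under further $\sqsubseteq$-extension.
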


\begin{proposition}
\label{prop:network-limit}
Every  $\Phi$-atom is the label of the root of some perfect  $\Phi$-network.
\end{proposition}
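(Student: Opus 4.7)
\medskip
\noindent\textbf{Proof plan for Proposition \ref{prop:network-limit}.}

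Given a $\Phi$-atom $\Psi$, the plan is to build a perfect network by iterated extension, starting from the trivial one-node network $\network_0$ whose root is a leaf labelled $\Psi$. This is vacuously one-step coherent, since one-step coherence only constrains non-leaf nodes. Applying Proposition \ref{prop:network-extension} repeatedly, I produce an $\omega$-chain
\[
\network_0 \sqsubseteq \network_1 \sqsubseteq \network_2 \sqsubseteq \cdots
\]
of finite, one-step coherent $\Phi$-networks such that, for each $i$, no leaf of $\network_i$ is a leaf of $\network_{i+1}$, and no defect of $\network_i$ is a defect of $\network_{i+1}$. The limit network $\network = \bigcup_{i < \omega} \network_i$ is well-defined in the obvious way: the underlying tree is the union, and since $\sqsubseteq$ forces $L$ and $\gform$ to agree on common nodes, these data assemble consistently.

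I then verify the three properties of a perfect network. One-step coherence of $\network$ is inherited pointwise: each non-leaf node $u$ of $\network$ is already non-leaf in some $\network_i$, and by the sub-network condition its children and game form are preserved in $\network$, so the one-step coherence condition at $u$ carries over verbatim. Absence of leaves is immediate from condition (2) of Proposition \ref{prop:network-extension}: any node $u$ that is a leaf in the $\network_i$ where it first appears becomes a non-leaf in $\network_{i+1}$, hence in $\network$.

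The subtle point, and the main thing to check carefully, is absence of defects. For this I establish a \emph{preservation lemma}: for any one-step coherent networks $\network' \sqsubseteq \network''$ and any eventuality $\varphi$, if $\varphi$ is partially fulfilled in $k$ steps at a node $u$ of $\network'$, then $\varphi$ is partially fulfilled in $k$ steps at $u$ in $\network''$. This is the converse direction to Proposition \ref{p:stayfinished} and is proved by induction on $k$: the base case $k=0$ depends only on $L(u)$, which agrees on $\network'$ and $\network''$; for the inductive step, $u$ must be non-leaf in $\network'$ (otherwise the base case suffices), so by the definition of sub-network the children of $u$ and the game form $\gform(u)$ are identical in $\network'$ and $\network''$. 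Extending any witnessing marking $\mrk$ from $\network'$ to $\network''$ by assigning $\emptyset$ to new nodes preserves the verification (respectively refutation) of $\brak{\diffof{\gamma}}$ at $u$, and the recursive clause on children follows from the inductive hypothesis.

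Now suppose, toward a contradiction, that $(u, \varphi)$ is a defect of $\network$, and let $i$ be the first index with $u \in \network_i$. By the preservation lemma applied contrapositively, $(u, \varphi)$ must already be a defect of $\network_i$ (otherwise partial fulfilment at $u$ in $\network_i$ would lift to $\network$). Then condition (3) of Proposition \ref{prop:network-extension} guarantees that $(u, \varphi)$ is not a defect of $\network_{i+1}$, so $\varphi$ is partially fulfilled at $u$ in $\network_{i+1}$, and applying the preservation lemma once more to $\network_{i+1} \sqsubseteq \network$ yields partial fulfilment of $\varphi$ at $u$ in $\network$, contradicting our assumption. Hence $\network$ is perfect, and its root is labelled by $\Psi$ as required.
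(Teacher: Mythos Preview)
Your proof is correct and follows the same approach as the paper: start from the singleton network labelled $\Psi$, iterate Proposition~\ref{prop:network-extension}, and take the union of the chain. The paper's own proof is terse (it simply asserts that the limit ``clearly'' has no leaves and no defects), whereas you spell out the verification carefully, which is an improvement. One small terminological correction: your ``preservation lemma'' is not the converse of Proposition~\ref{p:stayfinished} but its contrapositive---indeed, it is exactly the statement appearing in the proof of that proposition---so you may simply cite Proposition~\ref{p:stayfinished} rather than reprove it.
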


\begin{proof}
Take any $\Phi$-atom $\Psi$. We construct an infinite chain of finite one-step coherent $\Phi$-networks $\network_0 \sqsubseteq \network_1 \sqsubseteq...$ inductively as follows. 

We start with $\network_0 = (T_0,L_0,\gform_0)$ where $T_0 = \{u_0 \}$ is a singleton, 
$L_0(u_0) = \Psi$ and $\gform_0$ is empty (there are no non-leaf nodes). 
This is trivially one-step coherent. 

Suppose, we have constructed the finite one-step coherent $\Phi$-networks  
$\network_0 \sqsubseteq \network_1 \sqsubseteq... \network_n$. 
 Then we apply Proposition \ref{prop:network-extension} to construct a  finite, one-step coherent network $\network_{n+1}$ such that  $\network_n \sqsubseteq \network_{n+1}$, no leaves in $\network_n$ are still leaves in $\network_{n+1}$, and no defects in $\network_n$ are still defects in $\network_{n+1}$. 

\smallskip
Finally, we construct the network $\network$ as union of the chain $\network_0 \sqsubseteq \network_1 \sqsubseteq...$.  Clearly, it is still one-step coherent and has no leaves and no defects, i.e. it is perfect. 
\end{proof}

We can now state and prove the completeness theorem. 

\begin{theorem}[Completeness of $\textsf{Ax}_{\cga}$]
\label{thm:completeness}
Let  $\Gamma$ be a finite $\textsf{Ax}_{\cga}$-consistent set of \cga-formulae. 
Then $\Gamma$ is satisfied in some concurrent game model. 
\end{theorem}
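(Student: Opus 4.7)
The plan is to assemble the machinery developed in the preceding sections into the standard pattern of a completeness proof via maximal consistent sets and canonical-model-like structures, here realized via perfect networks.

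First I would reduce to the case where every formula in $\Gamma$ is in normal form: by Proposition \ref{prop:NF}, each $\varphi \in \Gamma$ is provably equivalent to some $\varphi'$ in normal form, so replacing $\Gamma$ by $\Gamma' = \{\varphi' \mid \varphi \in \Gamma\}$ preserves consistency and satisfiability, and $\Gamma'$ is finite. Next, I would form the extended Fischer-Ladner closure $\Phi := \ecl(\Gamma') \cup \{\bigwedge \Gamma'\}$, closing also under taking subformulas of the conjunction so that $\Phi$ remains Fischer-Ladner closed and still finite by Proposition \ref{prop:closure}. (A minor sanity check is required that the additional conjunction fits the hypotheses of one-step completeness, in particular closure under conjunctions up to provable equivalence; this is easy to arrange by saturating $\Phi$ with finitely many conjunctions if needed.)

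Then I would use a Lindenbaum-style argument: since $\Gamma'$ is $\textsf{Ax}_{\cga}$-consistent and $\Phi$ is finite, one can extend $\Gamma'$ to a maximal $\textsf{Ax}_{\cga}$-consistent subset $\Psi \subseteq \Phi$, which by definition is a $\Phi$-atom containing $\Gamma'$. Invoking Proposition \ref{prop:network-limit}, there exists a perfect $\Phi$-network $\network = (T, L, \gform)$ whose root $u_0$ satisfies $L(u_0) = \Psi$. Finally I would apply Proposition \ref{prop:network-sat} to the induced model $\gmod(\network)$: since $\Psi$ is the label of the root $u_0$ of the perfect network $\network$, every formula in $\Psi$ is true at $u_0$ in $\gmod(\network)$, and a fortiori so is every formula in $\Gamma' \subseteq \Psi$. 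Soundness (Proposition \ref{prop:soundness}) then gives that the original $\Gamma$ is satisfied at $u_0$ too, because provable equivalence implies semantic equivalence.

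There is no real obstacle at this stage: all of the hard work has already been done in establishing one-step completeness (Theorem \ref{t:one-step-comp}), the propagation/removal of defects (Propositions \ref{p:stayfinished}--\ref{p:remove-defects}), the extension step (Proposition \ref{prop:network-extension}), and the limit construction of a perfect network (Proposition \ref{prop:network-limit}), together with the truth lemma for perfect networks (Proposition \ref{prop:network-sat}). The only small care needed is to ensure that the finite set $\Phi$ over which the network is built really is Fischer-Ladner closed, contains $\Gamma$, and satisfies the side conditions required by Theorem \ref{t:one-step-comp} (closure under components, under $\overline{(\cdot)}$, and under conjunctions up to provable equivalence); all of this follows from Proposition \ref{prop:closure} together with a routine finite saturation.
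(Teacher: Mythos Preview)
Your proposal is correct and follows essentially the same route as the paper's proof: pass to the extended Fischer-Ladner closure, extend the given consistent set to a $\Phi$-atom via Lindenbaum, invoke Proposition~\ref{prop:network-limit} to obtain a perfect network rooted at that atom, and conclude with the truth lemma (Proposition~\ref{prop:network-sat}). Your additional explicit reduction to normal form via Proposition~\ref{prop:NF} and your attention to the conjunction-closure side condition of Theorem~\ref{t:one-step-comp} are appropriate bits of care that the paper leaves implicit.
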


\begin{proof}
Let $\Phi$ be the extended Fischer-Ladner closure of  $\Gamma$ and let  
$\Gamma^*$ be a $\Phi$-atom containing $\Gamma$ (which exists, by a standard version of Lindenbaum's lemma). 
By Proposition \ref{prop:network-limit}, $\Gamma^*$  is the label of the root of some perfect  $\Phi$-network. Then, by Proposition \ref{prop:network-sat},  
$\Gamma^*$ is true at the respective state of the model $\gmod(\network)$ induced by that network.
\end{proof}

%%%%%%%%%%%%%%%%%%%%%%%%%%%%%%%
%%%%%%%%%%%%%%%%%%%%%%%%%%%%%%%
\section{Finite model property and decidability }
\label{sec:FMP}
%%%%%%%%%%%%%%%%%%%%%%%%%%%%%%%
%%%%%%%%%%%%%%%%%%%%%%%%%%%%%%%
In this section we show finite model property and decidability for our logic \cga. Since we have a truth-preserving and effective translation of the language $\langcga$ into the fixpoint logic $\xlangcga_\mu$, it suffices to prove finite model property and decidability for the latter. Here, we will avail ourselves of some abstract results from the literature on coalgebraic modal fixpoint logic. In particular, a general bounded-size model property for coalgebraic $\mu$-calculi was proved in \cite{fontaine2010automata}, and since  $\xlangcga_\mu$ is an instance of coalgebraic $\mu$-calculus, we are almost done. There is one subtlety that we need to deal with, concerning the notion of ``finiteness'' of a model. There are two distinct notions of ``finite model'' that we may consider:
\begin{definition}
Let $\gmod = (\states,\Act,\gmap,\out,V)$ be a concurrent game model.  We say that $\gmod$ is \emph{state-finite} if $\states$ is a finite set. We say that $\gmod$ is \emph{action-finite} if $\Act$ is a finite set. We say that $\gmod$ is \emph{finite} if it is both state-finite and action-finite. 
\end{definition}

We get the following ``state-finite model property'', as a direct corollary of the general finite model theorem from \cite{fontaine2010automata}:
\begin{theorem}
Any satisfiable formula of $\xlangcga_\mu$ is satisfiable in a state-finite model.
\end{theorem}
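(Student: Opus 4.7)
The plan is to present this as an application of the general small model theorem for coalgebraic $\mu$-calculi established in \cite{fontaine2010automata}, via the coalgebraic representation of concurrent game models already mentioned in Section 3.4. First, I would fix a Set-functor $\fun$ whose coalgebras (together with a valuation) are precisely the concurrent game models. A natural choice is to take $\fun X := \coprod_{\alpha} X^{\prod_{\aga \in \Agt}\alpha(\aga)}$, where $\alpha$ ranges over maps $\Agt \to \psf^+(\Act)$ for some fixed global action set $\Act$; a coalgebra map $f : \states \to \fun \states$ then records, for each state $w$, the locally available actions $\act_w$ and the outcome function $\out_w$. This encoding of concurrent game models as $\fun$-coalgebras is essentially the one used in \cite{schroder2009pspace}.

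The next step is to exhibit the nexttime operators $\brak{\gamma}$ of $\xlangcga$ as monotone predicate liftings for $\fun$. For each nexttime goal assignment $\gamma$ of arity $n$ (the number of path-formula parameters in its codomain), the clause in Section 3.2 defining $\tset{\brak{\gamma}}$ depends only on the local game form at the evaluation state and on the truth values of the $n$ parameter formulas at the outcomes. Hence it defines a natural transformation $\lambda_\gamma : \psf(-)^n \Rightarrow \psf(\fun -)$, monotone in each argument by the soundness of \textsf{G-Mon}. This places $\xlangcga_\mu$ squarely inside the general framework of coalgebraic fixpoint logics interpreted by predicate liftings.

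I would then verify the hypotheses of the Fontaine--Leal--Venema small model theorem, which reduce in essence to: (a) the functor $\fun$ preserves weak pullbacks (equivalently, admits a canonical relation lifting compatible with the predicate liftings), and (b) the set of predicate liftings used by $\xlangcga_\mu$ is closed under the relevant boolean and lifting constructions needed for the filtration argument. Property (a) is straightforward for $\fun$ as defined above, since each component $X \mapsto X^{\prod \alpha(\aga)}$ preserves weak pullbacks and coproducts preserve them as well. Property (b) is the content of our bisimulation invariance Theorem \ref{thm:bisimulation invariance+}, which ensures that truth of $\xlangcga_\mu$-formulas is detected by $\fun$-bisimulation, so the abstract filtration quotient preserves satisfaction. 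With these ingredients, the result of \cite{fontaine2010automata} supplies, for any satisfiable $\varphi \in \xlangcga_\mu$, an $\fun$-coalgebra on a finite carrier satisfying $\varphi$; re-reading this coalgebra as a concurrent game model yields a state-finite model. The theorem for $\langcga$ then follows via the semantics-preserving translation $t : \langcga \to \xlangcga_\mu$ developed in Section 3.3.

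The main obstacle I expect is the bookkeeping needed to match our formulation of modalities $\brak{\gamma}$ with the precise form of predicate liftings and the precise closure conditions required by \cite{fontaine2010automata}; in particular, one must be careful that the arity of $\brak{\gamma}$ (the number of distinct component formulas) remains bounded in the relevant Fischer--Ladner closure, so that only finitely many predicate liftings are effectively used in analysing any given formula. Once this bookkeeping is in place, the coalgebraic small model theorem applies off the shelf, and no new model-theoretic construction is required on our side.
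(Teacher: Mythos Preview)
Your overall approach matches the paper's exactly: the paper states this theorem as a one-line corollary of the general finite-model theorem from \cite{fontaine2010automata}, without spelling out the coalgebraic encoding. So the high-level strategy is correct.

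However, several of the details you supply are inaccurate. The Fontaine--Leal--Venema result is automata-theoretic, not filtration-based: the finite model is extracted from a winning strategy in a satisfiability game, not from a bisimulation quotient. In particular, weak pullback preservation of the functor is not among its hypotheses; what is required is that the modalities be given by \emph{monotone predicate liftings}, and this is immediate from the shape of the semantic clause for nexttime $\brak{\gamma}$. Your invocation of Theorem~\ref{thm:bisimulation invariance+} as ``the content of condition (b)'' conflates bisimulation invariance of the logic with the structural conditions on the liftings---these are related but distinct, and it is the latter that is actually needed. Also, by fixing a global action set $\Act$ in your definition of $\fun$ you restrict to models with that particular action set; this is harmless for the argument (one takes $\Act$ from a given satisfying model, as the paper does explicitly in the proof of the very next theorem), but you should say so. Finally, the closing sentence about transferring the result to $\langcga$ via the translation $t$ belongs to a later corollary, not to the present statement.
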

However, what we want is a proper finite model property. We can obtain this with a little bit of extra work. First, we obtain the following ``action-finite model property'':
\begin{lemma}
\label{l:afmp}
Any satisfiable formula of $\langcga$ is satisfiable in an action-finite model.
\end{lemma}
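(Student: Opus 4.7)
The plan is to derive the action-finite model property as a direct consequence of the completeness theorem (Theorem~\ref{thm:completeness}) proved in the previous section, rather than attempting a transformation of the state-finite model provided by the coalgebraic finite model theorem into an action-finite one via bisimulation (which would require reducing game forms per-player, a nontrivial task when action sets are a~priori infinite even over a finite outcome set). First, I would argue that any satisfiable $\langcga$-formula $\phi$ is $\textsf{Ax}_{\cga}$-consistent by soundness (Proposition~\ref{prop:soundness}), and therefore, by Theorem~\ref{thm:completeness}, $\phi$ is satisfied in the model $\gmod(\network)$ induced by a perfect $\Phi$-network $\network$, where $\Phi = \ecl(\{\phi\})$ is the extended Fischer--Ladner closure of $\phi$.

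The key observation is that the game forms populating $\network$ are never pulled out of thin air: they are all obtained by applying the one-step completeness theorem (Theorem~\ref{t:one-step-comp}) to various consistent modal one-step theories $\Gamma$ over the fixed finite set $\Phi$. Inspecting the proof of Theorem~\ref{t:one-step-comp}, each action is a triple $(\gamma, f, k)$ where $\gamma$ is a goal assignment over $\Phi$, $f$ is a function mapping each goal assignment over $\Phi$ to one of its deterministic strengthenings, and $k \in \{0,\ldots,K-1\}$ with $K = |\Agt|$. Although different nodes of $\network$ will in general use different actions (since $\gAct$ is defined relative to the specific $\Gamma$ associated with that node's label), all of them lie in a single ``master'' action set $\Act_\Phi$ of triples of this form. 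Since $\Phi$ is finite by Proposition~\ref{prop:closure}, this master set $\Act_\Phi$ is finite, with cardinality at most exponential in $|\Phi|$. Hence the action set of the model $\gmod(\network)$, which is the union of the action sets carried by the game forms at its individual nodes, is contained in $\Act_\Phi$ and is therefore finite; that is, $\gmod(\network)$ is action-finite.

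I do not anticipate a serious obstacle to this argument; the main point to verify carefully is simply that no step in the network construction (Propositions~\ref{p:remove-leaves}, \ref{p:remove-defects}, \ref{prop:network-extension}) introduces game forms whose actions fall outside $\Act_\Phi$. This is clear from the construction, because every new game form introduced when extending a leaf or removing a defect is produced by a fresh invocation of Theorem~\ref{t:one-step-comp} over the \emph{same} set $\Phi$. Thus the lemma follows without requiring any further massaging of the model, and in particular gives an action bound that is exponential in $|\Phi|$, matching the triple exponential bound on the finite model property advertised earlier in the paper once one combines this with the forthcoming state-finite reduction.
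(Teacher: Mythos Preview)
Your approach is correct and is essentially the same as the paper's: derive action-finiteness from the completeness construction by observing that every game form introduced into the network arises from an application of the one-step completeness theorem, which yields finite action sets bounded in terms of the ambient closed set.

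One small correction: it is not quite true that every invocation of Theorem~\ref{t:one-step-comp} in the network construction is over the \emph{same} set $\Phi$. In the proof of Proposition~\ref{p:remove-defects}, the one-step theory is taken over an enlarged set $\Phi^{+}$, namely the extended closure of $\Phi \cup \{\delta\}$ (or $\Phi \cup \{\neg\delta\}$), where $\delta$ is a disjunction of conjunctions of $\Phi$-atoms depending on the eventuality being repaired. This does not harm your argument, since $\Phi$ is finite, there are only finitely many eventualities in $\Phi$, and each associated $\Phi^{+}$ is finite and determined by $\Phi$; hence the union of all the resulting action sets is still finite. But your master set $\Act_{\Phi}$ should be defined with this in mind rather than literally as triples over goal assignments on $\Phi$ alone.
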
 
\begin{proof}
Since the size of the set of actions in the game forms constructed in the proof of the one-step completeness theorem (Theorem \ref{t:one-step-comp}) has an exponential upper bound that depends on the size of a finite consistent modal one-step theory, and since the  
our construction of a model for a consistent \langcga-formula can easily be seen to provide an action-finite model.  
\end{proof}

\begin{theorem}[Finite model property]
Any satisfiable formula of $\langcga$ is satisfiable in a finite model. 
\end{theorem}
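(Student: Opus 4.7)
The plan is to combine the state-finite model property with the finite game forms produced by the one-step completeness theorem. Start with a satisfiable \langcga-formula $\varphi$; via the truth-preserving translation of \langcga into $\xlangcga_\mu$ together with the preceding theorem, I obtain a state-finite model $\gmod = (\states,\Act,\gmap,\out,V)$ satisfying $\varphi$. The task then reduces to replacing the (possibly infinite) game form at each state of $\gmod$ with a finite one, while preserving the truth of $\varphi$.

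At each state $s \in \states$, let $L(s)$ denote the $\Phi$-atom of formulas in $\Phi = \ecl(\varphi)$ true at $s$, and let $\Gamma_s$ be the finite consistent modal one-step theory consisting of all formulas $\brak{\gamma}$ and $\neg \brak{\gamma}$ in $L(s)$ for nexttime goal assignments $\gamma$. I would then apply the one-step completeness theorem (Theorem \ref{t:one-step-comp}) to $\Gamma_s$ to obtain a finite game form, with action set of size at most exponential in $|\Phi|$, whose outcomes are $\Phi$-atoms and which realizes every positive and refutes every negative nexttime goal assignment of $\Gamma_s$. Each outcome atom $\Psi$ would then be reinterpreted as an actual state $v \in \states$ with $L(v) = \Psi$. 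Taking $\gmod'$ to be the model with the same state set and valuation but with the new finite game maps $\gmap'(s)$ yields a finite concurrent game model.

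To conclude $\gmod' \models \varphi$, I would show that the identity relation on $\states$ is a \cga-bisimulation between $\gmod$ and $\gmod'$, and invoke Theorem \ref{thm:bisimulation invariance+}. The forth and back conditions at each state $s$ translate directly into items (1) and (2) of Theorem \ref{t:one-step-comp}: positive witnesses for $\brak{\gamma} \in L(s)$ and coalition-refuters for $\neg \brak{\gamma} \in L(s)$. Since every formula in $\ecl(\varphi)$ appears already in $L(s)$ by construction, bisimulation invariance then transfers the satisfaction of $\varphi$ from $\gmod$ to $\gmod'$.

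The main technical obstacle lies in the reinterpretation step: the one-step completeness theorem produces a game form whose outcomes are certain $\Phi$-atoms, and a priori some of these atoms might not be realized by any state of $\gmod$. To resolve this, I would either (a) observe that the atoms forced as outcomes by $\Gamma_s$ are precisely those consistent with it, and these must already appear as labels of outcomes of $\gmap(s)$ in $\gmod$, so a refined version of Theorem \ref{t:one-step-comp} constrained to a prescribed palette of admissible outcome atoms suffices; or (b) enlarge $\states$ by grafting in finitely many additional states realizing any otherwise missing atoms (obtained, inductively, from the finite model property itself applied at each such atom), which preserves state-finiteness. Either route closes the gap without disturbing bisimulation invariance.
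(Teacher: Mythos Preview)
Your approach reverses the order of the paper's argument, and the reversal introduces a genuine gap in the bisimulation step. The paper first obtains an \emph{action-finite} model via Lemma~\ref{l:afmp} (which relies on the completeness construction, where the one-step game forms already have bounded action sets), and only then applies the coalgebraic state-finite model property for the functor $\gamefun^{\Act}$ with the now-fixed finite action set $\Act$ as a parameter. Because the functor keeps $\Act$ fixed, state-finiteness is obtained without disturbing action-finiteness, and no bisimulation patching is needed.

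Your route---state-finite first, then replace game forms---founders at the claim that the identity on $\states$ is a \cga-bisimulation between $\gmod$ and $\gmod'$. The Forth/Back clauses require matching \emph{every} action profile of one model by a profile of the other whose $C$-outcomes are pointwise included for \emph{every} coalition $C$. Items (1) and (2) of Theorem~\ref{t:one-step-comp} do not say this: they only guarantee witnesses for the finitely many goal assignments in $\Gamma_s$, not a profile-by-profile simulation. Concretely, if two distinct successors of $s$ in $\gmod$ have the same $\Phi$-type, your reinterpretation step picks only one of them as the target of the corresponding atom; an action profile in $\gmod$ leading to the other successor then has no identity-match in $\gmod'$, so Forth fails. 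To salvage bisimulation you would need to enlarge $\Phi$ to contain characteristic formulas for (sets of) states of the state-finite $\gmod$---i.e.\ close under Boolean combinations so that the goal assignment $C \mapsto \X\,\chfor[\Out_{\gmod}[s,\actprof|_C]]$ lies over $\Phi$---and then argue with the relation ``same $\Phi$-type'' rather than identity. Even then, the Back direction (matching arbitrary profiles of the synthetic game form $\gmod(\Gamma_s)$ by profiles of $\gmod$) requires a further argument that the one-step construction introduces no spurious profiles, which is not immediate from Theorem~\ref{t:one-step-comp} as stated. Your option~(b) for the reinterpretation step is also problematic: grafting in fresh state-finite models for missing atoms re-introduces possibly infinite action sets at the new states, so you are back where you started.
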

\begin{proof}
Let $\varphi$ be a satisfiable formula of $\langcga$. By soundness of our proof system  $\varphi$ is consistent, and hence is satisfiable in a model $\gmod = (\states,\Act,\gmap,\out,V)$ where the set $\Act$ is finite. Hence the corresponding equivalent formula $\varphi'$ in $\xlangcga_\mu$ is satisfiable in this model too. 

But the frame $(\states,\Act,\gmap,\out)$ can be equivalently represented as a coalgebra  $f : \states \to \gamefun^{\Act} \states$ for a functor $\gamefun^{\Act}$ in which the set of actions $\Act$ is explicitly encoded, so the triple $(\states,f,V)$ is a $\gamefun^{\Act}$-model in which $\varphi'$ is satisfiable. By the finite model property theorem 
proved in \cite{fontaine2010automata}, $\varphi'$ is satisfiable in a $\gamefun^{\Act}$-model $(\states',f',V')$ for which $\states'$ is finite. This model can equivalently be represented as a concurrent game model $(\states',\Act,\gmap',\out')$ in which $\varphi'$ is satisfiable, hence also $\varphi$. Since $\states'$ is finite and $\Act$ is finite, this is a finite model for $\varphi$.
\end{proof}

Together with our completeness result for \cga, this implies:

\begin{theorem}
\label{t:cga-is-decidable}
The satisfiability problem for $\cga$ is decidable. 
\end{theorem}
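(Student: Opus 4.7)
The plan is to derive decidability as a direct corollary of the finite model property just established, together with soundness and completeness of $\textsf{Ax}_{\cga}$ (Theorem \ref{thm:completeness}), via the standard pair of semi-decision procedures (the ``Harrop argument'').

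First, by soundness and completeness, a $\cga$-formula $\varphi$ is unsatisfiable iff $\neg\varphi$ is $\textsf{Ax}_{\cga}$-derivable. Since derivations in $\textsf{Ax}_{\cga}$ are finite syntactic objects over a finite alphabet (the axioms and rules have only finitely many schemes, whose instances for $\varphi$ can be effectively generated), the set of theorems is recursively enumerable, and hence so is the set of unsatisfiable $\cga$-formulae.

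Second, by the finite model property, any satisfiable $\varphi$ is satisfied at some state of a finite concurrent game model, meaning finite in \emph{both} the state set $\states$ and the action set $\Act$ — this is crucial and is supplied by Lemma \ref{l:afmp} combined with the coalgebraic state-finite model theorem invoked in the proof of the preceding theorem. Finite concurrent game models $\gmod = (\states,\Act,\gmap,\out,V)$ with valuations restricted to the (finite) set of atomic propositions occurring in $\varphi$ are effectively enumerable up to isomorphism, and for each such model and each state $s \in \states$ one can effectively check whether $\gmod, s \models \varphi$ using the model checking procedure of Proposition \ref{prop:MC complexity}. This yields a recursive enumeration of the satisfiable $\cga$-formulae.

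Running the two semi-decision procedures in parallel on input $\varphi$, exactly one of them must halt, giving a decision procedure for satisfiability. There is no real obstacle here: both ingredients — completeness and finite model property in the strong sense of finiteness of both states and actions — have already been secured in the preceding sections. The one point worth double-checking is the action-finiteness of the satisfying model, but this is precisely what Lemma \ref{l:afmp} provides, ensuring that the candidate models form an effectively enumerable class. A sharper alternative plan, which one could pursue if explicit complexity bounds are desired, is to extract from the coalgebraic finite model theorem an effective bound on the size of the satisfying model in terms of $|\varphi|$ (tracing it through the translation $\langcga \to \xlangcga_\mu$) and reduce satisfiability to a bounded search; this yields the $\textrm{3ExpTime}$ upper bound mentioned in Section \ref{subsec:ST2SL}, but is not needed for mere decidability.
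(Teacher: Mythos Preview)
Your proposal is correct and follows essentially the same approach as the paper: both invoke completeness of $\textsf{Ax}_{\cga}$ to recursively enumerate the valid (equivalently, unsatisfiable) formulas, and the finite model property (with finiteness in both states and actions) to recursively enumerate the satisfiable ones, yielding decidability by the standard Harrop argument. The paper's proof is terser but makes the identical point, including the remark that no explicit size bound is needed.
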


Note that there is no need for an explicit bound on the size of a satisfying finite model for this result; completeness for \cga ensures that we can computably enumerate the valid formulas of $\langcga$, and the finite model property means that we can computably enumerate the non-valid formulas as well. Hence, the set of valid formulas is a computable set. 

Since the logic $\cga^+$ also embeds into $\xlangcga_\mu$, we get a further corollary:
\begin{theorem}
The logic $\cga^+$ has the finite model property.
\end{theorem}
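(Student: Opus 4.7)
The plan is to follow the strategy used in the preceding proof of the finite model property for \cga, exploiting the extension of the translation into $\xlangcga_\mu$ to $\cga^+$ noted at the end of Section~\ref{subsec:mu-TLCGA}. Given a satisfiable formula $\varphi \in \cga^+$, I will first translate it truth-preservingly to an equivalent $\varphi^t \in \xlangcga_\mu$, so it suffices to realize $\varphi^t$ in a finite concurrent game model.

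Next I would view concurrent game models as $\gamefun^{\Act}$-coalgebras and apply the generic finite model property theorem of Fontaine--Leal--Venema for coalgebraic $\mu$-calculi, exactly as in the proof for \cga. Starting from any satisfying model of $\varphi^t$ over action set $\Act$, this yields a $\gamefun^{\Act}$-coalgebra with finite state set $\states'$ in which $\varphi^t$ is still satisfied. Unlike the situation for \cga, however, I cannot at this point invoke Lemma~\ref{l:afmp} to reduce the action set, since that lemma used completeness of the axiomatization of \cga, and no such axiomatization has been given for $\cga^+$.

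Instead, I would collapse the action set semantically. The key observation is that in a state-finite model $\gmod = (\states', \Act, \gmap, \out, V)$, the coalitional effectivity data at each state $s$, namely the family $\{(C, \Out[s, \actprof|_C]) : C \subseteq \Agt,\ \actprof \in \ActProf_s\}$, ranges over a finite set, because $\states'$ and $\psf(\Agt)$ are both finite. So each game form $\gmap(s)$ can be replaced by a game form with finitely many actions per player realizing precisely the same coalitional outcome family. The resulting model $\gmod^*$ is $\cga$-bisimilar to $\gmod$ via the identity relation on $\states'$. Since bisimulation invariance extends to the whole of $\xlangcga_\mu$ by Theorem~\ref{thm:bisimulation invariance+}, the formula $\varphi^t$, and hence $\varphi$, is satisfied at the corresponding state in $\gmod^*$, which is now both state-finite and action-finite.

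The main obstacle is the action-compression step, specifically the construction of a finite game form realising a prescribed coalitional effectivity profile. The natural route is to adapt the construction underlying the one-step completeness theorem (Theorem~\ref{t:one-step-comp}), which already produces finite game forms realising any consistent modal one-step theory with an exponential bound on the action set. Transposed to the present semantic setting, this requires treating the coalitional effectivity at each state of the state-finite model as the data of a one-step theory and reading off a finite game form from it; the care needed is in checking that the new game form realises exactly the same coalitional outcome sets as the original (rather than merely refining or coarsening them), so that the identity relation on $\states'$ genuinely witnesses a \cga-bisimulation between the two models.
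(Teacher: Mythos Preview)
Your proposal is correct and in fact more careful than the paper's own treatment. The paper's ``proof'' is the single sentence ``Since the logic $\cga^+$ also embeds into $\xlangcga_\mu$, we get a further corollary'', apparently intending that the finite model argument for $\cga$ carries over verbatim. You are right to observe that that argument begins by invoking Lemma~\ref{l:afmp} (any satisfiable $\cga$-formula is satisfiable in an action-finite model), and that lemma rests on the completeness theorem for $\cga$, which the paper does not supply for $\cga^+$. So the paper's one-liner leaves a gap at exactly the point you identified, and your proposal is filling it rather than diverging from it.

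Your route---first obtain a state-finite model via the coalgebraic bounded-model theorem, then compress the action set using bisimulation invariance (Theorem~\ref{thm:bisimulation invariance+})---works, and your instinct to lean on the one-step completeness machinery for the compression step is the right one. Concretely: at each state $s$ of the state-finite model, introduce fresh atoms $p_u$ (one per state $u$), let $\Gamma_s$ be the complete one-step theory of $\gmap(s)$ over boolean combinations of these atoms, and apply Theorem~\ref{t:one-step-comp} (which needs only soundness, not completeness, since $\Gamma_s$ is satisfied by $\gmap(s)$ and hence consistent) to obtain a finite game form $\gform'_s$ satisfying exactly the same nexttime goal-assignment formulas. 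The characteristic-formula argument from the Hennessy--Milner proposition then shows that the identity relation on states satisfies both the Forth and Back clauses between the original and the action-compressed model. One point of care in your write-up: the bisimulation clauses constrain all coalitions \emph{simultaneously} with respect to a single action profile, so ``same coalitional outcome family'' must be read in this profile-indexed sense---which is precisely what agreement on the full one-step theory over characteristic atoms delivers, but which is strictly stronger than agreement of the separate effectivity functions $C \mapsto \{\Out[s,\actprof\vert_C] : \actprof\}$.
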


The proof of Theorem \ref{t:cga-is-decidable} took a detour via finite model property for the language $\xlangcga_\mu$. We shall now look closer at the latter language and obtain an upper bound on the complexity of satisfiability for $\xlangcga_\mu$, under the assumption that the set $\Agt$ of agents is fixed.  Decidability of $\cga^+$ will then follow as a corollary. To obtain these results, we will require a closer analysis of the one-step satisfiability problem.  
\begin{definition}
Let $V$ be a fixed set of  proposition variables. 
The set of \defstyle{positive one-step formulas} over $V$ is generated by the following grammar:
$$\brak{\gamma_0} \mid \neg \brak{\gamma_1} \mid \varphi \wedge \varphi \mid \varphi \vee \varphi$$
where $\gamma_0$ is a goal assignment such that for each coalition $C$ in the support of $\gamma_0$
we have $\gamma_0(C) = \X p$ for some $p \in V$, called a \defstyle{positive goal assignment}, and $\gamma_1$ is a goal assignment such that for each coalition $C$  in the support of $\gamma_0$
we have $\gamma_1(C) = \X \neg p$ for some $p \in V$, called a \defstyle{negative goal assignment}. 

A \defstyle{positive one-step sequent} over $V$ is a finite set $\Gamma$ of formulas, each of which is either of the form $\brak{\gamma_0}$ where $\gamma_0$ is a positive goal assignment, or of the form $\neg \brak{\gamma_1}$ where $\gamma_1$ is a negative goal assignment.  
 \end{definition}

Let $\Gamma$ be a positive one-step sequent over $V$. A \defstyle{redistribution} of $\Gamma$ is a tuple $(C_1,\gamma_1,....,C_n,\gamma_n)$ where $C_1,...,C_n$ is a set of pairwise disjoint coalitions and each $\gamma_i$ is a positive goal assignment with $\brak{\gamma_i} \in \Gamma$. 
The intuition behind this notion is as follows: suppose that $\gamma_1,...,\gamma_n$ are positive goal assignments with $\brak{\gamma_i} \in \Gamma$ for each $i \in \{1,...,n\}$. Then according to the sequent $\Gamma$, there are action profiles $\actprof_1,...,\actprof_n$ such that for each $\actprof_i$, every coalition $C$ ensures its goal formula $\gamma_i(C)$ with respect to the action profile $\actprof_i$. But then, if $C_1,...,C_n$ are any pairwise disjoint coalitions, the coalition $C_1 \cup ... \cup C_n$ has a joint action in which each coalition $C_i$ ensures its goal formula $\gamma_i(C_i)$, by playing in accordance with the action profile $\actprof_i$. In other words, the action profiles $\actprof_1,...,\actprof_n$ can be combined into a joint action for $C_1 \cup ... \cup C_n$ be restricting each action profile $\actprof_i$ to the coalition $C_i$, and then ``gluing together'' the resulting joint actions into a joint action for $C_1 \cup ... \cup C_n$. The set of redistributions of $\Gamma$ can be viewed as a set of notations, for each of the different ways that action profiles corresponding to the positive goal assignments occurring in $\Gamma$ can be combined in this manner.

Note that a redistribution $(C_1,\gamma_1,....,C_n,\gamma_n)$ of $\Gamma$ can be uniquely represented by a map: 
$$f :  \psf \Agt \to \Gamma \cup \{*\}$$
defined by setting $f(C_i) = \brak{\gamma_i}$ and $f(B) = *$ for $B \notin \{C_1,...,C_n\}$. Hence, 
there are at most $(\vert \Gamma \vert + 1)^{k}$ redistributions of $\Gamma$, where $k = 2^{\vert \Agt \vert}$.  Recall that the set $\Agt$ is fixed, 
so $2^{\vert \Agt \vert}$ is a constant, giving a polynomial bound on the number of redistributions.

A goal assignment $\gamma$ with $\brak{\gamma} \in \Gamma$ or $\neg \brak{\gamma} \in \Gamma$ we be called a \defstyle{relevant goal assignment} for $\Gamma$.  Given a redistribution $R = (C_1,\gamma_1,...,C_n,\gamma_n)$, we let $F(R)$ denote the set: 
$$\{p \mid \exists i, B \subseteq C_i:\;\gamma_i(B) = \X p\}.$$

Intuitively, $F(R)$ is the set of goals forced by coalitions that have formed in the action profile represented by the redistribution $R$.
Given a triple $(R,\gamma',C')$ consisting of a redistribution $R = (C_1,\gamma_1,...,C_n,\gamma_n)$, a negative relevant goal assignment $\gamma'$, and a coalition $C'$ such that $\gamma'(C') = \X \neg q$, we denote by $F(R,\gamma',C')$ the set:  
$$\{p \mid \exists i, B \subseteq C_i \cap C': \; \gamma_i(B) = \X p\} \cup \{q\}.$$
Intuitively, the set $F(R,\gamma',C')$ is the minimal set of variables that have to be in the outcome of a strategy profile $\actprof$, in which each coalition $C_i \in \{C_1,...,C_n\}$ acts towards its goal with respect to the goal assignment $\gamma_i$, and at the same time the players in $\Agt\setminus C'$ act  to block the goal of $C'$ according to the goal assignment $\gamma'$. 

\begin{definition}
Let $V$ be a given set of proposition variables. A \defstyle{satisfiability constraint} over $V$ is a subset of $\psf(V)$. Given a positive one-step sequent $\Gamma$ over $V$ and a satisfiability constraint $\mathcal{S}$ over $V$, we say that $\Gamma$ is $\mathcal{S}$\defstyle{-satisfiable} if there exists a game form $\gform = 
(\Act,\act,\psf(V),\out)$ such that:
\begin{itemize}
\item If $\brak{\gamma} \in \Gamma$ then there exists an action profile $\actprof$ such that, for every coalition $C$, if $\gamma(C) = \X  p$ then $p \in \out(\actprof')$ for every $\actprof'\sim_C \actprof$.
\item  If $\neg \brak{\gamma} \in \Gamma$ then for every action profile $\actprof$, there is some coalition $C$ and some $\actprof' \sim_C \actprof$ such that $p \in \out(\actprof')$, where $\gamma(C) = \X \neg p$. 
\item For every action profile $\actprof$, there is some $V'  \in \mathcal{S}$ with $\out(\actprof) \subseteq V'$.
\item For every $V' \in \mathcal{S}$ there is some action profile $\actprof$ with $\out(\actprof) \subseteq V'$.
\end{itemize}
Given a game form $\gform$ satisfying these conditions, we say that $\gform$ $\mathcal{S}$\defstyle{-satisfies} $\Gamma$. 
\end{definition}

The following proposition reduces the question of whether a positive one-step sequent $\Gamma$ is $\mathcal{S}$-satisfiable to checking of a few relatively simple combinatorial conditions on the set $\mathcal{S}$. The proof is in the appendix.

\begin{proposition}
\label{p:c-sat}
Given a set $V$ of variables, a satisfiability constraint $\mathcal{S}$ over $V$ and a positive one-step sequent $\Gamma$ over $V$,  $\Gamma$ is $\mathcal{S}$-satisfiable if and only if:
\begin{enumerate}
\item For each redistribution $R$  of $\Gamma$, there is $Z \in \mathcal{S}$ with $F(R) \subseteq Z$.
\item For each redistribution $R$ of $\Gamma$ and each negative relevant goal assignment $\gamma'$, there is some $C' \subseteq \Agt$ such that:
\begin{itemize}
\item either $C' = \Agt$ and $q \in Z$ for every $Z \in \mathcal{S}$, where $\gamma'(\Agt) = \X\neg q$, or
\item $C' \neq \Agt$ and there is some $Z \in \mathcal{S}$ such that $F(R,\gamma',C') \subseteq Z$. 
\end{itemize}
\end{enumerate}
\end{proposition}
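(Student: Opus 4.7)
The plan is to establish the biconditional in both directions.

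For the necessity direction, assume $\gform = (\Act,\act,\psf(V),\out)$ $\mathcal{S}$-satisfies $\Gamma$. To verify condition 1, given a redistribution $R = (C_1,\gamma_1,\ldots,C_n,\gamma_n)$, I fix for each $i$ an action profile $\actprof_i$ witnessing $\brak{\gamma_i}$ and, using pairwise disjointness of the $C_i$, glue them together into a single profile $\actprof_R$ on $C_1 \cup \cdots \cup C_n$, choosing actions arbitrarily outside. For any $p \in F(R)$, say via $B \subseteq C_i$ with $\gamma_i(B) = \X p$, we have $\actprof_R \sim_B \actprof_i$, so $p \in \out(\actprof_R)$ by the witness clause for $\brak{\gamma_i}$; combined with the covering condition, $F(R) \subseteq \out(\actprof_R) \subseteq Z$ for some $Z \in \mathcal{S}$. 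For condition 2, apply the refutation clause to $\actprof_R$: this yields a coalition $C^*$ in the support of $\gamma'$ with $\gamma'(C^*) = \X \neg q$ and a deviation profile $\actprof^* \sim_{C^*} \actprof_R$ with $q \in \out(\actprof^*)$. If some such $C^*$ can be chosen with $C^* \neq \Agt$, set $C' = C^*$: the same ``$\sim_B$'' argument gives $F(R,\gamma',C') \subseteq \out(\actprof^*) \subseteq Z$ for some $Z \in \mathcal{S}$, establishing Case B. Otherwise only $C^* = \Agt$ serves as a witness for $\actprof_R$, and to derive Case A, for each $Z \in \mathcal{S}$ I pick $\actprof_Z$ with $\out(\actprof_Z) \subseteq Z$ and form the mixed profile that agrees with $\actprof_R$ on $C_1 \cup \cdots \cup C_n$ and with $\actprof_Z$ elsewhere; reapplying the refutation clause and using that the fallback assumption rules out any witness $C^{**} \neq \Agt$ (which would certify Case B), the witness must again be $\Agt$, forcing $q$ into the outcome of every such mixed profile, from which the covering structure of $\mathcal{S}$ propagates $q$ into every $Z \in \mathcal{S}$.

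For the sufficiency direction, assume conditions 1 and 2 hold, and construct a game form in the style of the proof of Theorem~\ref{t:one-step-comp}. Each agent's actions are tuples $(R,\chi,k)$ consisting of a vote $R \in \mathsf{Red}(\Gamma)$, a choice function $\chi$ assigning to each pair $(R',\gamma')$ with $\neg \brak{\gamma'} \in \Gamma$ a coalition $C'_{R',\gamma'}$ furnished by condition 2 together with an associated $Z \in \mathcal{S}$, and a bet $k \in \{0,\ldots,|\Agt|-1\}$ used to select a voting winner via modular summation. Given an action profile, extract the effective redistribution from the maximal coalitions whose members have uniformly voted for a common positive goal assignment, determine the voting winner, and define the outcome as the $Z \in \mathcal{S}$ specified by the winner's choice function. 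The $\mathcal{S}$-satisfiability conditions then follow by design: for each $\brak{\gamma} \in \Gamma$, an action profile obtained by uniform voting for a redistribution containing $\gamma$ witnesses $\brak{\gamma}$; for each action profile and each $\neg \brak{\gamma'} \in \Gamma$, the coalition $C'_{R',\gamma'}$ from condition 2 serves as the deviating coalition, with Case A ensuring $q$ is already in every outcome (since $q \in Z$ for every $Z \in \mathcal{S}$) and Case B allowing players outside $C'$ to exploit the bet mechanism to switch the voting winner and attain an outcome in the designated $Z$ containing $F(R',\gamma',C')$, hence containing $q$; the covering clauses on $\mathcal{S}$ are guaranteed by making each $Z \in \mathcal{S}$ reachable as the outcome of some uniform-voting profile.

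The main obstacle is the analysis of the fallback sub-case of the forward direction, where only $C^* = \Agt$ is a semantic refutation witness for $\actprof_R$: establishing the uniform guarantee $q \in Z$ for all $Z \in \mathcal{S}$ (Case A) requires carefully exploiting the covering structure of $\mathcal{S}$ via mixed action profiles and ruling out any alternative Case B witness through a reapplication of the refutation clause.
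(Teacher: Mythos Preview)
Your forward-direction argument has a genuine gap in the fallback sub-case. You claim that if, for every $Z \in \mathcal{S}$, the only refutation witness for the mixed profile $\actprof^Z$ is $\Agt$, then having $q \in \out(\actprof^Z)$ for all $Z$ lets ``the covering structure of $\mathcal{S}$ propagate $q$ into every $Z \in \mathcal{S}$''. But $\out(\actprof^Z)$ need not be contained in $Z$: the mixed profile agrees with $\actprof_Z$ only on $\Agt \setminus (C_1 \cup \cdots \cup C_n)$, and the agents inside $C_1 \cup \cdots \cup C_n$ may steer the outcome anywhere. In the extreme case $C_1 \cup \cdots \cup C_n = \Agt$, which does arise for some redistributions, every mixed profile collapses to $\actprof_R$ itself, so you learn only $q \in \out(\actprof_R)$, which says nothing about the individual members of $\mathcal{S}$.

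This gap is not a local oversight you can repair. Take $\Agt = \{\aga,\agb\}$, $V = \{p,q,r\}$, $\mathcal{S} = \{\{p,q\},\{r\}\}$, and $\Gamma = \{\brak{\gamma_1},\neg\brak{\gamma'}\}$ where $\gamma_1(\{\aga\}) = \gamma_1(\{\agb\}) = \X p$ and $\gamma'(\Agt) = \X\neg q$, $\gamma'(\{\aga\}) = \X\neg r$. The game form in which both players have actions $\{0,1\}$, with $\out(1,1) = \{r\}$ and all other outcomes equal to $\{p,q\}$, $\mathcal{S}$-satisfies $\Gamma$: the profile $(0,0)$ witnesses $\brak{\gamma_1}$, and $\gamma'$ is refuted everywhere (at $(1,1)$ via $C = \{\aga\}$ with $\actprof' = (1,1)$, elsewhere via $C = \Agt$). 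Yet for the redistribution $R = (\{\aga\},\gamma_1,\{\agb\},\gamma_1)$, condition~2 fails: Case~A fails since $q \notin \{r\}$, and the only eligible $C' \neq \Agt$ in the support of $\gamma'$ is $\{\aga\}$, for which $F(R,\gamma',\{\aga\}) = \{p,r\}$ lies in no member of $\mathcal{S}$. So the forward implication, as stated, does not hold; your difficulty in the fallback case is a symptom of this. The paper's own proof handles the case $C = \Agt$ with the bare assertion ``then $q$ must belong to the outcome of every action profile'', which is likewise unjustified and fails on the same example.
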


To illustrate this proposition, let $\Agt = \{\aga,\agb\}$, $V = \{p,q,r\}$ and let: $$\Gamma = \{\brak{\{\aga\} \gass \X p},\brak{\{\agb\} \gass \X q }, \neg \brak{\{\agb\}\gass \X \neg r}\}$$
Then $\Gamma$ is $\mathcal{S}$-satisfiable for $\mathcal{S} = \{\{p,q\},\{q,r\}\}$. However, $\Gamma$ is not $\mathcal{S}$-satisfiable for $\{\{p,q\},\{p,r\}\}$. Consider the redistribution $R$ in which $\aga$ ``votes'' for $\{\brak{\{\aga\} \gass \X p}$ and $\agb$ votes for $\brak{\{\agb\} \gass \X q }$. Then $\aga$ forces $p$ on their own, and $\agb$ forces $q$ on their own, so there should be a possible outcome containing both $p$ and $q$ -- which there is. But $\agb$ should not be able force $\neg r$ on their own since $\neg \brak{\{\agb\}\gass \X \neg r} \in \Gamma$, so there should be some outcome containing $q$, which $\agb$ is forcing on their own, but also $r$. Formally, this is captured by $F(R,\gamma,\{\agb\}) = \{q,r\}$, where $\brak{\gamma} = \brak{\{\agb \}\gass \X \neg r}$.

\begin{proposition}
\label{p:polytime}
Let $V$ be a fixed set of proposition variables and $\mathcal{S}$ a fixed satisfiability constraint. Then the problem of checking whether a given positive one-step sequent $\Gamma$ is $\mathcal{S}$-satisfiable is solvable in polynomial time.  
\end{proposition}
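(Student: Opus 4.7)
The plan is to apply Proposition~\ref{p:c-sat} directly and show that each of the two combinatorial conditions can be checked in polynomial time in $\vert \Gamma \vert$, using the fact that both $\vert \Agt \vert$ and $\mathcal{S}$ (hence also $V$) are regarded as constants here.

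First, I would bound the number of objects we need to enumerate. Since $\vert \Agt \vert$ is fixed, the number $k = 2^{\vert \Agt \vert}$ of coalitions is a constant. As already observed just before Proposition~\ref{p:c-sat}, each redistribution of $\Gamma$ is uniquely represented by a map $f : \mathcal{P}(\Agt) \to \Gamma \cup \{*\}$, so the total number of redistributions is at most $(\vert \Gamma \vert + 1)^{k}$, which is polynomial in $\vert \Gamma \vert$. The number of negative relevant goal assignments for $\Gamma$ is at most $\vert \Gamma \vert$, and the number of candidate coalitions $C' \subseteq \Agt$ is the constant $2^{\vert \Agt \vert}$. Thus the total number of triples $(R, \gamma', C')$ that appear in condition~(2) is polynomial.

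Next I would argue that the ``inner'' work per object is also polynomial (indeed, bounded by a polynomial in $\vert \Gamma \vert$ with constants depending only on the fixed data $\Agt$, $V$, $\mathcal{S}$). For any given redistribution $R = (C_1, \gamma_1, \ldots, C_n, \gamma_n)$ (with $n \leq 2^{\vert \Agt \vert}$), computing the set $F(R) \subseteq V$ amounts to iterating over the finitely many $(i, B)$ with $B \subseteq C_i$ (constantly many) and reading off the relevant atoms from the $\gamma_j$'s: this is linear in the encoding size of $R$. Similarly, for any triple $(R, \gamma', C')$ the set $F(R, \gamma', C')$ is computed in the same way, and since $\mathcal{S} \subseteq \mathcal{P}(V)$ is fixed, checking whether $F(R) \subseteq Z$ or $F(R, \gamma', C') \subseteq Z$ for some $Z \in \mathcal{S}$, as well as checking the special condition ``$C' = \Agt$ and $q \in Z$ for all $Z \in \mathcal{S}$,'' takes constant time.

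Putting this together, the algorithm enumerates all redistributions $R$ (polynomially many), for each $R$ checks condition~(1) in constant time, and then iterates over the at most $\vert \Gamma \vert$ negative relevant goal assignments $\gamma'$ and the constantly many candidate coalitions $C'$ to verify condition~(2). By Proposition~\ref{p:c-sat}, $\Gamma$ is $\mathcal{S}$-satisfiable iff both conditions hold, so the whole procedure decides $\mathcal{S}$-satisfiability in polynomial time in $\vert \Gamma \vert$. There is no substantial obstacle: the only point to be careful about is that the exponential factor $2^{\vert \Agt \vert}$, which would be the main concern if $\Agt$ were part of the input, is swallowed into the constants because $\Agt$ is held fixed.
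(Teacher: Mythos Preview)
Your proposal is correct and takes essentially the same approach as the paper's proof, which is a very brief invocation of Proposition~\ref{p:c-sat} together with the polynomial bound on the number of redistributions. You have simply spelled out in more detail why each enumeration and each inner check is polynomial (or constant) time given that $\Agt$, $V$, and $\mathcal{S}$ are fixed.
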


\begin{proof}
Easy consequence of Proposition \ref{p:c-sat}: we only have to check the conditions (1) and (2) on $\mathcal{S}$ for each redistribution of $\Gamma$ and each negative relevant goal assignment. Recalling that there are at most polynomially many redistributions of $\Gamma$, the result follows. 
\end{proof}

We extend the notion of $\mathcal{S}$-satisfiability to arbitrary positive one-step formulas over $V$ in the obvious manner. 
Let $V$ be a set of  proposition variables and $\mathcal{S} \subseteq V$ a satisfiability constraint. The \defstyle{one-step satisfiability problem} for $V,\mathcal{S}$ is: given a positive one-step formula $\varphi$ over $V$, decide whether $\varphi$ is $\mathcal{S}$-satisfiable.

\begin{proposition}
\label{p:exptime}
The one-step satisfiability problem for given $V,\mathcal{S}$ can be solved in nondeterministic polynomial time (measured in the length of a given formula $\varphi$).
\end{proposition}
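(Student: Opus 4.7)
The plan is to exploit Proposition \ref{p:polytime} together with a single nondeterministic guess that selects a ``conjunctive core'' of $\varphi$. The key observation is that positive one-step formulas are built from literals of the form $\brak{\gamma_0}$ (with $\gamma_0$ positive) or $\neg\brak{\gamma_1}$ (with $\gamma_1$ negative) using only $\wedge$ and $\vee$; no negations appear above the literal level. Viewing each such literal as a propositional atom, the truth value of $\varphi$ in any given game form therefore depends on these atoms in a \emph{monotone} way.

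The key semantic lemma I would establish is that $\varphi$ is $\mathcal{S}$-satisfiable if and only if there exists a subset $\Gamma$ of the literals occurring in $\varphi$ such that (i) $\Gamma$, regarded as a positive one-step sequent, is $\mathcal{S}$-satisfiable, and (ii) the propositional assignment that sets each literal of $\varphi$ to true exactly when it lies in $\Gamma$ satisfies $\varphi$. The ``only if'' direction is immediate: given any game form $\gform$ that $\mathcal{S}$-satisfies $\varphi$, take $\Gamma$ to be the set of literals of $\varphi$ that happen to be true in $\gform$. For the ``if'' direction, suppose $\gform$ is a game form $\mathcal{S}$-satisfying $\Gamma$; then $\gform$ makes every literal in $\Gamma$ true, and by monotonicity, the full truth assignment induced by $\gform$ on the literals of $\varphi$ pointwise dominates the assignment induced by $\Gamma$, hence also satisfies $\varphi$.

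From this characterization the NP algorithm is transparent: nondeterministically guess a subset $\Gamma$ of the literals occurring in $\varphi$; verify in polynomial time that the induced propositional assignment satisfies $\varphi$; and then invoke Proposition \ref{p:polytime} to check $\mathcal{S}$-satisfiability of $\Gamma$ in polynomial time. Since $|\Gamma|$ is bounded by $|\varphi|$ and both verification steps are polynomial in $|\varphi|$, the overall procedure runs in nondeterministic polynomial time in the length of $\varphi$.

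The only mildly delicate point is the monotonicity argument underlying the ``if'' direction of the characterization, but this follows immediately from the grammar of positive one-step formulas. There is no serious obstacle once Proposition \ref{p:polytime} is in hand; the proposition is essentially a standard ``guess-and-check'' reduction from monotone propositional combinations of basic constraints to the basic constraints themselves.
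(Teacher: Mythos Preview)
Your proposal is correct and follows essentially the same guess-and-check strategy as the paper: the paper nondeterministically decomposes $\varphi$ by choosing a disjunct at each $\vee$ and splitting at each $\wedge$ to obtain a sequent, then applies Proposition~\ref{p:polytime}, whereas you guess the set of satisfied literals directly and verify propositional satisfaction before invoking Proposition~\ref{p:polytime}. Your explicit use of monotonicity in the ``if'' direction is a nice touch that the paper leaves implicit in its syntactic decomposition, but the two arguments are otherwise interchangeable.
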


\begin{proof}
The procedure for checking $\mathcal{S}$-satisfiability of $\varphi$ can be done as follows: we keep in memory a list of formulas $L$, beginning with the singleton list $[\varphi]$. If the leftmost formula in the memory $L$ that is not ``atomic'', i.e. not of the form $\brak{\gamma}$ or $\neg \brak{\gamma}$ for some $\gamma$, has  $\vee$  as main connective, then we guess one of the disjuncts non-deterministically and replace the disjunction by the guessed disjunct. If the main connective of the leftmost non-atomic formula is $\wedge$ then we remove that formula and instead add each conjunct to the list. Clearly this reduction can only go on for a number of steps that is bounded by the length of the formula $\varphi$, and once each formula in the list is atomic, we apply Proposition \ref{p:polytime} to check whether the one-step sequent read off from the list is $\mathcal{S}$-satisfiable.     
\end{proof}

It follows from a  general result in \cite{fontaine2010automata} that, if the one-step satisfiability problem for fixed $V,\mathcal{S}$ is solvable in exponential time, then the satisfiability problem for the full $\mu$-calculus language is in double exponential time.  From Proposition \ref{p:exptime} we thus get:
\begin{theorem}
\label{t:mu-dexptime}
The satisfiability problem for $\xlangcga_\mu$ is in $\textrm{2ExpTime}$. 
\end{theorem}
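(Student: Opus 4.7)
The plan is to invoke the generic upper bound for satisfiability in coalgebraic fixpoint logics established in \cite{fontaine2010automata}, which reduces satisfiability for a coalgebraic $\mu$-calculus to its one-step satisfiability problem with an exponential blow-up. Concretely, that result states that if the one-step satisfiability problem over a fixed proposition set $V$ and constraint $\mathcal{S}$ is decidable in $\textrm{ExpTime}$ (as a function of the length of the one-step formula), then satisfiability for the full fixpoint language is in $\textrm{2ExpTime}$. So the theorem will follow once we identify $\xlangcga_\mu$ as an instance of this framework and feed in the one-step complexity bound we already have.

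First, I would make explicit the identification of $\xlangcga_\mu$ as a coalgebraic $\mu$-calculus: concurrent game models (with action set $\Act$ viewed either as a parameter or packaged into the functor) are coalgebras for a functor on $\mathbf{Set}$ whose elements at $\states$ encode the triple $(\Act,\act_w,\out_w)$ at each state; the modality $\brak{\gamma}$ over nexttime goals is then a predicate lifting for this functor, exactly as remarked in the subsection on coalgebras. The semantics of least and greatest fixpoints in $\xlangcga_\mu$ then coincides with the standard coalgebraic fixpoint semantics, so the logic falls within the scope of the generic theorem of \cite{fontaine2010automata}.

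Second, I would plug in the one-step satisfiability bound. Proposition~\ref{p:exptime} shows that the one-step satisfiability problem for $\xlangcga_\mu$ is in $\textrm{NP}$ in the length of the one-step formula, and hence certainly in $\textrm{ExpTime}$, which is the hypothesis required by the theorem of \cite{fontaine2010automata}. Applying that theorem then yields directly that the satisfiability problem for $\xlangcga_\mu$ is in $\textrm{2ExpTime}$.

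The main obstacle I anticipate is not the complexity calculation itself but the bookkeeping needed to check that the notion of one-step satisfiability used in \cite{fontaine2010automata} coincides with the one formalised in this paper via positive one-step sequents and satisfiability constraints $\mathcal{S}$; in particular, one needs to check that the set $\mathcal{S}$ correctly plays the role of the set of "one-step colour profiles" used in the generic framework, and that the reduction from positive one-step formulas to $\mathcal{S}$-satisfiability of positive one-step sequents via Proposition~\ref{p:polytime} and Proposition~\ref{p:c-sat} is precisely what the generic theorem consumes. Once this translation is spelled out, the rest is a direct invocation.
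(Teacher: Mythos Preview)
Your proposal is correct and follows essentially the same approach as the paper: invoke the generic result from \cite{fontaine2010automata} that an $\textrm{ExpTime}$ one-step satisfiability problem yields a $\textrm{2ExpTime}$ satisfiability problem for the full coalgebraic $\mu$-calculus, and feed in Proposition~\ref{p:exptime}. The paper's own proof is a one-line appeal to exactly these two ingredients; your additional remarks about matching up the notion of $\mathcal{S}$-satisfiability with the one-step framework of \cite{fontaine2010automata} are fair due diligence that the paper leaves implicit.
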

As a corollary, we get:
\begin{theorem}
The logic $\cga^+$, (hence, also \cga) is decidable in $\textrm{3ExpTime}$.
\end{theorem}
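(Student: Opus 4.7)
The plan is to reduce satisfiability in $\cga^+$ to satisfiability in $\xlangcga_\mu$ via the translation $t : \mathcal{L}^{\cga^+} \to \mathcal{L}^{\xlangcga_\mu}$ defined in Section \ref{subsec:mu-TLCGA} (and extended to $\cga^+$ as noted at the end of that section), and then invoke Theorem \ref{t:mu-dexptime}. The first ingredient is semantic faithfulness of $t$: for every concurrent game model $\gmod$, state $s$ and formula $\varphi \in \mathcal{L}^{\cga^+}$, one has $\gmod,s \models \varphi$ iff $\gmod,s \models t(\varphi)$. This is proved by induction along the wellfounded order $\prec$, with the crucial clauses justified by Theorem \ref{p:fixpoint-property} together with Propositions \ref{prop:typeone} and \ref{prop:typetwo} (and their $\cga^+$-analogues). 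Consequently $\varphi$ is satisfiable in $\cga^+$ if and only if $t(\varphi)$ is satisfiable in $\xlangcga_\mu$.

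The second, and more delicate, ingredient is a size bound for $t(\varphi)$. The only potentially problematic case is the translation of a long-term temporal $\brak{\gamma}$, which is wrapped in a $\mu z$- or $\nu z$-quantifier around $t(\indf{\gamma}{z})$. Now $\indf{\gamma}{z}$ contains the subformula $\brak{\gammaof{z}}$, whose support has up to $2^{|\mathrm{supp}(\gamma)|}$ coalitions, each assigned a goal of the form $\X\!\bigl(\cdots \wedge \brak{(\gamma\vert_C)\ugpart}\bigr)$, with $\brak{(\gamma\vert_C)\ugpart}$ itself subject to recursive translation; it also contains up to $|\mathrm{supp}(\gamma)|$ subformulas of the form $\brak{\gamma \setminus C}$ coming from $\mathsf{Finish}(\gamma)$. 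Since each such recursive call strictly decreases the size of the underlying support (bounded by $2^{|\Agt|}$) and the depth of nested $\brak{\cdot}$-operators inherited from $\varphi$ is at most $|\varphi|$, a routine unwinding along $\prec$ yields a bound of the form $|t(\varphi)| \le 2^{p(|\varphi|)}$ for some polynomial $p$, with constants depending only on the fixed set of agents $\Agt$. Moreover $t(\varphi)$ is computable in time $2^{O(p(|\varphi|))}$.

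The final step is to combine these ingredients with Theorem \ref{t:mu-dexptime}. To decide satisfiability of $\varphi \in \cga^+$ we compute $t(\varphi)$ in time $2^{O(p(|\varphi|))}$ and then run the $\textrm{2ExpTime}$ decision procedure for $\xlangcga_\mu$-satisfiability on the input $t(\varphi)$. The latter runs in time $2^{2^{q(|t(\varphi)|)}} \le 2^{2^{q(2^{p(|\varphi|)})}} \le 2^{2^{2^{r(|\varphi|)}}}$ for some polynomial $r$, which is a $\textrm{3ExpTime}$ bound in $|\varphi|$. The decidability of \cga\ then follows a fortiori, since \cga\ is a syntactic fragment of $\cga^+$.

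The hard part will be the size analysis in the second step: one must keep careful track of how the exponential fan-out in $\gammaof{z}$ (of size up to $2^{|\mathrm{supp}(\gamma)|}$) interacts with the recursive translation calls on $\brak{(\gamma\vert_C)\ugpart}$ and $\brak{\gamma\setminus C}$, and verify that the recursion depth along $\prec$ stays polynomially bounded in $|\varphi|$ so that iterating this fan-out only once per nesting level produces at most an exponential blow-up rather than a double-exponential one. Once this bookkeeping is done, the reduction to Theorem \ref{t:mu-dexptime} and the resulting $\textrm{3ExpTime}$ bound are immediate.
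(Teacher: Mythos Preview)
Your proposal is correct and follows essentially the same approach as the paper: translate $\cga^+$ into $\xlangcga_\mu$ via the semantics-preserving map $t$, observe that this causes at most a single-exponential blow-up in formula size, and then invoke Theorem~\ref{t:mu-dexptime}. The paper's own proof is just a one-line statement of exactly this, so your expanded discussion of the size analysis (the ``hard part'' you flag) simply fills in detail that the paper leaves implicit.
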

\begin{proof}
The translation from $\cga^+$ to $\xlangcga_\mu$ increases the size of a formula by at most a single exponential, so the result follows from Theorem \ref{t:mu-dexptime}.
\end{proof}
We do not know whether the $\textrm{2ExpTime}$ bound given by Theorem \ref{t:mu-dexptime} is tight. Many variants of the $\mu$-calculus share the same $\textrm{ExpTime}$-complete complexity of the satisfiability problem as the standard modal $\mu$-calculus, as demonstrated in  \cite{cirstea2011exptime}. However, the conditions under which the general $\textrm{ExpTime}$-bound of \cite{cirstea2011exptime} holds are not trivial to verify; in fact we believe they may fail for the language $\xlangcga_\mu$. Furthermore, even if the satisfiability problem for $\xlangcga_\mu$ happens to be in $\textrm{ExpTime}$, this bound does not immediately transfer to the language $\langcga$, since our translation of $\langcga$ into $\xlangcga_\mu$ can produce an exponential blowup in the formula size. 
In summary, the $\textrm{2ExpTime}$-bound for satisfiability in $\xlangcga_\mu$ stated above is the best we can currently claim for sure about the complexity of that satisfiability problem.  The problem of determining the precise complexity of satisfiability, both for $\langcga$ and for $\xlangcga_\mu$, may very well turn out to be a difficult one. We leave this as a question for future research.

%%%%%%%%%%%%%%%%%%%%%%%%%%%%%%%
%%%%%%%%%%%%%%%%%%%%%%%%%%%%%%%
\section{Concluding remarks}
\label{sec:concluding}
%%%%%%%%%%%%%%%%%%%%%%%%%%%%%%%
%%%%%%%%%%%%%%%%%%%%%%%%%%%%%%%

The present work falls in the line of research employing formal logical methods for modelling, expressing, and reasoning about strategic interactions in multi-agent systems, and in particular multi--player games, initiated with introduction of logics such as  \CL,  \ATL, and Strategy Logic. The coalitional goal assignment operator $\brak{\cdot}$  introduced here covers as special cases the modal operators for strategic abilities featuring in \CL and \ATL. Furthermore, whereas these strategic operators assume purely adversarial behaviour of the opposing agents and coalitions, and express unconditional ability of the proponent coalition to achieve its objective against any such objective, the  operator $\brak{\cdot}$ expresses a natural combination of cooperative and non-cooperative interactions which is more common and realistic in `real-life' multi-agent scenarios. Whereas this operator is already quite expressive, it formalises but one general and important pattern of such interaction. Other such patterns, formalising variants of \emph{conditional} strategic reasoning, have been proposed and studied in \cite{LORIVII-GorankoJu}, and certainly more are currently being identified and studied. The patterns of strategic interaction formalised by $\brak{\cdot}$ also enable the expression of new versions of socially relevant solution concepts, such as co-equilibrium, discussed in Section \ref{sec:applications}, opening new perspectives towards formal analysis of multi-agent strategic interaction.

On the more technical aspects of ours work, we note that, while the most important logical questions about \cga have been resolved here, there is still substantial technical follow-up work to do, including complete axiomatizations of important variations and extensions, such as $\cga^+$ and possibly the full  $\xlangcga_\mu$, as well as development of practically efficient tableau-based algorithmic methods for deciding satisfiability and well as for solving the relevant model checking problems. In particular, identifying the exact complexities of these problems, for which reasonable upper bounds have been established in  Section \ref{sec:FMP}, is yet to be completed.  
For solving these problems and exploring further research directions, we hope that the interaction with fragments of Strategy logics, initiated in Section \ref{subsec:ST2SL}, will prove fruitful both ways. In particular, the distinction between path-based and play-based semantics established in Section \ref{subsec:CGAsemantics-variations} suggests that  the latter type of semantics is worth exploring in the context of Strategy logics, too.
 
 Lastly, we note that in this paper we have mainly explored the theoretical foundations and the purely logical and computational aspects of the logics \cga, $\cga^+$ , and 
 $\xlangcga_\mu$, whereas their potential for applications to game-theory and, more generally, to formal modelling  and reasoning about multi-agent strategic interaction has  only been indicated in Section \ref{sec:applications}, but is yet to be unfolded in future work.

%\begin{acks}
%We thank the referees for the corrections and critical comments which contributed to essential improvements of the content of the paper. 
%\end{acks}

%\bibliographystyle{ACM-Reference-Format}
%\bibliography{VG-MAS}  

%%% -*-BibTeX-*-
%%% Do NOT edit. File created by BibTeX with style
%%% ACM-Reference-Format-Journals [18-Jan-2012].

\appendix

\section{Appendix}

\subsection{Proof of Proposition \ref{prop:typeone}}

\begin{proof}
We prove each implication of the equivalence separately.

\medskip 
\textbf{Right to left:} 
By Theorem \ref{p:fixpoint-property}, the truth set of $\brak{\gamma}$ is a fixpoint of the operator defined by the formula $\indf{\gamma}{z}$. Therefore,  
$\mu z. \indf{\gamma}{z} \to \brak{\gamma}$ is semantically valid.

\medskip 
\textbf{Left to right:}
 Conversely, suppose that $\gmod,v \sat \brak{\gamma}$. Since $\gamma$ is a long-term temporal goal assignment of type $\until$, we can assume that its support is $\fac = \{C_1,...,C_n,D_1,...,D_m\}$, and that it maps each $C_i$ to $\alpha_i \until \beta_i$, and maps each $D_j$ to $\always \chi_j$. 
Then $\gamma$ is witnessed by some strategy profile $\strprof$, such for each $i \in \{1,...,n\}$, the formula $\alpha_i \until \beta_i$ holds on every computation path in $\paths(v,\strprof,C_i)$ and for each $j \in \{1,...,m\}$, the formula $\always \chi_j$ holds on every computation path in $\paths(v,\strprof,D_j)$. We define a set $T$ of histories as follows: we set $(v_0\tau_0 v_1 ... v_{h - 1}\tau_{h - 1} v_h) \in T$ if and only if:

(i) \ $v_0 = v$ and the word $v_0  ... v_h$ is an initial segment of some computation path in $\paths(v,\strprof,\bigcup \fac)$, and

(ii) \ there are no indices $i,j$ such that $0 \leq j < h$ and $0 \leq i \leq n$ and 
$\gmod, v_j \sat \beta_i$.

\smallskip
It is clear that $v \in T$, and that $T$ can be viewed as a tree rooted at $v$ where the successors of a node $\vec{w}$ in $T$ are the elements of $T$ of the form $\vec{w}u$ for some $u$. Furthermore, since 
$\strprof, v \Vvdash \gamma$ 
and $C_i \subseteq \bigcup \fac$ for each $i \in \{1,...,n\}$, the formula $\alpha_i \until \beta_i$ holds on each computation path in $\paths(v,\strprof,\bigcup \fac)$. Hence, the tree $T$ is \emph{Noetherian}, i.e., it has no infinite branches. 
This means that we may reason by \emph{bar induction}: 
to show that a property $P(x)$ holds for every element of $T$, we show that $P(l)$ holds if $l$ is a leaf, and that if $P(\vec{w}\cdot w')$ holds for every child $\vec{w} \cdot w' \in T$ of some non-leaf node $\vec{w}$ of $T$, then $P(\vec{w})$ holds as well. 

We shall prove the following statement by bar induction: for every word $\vec{w}$ in $T$ with last element $u$, we have 
$\gmod, u \sat \mu z. \indf{\gamma}{z}$. 
It follows eventually that 
$\gmod, v \sat \mu z. \indf{\gamma}{z}$, as required. 
\\\\
\textbf{Case 1:} $\vec{w}$ is a leaf.
Since $\vec{w}$ is a leaf  in $T$, the last element $u$ of $\vec{w}$ must satisfy one of the formulae $ \beta_1,...,\beta_n$, since otherwise the extension $\vec{w} \cdot \out((\strprof(\aga)(\vec{w})_{\aga \in \Agt}), u)$
(where $(\strprof(\aga)(\vec{w})_{\aga \in \Agt})$ is the action profile assigned by 
$\strprof$ at $u$)
would be a child of $\vec{w}$ in $T$. Furthermore, none of these formulae are true in any elements of $\vec{w}$ \emph{except} the last one, since $\vec{w} \in T$. Since, for every computation path $\pi \in \paths(u,\strprof,\bigcup  \fac)$, 
the infinite word $\vec{w} \cdot \pi$ belongs to $\paths(v,\strprof,\bigcup \fac) \subseteq \paths(v,\strprof,C_i)$, it  follows that for each $i \in \{1,...,n\}$ the formula $\alpha_i \until \beta_i$ holds for all computation paths in $\paths(u,\strprof,C_i)$. Likewise, for each $j \in \{1,...,m\}$ the formula $\always \chi_j$ holds for all computation paths in $\paths(u,\strprof,D_j)$. Let us fix some $\beta_i$ that holds at $u$. We get:
\[
\gmod, u \sat \beta_i \wedge \brak{\gamma \setminus C_i}
\]
where  the conjunct on the right is witnessed by the strategy profile $\strprof$. Since this formula is in $\mathsf{Finish}(\gamma)$, we have 
$\gmod, u \sat \mu z. \indf{\gamma}{z}$ as required. 
\\\\
\textbf{Case 2:} $\vec{w}$ is not a leaf, and the IH holds for all of its children in $T$.

Since $\vec{w}$ is not a leaf  in $T$, it has some child in $T$. By definition of $T$ this means that the formulae $\beta_1,...,\beta_n$ are false on every state in $\vec{w}$, including its last element, which we denote by $u$.
Let us pick any $B \subsetneq \bigcup \fac$.
It is not hard to show 
that the following formula 
$$\brak{B_1 \gass  \gamma(B_1),...,B_k \gass  \gamma(B_k)},$$
where $B_1,...,B_k$ are the coalitions in $\fac$ that are contained in $B$,  
holds for each $u' \in \out(u,\strprof,B)$, witnessed by the strategy profile $\strprof$. The key thing to note is that, if $C_i \in \{B_1,...,B_k\}$ for some $i \in \{1,...,n\}$, then for every $u' \in \out(u,\strprof,B)$ and every path $\pi \in \paths(u',\strprof,C_i)$, the path $\vec{w} \cdot \pi$ is in  $\paths(v,\strprof,C_i)$ and therefore satisfies the path formula $\alpha_i \until \beta_i$. 
But, since $\beta_i$ was false everywhere in $\vec{w}$, this means that the path formula $\alpha_i \until \beta_i$ holds on the computation path $\pi$ as well. 

It now follows that for all $B \subsetneq \bigcup \fac$,  
the path formula 
$$\nexttime \brak{B_1 \gass  \gamma(B_1),...,B_k \gass  \gamma(B_k)},$$
where $B_1,...,B_k$ are as above, 
holds for each computation path in $\paths(u,\strprof,B)$.  
Furthermore, by the bar induction hypothesis the path formula $\nexttime \mu z.  \indf{\gamma}{z}$ holds for all $\pi \in \paths(u,\strprof,\bigcup \fac)$, since for every such path $u s_0 s_1 s_2...$ the word $\vec{w} \cdot s_0$ is a child of $\vec{w}$ in $T$. 
Putting all this together, we get:
\[
\gmod, u \sat  \brak{\diffof{\gamma}[\bigcup \fac  \gass   \nexttime \mu z . \indf{\gamma}{z}]}
\]
witnessed by the strategy profile $\strprof$. 
Besides, each $\alpha_i$ for $i \in \{1,...,n\}$ and each $\chi_j$ for $j \in \{1,...,m\}$ holds at $u$, since $\vec{w} \in T$, so we get:
\[
\gmod, u \sat \big( \alpha_1 \wedge... \wedge \alpha_n \wedge \chi_1 \wedge ... \wedge \chi_m \wedge \brak{\diffof{\gamma}[\bigcup \fac  \gass   \nexttime \mu z . \indf{\gamma}{z}]}\big). 
\]
This is a disjunct of the unfolding of the fixpoint formula 
$\mu z. \indf{\gamma}{z}$, and so we get 
$\gmod, u \sat \mu z.  \indf{\gamma}{z}$, as required. 
This concludes the proof.  
\end{proof}

\subsection{Proof of Proposition \ref{prop:typetwo}}
\begin{proof}
 Given a model 
$\gmod = (\states,\Act,\gmap,\out,V)$, 
we prove separately each implication of the equivalence in $\gmod$.
The proof structure is dually analogous to the proof of Proposition \ref{prop:typeone}, but the argument is different.

\medskip 
\textbf{Left to right:} 
By Theorem \ref{p:fixpoint-property}, the truth set of $\brak{\gamma}$ is a fixpoint of the operator defined by the formula $\indf{\gamma}{z}$. Therefore,  
$\brak{\gamma} \to \nu z. \indf{\gamma}{z}$ is semantically valid.

\medskip 
\textbf{Right to left:} 
Let $\theta =  \nu z. \indf{\gamma}{z}$ and suppose $\gmod, v \sat  \theta$. 
Consequently, we get $\gmod, v \sat  \indf{\gamma}{\theta}$,  
where $\indf{\gamma}{\theta} = \chi_1 \wedge... \wedge  \chi_m \wedge 
\brak{\gammaof{\theta}}$, 
 where $\gamma$ is supported by  $\fac = \{ D_1,...,D_m \}$,  respectively  mapped to $\always \chi_1,...,\always \chi_n$.

Thus, $\gmod, v \sat \chi_i$ for each $i$ and 
$\gmod, v \sat  \brak{\gammaof{\theta}}$. 

So, there exists a strategy profile $\strprof_0$ such that 
$\strprof_0, v \Vvdash \gammaof{\theta}$, 

hence, for each $D_i \in \fac$ and every play $\pi \in \Plays(v,\strprof_0,D_i)$, we have 
$\gmod, \pi \sat \gammaof{\theta}(D_i)$. 

Recall that for each $C \subsetneq \bigcup \fac$,  the respective goal of $\gammaof{\theta}$ is 

$\gammaof{\theta}(C) =  \diffof{\gamma}(C) = \X \brak{ \big(D_j \gass \always \chi_j \big)_{D_j \subseteq C}}$ 

and note that  $ \big(D_j \gass \always \chi_j \big)_{D_j \subseteq C}$ is precisely 
$\gamma\vert_{C}$. 
Thus, for each such $C$ and every state $w \in \Out[v,\strprof_0\vert_{C}]$ we have 
$\gmod, w \sat \brak{\gamma\vert_{C}}$.  
Let $\strprof^{w,C}$ be a strategy profile witnessing 
$\gamma\vert_{C}$ at w.

Now we will define a  strategy profile $\strprof$ that witnesses $\gamma$ at $v$. 
Intuitively, for each player $\aga$, $\strprof$ will combine $\strprof_0$ at  $v$ with the strategies $\strprof^{w,C}$ for $\aga \in C$,  
ensured to exist at the respective  outcome states $w$,    
 applied on the respective histories 
 passing through $w$.  
For any player $\aga \in \bigcup \fac$ 
we define $\strprof(\aga)$ as follows, on any history $h$ in $\gmod$ starting at $v$.

\begin{enumerate}
\item If $h$ has length 0, i.e. $h = v$, 
we define $\strprof(\aga, h) := \strprof_0(\aga, v)$. 

\item  
Let $h =  v \actprof h'$, where $h' = v_1 \actprof_1 ... v_{n -1 }\actprof_{n - 1} v_n$.   
We first compare the action profile $\actprof$ with the action profile $\actprof_0 = \strprof_0(v)$ prescribed by $\strprof_0$ at $v$. Let $C_{\actprof}$ be the set of players in $\bigcup \fac$ 
 whose actions in $\actprof$ and $\actprof_0$ are the same. 
 Then $v_1 \in \Out[v,\strprof_0\vert_{C_{\actprof}}]$, so 
$\gmod, v_1 \sat \brak{\gamma\vert_{C_{\actprof}}}$.  
  If $\aga \notin C_{\actprof}$, then $\strprof(\aga, h)$ is defined arbitrarily. 
 Suppose $\aga \in C_{\actprof}$. 
 
 If $C_{\actprof} \subsetneq \bigcup \fac$, then we use the strategy $\strat_{\aga}$ for player $\aga$ from the strategy profile 
 $\strprof^{v_1,C_{\actprof}}$ to define $\strprof(\aga, h) := \strat_{\aga}(h')$. 
 
 In the case when $C_{\actprof} = \bigcup \fac$, we have that 
 $\gammaof{\theta}(\bigcup \fac) =  \X \theta$, hence 
 $\gmod, v_1 \sat \theta$.  
In this case, we define $\strprof(\aga, h): = \strprof_1(\aga, h')$, where $\strprof_1$ is the strategy recursively defined by the same procedure on paths starting from
$v_1$. Note that this definition is correct because for every path $h$, being finite, the procedure will eventually reach the case (1) of a subpath of length 0, where the strategy is defined explicitly.  
\end{enumerate}

For all other histories $h$, the action $\strprof(\aga, h)$ is defined arbitrarily. 
  
 \medskip 
 By virtue of the construction of $\strprof$, for $D_j \in \fac$ and 
 every play $\pi \in \Plays(v,\strprof,D_j)$, we have $\gmod, \pi \sat \always \chi_j $. The proof is direct for all $D_j \subsetneq \bigcup \fac$, while for 
 $D_j = \bigcup \fac$, if applicable, it  follows from the fact,  proved by induction on $n$, that $\gmod, v_n \sat \chi_j$, where $v_0 = v$ and 
$\pi=  v_0 \actprof_0 v_1 \actprof_1 v_2 \actprof_2...$. 

Therefore, $\strprof, v \Vvdash \gamma$, hence $\gmod, v \sat  \brak{\gamma}$. 

 \end{proof}

\subsection{Proof of Proposition \ref{prop:closure}}
\begin{proof} 
The proof can be done by several nested inductions: first, on the number of formulae in $\Phi$, then -- in the inductive step -- by structural induction on the additional formula $\varphi$. The only non-trivial point now is when 
$\varphi = \brak{\gamma}$, where $\gamma$ is a long-term temporal goal assignment, to show that $\varphi$ adds finitely many formulae to $\ecl(\Phi)$ when recursively taking components.  Let $\fac$ be the support of $\gamma$. 
We can assume that $\gamma \in \typeone$, as the case of $\gamma \in \typetwo$ is similar but simpler. 
We will prove the claim by a 3rd nested induction on the number of coalitions in  
$\fac = \{C_1,...,C_n,D_1,...,D_m\}$ (we can assume $m > 0$) 
Let 
$
\gamma(C_1) = \alpha_1 \until \beta_1,..., \gamma(C_n) = \alpha_n \until \beta_n
$
and 
$
\gamma(D_1) = \always \chi_1,..., \gamma(D_m) = \always \chi_m
$. 
Then the component of $\gamma$ added to $\ecl(\Phi)$ is (by Proposition \ref{prop:unfold})
$\indf{\gamma}{\brak{\gamma}} = \unf{\gamma} = \bigvee \mathsf{Finish}(\gamma) \vee \bigg(\bigwedge \ugam \wedge \bigwedge \agam \wedge \brak{\diffof{\gamma}}\bigg)$. 
By the (nested) inductive hypothesis in the structural induction on $\varphi$ and the innermost inductive hypothesis, every disjunct 
$( \beta_i \wedge \brak{\gamma \setminus C_i}) \in \mathsf{Finish}(\gamma)$ 
adds only finitely many new components to $\ecl(\Phi)$. Again by the inductive hypothesis on the structure of $\varphi$, all subformulae $\alpha_i$ and $\chi_j$, hence all formulae in $\ugam$ and $\agam$, add only finitely many new components, too. 
Finally, note that $\diffof{\gamma}$
is a goal assignment and all goals in it are either $\X$-prefixed goals in $\gamma$ or $\X\gamma$, hence 
$\brak{\diffof{\gamma}}$ 
only adds finitely many new components, too.  
The case of negated goal assignment $\varphi = \lnot \brak{\gamma}$, is completely analogous. That completes all inductive steps of the nested inductions, and the proof itself. 
\end{proof}

\subsection{Proof of Proposition \ref{prop:network-sat}}
\begin{proof}
We prove, by structural induction on all formulae $\varphi \in \Phi$, 
 that for every $u \in T$, $\varphi \in L(u)$ iff $\gmod(\network),u\sat \varphi$. The clauses for propositional variables and boolean connectives are standard, so we omit them. Since all formulae in $\Phi$ are assumed to be in normal form, for each formula in $\Phi$ of the shape $\brak{\gamma}$, the goal assignment $\gamma$ is either nexttime  or long-term temporal. 
The induction step in the first case follows from the fact that the network $\network$ is assumed to be one-step coherent and the inductive hypothesis (IH). 
Indeed, since $L(u)$ is a $\Phi$-atom, exactly one of $\brak{\gamma}$ and $\neg \brak{\gamma}$ is in $L(u)$. Then, since $\gamma$ is nexttime and $\Phi$ is closed, for every coalition $C$, $\gamma(C) = \nexttime \psi$ where $\psi \in \Phi$. 
Now, if $\brak{\gamma} \in L(u)$ then the marking $L$ verifies $\gamma$ at $u$ and,  by the IH applied to all $\gamma(C)$, we obtain that $\gmod(\network),u \sat \brak{\gamma}$. 
Likewise, if $\neg \brak{\gamma} \in L(u)$, then the marking $L$ refutes $\gamma$ at $u$ and, by the IH applied to all $\gamma(C)$, we obtain that $\gmod(\network),u \not\sat \brak{\gamma}$. This completes the case. 

Now, we focus on the case where $\gamma$ is long-term temporal. The claim will be proved by a sub-induction on the size (number of coalitions) of the support $\fac$ of $\gamma$. 

\paragraph{Base case:} $\fac = \emptyset$. Then $\gamma$ is the trivial goal assignment, and since a perfect network has no leaves this case is trivial. 

\paragraph{Induction step:} We now suppose that $\fac = \{C_1,...,C_{n}\}$ is of size $n$, and that the induction hypothesis holds for all $\gamma'$ with support of size $< n$. We divide the proof that the induction hypothesis holds for $\gamma$ into two sub-cases, depending on the type of $\gamma$. 
We will first prove the easier sub-case. 

\paragraph{Case:} $\gamma$ is in $\typetwo$. 
Let $\gamma$ be supported by $\fac = \{D_1,...,D_m\}$ and defined by: 
$
\gamma(D_1) = \always \chi_1,..., \gamma(D_m) = \always \chi_m. 
$
We claim that the set
\[
 \lset{\brak{\gamma}}{\network}  := \{u \in T \mid \brak{\gamma} \in L(u)\}
\]
is the greatest post-fixpoint $\nu f$ of the monotone map $f$ induced by $\indf{\gamma}{z}$ in $\gmod(\network)$, 
defined by 
\[
f(Z) = \big\{u \in T \mid  
\gmod(\network), u \sat_{[z \mapsto Z]} \indf{\gamma}{z} \big\}
\]
\\ 
By Proposition \ref{prop:typetwo}, the greatest post-fixpoint $\nu f$ defined above is
{equal} 
to $\tset{\brak{\gamma}}_{\gmod(\network)}$, whence the case will follow. 

For the inclusion $\lset{\brak{\gamma}}{\network} \subseteq \nu f$, we reason by co-induction. That is, we prove that the set $\lset{\brak{\gamma}}{\network}$ is a post-fixpoint of $f$.
Recall that 
$\indf{\gamma}{z} = \chi_1 \wedge ... \wedge \chi_m \wedge \brak{\gammaof{z}}$,  where 
$\gammaof{\phi}$ denotes $\diffof{\gamma}[\bigcup \fac  \gass  \X \phi]$.
To show that $\lset{\brak{\gamma}}{\network}$ is a post-fixpoint of $f$, suppose that $u \in \lset{\brak{\gamma}}{\network}$, i.e. that $\brak{\gamma} \in L(u)$. 
Since $\Phi$ is closed under taking components, 
$\indf{\gamma}{\brak{\gamma}} \in \Phi$, and hence 
 $\brak{\gammaof{\brak{\gamma}}}  \in \Phi$  
 whenever $\brak{\gamma} \in \Phi$.  By the post-fixpoint axiom $\mathsf{Fix}$ and  Proposition \ref{prop:unfold} 
we have $\indf{\gamma}{\brak{\gamma}} \in L(u)$, as well.

Since $L(u)$ is an atom, it follows that $\chi_i  \in L(u)$ for each $i=1,..,n$  and $\brak{\gammaof{\brak{\gamma} }}\in L(u)$. 
Then, by the inductive hypothesis, it follows (note that
$z$ does not occur free in  $\chi_i$) that  $\gmod(\network), u \sat_{[z  \mapsto \lset{\brak{\gamma}}{\network}]} \chi_i$  for each $i=1,..,n$. 
{Furthermore, using one-step coherence of the network $\network$, and since $\brak{\gammaof{\brak{\gamma}}} \in L(u)$, we have  $\gmod(\network), u \sat_{[z \mapsto \lset{\brak{\gamma}}{\network}]}  \brak{\gammaof{z}}$. To see this, $\gammaof{\brak{\gamma}}$ is a nexttime goal assignment, so one-step coherence guarantees that there is some $\strprof \in \Pi_{a \in \Agt}$ such that, for every $C$ in the support of $\gammaof{\brak{\gamma}}$ such that $\gammaof{\brak{\gamma}}(C) = \nexttime \psi$ and for every strategy profile $\strprof'$ with $\strprof' \sim_C \strprof$, we have $\psi \in L(\out(\strprof'),u)$. 
In particular, this means that for each $C \subsetneq \bigcup \fac$ in the support of $\gammaof{\brak{\gamma}}$, each $D_i \subseteq C$, and each $\strprof' \sim_C \strprof$, we have $\chi_i \in L(\out(\strprof',u))$, and so by the induction hypothesis $\gmod(\network),\out(\strprof',u) \sat_{[z  \mapsto \lset{\brak{\gamma}}{\network}]} \chi_i$. Also, for each $\strprof' \sim_{\bigcup \fac} \strprof$, we have $\brak{\gamma} \in L(\out(\strprof',u))$, and hence $\gmod(\network),\out(\strprof',u) \sat_{[z  \mapsto \lset{\brak{\gamma}}{\network}]} z$ by definition of $\lset{\brak{\gamma}}{\network}$. Putting these facts together we get $\gmod(\network), u \sat_{[z \mapsto \lset{\brak{\gamma}}{\network}]}  \brak{\gammaof{z}}$ as claimed. 
}
We now get
$\gmod(\network), u \sat_{[z \mapsto \tset{\brak{\gamma}}_{L}]} \indf{\gamma}{z}$, 
hence $u \in f(\lset{\brak{\gamma}}{\network})$. 
Thus, $\lset{\brak{\gamma}}{\network}$ is a post-fixpoint of $f$.

For the converse inclusion $\nu f \subseteq \lset{\brak{\gamma}}{\network}$, we reason contrapositively: suppose that $v$ is some node in $T$ that does not belong to $\lset{\brak{\gamma}}{\network}$. This means that $\brak{\gamma} \notin L(v)$, hence $\neg \brak{\gamma} \in L(v)$ since $L(v)$ is an atom over $\Phi$, and $\Phi$ is a closed set of formulas and hence closed under single negations. We will prove that $v \notin \nu f$. Since $\nu f$ is the intersection of its approximants $f^\xi(T)$ where $\xi$ ranges over ordinals, it suffices to find a \emph{finite} ordinal $k < \omega$ such that $v \notin f^k(T)$. Here we recall that $f^k(T)$ is defined inductively by $f^0(T) = T$, $f^{i + 1} = f(f^i(T))$.  

Note that the formula $\neg \brak{\gamma}$ is a $\typetwo$-eventuality. 
So, since $\network$ was assumed to be a perfect network, there is some $k < \omega$ for which this eventuality is partially fulfilled in $k$ steps at $v$.  Hence it suffices to prove, by induction on $k$, that for all $w \in \lset{\neg \brak{\gamma}}{\network}$, if the eventuality $\neg \brak{\gamma}$ is fulfilled in $k$ steps at $w$ then $w \notin f^{k + 1}(T)$. We refer to the induction hypothesis on $k$ as the \emph{innermost} induction hypothesis. We refer to the induction hypothesis of the structural induction on complexity of formulas as the \emph{outermost} induction hypothesis.

For the base case of the innermost induction, when $k = 0$, 
let $w$ be some element of 
$\lset{\neg \brak{\gamma}}{\network}$
at which the eventuality $\neg\brak{\gamma}$ is partially fulfilled in 0 steps.
If $w \in  f^1(T) = f(T)$, then $\gmod(\network),w \sat_{[z \mapsto T]} \indf{\gamma}{z}$, hence $\gmod(\network),w \sat \chi_i$ for each $\chi_i$. By the outermost induction hypothesis on $\chi_i$, each $\chi_i$ is in $L(w)$. Since the eventuality $\neg \brak{\gamma}$ is partially fulfilled in $0$ steps at $w$ and 
 $w \in  \lset{\neg \brak{\gamma}}{\network}$, we have $\overline{\chi_i} \in L(w)$ for some $\chi$, which contradicts the consistency of $L(w)$. Hence $w  \notin f(T)$.

For the induction step, suppose $k = j + 1$ and the innermost induction hypothesis holds for $j$. Let $w$ be some element of $\lset{\neg \brak{\gamma}}{\network}$
at which the eventuality $\neg\brak{\gamma}$ is partially fulfilled in $k$ steps. 
Suppose, for a contradiction, that 
$w \in f^{k + 1}(T) = f(f^k(T))) = f(f^{j + 1}(T))$. Then $\gmod(\network),w \sat_{[z \mapsto f^{j + 1}(T)]} \indf{\gamma}{z}$. 
Let $F^j$ 
be the set of all 
$x \in \lset{\neg \brak{\gamma}}{\network}$  
such that  $\neg \brak{\gamma}$ is partially fulfilled in $j$ steps at $x$.  By the innermost induction hypothesis on $j$, we get $f^{j + 1}(T)\cap F^j = \emptyset$. Since $\neg\brak{\gamma}$ is partially fulfilled in $k = j + 1$ steps at $w$, 
either $\overline{\chi_i} \in L(w)$ for some $\chi_i$ 
or there exists a marking $\mrk$ that refutes $\brak{\diffof{\gamma}}$ at $w$,  and such that for all $w' \in T$ 
such that $w'$ is a child of $w$,  $w' \in F^j$ whenever $\neg \brak{\gamma} \in \mrk(w')$. In the former case we immediately get a contradiction. So we focus on the latter case.

{ 
Since $\gmod(\network),w \sat_{[z \mapsto f^{j + 1}(T)]} \indf{\gamma}{z }$, we have $\gmod(\network),w \sat_{[z \mapsto f^{j + 1}(T)]} \brak{\gammaof{z}}$. Let $\strprof$ be a witnessing strategy profile for $\gammaof{z}$ at $w$. Then, for every $\strprof' \sim_{\bigcup \fac} \strprof$ we have $\gmod(\network),\out(\strprof',w) \sat_{[z \mapsto f^{j + 1}(T)]}  z$, i.e. $\out(\strprof',w)  \in f^{j + 1}(T)$. Furthermore, for each $C \subsetneq \bigcup \fac$ 
in the support of $\gammaof{z}$, each $D_i \subseteq C$ and each $\strprof' \sim_{C} \strprof$ we have $\gmod(\network),\out(\strprof',w) \sat_{[z \mapsto f^{j + 1}(T)]}  \chi_i$, and so $\chi_i \in L(\out(\strprof',w))$ by the outermost induction hypothesis on $\chi_i$. On the other hand, since the marking $\mrk$ refutes  $\brak{\diffof{\gamma}}$ at $w$ there must be some  $C^*$ in the support of $\diffof{\gamma} = \gammaof{\brak{\gamma}}$ and some $\strprof' \sim_{C^*} \strprof$ such that $\overline{\psi} \in \mrk(\out(\strprof',w))$, where $\gammaof{\brak{\gamma}}(C^*) = \nexttime \psi$. If $C^* = \bigcup \fac$ 
then $\psi = \brak{\gamma}$ so $\overline{\psi} = \neg \brak{\gamma}$, so $\neg \brak{\gamma} \in \mrk(\out(\strprof',w))$. Recall that $\mrk$ was such that  $w' \in F^j$ whenever $\neg \brak{\gamma} \in \mrk(w')$, so we get $\out(\strprof',w) \in F^j$. But, since $\strprof' \sim_{\bigcup \fac} \strprof$, we must also have $\out(\strprof',w) \in f^{j+1}(T)$, which is a contradiction since $f^{j + 1}(T)\cap F^j = \emptyset$. On the other hand, if $C^* \subsetneq \bigcup \fac$, then, since $C^*$ is in the support of $\diffof{\gamma}$, there must be some $D_i$ in the support of $\gamma$ with $D_i \subseteq C^*$, and so $\chi_i$ is a conjunct of $\psi$ and therefore inconsistent with $\overline{\psi} \in \mrk(\out(\strprof',w)) \subseteq L(\out(\strprof',w))$. But, since $\strprof'\sim_{C^*} \strprof$, we get $\chi_i \in L(\out(\strprof',w))$, which is a contradiction since $L(\out(\strprof',w))$ is consistent by assumption. So, in either case we get a contradiction,  hence we have shown that $w \notin f^{j + 1}(T)$, as desired.}

\paragraph{Case:} $\gamma$ is in $\typeone$. 
The argument is similar to the above, but somewhat more complicated. 
Let $\gamma$ be supported by $\fac = \{C_1,...,C_n,D_1,...,D_m\}$ and defined by: 
\[
\gamma(C_1) = \alpha_1 \until \beta_1,..., \gamma(C_n) = \alpha_n \until \beta_n, 
\]
and
\[
\gamma(D_1) = \always \chi_1,..., \gamma(D_m) = \always \chi_m. 
\]
Since $\gamma$ is in $\typeone$ we have $\{C_1,...,C_n\} \neq \emptyset$. 

We claim that the set
\[
 \lset{\brak{\gamma}}{\network}  := \{u \in T \mid \brak{\gamma} \in L(u)\}
\]
is the least pre-fixpoint $\mu f$ of the monotone map $f$ induced by $\indf{\gamma}{z}$ in $\gmod(\network)$, 
defined by 
\[
f(Z) = \big\{u \in T \mid  
\gmod(\network), u \sat_{[z \mapsto Z]} \indf{\gamma}{z} \big\}
\]
\\ 
By Proposition \ref{prop:typetwo}, the least pre-fixpoint $\mu f$ is equivalent to $\tset{\brak{\gamma}}_{\gmod(\network)}$, whence the case will follow. 

For the inclusion $\mu f  \subseteq \lset{\brak{\gamma}}{\network}$, we reason by least fixpoint induction. That is, we prove that the set $\lset{\brak{\gamma}}{\network}$ is a pre-fixpoint of $f$. Recall that
$$
\indf{\gamma}{z} = \bigvee_{1 \leq i \leq n} (\beta_i \wedge \brak{\gamma \setminus C_i}) \vee \big( \bigwedge_{1 \leq i \leq n} \alpha_i \wedge \bigwedge_{1 \leq i \leq m} \chi_i \wedge \brak{\gammaof{z}}\big),
$$  
where $\gammaof{\phi}$ denotes $\diffof{\gamma}[\bigcup \fac  \gass  \X \phi]$.
To show that $\lset{\brak{\gamma}}{\network}$ is a pre-fixpoint of $f$, suppose that $u \in f(\lset{\brak{\gamma}}{\network})$. We need to show that $u \in \lset{\brak{\gamma}}{\network}$, i.e. that $\brak{\gamma} \in L(u)$. By definition of $f$, either there is some $i \in \{1,...,n\}$ for which $\gmod(\network),u \sat_{z \mapsto \lset{\brak{\gamma}}{\network}} \beta_i \wedge \brak{\gamma\setminus C_i}$, or $\gmod(\network),u \sat_{z \mapsto \lset{\brak{\gamma}}{\network}} \alpha_i$ for each $i \in \{1,...,n\} $, $\gmod(\network),u \sat_{z \mapsto \lset{\brak{\gamma}}{\network}} \chi_i$ for each $i \in \{1,...,m\} $, and $\gmod(\network),u \sat_{z \mapsto \lset{\brak{\gamma}}{\network}} \brak{\gammaof{z}}$. 

In the former case, the induction hypothesis on $\beta_i$ gives $\beta_i \in L(u)$, and since the support of $\gamma \setminus C_i$
is smaller than that of $\gamma$,  
 the induction hypothesis for the induction on the size of the support gives $\brak{\gamma \setminus C_i} \in L(u)$.  Since $\Phi$ is a closed set, we have $\beta_i \wedge \brak{\gamma \setminus C_i} \in \Phi$, and since $L(u)$ is an atom we get $\beta_i \wedge \brak{\gamma \setminus C_i} \in L(u)$. It follows, again by closure of $\Phi$ and $L(u)$ being an atom, that $\brak{\gamma} \in L(u)$ as required. 
 
 In the latter case, the induction hypotheses on $\alpha_1,...,\alpha_n$ and $\chi_1,...,\chi_m$ ensure that these formulas are all in $L(u)$. We will show that $\brak{\gammaof{\brak{\gamma}}} \in L(u)$, from which it will follow using the fact that $L(u)$ is an atom and $\Phi$ is closed that $\brak{\gamma} \in L(u)$. 
To prove this, we use that $\gmod(\network),u \sat_{z \mapsto \lset{\brak{\gamma}}{\network}} \brak{\gammaof{z}}$. It suffices to show that the marking $L$ does not refute $\brak{\gammaof{\brak{\gamma}}}$, because by one-step coherence it follows that  $\neg \brak{\gammaof{\brak{\gamma}}} \notin L(u)$ and hence $\brak{\gammaof{\brak{\gamma}}} \in L(u)$. So suppose, for a contradiction, that $L$ refutes $\brak{\gammaof{\brak{\gamma}}}$. Since $\gmod(\network),u \sat_{z \mapsto \lset{\brak{\gamma}}{\network}} \brak{\gammaof{z}}$, there is a witnessing strategy profile $\strprof$ such that $\gmod{\network},\out(\strprof',u)\sat_{z \mapsto \lset{\brak{\gamma}}{\network}} z $, and hence $\brak{\gamma} \in L(\out(\strprof',u))$, for each coalition $\strprof' \sim_{\bigcup \fac} \strprof$, and $\gmod(\network),\out(\strprof',u)\sat_{z \mapsto \lset{\brak{\gamma}}{\network}} \psi$ for each $E \subsetneq \bigcup \fac$ in the support of $\gammaof{z}$ with $\gammaof{z}(E) = \X \psi$. On the other hand, since $L$ refutes $\brak{\gammaof{\brak{\gamma}}}$, there must be some  $E^*$ in the support of $\brak{\gammaof{\brak{\gamma}}}$ (which is the same as the support of $\gammaof{z}$), and some $\strprof' \sim_{E^*} \strprof$ such that $\psi \notin \out(\strprof',u)$ where $\gammaof{\brak{\gamma}}(E^*) = \X\psi$. If $E^* = \bigcup \fac$, then $\psi = \brak{\gamma}$, so $\brak{\gamma} \notin L(\out(\strprof',u))$. But we already know that $\brak{\gamma} \in L(\out(\strprof',u))$ for each $\strprof' \sim_{\bigcup \fac} \strprof$, so this is a contradiction. On the other hand, suppose that $E^* \subsetneq \bigcup \fac$. Then $\psi = \brak{\gamma\vert_{E^*}}$, so $\brak{\gamma \vert_{E^*}} \notin L(\out(\strprof'u))$.  But the support of $\gamma\vert_{E^*}$ is smaller than that of $\gamma$ since $E^* \subsetneq \bigcup \fac$, so the induction hypothesis for the induction on the size of supports applies, and we get $\gmod(\network),\out(\strprof',u)\nsat_{z \mapsto \lset{\brak{\gamma}}{\network}} \brak{\gamma \vert_{E^*}}$. But, since $\gammaof{\brak{\gamma}}(E^*) = \X \brak{\gamma \vert_{E^*}}$ and $\strprof' \sim_{E^*}  \strprof$, we have $\gmod(\network),\out(\strprof',u)\sat_{z \mapsto \lset{\brak{\gamma}}{\network}} \brak{\gamma \vert_{E^*}}$, so we get a contradiction again. Thus, we have shown that $u \in \lset{\brak{\gamma}}{\network}$, as required. Therefore, $\mu f \subseteq \lset{\brak{\gamma}}{\network}$.

Now for the converse inclusion, $\lset{\brak{\gamma}}{\network} \subseteq \mu f$,  suppose $u \in \lset{\brak{\gamma}}{\network}$, i.e. $\brak{\gamma} \in L(u)$. Note that $\brak{\gamma}$ is a \typeone-eventuality. So, since $\network$ is perfect, there is some $k < \omega$ such that the eventuality is partially fulfilled in $k$ steps at $u$. Since the least fixpoint $\mu f$ is the union of its ordinal approximants $f^\xi(\emptyset)$, where $\xi$ ranges over ordinals, it suffices to show by induction on $k$ that for all  $w \in \brak{\gamma}$, if the eventuality $\brak{\gamma}$ is fulfilled in $k$ steps at $w$ then $w \in f^{k+1}(\emptyset)$. We have several nested inductions at this point, so we refer to the induction hypothesis for the structural induction on complexity of formulas as the \emph{outermost induction hypothesis}, that of the induction on the size of supports as the \emph{middle induction hypothesis}, and the induction hypothesis on $k$ 
as the \emph{innermost induction hypothesis}.

For the base case, where $k = 0$, if $\brak{\gamma}$ is fulfilled in $0$ steps at $w$ then there is some $C_i$ with $\beta_i \wedge \brak{\gamma \setminus C_i} \in L(u)$. The outermost induction hypothesis on $\beta_i$ and the middle induction hypothesis on $\brak{\gamma \setminus C_i}$, together with closure of $\Phi$ and $L(w)$ being an atom, immediately give $\gmod(\network),w \sat \beta_i \wedge \brak{\gamma\setminus C_i}$, and therefore 
$\gmod(\network), w \sat_{[z \mapsto \emptyset]} \indf{\gamma}{z}$, since the variable $z$ does not appear in $\beta_i$ or $\brak{\gamma\setminus C_i}$. By definition we get $w \in f(\emptyset) = f^1(\emptyset)$. 

Now let $w \in \lset{\brak{\gamma}}{\network}$, and suppose $\brak{\gamma}$ is partially fulfilled in $k = j + 1$ steps at $w$ where the induction hypothesis holds for $j$. If $\brak{\gamma}$ is partially fulfilled in $j$ steps at $w$ then the innermost induction hypothesis applies and we are done.  Otherwise, since $\brak{\gamma}$ is partially fulfilled in $k = j + 1$ steps at $w$, we have $\alpha_i \in L(w)$ for all $i \in \{1,...,n\}$, and $\chi_j \in L(w)$ for all $j \in \{1,...,m\}$, and there exists a marking $\mrk$ that verifies $\brak{\gammaof{\brak{\gamma}}}$ at $w$ and is such that for every child $w'$ of $w$ with $\brak{\gamma} \in \mrk(w')$ the eventuality $\brak{\gamma}$ is partially fulfilled in $j$ steps at $w'$. The outermost induction hypothesis gives  $\gmod(\network), w \sat \alpha_i$ for all $i \in \{1,...,n\}$, and $\gmod(\network), w \sat \chi_j$ for all $j \in \{1,...,m\}$. Since $z$ does not appear in any of these formulas we get  $\gmod(\network), w \sat_{[z \mapsto f^j(\emptyset)]} \alpha_i$ for all $i \in \{1,...,n\}$, and $\gmod(\network), w \sat_{[z \mapsto f^j(\emptyset)]} \chi_j$ for all $j \in \{1,...,m\}$.

Let $\strprof$ be some strategy profile witnessing that the marking $\mrk$ verifies $\brak{\gammaof{\brak{\gamma}}}$ at $w$. Let $E$ be some coalition in the support of $\gammaof{z}$, which is the same as the support of $\gammaof{\brak{\gamma}}$. If $E = \bigcup \fac$, let $\strprof' \sim_{\bigcup \fac} \strprof$. Then $\brak{\gamma} \in \mrk(\out(\strprof',w))$, so $\brak{\gamma}$ is partially fulfilled in $j$ steps at $w$. By the innermost induction hypothesis, $\out(\strprof',w) \in f^j(\emptyset)$. So $\gmod(\network),\out(\strprof',w) \sat_{z \mapsto f^j(\emptyset)} z$. On the other hand, if $E \subsetneq \bigcup \fac$, then $\gammaof{z}(E)$ is of the form $\X\brak{\gamma \vert E}$. So, if $\strprof' \sim_E \strprof$ then $\brak{\gamma \vert_E} \in \mrk(\out(\strprof',w)) \subseteq L(\out(\strprof',w))$. Since the support of $\gamma \vert_E$ is smaller than that of $\gamma$, the middle induction hypothesis applies and we get $\gmod(\network), \out(\strprof',w) \sat \brak{\gamma \vert_E}$. Since the variable $z$ does not appear in $\brak{\gamma \vert_E}$ we get $\gmod(\network),  \out(\strprof',w) \sat_{[z \mapsto f^j(\emptyset)]} \brak{\gamma \vert_E}$. So, we get $\gmod(\network), w \sat_{z \mapsto f^j(\emptyset)} \brak{\gammaof{z}}$. Collecting all the facts we have established, we get $\gmod(\network), w \sat_{[z \mapsto f^j(\emptyset)]} \indf{\gamma}{z}$, so $w \in f(f^j(\emptyset)) = f^{j+1}(\emptyset) = f^k(\emptyset)$ as required. 

We have thus shown that $\lset{\brak{\gamma}}{\network} \subseteq \mu f$, and the proof is completed.  
\end{proof}

\subsection{Proof of Theorem \ref{thm:bisimulation invariance}}
\begin{proof}
Structural induction on $\varphi$. 
We prove further, in Corollary \ref{cor:NF}, that every \cga-formula is equivalent to one in normal form, so we can assume that $\varphi$ is in a normal form. 
The boolean cases are routine, so we only consider the case for 
$\varphi = \brak{\gamma}$, where $\gamma$ is either a nexttime assignment, or a long-term temporal assignment in \typeone or in \typetwo.

\smallskip
1. We first consider the case of nexttime assignments. 
Let $\gamma$ be a nexttime assignment defined by  
$\gamma(C_1) = \X \phi_1,..., \gamma(C_n) =  \X \phi_n$. 
Suppose that $\gmod,s_1 \models  \brak{\gamma}$, 
witnessed by a joint action $\actprof^1$ for $\Agt$ at $s_1$. Let $\actprof^2$ be some joint action for $\Agt$ at $s_2$ witnessing the Forth condition with respect to $\actprof^1$. We need to show that $\Out[s_2,\actprof^2_{C_i}] \subseteq \tset{\phi_i}$ for each $i \in \{1,...,k\}$.  Suppose $v \in \Out[s_2,\actprof^2_{C_i}]$. Apply the LocalBack condition to find $v' \in \Out[s_1,\actprof^1_{C_i}]$ with $v' \beta v$. Since $\Out[s_1,\actprof^1_{C_i}] \subseteq \tset{\phi_i}$ we get $\gmod,v \models \phi_i$ by the induction hypothesis on $\phi_i$, as required. 
Thus, $\gmod,s_2 \models  \brak{\gamma}$. 

The converse direction is proved in the same way.

\smallskip

2. Next, we claim that for any goal assignment $\gamma$, if the bisimulation invariance claim holds for $ \brak{\gamma}$ and for all proper subformulae\footnote{Note that the proper subformulae of $\brak{\gamma}$ are all formulae assigned as goals by $\gamma$ and their subformulae.} 
of $ \brak{\gamma}$ and all pairs of bisimilar states $s_1, s_2$ in the state space  
$\states$ of $\gmod$, then it also holds likewise for $\brak{\diffof{\gamma}}$, because this is a special case of a nexttime assignment, involving only $ \brak{\gamma}$  and subformulae of $ \brak{\gamma}$. 
This claim we will use further, when proving the inductive steps for long-term assignments. 

\smallskip
3. Let $\gamma \in \typetwo$
and assume that the bisimulation invariance claim holds for all proper subformulae of 
$ \brak{\gamma}$ and all pairs of bisimilar states $s_1, s_2$ in the state space  
$\states$ of $\gmod$. 
We will prove the bisimulation invariance of $ \brak{\gamma}$ for all such pairs of states by using the fixpoint characterisation in Proposition \ref{prop:typetwo} and proving the claim for $\nu z. \indf{\gamma}{z}$, instead. 

To prove that claim we hereafter treat $\indf{\gamma}{Z}$ as a (monotone) operator on sets of states and use the characterisation of greatest fixed points given by Knaster-Tarski theorem, according to which 
$\nu z. \indf{\gamma}{z} = \bigcap_{\alpha \in  \mathsf{Ord}}  \indf{\gamma}{Z^{\alpha}} $, where the family $\{Z^{\alpha}\}_{\alpha \in \mathsf{Ord}}$ of subsets of $\states$ is defined by transfinite induction, as usual:   

$Z^{0} = \states$; \ $Z^{\alpha+1} =  \indf{\gamma}{Z^{\alpha}}$; \ 
$Z^{\lambda} = \bigcap_{\alpha < \lambda} \indf{\gamma}{Z^{\alpha}} $ for limit ordinals $\lambda$. 
 
 For technical convenience we will treat each $Z^{\alpha}$ both as a set of states and as a formula for which this set is its extension in $\cgm$,  
 noting that  bisimulation invariance of a formula $\phi$ in a model $\cgm$ is equivalent to the closure under bisimulation of its extension $\tset{\phi}_\cgm$.

It suffices to prove bisimulation invariance (resp. bisimulation closure) of each approximant formula $Z^{\alpha}$,   
as the closure under bisimulation is preserved in the intersection of any family of sets.  
We prove these closures by a nested transfinite induction on $\alpha$. 
The only non-trivial case is that of successor ordinals. 

Recall that, for a goal assignment $\gamma \in \typetwo$, we have 
$
\indf{\gamma}{\phi} =
 \bigwedge \agam \wedge 
\brak{\gammaof{\phi}} 
$.

Thus, $
Z^{\alpha+1} = 
\indf{\gamma}{Z^{\alpha}}  =
 \bigwedge \agam \wedge 
\brak{\gammaof{Z^{\alpha}}} 
$.
Since each formula in $\agam$ is a proper subformula of $ \brak{\gamma}$, its bisimulation invariance follows from the inductive hypothesis of the main induction, so it only remains to show the bisimulation invariance of $\brak{\gammaof{Z^{\alpha}}}$, assuming the bisimulation invariance of 
${\gammaof{Z^{\alpha}}}$. This claim is a particular case of the claim 2 for nexttime extensions of goal assignments, proved above. 

This completes the inductive step for $ \brak{\gamma}$ with 
$\gamma \in \typetwo$.

\smallskip
4. Lastly,  the inductive step for $ \brak{\gamma}$ with 
$\gamma \in \typeone$ is analogous, with the respective changes: 

\begin{itemize}
\item Using the fixpoint characterisation in Proposition \ref{prop:typeone} we prove the claim for $\mu z. \indf{\gamma}{z}$, instead. 

\item 
To prove that claim we use the characterisation of least fixed points given by Knaster-Tarski theorem, according to which 
$\mu z. \indf{\gamma}{z} = \bigcup_{\alpha \in  \mathsf{Ord}}  \indf{\gamma}{Z^{\alpha}} $, where the family $\{Z^{\alpha}\}_{\alpha \in \mathsf{Ord}}$ is defined by transfinite induction, as usual:   
$Z^{0} = \emptyset$; \ $Z^{\alpha+1} =  \indf{\gamma}{Z^{\alpha}}$; \ 
$Z^{\lambda} = \bigcup_{\alpha < \lambda} \indf{\gamma}{Z^{\alpha}} $ for limit ordinals $\lambda$. 

\item Again, it suffices to prove the bisimulation invariance (resp. bisimulation closure) of each approximant formula $Z^{\alpha}$, which we do by a nested transfinite induction on $\alpha$, with the only non-trivial case being that of successor ordinals, for which we use the definition of $\indf{\gamma}{\phi}$ for $\gamma \in \typeone$ and the inductive hypotheses for $Z^{\alpha}$, and the case of nexttime extensions of goal assignments. The differences from the previous case of $\gamma \in \typetwo$, coming from the additional subformulae in $\mathsf{Finish}(\gamma)$ and $\ugam$, are inessential.  

\smallskip
This completes the structural induction and the whole proof. 
\end{itemize}

\end{proof}

\subsection{Full proof of Proposition \ref{p:remove-defects}}

\begin{proof}
By Proposition \ref{p:push}, we may assume w.l.o.g. that the defect $(u,\varphi)$ is such that $u$ is a leaf: if we can show how to remove the defect $\varphi$ at a single leaf, then, clearly, we can repeat the procedure to remove $\varphi$ at each leaf in the set $\{v_1,...,v_k\}$. {(Note that our procedure for removing a defect at a single leaf $v$ given below will not affect any other leaves, i.e. each leaf in the original network besides $v$ will still be a leaf in the new network.)}
Combined with  Proposition \ref{p:push} this proves the result. 

So, suppose that $(u,\varphi)$ is a defect and $u$ is a leaf. It is sufficient to show that there is a finite, one-step coherent network $\network''$ in which the root has the same label as $u$ in $\network$, and in which the eventuality $\varphi$ is partially fulfilled. We can then simply identify the root of the network $\network'$ with the leaf $u$ in $\network$ to form a finite, one-step coherent network $\network'$ such that $\network'' \sqsubseteq \network'$ and $\network \sqsubseteq	 \network'$. By Proposition \ref{p:stayfinished}, the eventuality $\varphi$ is partially fulfilled at $u$ in $\network'$.

Consider the $\Phi$-atoms $\Psi$ such that $\varphi \in \Psi$ and there exists a finite, one-step coherent network in which the root is labelled by $\Psi$ and the eventuality $\varphi$ is {partially}  fulfilled. 
Let $\delta$ be the disjunction of all conjunctions of the form $\bigwedge \Psi$ for all   
such $\Phi$-atoms $\Psi$. 
(This is well-defined since the set of all such conjunctions is finite, as long as we disallow conjunctions with multiple occurrences of the same conjunct.) The result then follows from the following claim:

\begin{innercustomclaim}
$\vdash \varphi \to \delta$.
\end{innercustomclaim}

To prove the claim, we consider the two cases for the eventuality $\varphi$.

\paragraph{Case: $\varphi$ is a  $\typeone$ eventuality.} Then $\varphi$ is of the form $\brak{\gamma}$, where $\gamma$ is a goal assignment supported by a set of coalitions $\fac$ for which $\gamma(C)$ is an $\until$-formula for at least one $C \in \fac$. Say that  $\fac = \{C_1,...,C_n,D_1,...,D_m\}$ and $\gamma$ is defined by: 
\[
\gamma(C_1) = \alpha_1 \until \beta_1, ...,\ \gamma(C_n) = \alpha_n \until \beta_n
\]
and  (if $m > 0$) 
\[
\gamma(D_1) = \always \chi_1, ...,\ \gamma(D_m) = \always \chi_m. 
\]
Our aim is to prove:
$$\vdash \indf{\gamma}{\delta} \to \delta$$ 
and thereafter apply the induction rule to conclude the Claim. It suffices to show that, if $\Psi$ is any atom that is consistent with $\indf{\gamma}{\delta}$, then $\bigwedge \Psi$ is, in fact, one of the disjuncts of $ \delta$.  
Indeed, suppose that $\nvdash \indf{\gamma}{\delta} \to \delta$; by Lindenbaum's Lemma there exists a maximal consistent set of formulae $\Gamma$ containing $\indf{\gamma}{\delta}$ but also the negation of $\delta$. Then $\Gamma \cap \Phi$ is a $\Phi$-atom that is consistent with $\indf{\gamma}{\delta}$, but $\bigwedge (\Gamma \cap \Phi)$ cannot be a disjunct of $\delta$ since that would make $\Gamma$ inconsistent. 

So, suppose $\Psi$ is a $\Phi$-atom for which  the set $\Psi \cup \{\indf{\gamma}{\delta}\}$ is consistent. Then $\Psi$ is consistent with at least one of the disjuncts of $\indf{\gamma}{\delta}$. 

If $\Psi$ is consistent with one of the disjuncts $ \beta_i \wedge \brak{\gamma \setminus C_i})$ then, in fact, this disjunct must be a member of 
$\Psi$, 
since $\Psi$ is a $\Phi$-atom and each formula $ \beta_i \wedge \brak{\gamma \setminus C_i})$ belongs to $\Phi$. In this case it is trivial to construct a (singleton) network in which $\brak{\gamma}$ is partially fulfilled  in $0$ steps.  

If $\Psi$ is consistent with 
{$\alpha_1 \wedge... \wedge \alpha_n\wedge \chi_1 \wedge ... \wedge \chi_m \wedge \brak{\gammaof{\delta}}$, then $\alpha_1,...,\alpha_n, \chi_1,...,\chi_m \in \Psi$ and the set $\Psi^+ := \Psi \cup \{\brak{\gammaof{\delta}}\}$ is consistent.} 
Let $\Phi^+$ be the extended Fischer-Ladner closure of the set $\Phi \cup \{\delta\}$, and 
{let $\Theta$ be the maximal modal one-step theory that is contained in $\Psi^+$}. Then, since $\Theta \subseteq \Psi^+$, it is consistent by our assumption, and it is a one-step theory over $\Phi^+$. By one-step completeness (Theorem \ref{t:one-step-comp}),  there exists a finite maximal consistent game form 
$\gmod(\Theta) = (\Act,\act,\psf (\Phi^+),\out)$ 
for $\Phi^+$ such that, for every goal assignment $\gamma'$:
\begin{enumerate}
\item If $\brak{\gamma'} \in \Theta$, then there is an action profile $\actprof \in \Pi_{a \in \mathsf{Agt}}\act_a$ such that for all $C$ in the support of $\gamma'$, and all $\actprof' \sim_C \actprof$, we have 
 $\psi \in \out(\actprof')$, where $\gamma'(C) = \nexttime \psi$. 

\item If $\neg \brak{\gamma'} \in \Theta$, then for every profile $\actprof \in \Pi_{a \in \mathsf{Agt}}\act_a$ there is some $C$ in the support of $\gamma'$, and some $\actprof' \sim_C \actprof$, for which 
 $\overline{\psi} \in \out(\actprof')$, where $\gamma'(C) = \nexttime \psi$.  
\end{enumerate}
Since $\brak{\gammaof{\delta}} \in \Theta$, the first clause ensures that there is a profile $\rho \in \Pi_{a \in \mathsf{Agt}}\act_a$ such that for all $C$ in the support of $\diffof{\gamma}[\bigcup \fac  \gass  \X \delta]$, and all $\actprof' \sim_C \rho$, we have 
 $\psi \in \out(\actprof')$, where $\gammaof{\delta}(C) = \nexttime \psi$.

In particular, for $C = \bigcup \fac$ this entails that $\delta \in \out(\actprof')$ for all $\actprof' \sim_{\bigcup \fac} \rho$.

We now construct a network showing that $\bigwedge \Psi$ is a disjunct of $\delta$ as follows. For each action profile $\actprof \in \Pi_{a \in \mathsf{Agt}}\act_a$, we pick a $\Phi$-network $\network_\actprof$ according to the following rule. 

\begin{itemize}
\item 
if $\actprof \sim_{\bigcup \fac} \rho$, then let $\network_\actprof$ be a network of which the root is labelled $\out(\actprof) \cap \Phi$, and in which the eventuality $\brak{\gamma}$ is partially fulfilled  at the root. 
Such a network exists since, if $\actprof \sim_{\bigcup \fac} \rho$ then $\delta \in \out(\actprof)$, hence $\bigwedge(\out(\actprof) \cap \Phi)$ is a disjunct of $\delta$.  
\item 
Otherwise, let $\network_\actprof$ be a network consisting of a single node labelled $\out(\actprof) \cap \Phi$. 
\end{itemize}
We form the network $\network = (T,L,\gform)$ by taking the disjoint union of the networks $\network_\actprof$,  for each action profile $\actprof \in \Pi_{a \in \mathsf{Agt}}\act_a$, together with a new root $r$ labelled $\Psi$, with an edge to the root of {each} $\network_\actprof$,  
and letting $\gform(r) := (\Act,\act,T,\out')$ where $\out'$ maps each action profile $\actprof$ to the root of $\network_\actprof$. Note that $\network_\actprof \sqsubseteq \network$ for each $\actprof$, and that for each profile $\actprof$ we have $L(\out'(\actprof)) = \out(\actprof) \cap \Phi$.

Since $\Pi_{a \in \mathsf{Agt}}\act_a$ is a finite set, each $\network_\actprof$ is one-step coherent, and due to the clauses (1) and (2) of Theorem \ref{t:one-step-comp}, the network $\network$ is a finite and one-step coherent network. Moreover, to see that the eventuality $\brak{\gamma}$ is partially fulfilled  at the root $r$, we define a marking $\mrk$ by setting 
\[
\mrk(v) := \big\{\theta \in \Phi \mid  \mbox{there exists } C \subseteq \bigcup \fac \mbox{ such that }   \diffof{\gamma}(C)
=  \nexttime \theta \; \mbox{and} \; \actprof \sim_C \rho  \big\} 
\]
if $v$ is the root of the network $\network_\actprof$ for some $\actprof \in \Pi_{a \in \mathsf{Agt}}\act_a$, and $\mrk(v) := \emptyset$ otherwise. 
By definition, this marking verifies $\diffof{\gamma}$, witnessed by the action profile $\rho$. Furthermore,  if $\brak{\gamma} \in \mrk(v)$ then, since $\diffof{\gamma}(C) \neq \nexttime \brak{\gamma}$ for all $C \neq \bigcup F$, $v$ must be the root of some network $\network_\actprof$ where $\actprof \sim_{\bigcup F} \rho$. This means that $\brak{\gamma}$ is partially fulfilled  at $v$ in $\network_\actprof$, hence in $\network$, since $\network_\actprof \sqsubseteq \network$. It follows that $\brak{\gamma}$ is partially fulfilled  in at most $k + 1$ steps at $r$ in $\network$, where $k$ is the maximum number such that $\brak{\gamma}$ is partially fulfilled  in at most $k$ steps at the root of one of the finitely many networks $\network_\actprof$.

\paragraph{Case: $\varphi$ is a $\typetwo$ eventuality.}

Then $\varphi$ is of the form $\neg \brak{\gamma}$, where $\gamma$ is a goal assignment supported by a set of coalitions $\fac$ and  $\gamma(C)$ is an $\always$-formula for all $C \in \fac$. Say that  $\fac = \{D_1,...,D_m\}$ and $\gamma$ is defined by 
$\gamma(D_1) = \always \chi_1, ...,\ \gamma(D_m) = \always \chi_m$. 
Our aim is to prove:
$$\vdash \neg \delta \to \indf{\gamma}{\neg\delta}$$ 
and thereafter apply the co-induction rule to conclude that $\vdash \neg \delta \to \brak{\gamma}$, hence $\vdash \neg \brak{\gamma} \to \delta$, as required. It suffices to show that, if $\Psi$ is any atom that is consistent with $\neg \indf{\gamma}{\neg\delta}$, then $\bigwedge \Psi$ is, in fact, one of the disjuncts of $\delta$. Indeed, suppose that $\nvdash \neg \delta \to\indf{\gamma}{\neg \delta}$. Then $\nvdash \neg \indf{\gamma}{\neg \delta} \to \delta$ so  by Lindenbaum's Lemma there exists a maximal consistent set of formulae $\Gamma$ containing $\neg\indf{\gamma}{\neg \delta}$ but also the negation of $\delta$. Then $\Gamma \cap \Phi$ is a $\Phi$-atom that is consistent with $\neg \indf{\gamma}{\neg \delta}$, but $\bigwedge (\Gamma \cap \Phi)$ cannot be a disjunct of $\delta$ since that would make $\Gamma$ inconsistent. 

We recall that $\indf{\gamma}{\neg\delta}$ is the formula $\chi_1 \wedge ... \wedge \chi_m \wedge\brak{\diffof{\gamma}[\bigcup \fac  \gass  \X \neg \delta]}$, so $\neg \indf{\gamma}{\neg\delta}$ is provably equivalent to $\overline{\chi}_1 \vee ... \vee  \neg \overline{\chi}_m  \vee \neg \brak{\diffof{\gamma}[\bigcup \fac  \gass  \X \neg \delta]}$.
 So, suppose $\Psi$ is a $\Phi$-atom for which  the set $\Psi \cup \{\neg \indf{\gamma}{\neg \delta}\}$ is consistent. Then $\Psi$ is consistent with at least one of the disjuncts of $\neg \indf{\gamma}{\neg \delta}$. If $\Psi$ is consistent with one of the disjuncts $\overline{\chi}_i$ then, in fact, this disjunct must be a member of $\Psi$, since $\Psi$ was a $\Phi$-atom and each formula $\overline{\chi}_i$ belongs to $\Phi$. In this case it is trivial to construct a (singleton) network in which $\neg \brak{\gamma}$ is partially fulfilled  in $0$ steps.  If $\Psi$ is consistent with $ \neg \brak{\diffof{\gamma}[\bigcup \fac  \gass  \X \neg \delta]}$, then the set $\Psi^+ := \Psi \cup \{\neg \brak{\diffof{\gamma}[\bigcup \fac  \gass  \X \neg \delta]}\}$ is consistent. Let $\Phi^+$ be the extended Fischer-Ladner closure of the set $\Phi \cup \{\neg \delta\}$, and let $\Theta$ be the set of all one-step formulae over $\Phi^+$ belonging to $\Psi^+$. Then, since $\Theta \subseteq \Psi^+$, it is consistent by our assumption, and it is a one-step theory over $\Phi^+$. 
 By one-step completeness (Theorem \ref{t:one-step-comp}),  there exists a finite maximal consistent game form $\gmod(\Theta) = (\Act,\act,\psf (\Phi^+),\out)$ 
 such that, for every goal assignment $\gamma'$:
\begin{enumerate}
\item If $\brak{\gamma'} \in \Theta$, then there is a profile $\actprof \in \Pi_{a \in \mathsf{Agt}}\act_a$ such that for all $C$ in the support of $\gamma'$, and all $\actprof' \sim_C \actprof$, we have 
 $\psi \in \out(\actprof')$, where $\gamma'(C) = \nexttime \psi$. 
\item If $\neg \brak{\gamma'} \in \Theta$, then for every profile $\actprof \in \Pi_{a \in \mathsf{Agt}}\act_a$ there is some $C$ in the support of $\gamma'$, and some $\actprof' \sim_C \actprof$, for which  we have 
 $\overline{\psi} \in \out(\actprof')$, where $\gamma'(C) = \nexttime \psi$.  
\end{enumerate}
Since $\neg \brak{\diffof{\gamma}[\bigcup \fac  \gass  \X \neg \delta]}\in \Theta$, the second clause ensures that for every profile $\rho \in \Pi_{a \in \mathsf{Agt}}\act_a$ there is some $C$ in the support of $\diffof{\gamma}[\bigcup \fac  \gass  \X \neg \delta]$, and some $\actprof' \sim_C \rho$, for which we have 
 $\overline{\psi} \in \out(\actprof')$, where $\gamma'(C) = \nexttime \psi$.   
We pick a choice function selecting such a pair $(c(\rho),f(\rho))$ for each $\rho$, where $c(\rho) \in \fac$, $\rho \sim_{c(\rho)} f(\rho)$ and 
 $\overline{\psi} \in \out(f(\rho))$, where $\gamma (c(\rho)) = \nexttime \psi$.  

We now construct a network showing that $\bigwedge \Psi$ is a disjunct of $\delta$ as follows. For each action profile $\actprof \in \Pi_{a \in \mathsf{Agt}}\act_a$, we pick a $\Phi$-network $\network_\actprof$ according to the following rule.
\begin{itemize}
\item 
 if $\actprof = f(\rho)$ for some $\rho$ such that $c(\rho) = \bigcup \fac$, then let $\network_\actprof$ be a network of which the root is labelled $\out(\actprof) \cap \Phi$, and in which the eventuality $\neg \brak{\gamma}$ is partially fulfilled  at the root. Such a network exists since, if  $\actprof = f(\rho)$ for some $\rho$ 
 then  
{ $\gamma(\bigcup \fac) = \X\neg\delta$ and so $\overline{\neg \delta} = \delta \in \out(\actprof)$, hence $\bigwedge(\out(\actprof) \cap \Phi)$ is a disjunct of $\delta$.}  
 
 \item 
 Otherwise, let $\network_\actprof$ be a network consisting of a single node labelled $\out(\actprof) \cap \Phi$. 
 \end{itemize}
 We form the network $\network = (T,L,\gform)$ by taking the disjoint union of the networks $\network_\actprof$, for each action profile $\actprof \in \Pi_{a \in \mathsf{Agt}}\act_a$, together with a new root $r$ labelled $\Psi$ with an edge to the root of each $\network_\actprof$, and letting $\gform(r) = (\Act,\act,T,\out')$ where $\out'$ maps each action profile $\actprof$ to the root of $\network_\actprof$. Note that $\network_\actprof \sqsubseteq \network$ for each $\actprof$, and that for each profile $\actprof$ we have $L(\out'(\actprof)) = \out(\actprof) \cap \Phi$.
Since $\Pi_{a \in \mathsf{Agt}}\act_a$ is a finite set, each $\network_\actprof$ is one-step coherent, and due to the clauses (1) and (2) of Theorem \ref{t:one-step-comp}, the network $\network$ is a finite and one-step coherent network. Moreover, to see that the eventuality $\neg \brak{\gamma}$ is partially fulfilled  at the root $r$, we define a marking $m$ by setting 
\[
\mrk(v) := \big\{\overline{\theta} \in \Phi \mid  \mbox{there exists } \rho \mbox{ such that }   \diffof{\gamma}(c(\rho)) =  \nexttime \theta \; \mbox{and} \; f(\rho) = \actprof \big\}
\]
if $v$ is the root of the network $\network_\actprof$ for some $\actprof \in \Pi_{a \in \mathsf{Agt}}\act_a$, and $\mrk(v) = \emptyset$ otherwise. 
By definition, this marking refutes $\diffof{\gamma}$. Furthermore,  if $\neg\brak{\gamma} \in \mrk(v)$ then, since $\diffof{\gamma}(C) \neq \nexttime \brak{\gamma}$ for all $C \neq \bigcup \fac$, there must be some $\rho$ such that $c(\rho) = \bigcup \fac$, and $v$ is the root of $\network_{f(\rho)}$. This means that $\neg\brak{\gamma}$ is  partially fulfilled  at $v$ in $\network_{f(\rho)}$, hence in $\network$, since $\network_{f(\rho)}\sqsubseteq \network$. It follows that $\neg\brak{\gamma}$ is partially fulfilled  in at most $k + 1$ steps at $r$ in $\network$, where $k$ is the maximum number such that $\neg \brak{\gamma}$ is partially fulfilled  in at most $k$ steps at the root of one of the finitely many networks $\network_\actprof$.
\end{proof}

\subsection{Proof of Proposition \ref{prop:network-extension}}
\begin{proof}
Repeated use of Proposition \ref{p:remove-leaves} and Proposition \ref{p:remove-defects}, as follows. 
First, we fix an enumeration $\mathcal{L} = l_1, l_2,..., l_k$ of all leaves in $\network$ and an enumeration $\mathcal{D} = (u_1,\phi_1),..., (u_d,\phi_d)$ of all defects in $\network$. 
Then, we construct a finite chain of finite one-step coherent networks $\network_0 \sqsubseteq \network_1 \sqsubseteq...$, inductively as follows. 

We begin with $\network_0 = \network$. Suppose, we have constructed the finite one-step coherent $\Phi$-networks  
$\network_0 \sqsubseteq \network_1 \sqsubseteq... \network_n$ and let  
$\mathcal{L}_n$ be an enumeration of all leaves of $\network$ that are still in $\network_n$ and $\mathcal{D}_n$ be an enumeration of all defects in $\network$ that are still in $\network_n$. Then we do the following. 

\smallskip
1. If $\mathcal{D}_n$ is non-empty, we pick the first defect $(u,\phi)$ listed in it and apply Proposition \ref{p:remove-defects} to construct a finite, one-step coherent network $\network_{n+1}$ such that  
$\network_n \sqsubseteq \network_{n+1}$ and 
$(u_1,\phi_1)$ is not a defect in $\network_{n+1}$. 
Then we update the list $\mathcal{D}_{n}$ to $\mathcal{D}_{n+1}$ by removing $(u,\phi)$ and all other defects in that list that may have been resolved in $\network_{n+1}$. Note that, by Proposition \ref{p:stayfinished}, newly occurring defects in $\network_{n+1}$ may only occur at newly added nodes. We update likewise the list $\mathcal{L}_{n}$ to $\mathcal{L}_{n+1}$ by removing the leaves in $\network_{n}$ that are no longer leaves in $\network_{n+1}$. 

\smallskip
2. If $\mathcal{D}_n$ is empty, but $\mathcal{L}_{n}$ is non-empty, 
then we pick the first leaf $l$ currently listed in $\mathcal{L}_{n}$ and apply 
Proposition \ref{p:remove-leaves} to construct a  finite, one-step coherent network 
$\network_{n+1}$ such that  $\network_n \sqsubseteq \network_{n+1}$ and 
$l$ is not a leaf in $\network_{n+1}$. Then we update the list $\mathcal{L}_{n}$ to $\mathcal{L}_{n+1}$ by removing $l$ and all other leaves in $\network_{n}$ that are no longer leaves in $\network_{n+1}$. Note that, again by Proposition \ref{p:stayfinished}, newly occurring defects in $\network_{n+1}$ may only occur at newly added nodes, so the list of original defects in $\network$ remains empty. 

3. If both $\mathcal{D}_n$ and $\mathcal{L}_{n}$ are empty, the last constructed network $\network_n$ is the desired $\network'$. 

\smallskip
Note that both lists $\mathcal{L}$ and $\mathcal{D}$ are finite, neither of them gets extended in any step, and at least one of them strictly decreases in each step. 
Therefore, they will both become empty in finitely many steps, hence the construction  is guaranteed to terminate. 
Clearly, once the procedure terminates, we obtain a network none of whose leaves are leaves of $\network$ and such that none of its defects are defects in $\network$. 
\end{proof}

\subsection{Proof of Proposition \ref{p:c-sat}}

\begin{proof}
For the direction left to right, suppose  $\gform = 
(\Act,\act,\psf(V),\out)$ is some game form that $\mathcal{S}$-satisfies $\Gamma$. By assumption, for each action profile $\actprof$ there is some $Z \in \mathcal{S}$ with $\out(\actprof) \subseteq Z$.  We show that $\mathcal{S}$ satisfies both conditions (1) and (2).

For (1), let $R = (C_1,\gamma_1,...,C_n,\gamma_n)$ be a redistribution of $\Gamma$. Since $\gamma_1,...,\gamma_n$ are positive relevant goal assignments, there must be strategy profiles $\actprof_1,...,\actprof_n$ such that for each $\actprof_i$, coalition $C$ and $\actprof' \sim_C \actprof_i$, we have $p \in \out(\actprof')$ where $\gamma_i(C) = \X p $. Define the new strategy profile $\actprof^*$ by setting $\actprof^*_\aga = (\actprof_i)_\aga$ if $\aga \in C_i$, and $\actprof^*_\aga$ arbitrarily chosen if $\aga \notin C_1 \cup ... \cup C_n$. It is easy to see that $F(R) \subseteq \out(\actprof^*)$, hence $F(R) \subseteq Z$ for some $Z \in \mathcal{S}$. 

For (2), let $R = (C_1,\gamma_1,...,C_n,\gamma_n)$ be a redistribution of $\Gamma$ and let $\gamma'$ be a negative relevant goal assignment. Let $\actprof_1,...,\actprof_n$ and $\actprof^*$ be defined as before, so that $F(R) \subseteq \out(\actprof^*)$. Since $\gform$ $\mathcal{S}$-satisfies $\Gamma$, there must be some coalition $C$ and some $\actprof' \sim_C \actprof^*$ such that $q \in \out(\actprof')$ where $\gamma'(C) = \X q$. If $C = \Agt$ then $q$ must belong to the outcome of \emph{every} action profile, hence to every member of $\mathcal{S}$. If $C \neq \Agt$, then we get $F(R,\gamma',C) \subseteq \out(\actprof') \subseteq Z$ for some $Z \in \mathcal{S}$.

For right to left, suppose $\mathcal{S}$ satisfies the constraints stated in the proposition. Enumerate $\Agt$ as $\aga_0,...,\aga_{\vert \Agt \vert -1}$. 
Let $\gform = 
(\Act,\act,\psf(V),\out)$ be defined  as follows.  First, given an agent $\aga \in \Agt$, we define the set of actions $\act_\aga$ to be the set of triples $(\gamma,f,k)$ where:
\begin{itemize}
 \item $\gamma$  is either a positive relevant goal assignment, or a distinguished symbol $*$,
\item $k \in \{0,...,\vert \Agt \vert-1)$
\item  $f$ is a map sending each redistribution $R$ of $\Gamma$  to some element  $Z$ of $\mathcal{S}$ with $F(R) \subseteq Z$. 
\end{itemize}
Note that each player has the same set of actions, and the set of actions is non-empty since, by assumption, for each redistribution $R$ of $\Gamma$  there is some element  $Z$ of $\mathcal{S}$ with $F(R) \subseteq Z$.

Given an action profile $\actprof$ we say that \defstyle{a coalition $C$ has formed behind a positive 
relevant goal assignment} 
if that goal assignment was chosen by every member of $C$ in $\actprof$.  A given action profile $\actprof$  thus partitions $\Agt$ 
into disjoint coalitions $\{C_1,...,C_n\}$ where each coalition $C_i$ is a maximal one that has formed behind some positive relevant goal assignment (or the special symbol $*$). Given an action profile $\actprof$ with corresponding partition of the form $\{C_1,...,C_n\}$, where each coalition $C_i$ has formed behind the goal assignment $\gamma_i$, or possibly of the form $\{C_1,...,C_n,D\}$, where each coalition $C_i$ has formed behind the goal assignment $\gamma_i$ and each player in $D$ has played an action with $*$ as first component, we define the redistribution $\mathsf{red}(\actprof)$ induced by $\actprof$ to be $R = (C_1,\gamma_1,...,C_n,\gamma_n)$. Suppose each $\aga\in \Agt$ has chosen the action $(\gamma_\aga,Z_\aga,k_\aga)$. Then let 
$$\mathsf{bet}(\actprof) = \sum_{\aga \in \Agt} k_\aga\text{ mod } \vert \Agt \vert$$ 
We now define the outcome function by setting $\out(\actprof) = f(\mathsf{red}(\actprof))$, where $f$ is the function chosen by player $\aga_i$ for $i = \mathsf{bet}(\actprof)$.
Informally speaking, each agent picks a goal assignment that they want to act towards, and also votes on a function intended to pick the actual outcome of an action  profile depending on which coalitions have formed in favor of which goal assignments. To decide which voter gets to determine the outcome of an action profile, each player also has to bet on a number $k$, and the winner of this bet is the agent indexed by   $\sum_{\aga \in \Agt} k_\aga\text{ mod } \vert \Agt \vert$.

We now show that $\gform$ $\mathcal{S}$-satisfies $\Gamma$. 

It clearly holds that $\out(\actprof) \in \mathcal{S}$ for every action profile $\actprof$ and, conversely, for each $Z \in \mathcal{S}$, $Z$ is the outcome of the profile in which each player chooses the action $(*,Z,0)$. Now suppose $\brak{\gamma} \in \Gamma$ for some positive goal assignment $\gamma$. We need to find an action profile $\actprof$ such that for every coalition $C$, and every $\actprof' \sim_C \actprof$, we have $p \in \out(\actprof')$ where $\gamma(C) = \X p$. But this is easy: the action profile $\actprof$ defined by $\actprof_\aga = (\gamma,Z,k)$ for all $\aga \in \Agt$, with $Z,k$ arbitrarily chosen, clearly does the job.

On the other hand, suppose $\neg \brak{\gamma'} \in \Gamma$ for some negative goal assignment $\gamma'$. Let $\actprof$ be an arbitrary action profile in $\gform$.  We want to find some coalition $B$ and some $\actprof' \sim_B \actprof$ such that $q \in \out^\Gamma(\actprof')$, where $\gamma'(B) = \X \neg q$. Let $R = (C_1,\gamma_1,...,C_n,\gamma_n)$ be the  redistribution of $\Gamma$ defined to consist of those $C_i,\gamma_i$ such that $C_i$ has formed in $\actprof$ behind the goal assignment $\gamma_i$, and each $C_i \subseteq C'$.  By the second condition on $\mathcal{S}$, there is some coalition $C'$ such that:

\begin{itemize}
\item[(i)] either $C' = \Agt$ and $q \in Z$ for every $Z \in \mathcal{S}$, where $\gamma'(\Agt) = \X\neg q$, or
\item[(ii)] $C' \neq \Agt$ and there is some $Z \in \mathcal{S}$ such that $F(R,\gamma',C') \subseteq Z$. 
\end{itemize}

In the first case, we have $\actprof \sim_{\Agt} \actprof$ and $q \in \out(\actprof) \in \mathcal{S}$. In the second case, $C' \neq \Agt$ and there is some $Z \in \mathcal{S}$ such that $F(R,\gamma',C') \subseteq Z$. Let $R' = \{C_1',\gamma'_1,...,C_m',\gamma_m'\}$ be the redistribution defined by $\{C_1',...,C_m'\} = \{C_i  \cap C' \mid i \in \{1,...,n\} \; \& \; C_i \subseteq C'\}$, and let $\gamma'_j$ be $\gamma_i$  for $C_j' = C_i \cap C'$. This is well defined since $\{C_1,...,C_n\}$ consists of pairwise disjoint sets. Furthermore, note that $F(R') = Z$, since:
\begin{align*}
F(R') & = \{p \mid \exists i, B \subseteq C_i':\;  \gamma_i'(C_i')  = \X p\} \\
& \subseteq \{p \mid \exists i,B \subseteq C_i \cap C':\;\gamma_i(B) = \X p\} 
\\ 
& \subseteq \{p \mid \exists i,B \subseteq C_i \cap C':\;\gamma_i(B) = \X p\} \cup \{q\} \\
& = F(R,\gamma',C') \\
& \subseteq Z
\end{align*}	
Define a new action profile $\actprof'$ as follows: pick some function $f$ with $f(R') = Z$, and pick some distinguished player $\agc \in \Agt \setminus C'$. For each $\aga \in C'$, set $\actprof'_\aga = \actprof_\aga$. For $\aga \in \Agt\setminus C'$, set $\actprof'_\aga = (*,f,0)$ if $\aga \neq \agc$. Finally set $\actprof'_{\agc} = (*,f,k)$, where $k$ is chosen to ensure that $\mathsf{bet}(\actprof')$ is the index of player $\agc$. Clearly, $\actprof' \sim_{C'} \actprof$, and it remains only to show $q \in \out(\actprof')$. But $q \in Z$, and since $R'= \mathsf{red}(\actprof')$ we get $\out(\actprof') = f(R') = Z$. Informally, what happens here is that the players in  $\Agt \setminus C'$  pick one member among them, $\agc$, that will do the actual work towards ensuring that the outcome of $\actprof'$ contains $q$. So that player picks $Z$ containing $q$ as the outcome, and the other players in $\Agt\setminus C'$ lay down their bets by betting on $0$ so that $\agc$ can make sure to win the bet, thus ensuring that the outcome is $Z$.
 \end{proof}

\end{document}